\documentclass{article}
\usepackage{amsmath}
\usepackage[dvipsnames]{xcolor}
\usepackage{graphicx}
\usepackage{enumerate}
\usepackage{natbib}
\usepackage{amsthm}
\newtheoremstyle{mainspacedplain}
    {\dimexpr\topsep+2pt\relax}  
    {\topsep}                    
    {\itshape}                   
    {}                           
    {\bfseries}                  
    {.}                          
    {.5em}                       
    {}
\theoremstyle{mainspacedplain}
\newtheorem{assumption}{Assumption}
\newtheorem{theorem}{Theorem}
\usepackage[left=1in,top=1in,right=1in,bottom=1.in,head=.2in]{geometry}
\definecolor{undergreen}{HTML}{7BC96F}
\definecolor{overgreen}{HTML}{1B7837}
\usepackage[utf8]{inputenc}
\usepackage{algorithm}
\usepackage[noend]{algpseudocode}
\usepackage[bookmarksopen=true,bookmarks=true,pdfencoding=auto,psdextra,colorlinks]{hyperref}
\hypersetup{ 
    colorlinks,
    linkcolor={blue!80!black},
    citecolor={green!40!black},
    urlcolor={red!50!black}
}
\usepackage{subcaption}
\usepackage{bookmark}

\usepackage{mystyle}

\graphicspath{{../}{./}}
\algrenewcommand\algorithmicrequire{\textbf{Input:}}
\algrenewcommand\algorithmicensure{\textbf{Output:}}
\Crefname{algorithm}{Algorithm}{Algorithms}
\crefname{algorithm}{Algorithm}{Algorithms}
\makeatletter
\def\theHALG@line{\thealgorithm.\arabic{ALG@line}}
\makeatother
\usepackage{bibunits}
\usepackage{booktabs}
\usepackage{threeparttable}
\allowdisplaybreaks
\usepackage{xspace}

\usepackage[svgnames,dvipsnames,table]{xcolor}

\usepackage[font=small,labelfont=bf]{caption}
\usepackage{tabularx}
\usepackage{multirow}
\usepackage{array}
\newcolumntype{Y}{>{\centering\arraybackslash}X}
\newcolumntype{P}{>{\raggedleft\arraybackslash}X}
\newcolumntype{C}{ >{\centering\arraybackslash} m{4cm} }
\newcolumntype{D}{>{\raggedright\arraybackslash}X}

\usepackage[table]{xcolor}

\definecolor{coverblue}{RGB}{221,235,247}
\definecolor{missred}{RGB}{252,225,225}
\definecolor{Gray}{RGB}{227,227,227}
\usepackage{parskip}

\makeatletter
\newcounter{affiliation}
\def\aff@texts{}
\newcommand{\affiliation}[2]{%
    \stepcounter{affiliation}%
    \expandafter\xdef\csname aff@num@#1\endcsname{\the\value{affiliation}}%
    \begingroup\edef\x{\endgroup\noexpand\g@addto@macro\noexpand\aff@texts{\noexpand\footnotetext[\the\value{affiliation}]{\unexpanded{#2}}}}\x
}
\newcommand{\affnum}[1]{\csname aff@num@#1\endcsname}
\newcommand{\titlefootnotes}{\aff@texts}
\newcommand{\notetext}[2]{{\renewcommand{\thefootnote}{#1}\footnotetext{#2}}}
\makeatother

\affiliation{cufe}{School of Statistics and Mathematics, Central University of Finance and Economics, Beijing, China.}
\affiliation{hkustat}{Department of Statistics and Actuarial Science, The University of Hong Kong, Pokfulam, Hong Kong SAR, China.}
\affiliation{ids}{Institute of Data Science, The University of Hong Kong, Pokfulam, Hong Kong SAR, China.}

\title{Risk-Calibrated Balancing for \\High-Dimensional Causal Extrapolation}
\author{%
    Fenglin Yang\textsuperscript{\affnum{cufe},\textdagger}
    \and Haoran Lei\textsuperscript{\affnum{hkustat},\textdagger}
    \and Yan Chen\textsuperscript{\affnum{hkustat}}
    \and Jin-Hong Du\textsuperscript{\affnum{hkustat},\affnum{ids},\textasteriskcentered}%
}
\date{\today}

\begin{document}

\maketitle
\titlefootnotes
\notetext{\textdagger}{Equal contribution.}
\notetext{\textasteriskcentered}{Corresponding author: \href{mailto:jinhongd@hku.hk}{jinhongd@hku.hk}.}

\begin{abstract}
In observational causal inference, covariate balancing is widely used to reduce source-target covariate shift, but under weak overlap in high dimensions, stronger balance can induce concentrated weights and increase variance.
Balance measures how well the target covariate distribution is represented, but does not by itself determine how reliably the counterfactual mean can be estimated.
We develop risk-calibrated balancing for the average treatment effect on the treated, which applies ridge augmentation to any normalised base weights and selects its penalty using conditional prediction risk of the counterfactual mean.
Under a random-effects predictive model, we derive an exact finite-sample decomposition of this risk into residual covariate imbalance and weight-induced variance.
For design-independent base weights under proportional asymptotics, we characterise how limiting risk depends on source and target covariance geometry, population mean shift, and weight concentration.
For covariate-adaptive base weights, we develop a uniformly consistent target-aware risk estimator whose minimiser attains vanishing scaled oracle excess risk.
Simulations show that the high-dimensional risk predictions remain informative for adaptive balancing and that target-aware tuning generally reduces excess target risk.
Empirical analyses of job-training and single-cell perturbation data show that risk-calibrated balancing improves on most base estimators, with larger gains under weaker overlap.
\end{abstract}

\noindent\textbf{Keywords:} causal inference, covariate balancing, covariate shift, generalised cross-validation, proportional asymptotics, weak overlap

\section{Introduction}\label{sec:intro}
In observational studies, data are often collected from one population while the scientific question concerns another.
Clinical-trial evidence may need to be generalised to a broader patient population \citep{cole2010generalizing,stuart2011use,dahabreh2019extend,dahabreh2021studydesigns}; nonrepresentative samples are calibrated to population targets in survey sampling \citep{deville1992calibration}; and high-throughput genomic studies seek to compare molecular responses across heterogeneous cellular populations \citep{Du2025Genomic,replogle2022perturbseq}.
Although these applications differ substantively, they share a source-to-target structure, where outcomes observed under one covariate distribution must inform a target population that may follow a different covariate distribution.
When the target is poorly represented by the source, identification alone does not ensure accurate estimation: statistical accuracy also depends on the extent and cost of the required extrapolation.

This source-to-target problem arises naturally in observational causal inference.
For the average treatment effect on the treated (ATT), the treated population is the target and the controls provide the source outcomes.
The counterfactual mean \(\mu_0=\EE\{y(0)\mid a=1\}\) requires transporting the control response surface from the control covariate distribution to the treated covariate distribution.
When the usual causal identification assumptions hold, the ATT can remain identified under weak overlap, while estimating \(\mu_0\) can still be unstable because the source-to-target density ratio may be highly variable, leading to irregular estimation and extreme weights \citep{crump2009overlap,khan2010irregular}.
In high dimensions, simultaneous overlap across many covariates can deteriorate rapidly, making accurate estimation still more difficult \citep{damour2021overlap}.
Covariate balancing mitigates this difficulty by reweighting the source toward the target distribution \citep{hainmueller2012entropy,zubizarreta2015stable,wang2020minimal,li2018balancing}, but stronger balance can require highly concentrated weights, while more stable weights leave greater residual imbalance and reliance on outcome-model extrapolation.

When the covariate dimension is comparable to the sample size, this tradeoff raises two statistical challenges.
\emph{First}, balance measures residual covariate imbalance, while counterfactual prediction error also depends on the noise induced by concentrated weights.
Thus stronger balance need not yield lower conditional prediction risk for the treated.
Moreover, the empirical target mean is itself noisy, and this sampling variation can remain non-negligible when the covariate dimension is comparable to the target sample size.
\emph{Second}, once target prediction risk is taken as the tuning criterion, it must itself be estimated and minimised from data.
Existing augmented and minimax procedures regularise through covariate balance, weight dispersion, or worst-case approximation-variance criteria \citep{athey2018arb,hirshberg2021aml,arbour2025regularizing}, rather than directly tuning target prediction risk.
Under covariate shift, a penalty that performs well for source prediction need not minimise out-of-distribution risk \citep{dobriban2018ridge,patil2024optimal}.
Moreover, when balancing weights are estimated using target covariates, reusing those observations for risk evaluation induces dependence that must be accounted for.

To address these challenges, we propose risk-calibrated balancing (RCB), which applies ridge augmentation to any normalised base weights and selects the penalty using conditional prediction risk of the counterfactual mean for the treated.

\subsection{Contribution}

This paper makes four main contributions.

\begin{enumerate}[(a)]

    \item \textbf{General ridge augmentation and conditional prediction risk.}
    RCB can be applied to any normalised base weights, allowing the initial weighting rule to be chosen separately from risk-based regularisation.
    For base weights constructed without the control outcomes, \Cref{prop:conditional-risk} gives an exact finite-sample decomposition of the conditional random-effects prediction risk into residual covariate imbalance and weight-induced noise on a common prediction-risk scale.

    \item \textbf{High-dimensional risk asymptotics and extrapolation regimes.}
    For design-independent base weights under proportional asymptotics, \Cref{thm:risk-asym} gives uniform deterministic equivalents for the conditional prediction risk, separating the contributions of source geometry, source-target mean shift, weight concentration, and target sampling.
    Here, \(\eta\in[0,1]\) indexes the coupled scaling of source-target mean-shift energy and weight concentration, with smaller \(\eta\) corresponding to more severe extrapolation.
    The resulting regimes are distinct: at \(\eta=0\), risk can remain nonvanishing, whereas for \(\eta>0\) it decays at the \(n_0^{-\eta}\) scale when the limiting risk is positive (\Cref{cor:fixed-risk-limit}).
    In the boundary case \(\eta=1\), source and target empirical-mean effects and affine normalisation enter the leading risk; for \(\eta<1\), these terms are lower order and the leading risk is invariant to the target aspect ratio and target covariance spectrum (\Cref{cor:target-sampling-invariant}).
    \Cref{prop:overlap-weight-concentration} further connects the weight-concentration scaling to density-ratio overlap through the source-to-target \(\chi^2\) divergence.

    \item \textbf{Feasible target-aware risk estimation and tuning.}
    For data-adaptive base weights, we construct a feasible criterion for the same conditional prediction risk using an independent target evaluation sample, without estimating the full source-to-target density ratio.
    A spectral quasi-likelihood estimator consistently estimates the required signal and residual variance components (\Cref{prop:spectral-quasi-likelihood}).
    Under evaluation-sample separation and base-weight rate conditions, the estimated risk is uniformly consistent along the regularisation path and attains vanishing scaled oracle excess risk; penalty consistency additionally follows when the scaled risk path has a unique deterministic minimiser (\Cref{thm:risk-estimation}).
    Consistent risk estimation further yields model-based prediction intervals for the counterfactual mean under the Gaussian predictive law (\Cref{thm:predictive-calibration}).

    \item \textbf{Empirical implications and applications.}
    Simulations show that the high-dimensional predictions remain informative for data-adaptive balancing procedures and that target-aware tuning generally tracks the oracle penalty for the conditional prediction risk more closely than compared methods.
    In the LaLonde analysis \citep{lalonde1986evaluating}, risk-calibrated augmentation is stable across different base weights; under larger source-target discrepancy, it reduces the RMSE of \(\ell_2\) balancing weights and improves on the best double-ridge estimator.
    In the K562 Perturb-seq analysis \citep{replogle2022perturbseq}, where 48 control profiles represent targets in 500 covariate dimensions, standard balancing weights concentrate on few controls while barely reducing population imbalance; RCB reduces the prediction error of every base in most perturbation-outcome pairs under weak overlap and adjusts only for perturbations with a detectable covariate shift.
\end{enumerate}

\subsection{Related work}

Our work is rooted in the literature on weighting, covariate balance, and calibration for causal inference.
Propensity-score methods use treatment assignment probabilities to construct comparable treatment groups \citep{rosenbaum1983central}, whereas a broad class of more recent methods targets covariate balance directly, including entropy balancing \citep{hainmueller2012entropy}, covariate balancing propensity scores \citep{imai2014cbps}, stable balancing weights \citep{zubizarreta2015stable}, minimal approximately balancing weights \citep{wang2020minimal}, kernel balancing \citep{hazlett2020kernelbalancing}, and general balancing weights for user-specified target populations \citep{li2018balancing}. 
Calibration-based results further connect moment balance to efficiency and robustness \citep{chan2016calibration,zhao2019tailored}, including regularised calibration for high-dimensional propensity-score models \citep{tan2020regularized}.
Related source-to-target weighting problems arise in generalising or transporting causal effects across populations \citep{dahabreh2019extend,dahabreh2021studydesigns,chen2023entropygeneralization}.
Balancing criteria are meaningful relative to the outcome structure they are intended to protect against \citep{kallus2020gom}; our analysis makes this structure explicit through a predictive model and evaluates residual imbalance together with weight-induced noise on the scale of target-mean prediction risk.

The resulting problem is particularly consequential under limited overlap.
Large inverse-probability weights can make an identified causal parameter irregular or difficult to estimate at conventional rates \citep{khan2010irregular}, motivating trimming or restriction to regions of sufficient overlap \citep{crump2009overlap}, diagnostics for practical positivity violations \citep{petersen2012positivity}, and alternative target populations such as those induced by overlap weights \citep{li2018overlapweights}. 
More broadly, extrapolation has long been recognised as a source of model dependence in counterfactual inference \citep{king2006extreme}, while methods such as synthetic control and augmented synthetic control constrain or regularise extrapolation from the observed donor support \citep{benmichael2021augsynth}. 
Related high-dimensional work studies how interpolation and ridge regularisation affect causal risk under confounding \citep{vankadara2022causal}. 
Most directly, \citet{arbour2025regularizing} regularise extrapolation through a worst-case error bound that trades feature imbalance and model-misspecification bias against variance.
Our criterion instead targets the conditional prediction risk of the counterfactual target mean.

Our estimator is also related to augmented, debiased, and minimax procedures that combine weighting with outcome regression. 
Augmented inverse-probability weighting and doubly robust estimators use outcome models to correct weighted estimators \citep{robins1994aipw,bang2005dr}, while orthogonal-score methods extend this principle to flexible nuisance estimation \citep{chernozhukov2018dml,chernozhukov2022riesz}. 
Approximate residual balancing \citep{athey2018arb} and augmented minimax linear estimation \citep{hirshberg2021aml} construct weights or augmentation terms by controlling worst-case bias over function classes; related finite-sample minimax theory explicitly accounts for bias when overlap is weak \citep{armstrong2021finite}. 
Linear regression itself can be represented through weighting, making balance and extrapolation properties of regression adjustment transparent \citep{chattopadhyay2023implied}. 
Most directly, \citet{brunssmith2025augmented} show broad algebraic equivalences between augmented balancing estimators and linear regression, and relate double-ridge augmentation to an undersmoothed ridge fit.
We analyse the ridge-augmentation estimator, but with a different focus on the conditional prediction risk for the treated and penalty selection.

Finally, our analysis connects causal weighting to covariate-shift prediction and high-dimensional ridge regression. 
Importance weighting and kernel mean matching treat distribution shift by reweighting the training distribution toward the target distribution \citep{gretton2008kmm,sugiyama2012covariateshift}, while \citet{patil2024optimal} show that optimal ridge regularisation for out-of-distribution prediction depends on the relationship between the training and test distributions.
\citet{du2023subsample} establish high-dimensional equivalences between subsampling and ridge regularisation and develop generalised cross-validation for ridge ensembles, while \citet{du2024implicit} characterise implicit regularisation paths induced by observation weighting.

\section{Causal Setup and Regularised Balancing Estimation}\label{sec:setup}

\subsection{Causal target and source-to-target covariate shift}

We use the potential-outcomes framework for the superpopulation vector
\((x,a,y(0),y(1))\), where \(x\in\RR^p\), \(a\in\{0,1\}\), and
\(y(0),y(1)\) are the control and treated potential outcomes. We observe i.i.d.\ data \(o_i=(x_i,a_i,y_i)\), \(i=1,\ldots,n\), and let \(n_1=\sum_{i=1}^n a_i\) and \(n_0=n-n_1\).
To ensure treatment effects are well-defined, we require regularity conditions on integrability
\(
\EE\left\{|y(0)|+|y(1)|\right\}<\infty
\) and nondegenerate treatment prevalence
\(
0<\PP(a=1)<1
\).
Furthermore, we require the following causal assumption throughout.
\begin{assumption}[Causal identification]
    \label{asm:causal-identification}
    The following conditions hold.
    \begin{enumerate}[(a)]
    \item\label{asm:causal-identification-i} Consistency:
    \(
    y = a y(1) + (1-a)y(0)
    \).
    \item\label{asm:causal-identification-ii} Unconfoundedness:
    \(
    y(0) \indep a \mid x
    \).
    \item\label{asm:causal-identification-iii} Positivity on the treated support:
    \(
    \PP(a=0 \mid x) > 0
    \) for \(\PP(\,\cdot \mid a=1)\)-almost every \(x\).
    \end{enumerate}
\end{assumption}

Let \( \pi_t:=\PP(a=t) \) and \(\PP_t:=\PP(x\in\cdot\mid a=t) \) for \(t\in\{0,1\}\) denote the treatment prevalences and treatment-specific covariate distributions. 
Define the treatment-specific response surfaces, the propensity score, and the propensity odds by 
\[ m_t(x):=\EE\{y(t)\mid x\}, \quad t\in\{0,1\}, \qquad e(x):=\PP(a=1\mid x), \qquad \omega(x):=\frac{e(x)}{1-e(x)}. \] 
The average treatment effect on the treated (ATT) is defined as
\[ \tau_{\mathrm{ATT}} :=\EE\{y(1)-y(0)\mid a=1\} =\mu_1-\mu_0, \] 
where \( \mu_1:=\EE\{y(1)\mid a=1\}\) and \(\mu_0:=\EE\{y(0)\mid a=1\}\).
We have the following identification result.
\begin{proposition}[Identification of the ATT] \label{prop:att-identification} 
    Under \Cref{asm:causal-identification}, the ATT is nonparametrically identified from the observed-data distribution. In particular, 
    \[
        \tau_{\mathrm{ATT}}
        =\EE(y\mid a=1)-\int \EE(y\mid x,a=0)\,\rd\PP_1(x)
        =\int\!\left\{\EE(y\mid x,a=1)-\EE(y\mid x,a=0)\right\}\,\rd\PP_1(x).
    \]
    Moreover, the Radon--Nikodym derivative \( r(x):=\frac{\rd\PP_1}{\rd\PP_0}(x) \) exists and satisfies \[ r(x) = \frac{\pi_0}{\pi_1}\omega(x) = \frac{\pi_0}{\pi_1} \frac{e(x)}{1-e(x)} \qquad \text{for \(\PP_0\)-almost every \(x\)}. \] 
    Consequently, the counterfactual mean admits the equivalent identification formulas 
    \[ \begin{aligned} \mu_0 &= \int m_0(x)\,\rd\PP_1(x)  = \int r(x)m_0(x)\,\rd\PP_0(x) = \EE\{r(x)y\mid a=0\} = \frac{1}{\pi_1} \EE\left[ (1-a)\frac{e(x)}{1-e(x)}y \right]. \end{aligned} \] 
    Equivalently, \[ \tau_{\mathrm{ATT}} = \frac{1}{\pi_1} \EE\left[ ay-(1-a)\frac{e(x)}{1-e(x)}y \right]. \] 
\end{proposition}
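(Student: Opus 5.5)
The plan is to prove the counterfactual identity $\mu_0=\int \EE(y\mid x,a=0)\,\rd\PP_1(x)$ first, and then obtain the density-ratio and weighting formulas from Bayes' rule. Start from the tower property: $\mu_0=\EE\{\EE(y(0)\mid x,a=1)\mid a=1\}$. Unconfoundedness (\Cref{asm:causal-identification}\ref{asm:causal-identification-ii}) gives $\EE(y(0)\mid x,a=1)=\EE(y(0)\mid x)=\EE(y(0)\mid x,a=0)$. By consistency (\Cref{asm:causal-identification}\ref{asm:causal-identification-i}) the last expression equals $\EE(y\mid x,a=0)$. Integrability of $y(0)$ makes every conditional expectation well defined.

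The main subtlety is making sure $\EE(y\mid x,a=0)$ is meaningful $\PP_1$-almost everywhere. A conditional expectation given $a=0$ is defined only up to $\PP_0$-null sets, so we need $\PP_1\ll\PP_0$. This is exactly where positivity (\Cref{asm:causal-identification}\ref{asm:causal-identification-iii}) enters. For any Borel set $B$ we have
\[
\PP_0(B)=\frac{1}{\pi_0}\EE\{(1-e(x))\mathbf 1_B(x)\}.
\]
Suppose $\PP_0(B)=0$. Then $(1-e(x))\mathbf 1_B(x)=0$ almost surely, so $B\cap\{e<1\}$ is null under the marginal law of $x$. Positivity says $e<1$ holds $\PP_1$-almost everywhere, and $\PP_1$ is dominated by the marginal law of $x$. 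Together these give $\PP_1(B)=0$.

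Radon--Nikodym then yields $r$. To compute it, write $\PP_1(B)=\pi_1^{-1}\EE\{e(x)\mathbf 1_B(x)\}$. Because $\PP_1\ll\PP_0$, we may restrict to $\{e<1\}$ and write $e=(1-e)\,\omega$, which gives
\[
\PP_1(B)=\frac{1}{\pi_1}\EE\{(1-e(x))\,\omega(x)\mathbf 1_B(x)\}=\frac{\pi_0}{\pi_1}\int_B \omega\,\rd\PP_0 .
\]
This identifies $r=(\pi_0/\pi_1)\,\omega$ for $\PP_0$-almost every $x$.

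The remaining formulas are changes of measure. The identity $\int m_0\,\rd\PP_1=\int r\,m_0\,\rd\PP_0$ follows from the Radon--Nikodym theorem. Unconfoundedness gives $m_0(x)=\EE(y\mid x,a=0)$, so the tower property yields $\int r\,m_0\,\rd\PP_0=\EE\{r(x)y\mid a=0\}$. Multiplying out, $\EE\{r(x)y\mid a=0\}=\pi_0^{-1}\EE\{(1-a)r(x)y\}$. Substituting $r=(\pi_0/\pi_1)\,\omega$ gives $\pi_1^{-1}\EE\{(1-a)\,\omega(x)\,y\}$.

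Integrability of these weighted terms needs a short check. Note that $\EE\{r(x)|y|\mid a=0\}=\int \EE(|y(0)|\mid x)\,\rd\PP_1(x)=\EE\{|y(0)|\mid a=1\}$, which is finite by the integrability assumption. This justifies the tower step for nonnegative parts, and the general case follows by splitting into positive and negative parts. Finally, $\mu_1=\EE(y\mid a=1)=\pi_1^{-1}\EE(ay)$ by consistency. Subtracting gives both ATT expressions, with the second one following from the tower property $\EE(y\mid a=1)=\int\EE(y\mid x,a=1)\,\rd\PP_1$.
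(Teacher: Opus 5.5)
Your proposal is correct and follows essentially the same route as the paper: consistency and unconfoundedness identify $\EE(y\mid x,a=0)$ with $m_0$, positivity gives $\PP_1\ll\PP_0$ so this identification holds $\PP_1$-almost everywhere, and Bayes' rule plus a change of measure yield $r$ and the weighting formulas. Your explicit derivation of $r=(\pi_0/\pi_1)\,\omega$ on $\{e<1\}$ and your integrability check $\EE\{r(x)|y|\mid a=0\}=\EE\{|y(0)|\mid a=1\}<\infty$ fill in steps the paper leaves implicit.
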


Proposition~\ref{prop:att-identification} shows that only the counterfactual component \(\mu_0\) requires transport across covariate distributions.
The treated mean \(\mu_1\) is directly estimable from the treated outcomes, with the natural sample analog \(\bar y_1:=n_1^{-1}\sum_{i:a_i=1}y_i\).
In contrast, estimating \(\mu_0=\int m_0(x)\,\rd\PP_1(x)\) requires using control outcomes observed under the source distribution \(\PP_0\) to recover the mean of the control response surface under the target distribution \(\PP_1\).

\Cref{asm:causal-identification}~\eqref{asm:causal-identification-iii} implies the absolute-continuity condition \(\PP_1\ll\PP_0\), but does not require the density ratio \(r=\rd\PP_1/\rd\PP_0\) to be bounded or well concentrated.
Thus, under \Cref{asm:causal-identification}, the ATT may remain identified even when the source-to-target shift is statistically severe and estimation of \(\mu_0\) is unstable.
Under weak overlap, especially in high dimensions, estimating \(\mu_0\) is therefore an out-of-distribution prediction problem: the control response surface is learned under \(\PP_0\) but evaluated under the potentially different target distribution \(\PP_1\).

This source-to-target perspective is not specific to the ATT.
Other causal estimands correspond to different target measures and covariate shifts and may require stronger identification conditions: the average treatment effect on the controls (ATC) transports the treated response surface from \(\PP_{x\mid a=1}\) to \(\PP_{x\mid a=0}\), the average treatment effect (ATE) transports both response surfaces to the marginal distribution \(\PP_x\), and conditional effects at a point \(x\) require both surfaces at \(\delta_x\).
We focus on the ATT because it isolates a single source-to-target extrapolation problem, from controls to treated units, while retaining the central difficulty created by weak overlap.
The remainder of the paper develops covariate balancing and ridge augmentation as complementary tools for controlling the resulting conditional prediction risk for the treated.

\subsection{Base balancing weights and residual imbalance}\label{subsec:base}

Collect the control covariates and outcomes in \(X_0\in\mathbb R^{n_0\times p}\) and \(y_0\in\mathbb R^{n_0}\), respectively, with rows \(x_i^\top\) and entries \(y_i\) for units satisfying \(a_i=0\).
Define the control and treated covariate means and the control outcome mean by
\[
    \bar x_0
    =
    \frac{1}{n_0}X_0^\top 1_{n_0},
    \qquad
    \bar x_1
    =
    \frac{1}{n_1}\sum_{i:a_i=1}x_i,
    \qquad
    \bar y_0
    =
    \frac{1}{n_0}1_{n_0}^\top y_0.
\]
Define the centred control covariates and outcomes by
\begin{align}
    X_0^c=C_0X_0,
    \qquad
    y_0^c=C_0y_0,
    \qquad
    \text{where } C_0
    =
    I_{n_0}
    -
    \frac{1}{n_0}\one_{n_0}\one_{n_0}^\top.\label{eq:C}
\end{align}

Covariate balancing methods construct a normalised base-weight vector \(\gamma\in\RR^{n_0}\) so that the weighted control covariate mean approximates the treated covariate mean.
The corresponding weighted control outcome mean \(\gamma^\top y_0\) provides a base estimator of the counterfactual mean \(\mu_0\).
We restrict attention to affinely normalised weights satisfying \(\one_{n_0}^\top\gamma=1\).
Under this normalisation, the residual covariate imbalance is
\begin{align}
    \Delta(\gamma)
    &=
    \bar x_1-X_0^\top\gamma
    =
    (\bar x_1-\bar x_0)-(X_0^c)^\top\gamma,
    \label{eq:Delta}
\end{align}
which expresses the remaining discrepancy as the original treated-control mean shift minus the centred shift induced by reweighting.
Additional convex restrictions, such as nonnegativity or upper bounds on individual weights, may be imposed when required by a particular balancing method.

A broad class of regularised balancing estimators trades residual imbalance against weight dispersion,
\(
    \mathcal D(\gamma)
    :=
    \sum_{i=1}^{n_0}\mathcal C_i(\gamma_i),
\) where each \(\mathcal C_i\) is convex.
For a balance tolerance \(\delta\geq0\), define
\[
    \widehat\gamma_\delta
    \in
    \argmin_{\substack{\gamma\in\RR^{n_0}\\
                       \one_{n_0}^\top\gamma=1}}
    \mathcal D(\gamma)
    \quad
    \text{subject to}
    \quad
    \|\Delta(\gamma)\|_2^2\leq\delta.
\]
A corresponding penalised formulation is
\[
    \widehat\gamma_\zeta
    \in
    \argmin_{\substack{\gamma\in\RR^{n_0}\\
                       \one_{n_0}^\top\gamma=1}}
    \left\{
        \mathcal D(\gamma)
        +
        \zeta\|\Delta(\gamma)\|_2^2
    \right\},
    \qquad
    \zeta\geq0,
\]
where smaller \(\delta\) or larger \(\zeta\) places greater emphasis on balance.
Familiar examples include:
\begin{enumerate}[(a)]
    \item the quadratic dispersion
    \(        \mathcal D(\gamma)
            =
            \frac12\|\gamma\|_2^2
    \),
    which penalises weight concentration and underlies stable balancing weights
    \citep{zubizarreta2015stable};

    \item the relative-entropy dispersion
    \(
        \mathcal D(\gamma)
        =
        \sum_{i=1}^{n_0}
        \gamma_i\log(\gamma_i/q_i)
    \),
    with nonnegative weights \(q_i>0\) and \(0\log0:=0\), as in entropy balancing
    \citep{hainmueller2012entropy}; and

    \item general convex dispersion criteria under bounded covariate imbalance
    \citep{wang2020minimal}.
\end{enumerate}
Under feasibility and a suitable constraint qualification, a constrained solution with an active balance constraint also solves the penalised problem for an associated Karush--Kuhn--Tucker multiplier \(\zeta\), while each \(\zeta\) determines the balance tolerance attained by its penalised solution.
The two formulations therefore generate the same set of solutions as their tuning parameters vary under the usual regularity conditions, although a prespecified \(\delta\) need not correspond to a prespecified \(\zeta\).

Another class of estimators uses the propensity score \(e(x)\) to reweight the control sample.
By \Cref{prop:att-identification}, the propensity odds are proportional to the source-to-target density ratio:
\[
    \frac{\rd\PP_1}{\rd\PP_0}(x)
    =
    \frac{\PP(a=0)}{\PP(a=1)}\omega(x),    
    \qquad
    \mu_0
    =
    \EE\left[
        \frac{1-a}{\PP(a=1)}
        \omega(x) y
    \right],
    \qquad
    \omega(x):=\frac{e(x)}{1-e(x)}.
\]
This identity motivates two common inverse-probability-weighted estimators:
\begin{enumerate}[(a)]
    \item The Horvitz--Thompson estimator: \(\widehat\mu_{0}^{\textsc{ht}}
        =
        \frac{1}{n_1}
        \sum_{i:a_i=0}\widehat\omega_i y_i,\) where \(\widehat\omega_i
        =
        {\widehat e(x_i)}/({1-\widehat e(x_i)})\).
    
    \item The self-normalised H\'ajek estimator:
    \(
        \widehat\mu_{0}^{\textsc{h}}
        =
        \sum_{i:a_i=0}\widehat\gamma_i^{\ipw}y_i\) where
    \(
        \widehat\gamma_i^{\ipw}
        =
        {\widehat\omega_i}/
            ({\sum_{j:a_j=0}\widehat\omega_j}).
    \)
\end{enumerate}
The H\'ajek weights satisfy the affine normalisation exactly and therefore fit the normalised base-weight framework above, whereas the Horvitz--Thompson weights need not sum to one in a realised sample \citep{rosenbaum1983central,lunceford2004ps}.
Under weak overlap, the propensity odds can be highly variable, causing the normalised weights to concentrate on a small number of controls.
Propensity-based weighting, like direct covariate balancing, can therefore improve source-to-target representation at the cost of increased weight concentration.

\subsection{Ridge-augmented balancing}\label{subsec:ridge-augmented}

The weighting procedures in \Cref{subsec:base} need not eliminate residual covariate imbalance. 
Outcome-model augmentation adjusts the base weighted estimator for the remaining imbalance.
This follows the general principle of combining weighting with outcome regression \citep{robins1994aipw,bang2005dr} and is closely related to approximate residual balancing and augmented minimax linear estimation \citep{athey2018arb,hirshberg2021aml}.

Let \(\widehat\gamma\) be a normalised base-weight vector from \Cref{subsec:base}, and denote its residual imbalance by \(\Delta=\Delta(\widehat\gamma)\) as in \eqref{eq:Delta}.
The construction below applies to any such normalised base weights and does not require a particular balancing criterion.
For augmentation, consider the affine working class \(m_{0,\beta}(x):=\alpha_0+x^\top\beta\).
We fit ridge regression with an unpenalised intercept:
\[
    \{\widehat\alpha_0(\lambda),\widehat\beta(\lambda)\}
    \in
    \argmin_{\alpha\in\RR,\,b\in\RR^p}
    \left\{
        \frac{1}{2n_0}
        \|y_0-\alpha\one_{n_0}-X_0b\|_2^2
        +
        \frac{\lambda}{2}\|b\|_2^2
    \right\},
    \qquad \lambda>0.
\]
The covariate vector \(x\) need not consist only of raw baseline measurements.
If \(w\) denotes underlying baseline covariates and \(x=\phi(w)\in\RR^p\) is a chosen finite-dimensional feature representation, including polynomial or spline expansions, random-feature maps, and finite-dimensional kernel approximations, then \(m_{0,\beta}\{\phi(w)\}=\alpha_0+\phi(w)^\top\beta\) can be nonlinear in \(w\).
When the same representation is used throughout \Cref{sec:setup}, residual balance and ridge augmentation are defined in the same feature space.
For the estimator to retain a causal interpretation, the chosen features must preserve the identifying information required by \Cref{asm:causal-identification}; in particular, the corresponding unconfoundedness condition must hold conditional on those features.

Define the sample covariance matrix and resolvent by \( S_0^c
    =
    n_0^{-1}(X_0^c)^\top X_0^c\) and \( M_\lambda^c
    =
    (S_0^c+\lambda I_p)^{-1}\).
The explicit ridge solution is
\begin{align}
    \widehat\beta(\lambda)
    &=
    \frac{1}{n_0}M_\lambda^c(X_0^c)^\top y_0^c
    =
    \frac{1}{n_0}M_\lambda^c(X_0^c)^\top y_0,
    \qquad
    \widehat\alpha_0(\lambda)
    =
    \bar y_0-\bar x_0^\top\widehat\beta(\lambda).
    \label{eq:ridge}
\end{align}
The ridge penalty stabilises extrapolation by shrinking fitted slopes more strongly in directions that are weakly supported by the control design.
Following the augmented-balancing construction \citep{brunssmith2025augmented,hirshberg2021aml,athey2018arb} with ridge outcome adjustment \citep{benmichael2021augsynth}, define
\begin{align*}
    \widehat\mu_{0,\lambda}
    &=
    \widehat\alpha_0(\lambda)
    +
    \bar x_1^\top\widehat\beta(\lambda)
    +
    \widehat\gamma^\top
    {y_0-\widehat\alpha_0(\lambda)\one_{n_0}
    -X_0\widehat\beta(\lambda)} 
    =
    \widehat\gamma^\top y_0
    +
    \Delta^\top\widehat\beta(\lambda).
\end{align*}
Thus the base weights determine the source representation of the target, while ridge augmentation corrects the remaining discrepancy along \(\Delta\).

Substituting \eqref{eq:ridge} gives the equivalent linear-weight representation
\begin{align}
    \widehat\mu_{0,\lambda}
    &=
    \widehat\gamma_\lambda^\top y_0,
    \qquad
    \widehat\gamma_\lambda
    :=
    \widehat\gamma
    +
    \frac{1}{n_0}X_0^cM_\lambda^c\Delta.
    \label{eq:gamma-aug}
\end{align}
Ridge augmentation therefore adjusts the base weights according to both the residual covariate imbalance and the spectral geometry of the control design.
Since \(\one_{n_0}^\top X_0^c=0\), the augmented weights remain normalised, \(\one_{n_0}^\top\widehat\gamma_\lambda=1\).
Equivalently, they solve the weight-space problem
\begin{equation}
    \widehat\gamma_\lambda
    =
    \argmin_{\substack{\widetilde\gamma\in\RR^{n_0}\\
                        \one_{n_0}^\top\widetilde\gamma=1}}
    \left\{
        \frac{1}{2n_0}
        \|\bar x_1-X_0^\top\widetilde\gamma\|_2^2
        +
        \frac{\lambda}{2}
        \|\widetilde\gamma-\widehat\gamma\|_2^2
    \right\}.
    \label{eq:gamma-aug-ridge}
\end{equation}
The first term reduces residual imbalance in the chosen feature representation, whereas the second keeps the augmented weights close to the base weights.
The penalty \(\lambda\) therefore controls the strength of augmentation: smaller \(\lambda\) places greater emphasis on reducing residual imbalance, while larger \(\lambda\) retains more of the base weights.

The resulting ATT estimator is \(\widehat\tau_\lambda=\bar y_1-\widehat\mu_{0,\lambda}\).
For any constant \(c\), replacing \(y_0\) by \(y_0+c\one_{n_0}\) leaves \(\widehat\beta(\lambda)\) and the weights unchanged and shifts \(\widehat\mu_{0,\lambda}\) by \(c\).
Consequently, shifting all observed and potential outcomes by the same constant shifts both \(\bar y_1\) and \(\widehat\mu_{0,\lambda}\) by \(c\), leaving \(\widehat\tau_\lambda\) unchanged.

\section{Conditional Prediction Risk}\label{sec:risk}

Given the causal identification established in the previous section, we study the prediction error of the ridge-augmented estimator under a model-based law over affine control response surfaces in the chosen feature representation.
Our analysis contains two complementary parts.
First, \Cref{sec:risk:exact-risk} gives an exact finite-sample decomposition of the conditional prediction risk for normalised base weights constructed without the control outcomes.
Second, \Cref{sec:risk:asymptotic-risk} introduces a design-independent high-dimensional reference setting, from which we obtain explicit risk characterisations and limiting extrapolation regimes (\Cref{sec:risk:implication}).

\subsection{Exact conditional risk decomposition}
\label{sec:risk:exact-risk}

Recall the treated-sample covariate mean \(\bar x_1=n_1^{-1}\sum_{i:a_i=1}x_i\), and write \(\nu_1=\EE[x\mid a=1]\) and \(\epsilon_1=\bar x_1-\nu_1\).
Let \(\mathcal F_{\mathrm{aux}}\) denote the auxiliary information used to construct the weights; it is the trivial sigma-field when no such information is used.
When honest target splitting is used, \(\mathcal F_{\mathrm{aux}}\) also contains the split assignment \(\mathcal S_n:=\sigma(\mathcal I_{\mathrm P},\mathcal I_{\mathrm E})\).
Define the conditioning field \(\mathcal G_n:=\sigma(X_0,X_1,\mathcal F_{\mathrm{aux}})\).
Throughout the conditional theory, we suppress hats after conditioning on \(\mathcal G_n\): \(\widehat\gamma\) and \(\widehat\gamma_\lambda\) denote the computed weights, whereas \(\gamma\) and \(\gamma_\lambda\) denote their conditional theoretical counterparts.
For any realised sample sizes \((n_0,n_1)\), we work under the following predictive model for the control outcomes.

\begin{assumption}[Affine random-effects predictive model]
    \label{asm:predictive-model}
    The control response model is
    \(
        y_i(0)=\alpha_0+x_i^\top\beta+\epsilon_{0i}
    \) for \(i=1,\ldots,n_0\), where \(\alpha_0\) is a deterministic intercept, \(\beta\) is a coefficient vector and \(\epsilon_0\) is a random noise vector, satisfying the following conditions.
    \begin{enumerate}[(a)]
        \item (Coefficient) Independent of \(\mathcal G_n\), the coefficient admits a prior distribution \(\beta\sim\Pi\) such that \(\EE_\Pi(\beta)=0\) and \(\EE_\Pi(\beta\beta^\top)=\frac{r^2}{p}I_p\) for constant \(r^2\in(0,\infty)\).
        \item (Noise) Conditional on \(\mathcal G_n\) and \(\beta\),
        \(\EE(\epsilon_0\mid\mathcal G_n,\beta)=0\) and \( \Var(\epsilon_0\mid\mathcal G_n,\beta)=\sigma_0^2 I_{n_0}\) for constant \(\sigma_0^2\in(0,\infty)\).
    \end{enumerate}
\end{assumption}

The predictive law \(\Pi\) defines an integrated performance criterion by averaging prediction error over the response surfaces.
For a realised \(\beta\), define the model-based predictive target \(\mu_{0,\beta}:=\int m_{0,\beta}(x)\,\mathrm{d}\PP_1(x)=\alpha_0+\nu_1^\top\beta\).
Under \Cref{asm:predictive-model}, each realisation of \(\beta\) indexes a possible control response surface and hence a corresponding counterfactual mean.
Across realisations of \(\beta\), this target may vary, while for a realised response surface it is fixed.
When the realised affine surface coincides with the identified control response surface, \(\mu_{0,\beta}\) is the causal counterfactual mean \(\mu_0\) for that data-generating process.

For \(\widehat\mu_{0,\lambda}=\gamma_\lambda^\top y_0\), define the extrapolation direction \(d_n(\lambda):=X_0^\top\gamma_\lambda-\nu_1\).
The prediction error satisfies \(\widehat\mu_{0,\lambda}-\mu_{0,\beta}=d_n(\lambda)^\top\beta+\gamma_\lambda^\top\epsilon_0\).
The intercept cancels because \(\one_{n_0}^\top\gamma_\lambda=1\).
For a realised \(\beta\), the conditional mean-squared prediction error is
\begin{align}
    R(\beta;\lambda)
    &:={\EE}_{\epsilon_0}\left[
    \{\widehat\mu_{0,\lambda}-\mu_{0,\beta}\}^2
    \mid\mathcal G_n,\beta\right] \notag\\
    &=\{\beta^\top d_n(\lambda)\}^2
      +\sigma_0^2\|\gamma_\lambda\|_2^2
      =: B(\beta;\lambda)+ V(\lambda).
      \label{eq:R-BV-decom}
\end{align}
Here \(B(\beta;\lambda)\) is the fixed-surface squared conditional bias, and \(V(\lambda)\) is the conditional noise variance.
\Cref{prop:conditional-risk} characterises the conditional prediction risk averaged over \(\Pi\).

\begin{proposition}[Conditional prediction risk]\label{prop:conditional-risk}
    Suppose \(\gamma\) is \(\mathcal G_n\)-measurable and satisfies
    \(\one_{n_0}^\top\gamma=1\). Then \(\gamma_\lambda\) is also
    \(\mathcal G_n\)-measurable and satisfies
    \(\one_{n_0}^\top\gamma_\lambda=1\). Under
    \Cref{asm:predictive-model}, define
    \begin{align*}
        B_n(\lambda)
        &:=\EE_{\beta\sim\Pi}\{B(\beta;\lambda)\mid\mathcal G_n\}
        =\frac{r^2}{p}\|d_n(\lambda)\|_2^2,\qquad
        V_n(\lambda)
        :=V(\lambda)
        =\sigma_0^2\|\gamma_\lambda\|_2^2.
    \end{align*}
    Thus \(B_n\) is the integrated squared-bias component,
    \(V_n\) is the noise variance component, and the conditional prediction risk is
    \begin{align}\label{eq:conditional-risk}
        R_n(\lambda)
        &:=\EE_{\beta\sim\Pi,\epsilon_0}\left[
        \{\widehat\mu_{0,\lambda}-\mu_{0,\beta}\}^2\mid\mathcal G_n\right]
        =B_n(\lambda)+V_n(\lambda)\notag\\
        &=\frac{r^2}{p}\left\{\|\epsilon_1\|_2^2
        -2\lambda\epsilon_1^\top M_\lambda^c\Delta
        +\lambda^2\Delta^\top (M_\lambda^c)^2\Delta\right\} +\sigma_0^2\left\{\|\gamma\|_2^2
        +\frac{2}{n_0}\gamma^\top X_0^c M_\lambda^c\Delta
        +\frac1{n_0}\Delta^\top M_\lambda^c
        \{I_p-\lambda M_\lambda^c\}\Delta\right\}.
    \end{align}
\end{proposition}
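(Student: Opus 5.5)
The plan is to compute directly from the linear-weight representation \eqref{eq:gamma-aug}, in three steps: measurability and normalisation of \(\gamma_\lambda\), averaging the error over \(\beta\) and \(\epsilon_0\), and simplifying \(d_n(\lambda)\) and \(\|\gamma_\lambda\|_2^2\) to the displayed closed form.

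\textbf{Step 1: measurability and normalisation.} The matrices \(X_0\), \(C_0\), \(X_0^c=C_0X_0\), \(S_0^c\) and \(M_\lambda^c\) are functions of \(X_0\). The treated mean \(\bar x_1\) is a function of \(X_1\). Hence \(\Delta=\bar x_1-X_0^\top\gamma\) is \(\mathcal G_n\)-measurable whenever \(\gamma\) is. By \eqref{eq:gamma-aug}, \(\gamma_\lambda=\gamma+n_0^{-1}X_0^cM_\lambda^c\Delta\) is then \(\mathcal G_n\)-measurable. Since \(\one_{n_0}^\top C_0=0\), we have \(\one_{n_0}^\top X_0^c=0\), which gives \(\one_{n_0}^\top\gamma_\lambda=\one_{n_0}^\top\gamma=1\).

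\textbf{Step 2: averaging the error.} Using the normalisation and \Cref{asm:predictive-model}, the intercept cancels and the error is \(\widehat\mu_{0,\lambda}-\mu_{0,\beta}=d_n(\lambda)^\top\beta+\gamma_\lambda^\top\epsilon_0\), with \(d_n(\lambda)\) and \(\gamma_\lambda\) \(\mathcal G_n\)-measurable.
\begin{itemize}
\item Squaring and conditioning on \((\mathcal G_n,\beta)\), the cross term vanishes because \(\EE(\epsilon_0\mid\mathcal G_n,\beta)=0\), and the noise term equals \(\sigma_0^2\|\gamma_\lambda\|_2^2\). This recovers \eqref{eq:R-BV-decom}.
\item Integrating over \(\beta\), the noise term does not depend on \(\beta\).
\item Since \(\beta\) is independent of \(\mathcal G_n\), the bias term gives \(\EE\{(\beta^\top d_n)^2\mid\mathcal G_n\}=d_n^\top\EE_\Pi(\beta\beta^\top)d_n=(r^2/p)\|d_n(\lambda)\|_2^2\).
\end{itemize}
By the tower property this yields \(R_n=B_n+V_n\).

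\textbf{Step 3: simplifying \(d_n(\lambda)\).} The key identity is \(X_0^\top X_0^c=X_0^\top C_0X_0=(X_0^c)^\top X_0^c=n_0S_0^c\), using that \(C_0\) is symmetric and idempotent. Together with \(S_0^cM_\lambda^c=I_p-\lambda M_\lambda^c\), this gives
\[
d_n(\lambda)=X_0^\top\gamma-\nu_1+S_0^cM_\lambda^c\Delta=(\bar x_1-\Delta-\nu_1)+(I_p-\lambda M_\lambda^c)\Delta=\epsilon_1-\lambda M_\lambda^c\Delta.
\]
Expanding the squared norm and using the symmetry of \(M_\lambda^c\) produces the bracket multiplying \(r^2/p\).

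\textbf{Step 4: simplifying \(\|\gamma_\lambda\|_2^2\).} Expanding gives
\[
\|\gamma_\lambda\|_2^2=\|\gamma\|_2^2+\tfrac{2}{n_0}\gamma^\top X_0^cM_\lambda^c\Delta+\tfrac{1}{n_0^2}\Delta^\top M_\lambda^c(X_0^c)^\top X_0^cM_\lambda^c\Delta.
\]
The last term equals \(n_0^{-1}\Delta^\top M_\lambda^cS_0^cM_\lambda^c\Delta=n_0^{-1}\Delta^\top M_\lambda^c(I_p-\lambda M_\lambda^c)\Delta\), because \(S_0^c\) and \(M_\lambda^c\) commute.

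There is no real obstacle; the proof is algebra. The step needing most care is Step 3: one must replace \(X_0^\top X_0^c\) by \(n_0S_0^c\) via idempotence of \(C_0\), and rewrite \(X_0^\top\gamma\) through the definition of \(\Delta\) so that \(\epsilon_1\) appears. A secondary point is to justify the conditional moment computations by independence of \(\beta\) from \(\mathcal G_n\), rather than assuming independence of \(\beta\) and \(\epsilon_0\).
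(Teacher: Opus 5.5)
Your proposal is correct and follows essentially the same route as the paper. Both arguments obtain $X_0^\top\gamma_\lambda=\bar x_1-\lambda M_\lambda^c\Delta$ from $n_0^{-1}X_0^\top X_0^c=S_0^c=(M_\lambda^c)^{-1}-\lambda I_p$, average $\beta\beta^\top$ over $\Pi$ to get the bias bracket, and use the same identity to simplify the quadratic term of $\|\gamma_\lambda\|_2^2$; the only difference is that you verify measurability, normalisation and the split $R_n=B_n+V_n$ explicitly, whereas the paper takes the split from \eqref{eq:R-BV-decom}.
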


The finite-sample identity in \Cref{prop:conditional-risk} applies to any normalised \(\mathcal G_n\)-measurable base weight vector, including weights adapted to the realised covariates but constructed without the control outcomes.
The decomposition expresses residual extrapolation \(B_n\) and weight-induced noise \(V_n\) in the same risk units, making explicit the tradeoff controlled by ridge augmentation.

\subsection{High-dimensional extrapolation risk}\label{sec:risk:asymptotic-risk}

\Cref{prop:conditional-risk} gives the exact conditional prediction risk, but the realised quantities in \eqref{eq:conditional-risk} do not by themselves reveal its systematic behaviour when \(p\) is comparable to the sample sizes.
They jointly depend on the source spectrum, the orientation and concentration of the base weights, the source-target mean shift, and empirical fluctuations of the source and target covariate means.
To separate these effects, we study a proportional random-design sequence with design-independent base weights, yielding explicit deterministic equivalents for the high-dimensional risk.
For an ATT interpretation along this sequence, the causal identification conditions continue to hold at each \(n\), including absolute continuity of the target covariate law with respect to the source covariate law.
The next two assumptions specify the high-dimensional covariate design and the scaling of the design-independent base weights.
For \(t=0,1\), define \(\phi_{t,n}:=p/n_t\).

\begin{assumption}[High-dimensional covariate design]
    \label{asm:rmt}
    The following conditions hold.
    \begin{enumerate}[(a)]
        \item (Proportional asymptotic regime)        
        Along a sequence indexed by total sample size $n$ with \(p,n_0,n_1\to\infty\), there exist constants \(0<c_\phi<C_\phi<\infty\) such that \(c_\phi\leq\phi_{t,n}\leq C_\phi\) for \(t=0,1\).

        \item (Source-target design)
        There exist deterministic vectors \(\nu_0,\nu_1\in\RR^p\) and deterministic symmetric matrices
        \(\Sigma_0,\Sigma_1\in\RR^{p\times p}\) such that
        \[
            X_0
            =
            \one_{n_0}\nu_0^\top+Z_0\Sigma_0^{1/2},
            \qquad
            x_{1i}
            =
            \nu_1+\Sigma_1^{1/2}z_{1i},
            \qquad
            i=1,\ldots,n_1,
        \]
        where the entries of \(Z_0\in\RR^{n_0\times p}\) are i.i.d., the entries of
        \(z_{11},\ldots,z_{1n_1}\in\RR^p\) are mutually independent, and the two arrays are independent.
        There exists \(\delta>0\) such that
        \[
            \EE Z_{0,ij}
            =
            \EE z_{1i,j}
            =
            0,
            \qquad
            \EE Z_{0,ij}^2
            =
            \EE z_{1i,j}^2
            =
            1,
            \qquad
            \sup_{n,i,j}\EE|Z_{0,ij}|^{8+\delta}<\infty,
            \qquad
            \sup_{n,i,j}\EE|z_{1i,j}|^{4+\delta}<\infty.
        \]

        \item (Covariance spectra)
        There exist constants \(0<c_\Sigma<C_\Sigma<\infty\) such that
        \(c_\Sigma I_p\preceq\Sigma_t\preceq C_\Sigma I_p\) for \(t=0,1\).
    \end{enumerate}
\end{assumption}

\begin{assumption}[Design-independent balancing]
    \label{asm:design-independent-regime}
    The following conditions hold.
    \begin{enumerate}[(a)]
        \item\label{asm:design-independent-regime-i} (Design independence)
        The sigma-field \(\mathcal F_{\mathrm{aux}}\) is independent of
        \((X_0,X_1)\), and the base weight vector
        \(\gamma\in\RR^{n_0}\) is \(\mathcal F_{\mathrm{aux}}\)-measurable
        with \(\sum_i\gamma_i=1\).

        \item\label{asm:design-independent-regime-ii} (Coupled extrapolation scaling)
        For some \(\eta\in[0,1]\), define
        \(\nu_\Delta:=\nu_1-\nu_0\), \(\rho_{\eta,n}^2:=(n_0^\eta/p)\|\nu_\Delta\|_2^2\), and \(\varrho_{\eta,n}^2:=n_0^\eta\|C_0\gamma\|_2^2\).
        Assume \(\rho_{\eta,n}^2=\cO(1)\) and, for \(\PP\)-almost every
        realisation of \(\mathcal F_{\mathrm{aux}}\),
        \(
            \sup_n\varrho_{\eta,n}^2<\infty.
        \)
    \end{enumerate}
\end{assumption}

\Cref{asm:rmt} specifies proportional asymptotic regimes, bounded source and target covariance spectra, and the moment conditions.
\Cref{asm:design-independent-regime}~\eqref{asm:design-independent-regime-i} requires the base weights to be independent of the realised covariates conditional on \(\mathcal F_{\mathrm{aux}}\).
Under \Cref{asm:design-independent-regime}~\eqref{asm:design-independent-regime-ii}, \(\rho_{\eta,n}^2\) measures source-target mean-shift energy and \(\varrho_{\eta,n}^2\) measures weight concentration, with \(\eta\) indexing the common \(n_0^{-\eta}\) scale.
\Cref{prop:overlap-weight-concentration} connects the scale of the concentration constant \(\varrho_{\eta,n}^2\) to density-ratio overlap.

The deterministic-equivalent analysis also requires concentration of the source random design; we impose a standard sub-Gaussian sufficient condition below, while the weaker resolvent condition used in the proof is stated in \Cref{asm:technical-source-resolvent}.

\begin{assumption}[Sub-Gaussian source design]
    \label{asm:global-source-resolvent}
    The source coordinates are uniformly sub-Gaussian: for some constant \(K<\infty\),
    \(
        \sup_{n,i,j}\|Z_{0,ij}\|_{\psi_2}\leq K.
    \)
\end{assumption}

We next introduce the spectral and resolvent notation used in the high-dimensional risk characterisation.
For the remainder of this subsection, fix \(\Lambda=[\lambda_-,\lambda_+]\subset(0,\infty)\), write \(\otr(A):=p^{-1}\tr(A)\) for matrices in \(\RR^{p\times p}\), and let \((s_j,w_j)_{j=1}^p\) be an orthonormal eigensystem of \(\Sigma_0\).
Define the spectral measures
\begin{align*}
    H_{0,p}
    &=
    \frac1p\sum_{j=1}^p\delta_{s_j},
    \qquad
    H_{1\mid0,p}
    =
    \frac1p\sum_{j=1}^p
    (w_j^\top\Sigma_1w_j)\delta_{s_j},
    \qquad
    G_{\nu,p}
    =
    \begin{cases}
        \displaystyle
        \frac1{\|\nu_\Delta\|_2^2}
        \sum_{j=1}^p(\nu_\Delta^\top w_j)^2\delta_{s_j},
        & \nu_\Delta\neq0,\\
        H_{0,p},
        & \nu_\Delta=0.
    \end{cases}
\end{align*}
Here \(H_{0,p}\) records the source covariance spectrum, \(H_{1\mid0,p}\) the target covariance geometry along the source eigenspaces, and \(G_{\nu,p}\) the orientation of the source-target mean shift relative to those eigenspaces.
For \(\lambda\in\Lambda\), let \(v_n(\lambda)>0\) be the unique solution to
\[
    \frac1{v_n(\lambda)}
    =
    \lambda
    +
    \phi_{0,n}
    \int
    \frac{s}{1+v_n(\lambda)s}
    \,\rd H_{0,p}(s),
\]
and define
\vspace{-2mm}
\begin{align*}
    \widetilde v_{v,n}(\lambda)
    &:=
    \left[
        v_n(\lambda)^{-2}
        -
        \phi_{0,n}
        \int
        \frac{s^2}{\{1+v_n(\lambda)s\}^2}
        \,\rd H_{0,p}(s)
    \right]^{-1},\\
    A_n(\lambda)
    &:=
    \{v_n(\lambda)\Sigma_0+I_p\}^{-1}.
\end{align*}
For a deterministic matrix sequence \(A\in\RR^{p\times p}\) satisfying \(A=A^\top\succeq0\) and \(\sup_n\|A\|_{\oper}<\infty\), define
\begin{align*}
    \widetilde v_{b,n}(\lambda;A)
    &:=
    \frac{
        \phi_{0,n}
        \otr\{A\Sigma_0(v_n(\lambda)\Sigma_0+I_p)^{-2}\}
    }{
        \widetilde v_{v,n}(\lambda)^{-1}
    },\\
    K_{1,n}(\lambda;A)
    &:=
    \otr\{AA_n(\lambda)\},\\
    K_{2,n}(\lambda;A)
    &:=
    \otr\left[
        A_n(\lambda)
        \left\{\widetilde v_{b,n}(\lambda;A)\Sigma_0+A\right\}
        A_n(\lambda)
    \right].
\end{align*}
Write \(\widetilde v_{b,n}(\lambda)\) when \(A=I_p\).
The quantities \(K_{1,n}(\lambda;A)\) and \(K_{2,n}(\lambda;A)\) are the deterministic equivalents of \(\lambda\otr\{AM_\lambda^c\}\) and \(\lambda^2\otr\{A(M_\lambda^c)^2\}\), respectively.
With this notation, we can characterise the high-dimensional conditional prediction risk uniformly over the regularisation path.

\begin{theorem}[Uniform risk deterministic equivalents under design-independent base weights]
    \label{thm:risk-asym}
    Let \(\Lambda=[\lambda_-,\lambda_+]\subset(0,\infty)\).
    Under \Cref{asm:predictive-model,asm:rmt,asm:design-independent-regime,asm:global-source-resolvent}, define
    \begingroup\small
    \begin{align*}
        \sB_{\eta,n}(\lambda)
        :=&
        \frac{\lambda^2r^2\varrho_{\eta,n}^2}{\phi_{0,n}}
        \{v_n(\lambda)-\lambda\widetilde v_{v,n}(\lambda)\}
        +r^2\rho_{\eta,n}^2
        \int\frac{\widetilde v_{b,n}(\lambda)s+1}
        {\{v_n(\lambda)s+1\}^2}\,\rd G_{\nu,p}(s)\\
        &+
        \ind{(\eta=1)}\frac{r^2\phi_{1,n}}{\phi_{0,n}}
        \left[
            \otr(\Sigma_1)
            -2\otr\{\Sigma_1A_n(\lambda)\}
            +\otr\left\{
                A_n(\lambda)
                [\widetilde v_{b,n}(\lambda;\Sigma_1)\Sigma_0+\Sigma_1]
                A_n(\lambda)
            \right\}
        \right]\\
        &\quad+
        \ind{(\eta=1)}r^2K_{2,n}(\lambda;\Sigma_0),\\[1mm]
        \sV_{\eta,n}(\lambda)
        :=&
        \sigma_0^2\varrho_{\eta,n}^2\lambda^2
        \widetilde v_{v,n}(\lambda)
        +\frac{\sigma_0^2\phi_{0,n}\rho_{\eta,n}^2}{\lambda}
        \int\frac{\{v_n(\lambda)-\widetilde v_{b,n}(\lambda)\}s}
        {\{v_n(\lambda)s+1\}^2}\,\rd G_{\nu,p}(s)\\
        &+
        \ind{(\eta=1)}\frac{\sigma_0^2\phi_{1,n}}{\lambda}
        \left[
            \otr\{\Sigma_1A_n(\lambda)\}
            -\otr\left\{
                A_n(\lambda)
                [\widetilde v_{b,n}(\lambda;\Sigma_1)\Sigma_0+\Sigma_1]
                A_n(\lambda)
            \right\}
        \right]\\
        &\quad+
        \ind{(\eta=1)}\sigma_0^2
        \left[
            1+\frac{\phi_{0,n}}{\lambda}
            \{K_{1,n}(\lambda;\Sigma_0)-K_{2,n}(\lambda;\Sigma_0)\}
        \right].
    \end{align*}\par
    \endgroup
    \vspace{-4mm}
    Let \(\sR_{\eta,n}(\lambda):=\sB_{\eta,n}(\lambda)+\sV_{\eta,n}(\lambda)\).
    For \(\PP\)-almost every realisation of \(\mathcal F_{\mathrm{aux}}\), under the conditional law of \((X_0,X_1)\) given \(\mathcal F_{\mathrm{aux}}\), every pair \((Q_n,Q_{\eta,n})\in\{(B_n,\sB_{\eta,n}),(V_n,\sV_{\eta,n}),(R_n,\sR_{\eta,n})\}\) satisfies
    \[
        \sup_{\lambda\in\Lambda}
        \left|n_0^\eta Q_n(\lambda)-Q_{\eta,n}(\lambda)\right|
        \pto0.
        \label{eq:finite-de-risk}
    \]
    The same convergence holds in probability under the joint law.
\end{theorem}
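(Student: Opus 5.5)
The plan is to start from the exact identity \eqref{eq:conditional-risk} of \Cref{prop:conditional-risk} and condition on \(\mathcal F_{\mathrm{aux}}\). By \Cref{asm:design-independent-regime}~\eqref{asm:design-independent-regime-i}, after this conditioning \(\gamma\) is a deterministic normalised vector, and \((X_0,X_1)\) keeps its law from \Cref{asm:rmt}. The joint-law statement then follows from the conditional one by dominated convergence, since probabilities are bounded by one.

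\textbf{Step 1: decompose the imbalance.} Using \(X_0^c=C_0Z_0\Sigma_0^{1/2}\) and \(\bar x_0=\nu_0+\Sigma_0^{1/2}\bar z_0\), write
\[
\Delta=\nu_\Delta+\epsilon_1-\Sigma_0^{1/2}\bar z_0-\Sigma_0^{1/2}Z_0^\top C_0\gamma ,
\]
where \(\bar z_0:=n_0^{-1}Z_0^\top\one_{n_0}\). Substituting into \(B_n\) and \(V_n\) expresses \(n_0^\eta R_n\) as a finite sum of quadratic and bilinear forms in \(M_\lambda^c\) and \((M_\lambda^c)^2\). These forms are evaluated at four vectors:
\begin{enumerate}[(i)]
\item the deterministic shift \(\nu_\Delta\);
\item the target fluctuation \(\epsilon_1\), which is independent of \(X_0\) and has covariance \(\Sigma_1/n_1\);
\item the source-mean fluctuation \(\Sigma_0^{1/2}\bar z_0\);
\item the weight direction \(u:=\Sigma_0^{1/2}Z_0^\top C_0\gamma\).
\end{enumerate}
The \(\sigma_0^2\) terms \(\gamma^\top X_0^cM_\lambda^c\Delta\) and \(\Delta^\top M_\lambda^c(I-\lambda M_\lambda^c)\Delta\) take the same form.

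\textbf{Step 2: evaluate the forms at the correct scale.} Terms with \(\nu_\Delta\) carry the factor \(n_0^\eta\|\nu_\Delta\|_2^2/p=\rho_{\eta,n}^2\). The anisotropic deterministic equivalents for \(\nu^\top M_\lambda^c\nu\) and \(\nu^\top(M_\lambda^c)^2\nu\), namely \(v_n\)-resolvents \(A_n(\lambda)\) together with the derivative correction \(\widetilde v_{b,n}\), give the \(G_{\nu,p}\)-integrals.

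Terms with \(\epsilon_1\) are handled by conditioning on \(X_0\). Their conditional expectations are traces of \(\Sigma_1 M_\lambda^c\) and \(\Sigma_1(M_\lambda^c)^2\) divided by \(n_1\), and the fluctuations vanish by variance bounds under the \(4+\delta\) moments. Trace equivalents with \(A=\Sigma_1\) then produce the \(\phi_{1,n}/\phi_{0,n}\) brackets. After multiplying by \(n_0^\eta\), these terms are \(\cO(n_0^{\eta-1})\), so they survive only when \(\eta=1\). The same scaling argument applies to \(\bar z_0\), whose terms give the \(K_{1,n}(\cdot;\Sigma_0)\) and \(K_{2,n}(\cdot;\Sigma_0)\) terms together with the constant \(\sigma_0^2\). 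This is where the centring \(C_0\) enters; since \(C_0\) is a rank-one modification, I would handle it by Sherman--Morrison, with error \(\cO(1/n_0)\) after normalisation.

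Cross terms between independent pieces, or between a pieces and a mean-zero piece, are \(o_\PP(1)\) on the \(n_0^{-\eta}\) scale. I would prove this by second-moment bounds, using \(\rho^2_{\eta,n}=\cO(1)\) and \(\sup_n\varrho^2_{\eta,n}<\infty\).

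\textbf{Step 3: the weight-direction terms (main obstacle).} The vector \(u\) is built from \(Z_0\) itself, so it is correlated with \(M_\lambda^c\). I would remove this dependence with push-through identities. For example,
\[
\tfrac1{n_0}X_0^cM_\lambda^c(X_0^c)^\top=I-\lambda\,(\tfrac1{n_0}X_0^c(X_0^c)^\top+\lambda I)^{-1},
\]
and analogous identities hold for the squared resolvent. These convert \(u^\top M_\lambda^c u\), \(u^\top(M_\lambda^c)^2u\) and \(\gamma^\top X_0^cM_\lambda^c u\) into quadratic forms of the companion resolvent \(\check M_\lambda:=(n_0^{-1}X_0^c(X_0^c)^\top+\lambda I)^{-1}\) and of \(\check M_\lambda^2\), evaluated at the deterministic vector \(C_0\gamma\). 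The isotropic local law on the \(n_0\)-side, which needs the \(8+\delta\) moments and the sub-Gaussian condition of \Cref{asm:global-source-resolvent}, then yields terms proportional to \(\|C_0\gamma\|_2^2\) times the companion Stieltjes quantities. After rescaling, these are the \(\lambda^2\varrho^2_{\eta,n}\{v_n-\lambda\widetilde v_{v,n}\}/\phi_{0,n}\) and \(\sigma_0^2\varrho^2_{\eta,n}\lambda^2\widetilde v_{v,n}\) terms. I expect the hardest part to be the second-order (squared-resolvent) isotropic laws on the companion side, together with the bilinear cross terms between \(u\) and \(\nu_\Delta\). For the cross terms, I would first compute conditional expectations by leave-one-row-out decoupling and then show concentration.

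\textbf{Step 4: uniformity over \(\Lambda\).} On \(\Lambda\) we have \(\|M_\lambda^c\|_{\oper}\le\lambda_-^{-1}\) and \(\|\partial_\lambda M_\lambda^c\|_{\oper}\le\lambda_-^{-2}\). Because \(\|\Delta\|_2^2/p\) and \(\|\gamma\|_2^2 n_0^\eta\) are \(\cO_\PP(1)\) on the relevant scales, both \(n_0^\eta Q_n(\lambda)\) and \(Q_{\eta,n}(\lambda)\) are Lipschitz in \(\lambda\) with \(\cO_\PP(1)\) constants. For the deterministic side, this uses implicit differentiation of the fixed-point equation for \(v_n\). Pointwise convergence on a finite grid, which is dense enough as \(n\to\infty\), combined with this equicontinuity then gives the supremum statement for \(B_n\), \(V_n\) and hence \(R_n\).
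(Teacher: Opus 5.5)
Your plan has the paper's architecture. It uses the same four-vector split of \(\Delta\), conditioning on \(X_0\) for \(\epsilon_1\), anisotropic equivalents for \(\nu_\Delta\), push-through to the companion resolvent at the deterministic vector \(C_0\gamma\), and a Lipschitz/net argument for uniformity.

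The gap is the source-mean vector \(e_0:=\bar x_0-\nu_0=\Sigma_0^{1/2}\bar z_0\). Its pure terms are \(\Op(n_0^{\eta-1})\), so they matter only at \(\eta=1\). There, however, they produce \(r^2K_{2,n}(\lambda;\Sigma_0)\) and \(\sigma_0^2\phi_{0,n}\lambda^{-1}\{K_{1,n}(\lambda;\Sigma_0)-K_{2,n}(\lambda;\Sigma_0)\}\).

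\textbf{Why your Step 2 fails for these terms.}
``The same scaling argument'' does not apply. The vector \(e_0\) is a function of \(Z_0\), and for non-Gaussian coordinates it is not independent of \(X_0^c\). So no conditioning argument yields \(e_0^\top f(S_0^c)e_0\approx n_0^{-1}\tr\{\Sigma_0f(S_0^c)\}\).

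Nor is the centring an \(\cO(1/n_0)\) perturbation here. Write \(B_0=n_0^{-1/2}Z_0\Sigma_0^{1/2}\), \(\iota=n_0^{-1/2}\one_{n_0}\), \(M_0=(B_0^\top B_0+\lambda I_p)^{-1}\) and \(R_0=(B_0B_0^\top+\lambda I_{n_0})^{-1}\). Then \(e_0=B_0^\top\iota\) and \(S_0^c=B_0^\top B_0-e_0e_0^\top\), so the rank-one correction points along \(e_0\) itself. One has \(e_0^\top M_0e_0=1-\lambda a_n\) with \(a_n=\iota^\top R_0\iota\to v_n\). Sherman--Morrison instead gives exactly \(e_0^\top M_\lambda^ce_0=1/(\lambda a_n)-1\), and the two limits differ by \((1-\lambda v_n)^2/(\lambda v_n)>0\). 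Your trace heuristic applied to the centred resolvent happens to predict the right limit, but nothing in the plan proves it.

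\textbf{The missing idea.}
Reduce \(e_0\) exactly to the deterministic unit direction \(\iota\). Push-through \(M_0B_0^\top=B_0^\top R_0\) and exact Sherman--Morrison give \(M_\lambda^ce_0=B_0^\top R_0\iota/(\lambda a_n)\) and \(e_0^\top(M_\lambda^c)^2e_0=1/(\lambda^2a_n)-b_n/(\lambda a_n^2)\), where \(b_n=\iota^\top R_0^2\iota\). The isotropic law at \(\iota\) (the same tool you use at \(C_0\gamma\)) gives \(a_n\to v_n\) and \(b_n\to\widetilde v_{v,n}\). The fixed-point equation then converts these into the \(K_{1,n},K_{2,n}\) terms.

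\textbf{Cross terms need the same device.}
Your second-moment argument for ``independent or mean-zero pieces'' does not reach terms in which both factors depend on \(Z_0\). For these, Cauchy--Schwarz leaves an \(\Op(1)\) bound after scaling when \(\eta=1\).
\begin{itemize}
\item For \(e_0\) against the weight direction, the exact reduction gives \(\lambda^2e_0^\top(M_\lambda^c)^2(X_0^c)^\top\gamma=\lambda^2\sqrt{n_0}\,(b_nc_1-a_nc_2)/a_n^2\), with \(c_j=(C_0\gamma)^\top R_0^j\iota\). Since \(C_0\gamma\perp\iota\), the isotropic law gives \(c_j=\op(\|C_0\gamma\|_2)\).
\item For \(e_0\) against \(\nu_\Delta\), the reduction leads to the off-diagonal bound \(\nu_\Delta^\top M_0B_0^\top\iota=\Op(n_0^{-1/2+\zeta}\|\nu_\Delta\|_2)\) with \(\zeta<1/4\).
\item That same bound, not an \(\cO(1/n_0)\) estimate, is what makes the centring correction \((n_0^\eta/p)(\nu_\Delta^\top M_0e_0)^2/(\lambda a_n)\) in the \(\nu_\Delta\)-forms negligible.
\end{itemize}

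Finally, your Step~3 needs the compression \(\check M_\lambda=\lambda^{-1}\iota\iota^\top+R_0-R_0\iota\iota^\top R_0/a_n\). Only after this can the uncentred isotropic law be applied to the companion forms.
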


\Cref{thm:risk-asym} uniformly approximates the entire scaled conditional risk path by explicit quantities determined by the source spectrum, source-target mean shift, weight concentration, and empirical-mean sampling.
The deterministic equivalents retain the finite-dimensional spectral measures and aspect ratios; fixed limits are introduced in \Cref{sec:risk:implication}.

\subsection{Limiting regimes and qualitative implications}\label{sec:risk:implication}

\Cref{thm:risk-asym} gives sequence-dependent deterministic equivalents for the scaled conditional risk path; \Cref{asm:spectral-limits} converts these into fixed limiting risk functions by imposing convergence of the aspect ratios, scaling constants, and spectral measures.
The resulting limits identify which terms in \Cref{thm:risk-asym} determine the leading risk across the extrapolation regimes indexed by \(\eta\).

\begin{assumption}[Fixed spectral limits]\label{asm:spectral-limits}
    In addition to \Cref{asm:rmt,asm:design-independent-regime}, suppose that, for \(t=0,1\), \(\phi_{t,n}\to\phi_t\in(0,\infty)\), \(\rho_{\eta,n}^2\to\rho_\eta^2\), and \(\varrho_{\eta,n}^2\asto\varrho_\eta^2\) for deterministic constants \(\rho_\eta^2,\varrho_\eta^2\in[0,\infty)\).
    Suppose further that, weakly as finite measures, \(H_{0,p}\dto H_0\), \(G_{\nu,p}\dto G_\nu\), and \(H_{1\mid0,p}\dto H_{1\mid0}\).
\end{assumption}

Let \(\sB_\eta,\sV_\eta\colon\Lambda\to[0,\infty)\) denote the continuous fixed-limit integrated squared-bias and noise variance functions induced by \Cref{asm:spectral-limits}, and set \(\sR_\eta=\sB_\eta+\sV_\eta\).
Their explicit formulas are given in \eqref{eq:fixed-limit-functions} of Appendix~\ref{app:proof-fixed-limit-corollaries}.
The following corollary gives the fixed-limit form of \Cref{thm:risk-asym} and the resulting risk scale.

\begin{corollary}[Fixed risk limits and extrapolation regimes]
    \label{cor:fixed-risk-limit}
    Under the assumptions of \Cref{thm:risk-asym} and \Cref{asm:spectral-limits}, for \(\PP\)-almost every admissible realisation of \(\mathcal F_{\mathrm{aux}}\), for every pair \((Q_n,Q_\eta)\in\{(B_n,\sB_\eta),(V_n,\sV_\eta),(R_n,\sR_\eta)\}\), the following convergence holds in probability under the conditional law of \((X_0,X_1)\) given \(\mathcal F_{\mathrm{aux}}\):
    \(
        \sup_{\lambda\in\Lambda}
        \left|n_0^\eta Q_n(\lambda)-Q_\eta(\lambda)\right|
        \pto0.
    \)
    The same convergence holds in probability under the joint law.
    For each fixed \(\lambda\), each such pair satisfies \(Q_n(\lambda)=n_0^{-\eta}\{Q_\eta(\lambda)+\op(1)\}\).
    Consequently, 
    \begin{enumerate}[(a)]
        \item when \(\eta>0\), each component vanishes; if \(Q_\eta(\lambda)>0\), then \(n_0^\eta Q_n(\lambda)\pto Q_\eta(\lambda)\), while if \(Q_\eta(\lambda)=0\), then \(Q_n(\lambda)=\op(n_0^{-\eta})\);
        \item when \(\eta=0\), \(R_n(\lambda)\pto\sR_0(\lambda)\), so the conditional prediction risk vanishes in probability if and only if \(\sR_0(\lambda)=0\).
    \end{enumerate}
\end{corollary}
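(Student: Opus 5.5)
The plan is to derive the corollary from \Cref{thm:risk-asym} by a triangle inequality: show that the sequence-dependent deterministic equivalents converge uniformly on \(\Lambda\) to the fixed limits, i.e.
\[
    \sup_{\lambda\in\Lambda}\left|Q_{\eta,n}(\lambda)-Q_\eta(\lambda)\right|\to0
\]
for \(Q\in\{\sB,\sV,\sR\}\), for \(\PP\)-almost every realisation of \(\mathcal F_{\mathrm{aux}}\).

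\textbf{Step 1: fixed-point solution.} I first handle \(v_n(\lambda)\). Under \Cref{asm:rmt}, \(H_{0,p}\) is supported in \([c_\Sigma,C_\Sigma]\). For \(\lambda\in\Lambda\), the fixed-point equation gives a priori bounds \(0<\underline v\le v_n(\lambda)\le 1/\lambda_-\), uniformly in \(n\). The map
\[
    (v,\lambda,\phi,H)\mapsto v^{-1}-\lambda-\phi\int s/(1+vs)\,\rd H(s)
\]
is strictly decreasing in \(v\), with derivative bounded away from zero on this compact set; this is the same quantity as \(\widetilde v_{v,n}^{-1}\), which is positive. An implicit-function argument then gives \(v_n\to v\) uniformly on \(\Lambda\), where \(v\) solves the limiting equation with \((\phi_0,H_0)\). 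Uniformity holds because the family \(\lambda\mapsto v_n(\lambda)\) is equi-Lipschitz and converges pointwise.

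\textbf{Step 2: remaining ingredients.} Next I treat \(\widetilde v_{v,n}\), \(\widetilde v_{b,n}\), the \(G_{\nu,p}\)-integrals, and the traces \(\otr\{\Sigma_1A_n\}\), \(K_{1,n}\), \(K_{2,n}\). Each can be written as an integral against \(H_{0,p}\), \(G_{\nu,p}\), or \(H_{1\mid0,p}\) of a function of \((s,v_n(\lambda),\widetilde v_{b,n}(\lambda))\). For example,
\[
    \otr\{\Sigma_1A_n\}=\int (v_ns+1)^{-1}\,\rd H_{1\mid0,p}(s).
\]
The cross term \(\otr\{A_n\Sigma_1A_n\}\) has the analogous form with \(\Sigma_1\) taken diagonally in the \(\Sigma_0\) eigenbasis, and \(\widetilde v_{b,n}(\lambda;\Sigma_1)\) similarly uses \(H_{1\mid0,p}\). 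The integrands are continuous and bounded on the compact support \([c_\Sigma,C_\Sigma]\), and they are jointly equicontinuous in the parameters. Weak convergence of finite measures with uniformly bounded mass (\(H_{1\mid0,p}\) has mass \(\le C_\Sigma\)), combined with uniform convergence of \(v_n\), therefore gives uniform convergence on \(\Lambda\). The prefactors \(\phi_{t,n}\) and \(\rho_{\eta,n}^2\) converge deterministically. The factor \(\varrho_{\eta,n}^2\) converges almost surely, so I fix a realisation of \(\mathcal F_{\mathrm{aux}}\) in the full-measure set where both \(\varrho^2_{\eta,n}\to\varrho^2_\eta\) and \Cref{thm:risk-asym} applies. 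Products of uniformly convergent, uniformly bounded functions converge uniformly. This yields the displayed limit for \(\sB,\sV\), hence for \(\sR\), with \(\sB_\eta,\sV_\eta\) as in \eqref{eq:fixed-limit-functions}.

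\textbf{Step 3: transfer to the joint law and consequences.} Combining Step 2 with \Cref{thm:risk-asym} gives the conditional statement. The joint-law statement follows by dominated convergence of the conditional probabilities, which are bounded by 1. For fixed \(\lambda\), \(n_0^\eta Q_n(\lambda)=Q_\eta(\lambda)+\op(1)\); dividing by \(n_0^\eta\) gives the stated form. For (a) with \(\eta>0\), \(n_0^{-\eta}\to0\) and \(Q_\eta\) is finite, so \(Q_n\pto0\). The positive case is immediate, and when \(Q_\eta=0\) we get \(Q_n=n_0^{-\eta}\op(1)\). For (b) with \(\eta=0\), \(R_n\pto\sR_0\). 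Since the limit is deterministic, \(R_n\pto0\) if and only if \(\sR_0(\lambda)=0\).

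The main obstacle I expect is Step 1 together with the \(\widetilde v\) quantities. The key requirement is a uniform lower bound on \(\widetilde v_{v,n}^{-1}\) on \(\Lambda\); this denominator appears in \(\widetilde v_{b,n}\) and in the implicit-function argument. I would obtain it from the identity
\[
    \widetilde v_{v,n}^{-1}=v_n^{-2}\bigl[1-\phi_{0,n}\textstyle\int v_n^2s^2(1+v_ns)^{-2}\,\rd H_{0,p}\bigr]
\]
and from \(\lambda v_n = 1-\phi_{0,n}\int v_ns/(1+v_ns)\,\rd H_{0,p}>0\). These give \(\widetilde v_{v,n}^{-1}\ge \lambda/v_n\ge\lambda_-^2\), which supplies the required uniform lower bound.
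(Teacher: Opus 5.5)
Your proposal is correct and follows essentially the same route as the paper's proof: fix a realisation of \(\mathcal F_{\mathrm{aux}}\) on which \(\varrho_{\eta,n}^2\to\varrho_\eta^2\) and \Cref{thm:risk-asym} hold, and use the stability bound \(\widetilde v_{v,n}^{-1}\geq\lambda/v_n\geq\lambda_-^2\) to get \(\sup_{\Lambda}|v_n-v|\to0\). You then pass all transforms and traces to their limits by weak convergence on the common compact support (the \(\Sigma_1\)-terms through \(H_{1\mid0,p}\), with no commutativity needed), and conclude by the triangle inequality, dominated convergence, and the elementary pointwise consequences.

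One small correction. The derivative \(\partial_v\{v^{-1}-\lambda-\phi\int s(1+vs)^{-1}\,\rd H(s)\}\) equals \(-\widetilde v_{v,n}^{-1}\) only at the fixed point. When \(\phi_0>1\) and \(\lambda_-\) is small it can be nonnegative on parts of \([\underline v,\lambda_-^{-1}]\), so that map is not globally decreasing there. Uniqueness and uniform stability are better obtained from \(v\mapsto1-\lambda v-\phi\int vs(1+vs)^{-1}\,\rd H(s)\), whose slope is at most \(-\lambda_-\). Alternatively, your pointwise-plus-equi-Lipschitz argument suffices on its own.
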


Under \Cref{asm:design-independent-regime}, the source-target mean-shift energy \(p^{-1}\|\nu_\Delta\|_2^2\) and weight concentration \(\|C_0\gamma\|_2^2\) are both of order \(n_0^{-\eta}\), which \Cref{thm:risk-asym} carries into the prediction risk.
For oracle density-ratio weights, \Cref{prop:overlap-weight-concentration} further shows that the effective sample size grows at rate \(n_0^\eta\) for \(0<\eta\leq1\), so smaller positive \(\eta\) corresponds to slower growth of the effective sample size.

At \(\eta=0\), the source-target mean-shift and weight-concentration contributions in \Cref{thm:risk-asym} can remain first order, so target prediction risk can remain nonvanishing as the nominal sample sizes diverge.
Thus, identification can coexist with persistent prediction risk under severe extrapolation.
For \(0<\eta<1\), the risk decays at rate \(n_0^{-\eta}\) when its limiting risk is positive, while the \(n_0^{-1}\)-scale contributions from the source and target empirical means and affine normalisation are lower order.

In the boundary case \(\eta=1\), the affine-normalisation contribution is of order \(n_0^{-1}\), matching the source-target mean-shift and weight-concentration contributions, and the source and target empirical-mean fluctuations also enter the leading risk at the same order.
Consequently, the target aspect ratio \(\phi_1\) and target covariance geometry \(H_{1\mid0}\) enter the leading risk at \(\eta=1\), while for \(\eta<1\) their contributions are lower order and the source covariance spectrum and aspect ratio continue to shape the leading risk.

\begin{corollary}[Lower-order target-sampling effects under design-independent base weights]
    \label{cor:target-sampling-invariant}
    Under the assumptions of \Cref{cor:fixed-risk-limit}, if \(\eta\in[0,1)\), then the fixed limits \(\sB_\eta(\lambda)\), \(\sV_\eta(\lambda)\), and \(\sR_\eta(\lambda)\) do not depend on \(H_{1\mid0}\) or \(\phi_1\).
\end{corollary}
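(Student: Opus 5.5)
The fixed limits $\sB_\eta$, $\sV_\eta$ and $\sR_\eta$ are obtained from the finite-$n$ deterministic equivalents $\sB_{\eta,n}$ and $\sV_{\eta,n}$ of \Cref{thm:risk-asym} by letting $\phi_{t,n}\to\phi_t$, $\rho_{\eta,n}^2\to\rho_\eta^2$, $\varrho_{\eta,n}^2\to\varrho_\eta^2$ and passing the spectral measures to their weak limits, as in \Cref{asm:spectral-limits}. The plan is to read off, term by term, which ingredients of $\sB_{\eta,n}$ and $\sV_{\eta,n}$ involve $\phi_{1,n}$ or $H_{1\mid0,p}$, and to check that every such ingredient carries the indicator $\ind{(\eta=1)}$.

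\textbf{Step 1: the non-indicator terms do not involve target quantities.} Fix $\eta\in[0,1)$, so every indicator term vanishes identically for every $n$. The remaining terms are
\begin{align*}
&\frac{\lambda^2r^2\varrho_{\eta,n}^2}{\phi_{0,n}}\{v_n-\lambda\widetilde v_{v,n}\},\qquad
r^2\rho_{\eta,n}^2\int\frac{\widetilde v_{b,n}s+1}{(v_ns+1)^2}\,\rd G_{\nu,p}(s),\\
&\sigma_0^2\varrho_{\eta,n}^2\lambda^2\widetilde v_{v,n},\qquad
\frac{\sigma_0^2\phi_{0,n}\rho_{\eta,n}^2}{\lambda}\int\frac{(v_n-\widetilde v_{b,n})s}{(v_ns+1)^2}\,\rd G_{\nu,p}(s).
\end{align*}
I would check the auxiliary quantities these use:
\begin{enumerate}[(i)]
\item $v_n(\lambda)$ is defined by a fixed-point equation involving only $\lambda$, $\phi_{0,n}$ and $H_{0,p}$.
\item $\widetilde v_{v,n}(\lambda)$ depends only on $v_n$, $\phi_{0,n}$ and $H_{0,p}$.
\item $\widetilde v_{b,n}(\lambda)=\widetilde v_{b,n}(\lambda;I_p)$ equals $\phi_{0,n}\widetilde v_{v,n}\int s(v_ns+1)^{-2}\,\rd H_{0,p}(s)$. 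This also depends only on source quantities, because $\Sigma_1$ enters only through $\widetilde v_{b,n}(\lambda;\Sigma_1)$, which appears only inside indicator terms.
\end{enumerate}
So for every $n$, $\sB_{\eta,n}$ and $\sV_{\eta,n}$ are functions of $(\lambda,\phi_{0,n},H_{0,p},G_{\nu,p},\rho^2_{\eta,n},\varrho^2_{\eta,n},r^2,\sigma_0^2)$ alone.

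\textbf{Step 2: the limits inherit this.} The fixed-limit functions in \eqref{eq:fixed-limit-functions} are the pointwise limits of these expressions under \Cref{asm:spectral-limits}. They are given by the same formulas with $(\phi_0,H_0,G_\nu,\rho_\eta^2,\varrho_\eta^2)$ substituted and $v(\lambda)$ solving the limiting fixed-point equation in $H_0$. Well-posedness follows from the standard uniqueness and continuity of the Marchenko--Pastur-type solution under weak convergence, using bounded support from $c_\Sigma\le s\le C_\Sigma$. Because the fixed-limit functions are defined so that \Cref{cor:fixed-risk-limit} holds, the limiting formulas for $\eta<1$ contain no $\phi_1$ or $H_{1\mid0}$. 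Hence $\sB_\eta$, $\sV_\eta$ and $\sR_\eta=\sB_\eta+\sV_\eta$ do not depend on them.

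\textbf{Main obstacle.} The only delicate point is confirming that the displayed formulas \eqref{eq:fixed-limit-functions} agree with this substitution and contain no hidden target dependence. In particular, $G_\nu$ involves only $\nu_\Delta$ and the eigenvectors of $\Sigma_0$, not $\Sigma_1$. The fact that $\nu_1$ enters through $\nu_\Delta$ is not dependence on $H_{1\mid0}$ or $\phi_1$. If the appendix formulas were instead written with $\ind{(\eta=1)}$ factors carried through, the argument is simply that these factors vanish for $\eta<1$.
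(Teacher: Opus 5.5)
Your proposal is correct and is essentially the paper's argument. The paper's one-line proof also reads \eqref{eq:fixed-limit-functions} directly: every term involving \(\phi_1\) or \(H_{1\mid0}\) is multiplied by the indicator of \(\eta=1\), namely the \(T_0,T_1,T_2\) terms, which also carry \(\widetilde v_{b,1}\), while \(v\), \(\widetilde v_v\), \(\widetilde v_{b,0}\), \(\rho_\eta^2\), \(\varrho_\eta^2\), and \(G_\nu\) involve only source quantities and \(\nu_\Delta\). Your Step~1 detour through the prelimit deterministic equivalents is harmless but unnecessary.
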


For \(\eta<1\), the target enters the leading risk only through the source-target mean shift \(\nu_\Delta=\nu_1-\nu_0\), its magnitude \(\rho_\eta^2\), and its orientation relative to the source eigenspaces through \(G_\nu\), so the target-specific part of the leading risk reflects extrapolation of the systematic source-target discrepancy through the source design.
\Cref{prop:bias-monotonicity} also shows that \(\sB_\eta(\lambda)\) is nondecreasing in \(\lambda\), consistent with \Cref{subsec:ridge-augmented}: increasing \(\lambda\) weakens augmentation and leaves more of this discrepancy unresolved.
Because \(\sV_\eta(\lambda)\) also varies with \(\lambda\), the total-risk minimising penalty balances extrapolation against noise variance.
\Cref{subsec:num-adaptive} shows that these predictions remain informative for data-adaptive base weights.

\section{Risk Estimation, Tuning, and Prediction Intervals}\label{sec:risk-estimation}
\subsection{Estimating Variance Components}\label{sec:variance-components}

Evaluating the conditional prediction risk requires the random-effects signal variance \(r^2\) and residual variance \(\sigma_0^2\) in \Cref{asm:predictive-model}.
These quantities may be specified as predictive-model parameters or estimated from the control sample; here we focus on estimation from the controls.

Under \Cref{asm:predictive-model}, the centred control outcomes satisfy
\[
    \EE_{\beta\sim\Pi,\epsilon_0}
    \left[
        y_0^c(y_0^c)^\top
        \mid \mathcal G_n
    \right]
    =
    \frac{r^2}{p}X_0^c(X_0^c)^\top
    +\sigma_0^2 C_0.
\]
Let \(m=n_0-1\), and choose \(Q_0\in\RR^{n_0\times m}\) such that \(Q_0^\top Q_0=I_m\) and \(Q_0Q_0^\top=C_0\).
Writing \(\frac1p Q_0^\top X_0^c(X_0^c)^\top Q_0=U_m\diag(d_1,\ldots,d_m)U_m^\top\) and defining \(\tilde y_0=U_m^\top Q_0^\top y_0^c\), we obtain \(\EE_{\beta\sim\Pi,\epsilon_0}(\tilde y_{0i}^2\mid\mathcal G_n)=r^2d_i+\sigma_0^2\) for \(i=1,\ldots,m\).
The signal contribution therefore varies with the source eigenvalue \(d_i\), while the residual contribution \(\sigma_0^2\) is common across spectral coordinates, motivating the use of variation in the source spectrum to separate the two variance components.

For \(\vartheta=(a,b)^\top\) in a fixed compact rectangle \(\Theta\subset\RR_{++}^2\), define the spectral criterion and a global minimiser
\begin{equation}
    \label{eq:spectral-quasi-likelihood}
    L_m(\vartheta)
    =
    \frac1m\sum_{i=1}^m
    \left\{
        \log(ad_i+b)
        +
        \frac{\tilde y_{0i}^2}{ad_i+b}
    \right\},
    \qquad
    \widehat\vartheta
    =
    (\widehat r^2,\widehat\sigma_0^2)^\top
    \in
    \argmin_{\vartheta\in\Theta}L_m(\vartheta).
\end{equation}
We use \(L_m\) as a spectral quasi-likelihood for estimating \((r^2,\sigma_0^2)^\top\); when \(\beta\) and \(\epsilon_0\) are Gaussian, it agrees, up to constants, with the conditional Gaussian negative log-likelihood.

To establish consistency of these estimates, we impose the following source-side conditions.

\begin{assumption}[Source model for variance estimation]
    \label{asm:variance-component-model}
    The conditions in \Cref{asm:rmt} involving the control design \(X_0\) hold, and \(p/n_0\to\phi_0\in(0,\infty)\).
    In addition to \Cref{asm:predictive-model}, for some \(\delta>0\), the following conditions hold:
    \begin{enumerate}[(a)]
        \item (Coefficient)
        The random-effects coefficient has the representation \(\beta=(r/\sqrt p)\,\xi_n\), with \(\EE\xi_{n,j}=0\), \(\EE\xi_{n,j}^2=1\), and \(\sup_{n,j}\EE|\xi_{n,j}|^{4+\delta}<\infty\), where the entries of \(\xi_n\) are independent and \(\xi_n\) is independent of \((\mathcal G_n,\epsilon_0)\).

        \item (Noise)
        Conditional on \(\mathcal G_n\), the entries of \(\epsilon_0\) are independent and \(\sup_{i\leq n_0}\EE(|\epsilon_{0i}|^{4+\delta}\mid\mathcal G_n)=\Op(1)\).
    \end{enumerate}
\end{assumption}

\begin{proposition}[Consistency of spectral quasi-likelihood]
    \label{prop:spectral-quasi-likelihood}
    Suppose \Cref{asm:variance-component-model} holds, and let
    \(\vartheta_0=(r^2,\sigma_0^2)^\top\in\Theta^\circ\).
    Then every global minimiser \(\widehat\vartheta\) satisfies \(\widehat\vartheta\pto\vartheta_0\) and \(\PP(\widehat\vartheta\in\Theta^\circ)\to1\).
\end{proposition}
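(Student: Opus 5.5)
The argument is the standard M-estimation route: a uniform law of large numbers for \(L_m\) on \(\Theta\), identification of \(\vartheta_0\) as the unique minimiser of a deterministic limit, and an argmin consistency theorem. Write \(\tilde y_{0i}^2=(r^2d_i+\sigma_0^2)(1+u_i)\), where \(\EE(u_i\mid\mathcal G_n)=0\). Then
\[
L_m(\vartheta)=\frac1m\sum_{i=1}^m\Bigl\{\log(ad_i+b)+\frac{r^2d_i+\sigma_0^2}{ad_i+b}\Bigr\}+\frac1m\sum_{i=1}^m\frac{(r^2d_i+\sigma_0^2)u_i}{ad_i+b}=:\bar L_m(\vartheta)+E_m(\vartheta).
\]
The first term depends only on the empirical spectral measure \(F_m=m^{-1}\sum_i\delta_{d_i}\) of \(p^{-1}Q_0^\top X_0^c(X_0^c)^\top Q_0\). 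Under \Cref{asm:variance-component-model} (proportional regime and bounded \(\Sigma_0\)), \(F_m\) converges almost surely to a deterministic Marchenko--Pastur-type law \(F\). This law is obtained by rescaling the companion law of \(S_0^c\) by \(n_0/p\), and it has a point mass \((1-\phi_0)_+\) at zero when \(\phi_0<1\). By bounds on the extreme eigenvalues, its support lies in a compact set \([0,C]\). Since \(\Theta\) is a compact subset of \(\RR_{++}^2\), the integrand \(\log(as+b)+(r^2s+\sigma_0^2)/(as+b)\) is bounded and jointly Lipschitz on \(\Theta\times[0,C]\). Hence \(\bar L_m\to\bar L(\vartheta):=\int\{\log(as+b)+(r^2s+\sigma_0^2)/(as+b)\}\,\rd F(s)\) uniformly on \(\Theta\), with probability tending to one.

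For identification, the pointwise map \(x\mapsto\log x+c/x\) is minimised uniquely at \(x=c\). So \(\bar L(\vartheta)\ge\bar L(\vartheta_0)\), with equality iff \(as+b=r^2s+\sigma_0^2\) for \(F\)-a.e.\ \(s\). Because \(F\) is not a point mass (its continuous part has nondegenerate support for any \(\phi_0\in(0,\infty)\) and \(\Sigma_0\succeq c_\Sigma I_p\)), two affine functions agreeing \(F\)-a.e.\ must coincide, so \(\vartheta=\vartheta_0\). Continuity of \(\bar L\) and compactness of \(\Theta\) then give a well-separated minimum.

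The noise term \(E_m\) is the main obstacle, because the \(u_i\) are not independent. The vector \(\tilde y_0=U_m^\top Q_0^\top(X_0^c\beta+C_0\epsilon_0)\) mixes the coordinates of \(\xi_n\) and \(\epsilon_0\). My plan is to avoid explicit decorrelation. I will write \(E_m(\vartheta)=\tilde y_0^\top D_\vartheta\tilde y_0/m-\EE(\cdot\mid\mathcal G_n)\) with \(D_\vartheta=\diag\{(ad_i+b)^{-1}\}\). This is a quadratic form in the independent vector \((\xi_n,\epsilon_0)\) with \(\mathcal G_n\)-measurable matrix
\[
W_\vartheta=\begin{pmatrix}\tfrac{r}{\sqrt p}(X_0^c)^\top\\ C_0\end{pmatrix}Q_0U_mD_\vartheta U_m^\top Q_0^\top\begin{pmatrix}\tfrac{r}{\sqrt p}X_0^c & C_0\end{pmatrix}.
\]
Its operator norm is \(\Op(1)\) because \(\|X_0^c\|_{\oper}/\sqrt p=\Op(1)\) and \(D_\vartheta\) is bounded. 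The standard variance bound for quadratic forms in independent entries with finite fourth moments gives \(\Var(\cdot\mid\mathcal G_n)\lesssim\|W_\vartheta\|_F^2/m^2\le\|W_\vartheta\|_{\oper}^2(p+n_0)/m^2=\Op(1/m)\). This uses the \(4+\delta\) moments in (a) and (b). Hence \(E_m(\vartheta)\pto0\) pointwise. To make this uniform on \(\Theta\), note that \(\partial_\vartheta E_m\) has the same quadratic-form structure with the bounded matrices \(D_\vartheta^2\,\diag(d_i)\) and \(D_\vartheta^2\), so \(E_m\) is stochastically equicontinuous. Pointwise convergence on a finite grid then extends to all of \(\Theta\).

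Combining the three steps, \(\sup_\Theta|L_m-\bar L|\pto0\). The argmin consistency theorem (van der Vaart, Thm.~5.7) gives \(\widehat\vartheta\pto\vartheta_0\) for every global minimiser. Since \(\vartheta_0\in\Theta^\circ\), it follows that \(\PP(\widehat\vartheta\in\Theta^\circ)\to1\).
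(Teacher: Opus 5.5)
\textbf{Gap: the deterministic limit law \(F\) is not available.} You assert that, under \Cref{asm:variance-component-model}, the empirical spectral measure \(F_m\) of the \(d_i\) converges to a deterministic Marchenko--Pastur-type law \(F\). You then identify \(\vartheta_0\) as the unique minimiser of the fixed limit \(\bar L\). But \Cref{asm:variance-component-model} takes from \Cref{asm:rmt} only the bounds \(c_\Sigma I_p\preceq\Sigma_0\preceq C_\Sigma I_p\) and \(p/n_0\to\phi_0\). Convergence of \(H_{0,p}\) is part of \Cref{asm:spectral-limits}, which is not assumed here. If, for example, \(\Sigma_0\) alternates between \(I_p\) and \(2I_p\) along the sequence, \(F_m\) has two different subsequential limits. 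Then there is no single \(\bar L\) with a well-separated minimum, and the argmin theorem cannot be invoked as you state it.

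Your treatment of \(E_m\) is essentially the paper's Step~2 and is fine. For the equicontinuity, bound the gradient directly: \(\sup_\Theta|\partial_\vartheta E_m|\le C(1+\max_i d_i)\{m^{-1}\|\tilde y_0\|_2^2+r^2m^{-1}\sum_i d_i+\sigma_0^2\}=\Op(1)\). Pointwise variance bounds for the derivative do not by themselves give a uniform bound.

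\textbf{Two repairs.} The first is a subsequence argument. \(H_{0,p}\) is tight on \([c_\Sigma,C_\Sigma]\), so every subsequence has a further subsequence along which \(H_{0,p}\dto H\) and \(F_m\) converges to a generalised Marchenko--Pastur law determined by \((\phi_0,H)\). Each such limit has a nondegenerate continuous part, so your identification and argmin steps go through along it. Convergence in probability along every sub-subsequence then gives \(\widehat\vartheta\pto\vartheta_0\).

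The second is the paper's route, which needs no spectral limit at all. Compare \(L_m\) directly with the random conditional mean \(\bar L_m\); your \(\bar L_m\) is the paper's \(\overline L_m\). Write \(\bar L_m(\vartheta)-\bar L_m(\vartheta_0)=m^{-1}\sum_i\psi\{v_i(\vartheta)/v_{i0}\}\), with \(\psi(x)=\log x+x^{-1}-1\) and \(v_i(\vartheta)=ad_i+b\). Bound \(\psi\) below by a multiple of \((x-1)^2\) on the relevant compact range. This yields \(\bar L_m(\vartheta)-\bar L_m(\vartheta_0)\ge c_m\|\vartheta-\vartheta_0\|_2^2\), where \(c_m=a_m\lambda_{\min}(G_m)\), \(a_m^{-1}=\Op(1)\), and \(G_m=m^{-1}\sum_i(d_i,1)^\top(d_i,1)\). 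Now \(\det G_m\) is the empirical variance of the \(d_i\), which equals \(D_0/\widetilde\phi_{0,n}\). A two-trace-moment computation gives \(D_0=\tau_{2,p}+\op(1)\ge c_\Sigma^2+\op(1)\), so \(c_m^{-1}=\Op(1)\). Hence \(c_m\|\widehat\vartheta-\vartheta_0\|_2^2\le2\sup_\Theta|L_m-\bar L_m|=\op(1)\) for every global minimiser. This quantitative, uniform-in-\(n\) separation is what replaces your appeal to \(F\) not being a point mass.
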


For optional initialisation and diagnostics, Appendix~\ref{app:proof-variance-components} gives a consistent closed-form moment estimator, based on two aggregate spectral moments; see \Cref{prop:two-moment-initializer}.

\subsection{Target-aware risk estimation and tuning}
\label{sec:ood-risk-tuning}

With the variance components estimated, we next consider feasible estimation and tuning of the conditional prediction risk for the treated along the regularisation path.
When the base weights are constructed using target covariates, reusing the same target observations for risk evaluation creates dependence between weight construction and risk evaluation.
We therefore separate the target observations used for risk evaluation from the target information used to construct the base weights.

\begin{assumption}[Evaluation-sample separation]
    \label{asm:evaluation-sample}
    Let \(\mathcal H_n\subseteq\mathcal G_n\) contain \(X_0\) and all auxiliary or pilot target information used to construct the base weights, and let \(\mathcal I_{\mathrm E}\) and \(n_{1,\mathrm E}=|\mathcal I_{\mathrm E}|\) be \(\mathcal H_n\)-measurable.
    The normalised base weights \(\gamma\) are \(\mathcal H_n\)-measurable and satisfy \(\one_{n_0}^\top\gamma=1\).
    Conditional on \(\mathcal H_n\), the evaluation observations satisfy \((x_i)_{i\in \mathcal I_{\mathrm E}}\mid\mathcal H_n \sim P_{1,n}^{\otimes n_{1,\mathrm E}}\) almost surely.
    The target law \(P_{1,n}\) satisfies the target-side design and moment conditions in \Cref{asm:rmt}.
\end{assumption}

Under \Cref{asm:evaluation-sample}, let \(\bar x_{1,\mathrm E}=n_{1,\mathrm E}^{-1}\sum_{i\in\mathcal I_{\mathrm E}}x_i\) and \(\epsilon_{1,\mathrm E}=\bar x_{1,\mathrm E}-\nu_1\).
Then \Cref{asm:evaluation-sample} implies
\[
    \EE\left(\epsilon_{1,\mathrm E}\mid\mathcal H_n\right)=0,
    \qquad
    \EE\left(
        \epsilon_{1,\mathrm E}\epsilon_{1,\mathrm E}^{\top}
        \mid\mathcal H_n
    \right)
    =
    \frac{\Sigma_1}{n_{1,\mathrm E}}.
\]
The pilot/evaluation split in Appendix~\ref{app:honest-target-split} is one sufficient construction; if the base weights do not use target covariates, the full target sample may serve as the evaluation sample.
Throughout this subsection, write \(\bar x_1=\bar x_{1,\mathrm E}\) and \(n_1=n_{1,\mathrm E}\), and define \(\Delta=\bar x_1-X_0^\top\gamma\) and \(\gamma_\lambda=\gamma+n_0^{-1}X_0^cM_\lambda^c\Delta\).
Conditional on \(\mathcal H_n\), \(X_0\) and \(\gamma\) are fixed, while \(\bar x_1\), \(\Delta\), and \(\gamma_\lambda\) are random.

For each realised evaluation sample, the tuning target remains the \(\mathcal G_n\)-conditional prediction risk defined in \Cref{sec:risk:exact-risk}:
\begin{equation}
    \label{eq:random-effects-risk}
    R_n(\lambda)
    =
    \frac{r^2}{p}
    \left\|X_0^\top\gamma_\lambda-\nu_1\right\|_2^2
    +
    \sigma_0^2\|\gamma_\lambda\|_2^2.
\end{equation}
Conditioning on \(\mathcal H_n\) isolates the randomness of the evaluation sample in estimating this risk.
Since \(X_0^\top\gamma_\lambda-\bar x_1=-\lambda M_\lambda^c\Delta\), the conditional moments above give
\[
    \EE\left[
        \frac{\lambda^2}{p}
        \Delta^\top(M_\lambda^c)^2\Delta
        +
        \frac{1}{n_1}
        \otr\!\left\{
            (I_p-2\lambda M_\lambda^c)\Sigma_1
        \right\}
        \,\middle|\,\mathcal H_n
    \right]
    =
    \EE\left[
        \frac{1}{p}
        \left\|X_0^\top\gamma_\lambda-\nu_1\right\|_2^2
        \,\middle|\,\mathcal H_n
    \right].
\]
Thus, the trace term corrects for sampling variation in \(\bar x_1\), so the corrected quadratic term targets the squared-bias component of \(R_n(\lambda)\).

Let \(\widehat\Sigma_1=(n_1-1)^{-1}\sum_{i\in\mathcal I_{\mathrm E}}(x_i-\bar x_1)(x_i-\bar x_1)^\top\) be the evaluation-sample covariance.
Using the variance-component estimates from \Cref{prop:spectral-quasi-likelihood}, define
\begin{equation}
    \label{eq:ood-risk-estimator}
    \widehat R_n(\lambda;\gamma)
    =
    \widehat r^2
    \left[
        \frac{\lambda^2}{p}
        \Delta^\top(M_\lambda^c)^2\Delta
        +
        \frac{1}{n_1}
        \otr\!\left\{
            (I_p-2\lambda M_\lambda^c)\widehat\Sigma_1
        \right\}
    \right]_+
    +
    \widehat\sigma_0^2\|\gamma_\lambda\|_2^2.
\end{equation}
The positive part prevents a negative finite-sample estimate of the squared-bias component and has asymptotically negligible effect; with the spectral quasi-likelihood plug-in, \(\widehat R_n(\lambda;\gamma)>0\) because \(\widehat\sigma_0^2>0\) and \(\|\gamma_\lambda\|_2^2\geq n_0^{-1}\).
The criterion is target-aware: its quadratic term follows the realised residual imbalance, while the trace term corrects for sampling variation in \(\bar x_1\).

Evaluation-sample separation controls dependence between weight construction and risk evaluation, but places no rate restriction on residual imbalance or weight concentration.
We therefore impose a separate rate condition on these two quantities.
Here \(\eta\in[0,1]\) indexes the order of the resulting risk path; in \Cref{sec:risk:asymptotic-risk}, the same index also parameterises the coupled source-target mean-shift and weight-concentration regime.

\begin{assumption}[Base-weight rate conditions]
    \label{asm:risk-rate-admissibility}
    For some \(\eta\in[0,1]\), \((n_0^\eta/p)\|\nu_1-X_0^\top\gamma\|_2^2=\Op(1)\) and \(n_0^\eta\|C_0\gamma\|_2^2=\Op(1)\).
\end{assumption}

Because \(\gamma\) is normalised, \(\|\gamma\|_2^2=\|C_0\gamma\|_2^2+n_0^{-1}\), so the centred and uncentred weight-concentration rates are equivalent for \(\eta\in[0,1]\).
Together with \Cref{asm:evaluation-sample}, \Cref{asm:risk-rate-admissibility} allows the base weights to depend on the realised source design and permitted target information.
As a special case, \Cref{asm:rmt,asm:design-independent-regime} imply \Cref{asm:risk-rate-admissibility}.
Together with the preceding design and evaluation-sample conditions, it yields the uniform rate control along a regularisation~path.

\begin{theorem}[Uniform risk estimation and consistency of risk-based tuning]
    \label{thm:risk-estimation}
    Suppose \Cref{asm:predictive-model,asm:rmt,asm:evaluation-sample,asm:risk-rate-admissibility} hold for some \(\eta\in[0,1]\).
    Assume \(p/n_{1,\mathrm E}\to\phi_{1,\mathrm E}\in(0,\infty)\), and let \(\Lambda\) be either a fixed finite grid or a compact interval contained in \([\lambda_-,\lambda_+]\subset(0,\infty)\).
    Suppose \(\widehat r^2\pto r^2\) and \(\widehat\sigma_0^2\pto\sigma_0^2\), and define \(\widehat\lambda=\min\argmin_{\lambda\in\Lambda}\widehat R_n(\lambda;\gamma)\).
    Then the following holds:
    \begin{enumerate}[(a)]
        \item\label{thm:risk-estimation-i}
        \(
            n_0^\eta
            \sup_{\lambda\in\Lambda}
            \left|
                \widehat R_n(\lambda;\gamma)-R_n(\lambda)
            \right|
            \pto 0.
        \)

        \item\label{thm:risk-estimation-ii}
        \(
            n_0^\eta
            \left\{
                R_n(\widehat\lambda)
                -
                \inf_{\lambda\in\Lambda}R_n(\lambda)
            \right\}
            \pto 0.
        \)

        \item\label{thm:risk-estimation-iii}
        If further for a deterministic continuous function \(R_{\eta,\star}\),
        \(
            \sup_{\lambda\in\Lambda}
            \left|
                n_0^\eta R_n(\lambda)-R_{\eta,\star}(\lambda)
            \right|
            \pto 0
        \) 
        and
        \(
            M
            =
            \argmin_{\lambda\in\Lambda}
            R_{\eta,\star}(\lambda)
        \),
        then
        \(
            \dist(\widehat\lambda,M)\pto0.
        \)
        If \(R_{\eta,\star}\) has a unique minimiser \(\lambda_{\eta,\star}^\ast\), then
        \(
            \widehat\lambda\pto\lambda_{\eta,\star}^\ast.
        \)
    \end{enumerate}
\end{theorem}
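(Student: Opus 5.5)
Part \eqref{thm:risk-estimation-i} is the core, and I would prove it by working conditionally on \(\mathcal H_n\), where \(X_0\), \(\gamma\) and \(M_\lambda^c\) are fixed and only \(\epsilon_{1}=\bar x_1-\nu_1\) is random. I would split the risk into its bias and variance parts. Write \(\Delta=\Delta_0+\epsilon_1\) with \(\Delta_0:=\nu_1-X_0^\top\gamma\), which is \(\mathcal H_n\)-measurable. Using \(X_0^\top\gamma_\lambda-\bar x_1=-\lambda M_\lambda^c\Delta\), expand
\[
\tfrac1p\|X_0^\top\gamma_\lambda-\nu_1\|_2^2=\tfrac{\lambda^2}{p}\Delta^\top(M_\lambda^c)^2\Delta-\tfrac{2\lambda}{p}\epsilon_1^\top M_\lambda^c\Delta+\tfrac1p\|\epsilon_1\|_2^2 .
\]
The bias target \(\frac1p\|X_0^\top\gamma_\lambda-\nu_1\|_2^2\) therefore minus the bracket in \eqref{eq:ood-risk-estimator} equals
\[
-\tfrac{2\lambda}{p}\epsilon_1^\top M_\lambda^c\Delta_0+\Big[\tfrac1p\epsilon_1^\top(I-2\lambda M_\lambda^c)\epsilon_1-\tfrac1{n_1}\otr\{(I-2\lambda M_\lambda^c)\widehat\Sigma_1\}\Big].
\]
I would show that each piece is \(o_p(n_0^{-\eta})\) pointwise in \(\lambda\), and then treat the variance part separately.

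\emph{Linear term.} Its conditional variance is \(\frac{4\lambda^2}{p^2n_1}\Delta_0^\top M\Sigma_1 M\Delta_0\lesssim \frac{1}{pn_1}\|\Delta_0\|_2^2\). By \Cref{asm:risk-rate-admissibility} this is \(\Op(n_0^{-\eta}/n_1)\), so the term is \(\Op(n_0^{-\eta/2}n_1^{-1/2})=o_p(n_0^{-\eta})\), because \(\eta\le1\) and \(n_1\asymp n_0\).

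\emph{Quadratic term.} This is a centred quadratic form in an average of independent vectors. The diagonal part cancels against the trace correction up to the error \(\widehat\Sigma_1-\Sigma_1\); note that \(\widehat\Sigma_1\) has the right mean for the \(U\)-statistic form. Because \(\|I-2\lambda M_\lambda^c\|_{\oper}\) is bounded, the off-diagonal \(U\)-statistic has variance \(\cO(p^{-2}n_1^{-2}\cdot p)\) by the fourth moments in \Cref{asm:rmt}. So this term is \(\Op(n_1^{-1}p^{-1/2})=o_p(n_0^{-1})\).

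\emph{Variance part.} I would write \(\|\gamma_\lambda\|_2^2=\|\gamma\|_2^2+\frac2{n_0}\gamma^\top X_0^cM\Delta+\frac1{n_0}\Delta^\top M S_0^c M\Delta\). The estimator uses the same realised quantity, so the only error is \((\widehat\sigma_0^2-\sigma_0^2)\|\gamma_\lambda\|_2^2\). Since \(n_0^\eta\|\gamma_\lambda\|_2^2=\Op(1)\) uniformly, this error is \(o_p(1)\) after scaling. The same holds for \((\widehat r^2-r^2)\) times the bias part: \(n_0^\eta\frac1p\|\Delta\|_2^2=\Op(1)\) because \(\|\epsilon_1\|_2^2/p=\Op(n_1^{-1})\) and \(\eta\le1\). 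The positive part is \(1\)-Lipschitz, so it only helps, since the target is nonnegative.

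\emph{Uniformity in \(\lambda\).} For a finite grid, this follows from a union bound. For a compact interval, I would use \(\|\partial_\lambda M_\lambda^c\|_{\oper}\le\lambda_-^{-2}\). This makes every term Lipschitz in \(\lambda\) with an \(\Op(n_0^{-\eta})\) constant, so pointwise convergence on a fine grid plus equicontinuity gives uniform convergence. I expect this step to be the main obstacle: the Lipschitz constants must be tracked at the correct \(n_0^{-\eta}\) scale, especially for the cross term \(\frac1{n_0}\gamma^\top X_0^cM\Delta\), which requires \(\|X_0^c\|_{\oper}/\sqrt{n_0}=\Op(1)\) from \Cref{asm:rmt}.

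Part \eqref{thm:risk-estimation-ii} follows by the standard argument: \(R_n(\widehat\lambda)-\inf R_n\le 2\sup|\widehat R_n-R_n|+[\widehat R_n(\widehat\lambda)-\inf\widehat R_n]\), and the last bracket is zero; on an interval, the minimum exists by continuity. For part \eqref{thm:risk-estimation-iii}, combining \eqref{thm:risk-estimation-i} with the assumed limit gives \(\sup|n_0^\eta\widehat R_n-R_{\eta,\star}|\pto0\). The argmin-continuity argument then applies: for any \(\epsilon>0\), on the compact set \(\{\lambda:\dist(\lambda,M)\ge\epsilon\}\) the value of \(R_{\eta,\star}\) exceeds \(\min R_{\eta,\star}\) by some \(\kappa>0\). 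Hence \(\PP(\dist(\widehat\lambda,M)\ge\epsilon)\le\PP(\sup|n_0^\eta\widehat R_n-R_{\eta,\star}|\ge\kappa/2)\to0\), and uniqueness of the minimiser yields \(\widehat\lambda\pto\lambda^\ast_{\eta,\star}\).
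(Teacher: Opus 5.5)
Your route is the paper's. You condition on \(\mathcal H_n\), split the error of the corrected bias criterion into a linear term in \(\epsilon_1\) and a centred quadratic term, and absorb the plug-in variance components through uniform \(\Op(n_0^{-\eta})\) bounds on \(q_n\) and \(\|\gamma_\lambda\|_2^2\). You then pass to uniformity over \(\Lambda\) with a scaled derivative bound and a finite net, and finish (b) and (c) with the standard comparison and argmin-separation arguments.

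One step is wrong as written. In the linear term you bound the conditional variance \(\frac{4\lambda^2}{p^2n_1}\Delta_0^\top M_\lambda^c\Sigma_1M_\lambda^c\Delta_0\) by \(\|\Delta_0\|_2^2/(pn_1)\), which drops a factor \(p^{-1}\). With that bound the term is \(\Op(n_0^{-\eta/2}n_1^{-1/2})\). At the boundary \(\eta=1\) this is \(\Op(n_0^{-1})\), not \(\op(n_0^{-\eta})\), so your claim ``\(=\op(n_0^{-\eta})\) because \(\eta\le1\)'' fails exactly at \(\eta=1\). Keeping the \(p^{-2}\) gives conditional variance \(\Op(n_0^{-\eta}/(pn_1))\). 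The term is then \(\Op(n_0^{-\eta/2-1})=\op(n_0^{-\eta})\) for every \(\eta\in[0,1]\), which is the scaling the paper uses.

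For the quadratic term, the cancellation you allude to is exact. With \(y_i=x_i-\nu_1\) and \(A=I_p-2\lambda M_\lambda^c\), one has \(\epsilon_1^\top A\epsilon_1-n_1^{-1}\tr(A\widehat\Sigma_1)=\{n_1(n_1-1)\}^{-1}\sum_{i\neq j}y_i^\top Ay_j\). Your phrasing ``up to the error \(\widehat\Sigma_1-\Sigma_1\)'' leaves a residual that you never bound; stating the identity removes it. The variance of the resulting \(U\)-statistic, after the \(p^{-1}\) factor, is \(2\tr\{(A\Sigma_1)^2\}/\{p^2n_1(n_1-1)\}\), which needs only second moments. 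This is slightly cleaner than the paper's route. The paper centres the quadratic form with \(\Sigma_1\) and then controls \(n_1^{-1}\otr\{A_\lambda(\widehat\Sigma_1-\Sigma_1)\}\) uniformly in \(\lambda\) by a separate variance-plus-Lipschitz argument that uses fourth moments (\Cref{lem:ood-risk-equicontinuity}).

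Finally, the variance part needs no equicontinuity. Since \(\widehat R_n\) uses the realised \(\|\gamma_\lambda\|_2^2\), its error is \(|\widehat\sigma_0^2-\sigma_0^2|\sup_{\lambda}\|\gamma_\lambda\|_2^2\). Only the uniform bound \(n_0^\eta\sup_{\lambda\in\Lambda}\|\gamma_\lambda\|_2^2=\Op(1)\) is required, which follows from \(\|X_0^c\|_{\oper}^2/n_0=\Op(1)\) and \(\|\gamma\|_2^2=\|C_0\gamma\|_2^2+n_0^{-1}\). So the cross term you single out as the main obstacle is not actually one.
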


\Cref{thm:risk-estimation} shows that \(\widehat R_n(\lambda;\gamma)\) uniformly estimates the realised conditional prediction risk for the treated along the regularisation path with error \(\op(n_0^{-\eta})\).
Consequently, minimising the estimated path yields oracle excess risk \(\op(n_0^{-\eta})\).
If the scaled risk path converges uniformly to a deterministic limit, the selected penalty approaches its minimiser set; when the minimiser is unique, \(\widehat\lambda\) is consistent for that penalty.

\Cref{thm:risk-estimation} applies to base weights adapted to the realised source covariates and pilot target information when \Cref{asm:evaluation-sample,asm:risk-rate-admissibility} hold; sufficient conditions for common balancing procedures are given in Appendix~\ref{app:sample-split-rate-verification}.
\Cref{prop:ridge-profile-verification} treats same-sample target-adaptive ridge weights separately, and \Cref{cor:target-split-ridge-post-tuning} verifies the deterministic risk limit required for penalty consistency for sample-split ridge base weights.

\begin{remark}[No-augmentation endpoint]
    \label{rem:no-augmentation-endpoint}
    Set \(\Lambda^+=\Lambda\cup\{\infty\}\), with \(R_n(\infty)\) defined by \eqref{eq:random-effects-risk} evaluated at \(\gamma_\infty:=\gamma\).
    Define
    \(
        \widehat R_n(\infty;\gamma)
        :=
        \widehat r^2
        \left[
            \frac{1}{p}\|\Delta\|_2^2
            -
            \frac{1}{n_1}\otr(\widehat\Sigma_1)
        \right]_+
        +
        \widehat\sigma_0^2\|\gamma\|_2^2.
    \)
    This is the \(\lambda\to\infty\) limit of \eqref{eq:ood-risk-estimator} and leaves the base weights unchanged.
    Under the conditions of \Cref{thm:risk-estimation}, if \(\widehat\lambda=\min\argmin_{\lambda\in\Lambda^+}\widehat R_n(\lambda;\gamma)\), then \Cref{thm:risk-estimation}~\eqref{thm:risk-estimation-i}--\eqref{thm:risk-estimation-ii} remain valid with \(\Lambda\) replaced by \(\Lambda^+\).
\end{remark}

The resulting risk-calibrated balancing (RCB) procedure is summarised in \Cref{alg:ood-tuning}.
For a tuning grid, the spectral decomposition of \(S_0^c\) can be reused to compute \(M_\lambda^c\Delta\), \(\Delta^\top(M_\lambda^c)^2\Delta\), the augmented weights, and the trace correction along the full path, so a separate matrix inversion is unnecessary for each \(\lambda\).
Henceforth, \(\widehat R_n(\lambda)\) abbreviates \(\widehat R_n(\lambda;\widehat\gamma)\).

\begin{algorithm}[!t]
\caption{Risk-calibrated balancing (RCB)}
\label{alg:ood-tuning}
    \small
    \begin{algorithmic}[1]
    \Require Control data \((X_0,y_0)\); an evaluation treated covariate sample; the full treated-sample outcome mean \(\bar y_1\); normalised base weights \(\widehat\gamma\) constructed without the evaluation sample; candidate grid \(\Lambda\subset(0,\infty)\).
    \Ensure Selected penalty \(\widehat\lambda\), counterfactual estimate \(\widehat\mu_{0,\widehat\lambda}\), ATT estimate \(\widehat\tau_{\widehat\lambda}\), and augmented weights \(\widehat\gamma_{\widehat\lambda}\).
    \State Compute \(\bar x_0\), \(X_0^c\), and one spectral decomposition of \(S_0^c=(X_0^c)^\top X_0^c/n_0\).
    \State Estimate \((\widehat r^2,\widehat\sigma_0^2)\) by \eqref{eq:spectral-quasi-likelihood}, optionally initialised by the two-moment estimator in \Cref{prop:two-moment-initializer}.
    \State Compute \(\bar x_1\), \(\widehat\Sigma_1\), and \(\Delta=\bar x_1-X_0^\top\widehat\gamma\).
    \For{\(\lambda\in\Lambda\)}
        \State Compute \(u_\lambda=M_\lambda^c\Delta\) using the spectral decomposition.
        \State Set \(\widehat\gamma_\lambda=\widehat\gamma+X_0^cu_\lambda/n_0\).
        \State Evaluate \(\widehat R_n(\lambda;\widehat\gamma)\) using \eqref{eq:ood-risk-estimator}, with \(\Delta^\top(M_\lambda^c)^2\Delta=\|u_\lambda\|_2^2\).
    \EndFor
    \State Select
    \[
        \widehat\lambda
        =
        \min\argmin_{\lambda\in\Lambda}
        \widehat R_n(\lambda;\widehat\gamma).
    \]
    \State Set \(\widehat\mu_{0,\widehat\lambda}=\widehat\gamma_{\widehat\lambda}^{\top}y_0\) and \(\widehat\tau_{\widehat\lambda}=\bar y_1-\widehat\mu_{0,\widehat\lambda}\).
    \end{algorithmic}
\end{algorithm}

\subsection{Model-based prediction intervals for the counterfactual mean}
\label{subsec:predictive-calibration}

Having constructed a uniformly consistent estimator of the conditional prediction risk for the treated along the regularisation path in \Cref{thm:risk-estimation}, we now use this risk to quantify predictive uncertainty for the counterfactual mean \(\mu_{0,\beta}\) under the predictive law.
For any deterministic or \(\mathcal G_n\)-measurable \(\lambda\), \Cref{prop:conditional-risk} gives
\[
    \EE_{\beta\sim\Pi,\epsilon_0}
    \left(
        \widehat\mu_{0,\lambda}-\mu_{0,\beta}
        \mid \mathcal G_n
    \right)=0,
    \qquad
    \Var_{\beta\sim\Pi,\epsilon_0}
    \left(
        \widehat\mu_{0,\lambda}-\mu_{0,\beta}
        \mid \mathcal G_n
    \right)
    =R_n(\lambda).
\]
Thus \(R_n(\lambda)^{1/2}\) is the conditional predictive standard deviation, with \(\widehat R_n(\lambda)^{1/2}\) as its feasible estimate.
For a fixed penalty, the same conditional prediction risk used to assess regularisation therefore also quantifies predictive uncertainty in the counterfactual mean.

For deterministic or \(\mathcal G_n\)-measurable \(\lambda\), conditioning on \(\mathcal G_n\) fixes the coefficients in the prediction error \(\widehat\mu_{0,\lambda}-\mu_{0,\beta}\).
Under the default plug-in implementation, however, \(\widehat R_n(\lambda)\) and its minimiser \(\widehat\lambda\) generally depend on \(y_0\) through \(\widehat r^2\) and \(\widehat\sigma_0^2\).
Thus \(\widehat\lambda\) need not be \(\mathcal G_n\)-measurable, so the conditional normal pivot for a fixed or \(\mathcal G_n\)-measurable penalty does not extend to \(\widehat\lambda\) by direct substitution.
The result below therefore treats fixed and selected penalties separately.

\begin{assumption}[Gaussian predictive law]\label{asm:gaussian-predictive}
    Conditional on \(\mathcal G_n\), \(\beta\sim\Pi=\mathcal N(0,r^2I_p/p)\) and \(\epsilon_0\sim\mathcal N(0,\sigma_0^2I_{n_0})\), with \(\beta\) and \(\epsilon_0\) conditionally independent.
\end{assumption}

\begin{theorem}[Prediction intervals for the counterfactual mean]
    \label{thm:predictive-calibration}
    Suppose \Cref{asm:predictive-model,asm:gaussian-predictive} hold.
    \begin{enumerate}[(a)]
        \item\label{thm:predictive-calibration-i} \emph{Fixed penalty.}
        For any deterministic or \(\mathcal G_n\)-measurable \(\lambda\in\Lambda\), for every \(n\),
        \[
            \left.
            \frac{\widehat\mu_{0,\lambda}-\mu_{0,\beta}}
                 {R_n(\lambda)^{1/2}}
            \,\right|\,\mathcal G_n
            \sim\cN(0,1).
        \]
        If, in addition, \(\widehat R_n(\lambda)/R_n(\lambda)\pto1\), then \(\PP\{\widehat R_n(\lambda)>0\}\to1\) and
        \[
            \sup_{t\in\RR}
            \left|
            \PP\left(
                \frac{\widehat\mu_{0,\lambda}-\mu_{0,\beta}}
                     {\widehat R_n(\lambda)^{1/2}}
                \leq t
                \,\middle|\,
                \mathcal G_n
            \right)-\Phi(t)
            \right|\pto0.
        \]

        \item\label{thm:predictive-calibration-ii} \emph{Selected penalty.}
        Suppose, in addition, that the conditions of \Cref{thm:risk-estimation} hold for some \(\eta\in[0,1]\), with \(\Lambda=[\lambda_-,\lambda_+]\subset(0,\infty)\) compact.
        Let \(\widehat\lambda\) be the minimiser defined in \Cref{thm:risk-estimation}, which need not be \(\mathcal G_n\)-measurable.
        Suppose further that, for a deterministic continuous function \(\sR_{\eta,\star}\), \(\sup_{\lambda\in\Lambda}|n_0^\eta R_n(\lambda)-\sR_{\eta,\star}(\lambda)|\pto0\).
        If \(\sR_{\eta,\star}\) has a unique minimiser \(\lambda_{\eta,\star}^*\) satisfying \(\sR_{\eta,\star}(\lambda_{\eta,\star}^*)>0\), then \(\PP\{\widehat R_n(\widehat\lambda)>0\}\to1\) and
        \[
            \sup_{t\in\RR}
            \left|
            \PP\left(
                \frac{\widehat\mu_{0,\widehat\lambda}-\mu_{0,\beta}}
                     {\widehat R_n(\widehat\lambda)^{1/2}}
                \leq t
                \,\middle|\,
                \mathcal G_n
            \right)-\Phi(t)
            \right|\pto0.
        \]        
    \end{enumerate}
\end{theorem}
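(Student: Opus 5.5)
The plan is to prove part \eqref{thm:predictive-calibration-i} from the exact Gaussian structure of the prediction error, and then reduce part \eqref{thm:predictive-calibration-ii} to part \eqref{thm:predictive-calibration-i} at the deterministic limit penalty \(\lambda_{\eta,\star}^\ast\) by a stochastic-equicontinuity argument along the regularisation path.

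\emph{Part \eqref{thm:predictive-calibration-i}, exact pivot.} As in \Cref{sec:risk:exact-risk}, write \(e_n(\lambda):=\widehat\mu_{0,\lambda}-\mu_{0,\beta}=d_n(\lambda)^\top\beta+\gamma_\lambda^\top\epsilon_0\). For \(\mathcal G_n\)-measurable \(\lambda\), both \(d_n(\lambda)\) and \(\gamma_\lambda\) are fixed given \(\mathcal G_n\). Under \Cref{asm:gaussian-predictive}, \(e_n(\lambda)\) is therefore a fixed linear combination of conditionally independent centred Gaussians. Its conditional variance is \(\frac{r^2}{p}\|d_n(\lambda)\|_2^2+\sigma_0^2\|\gamma_\lambda\|_2^2=R_n(\lambda)\) by \Cref{prop:conditional-risk}. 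Moreover \(R_n(\lambda)\ge\sigma_0^2/n_0>0\), because \(\|\gamma_\lambda\|_2^2\ge n_0^{-1}\) under the normalisation. This gives the exact \(\cN(0,1)\) law.

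\emph{Part \eqref{thm:predictive-calibration-i}, plug-in.} Since \(R_n(\lambda)>0\), the assumption \(\widehat R_n(\lambda)/R_n(\lambda)\pto1\) immediately gives \(\PP\{\widehat R_n(\lambda)>0\}\to1\). Set \(Z:=e_n(\lambda)/R_n(\lambda)^{1/2}\) and \(c_n:=\{R_n(\lambda)/\widehat R_n(\lambda)\}^{1/2}\). The difficulty is that \(c_n\) depends on \(y_0\) through \(\widehat r^2,\widehat\sigma_0^2\), so it is not independent of \(Z\) given \(\mathcal G_n\). I will avoid needing independence with a sandwich argument. On the event \(E_\varepsilon=\{|c_n-1|\le\varepsilon\}\), the event \(\{Zc_n\le t\}\) lies between \(\{Z\le t/(1\pm\varepsilon)\}\) (with the sign of \(t\) handled). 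This yields
\[
\sup_t\left|\PP(Zc_n\le t\mid\mathcal G_n)-\Phi(t)\right|\le \sup_t\max_\pm\left|\Phi\{t/(1\pm\varepsilon)\}-\Phi(t)\right|+\PP(E_\varepsilon^c\mid\mathcal G_n).
\]
The first term is \(O(\varepsilon)\), since \(\sup_t|\Phi(ct)-\Phi(t)|\le C|c-1|\). The second term is nonnegative with expectation \(\PP(E_\varepsilon^c)\to0\), so it tends to zero in probability by Markov's inequality. Letting \(\varepsilon\downarrow0\) finishes part \eqref{thm:predictive-calibration-i}.

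\emph{Part \eqref{thm:predictive-calibration-ii}, preliminary limits.} By \Cref{thm:risk-estimation}~\eqref{thm:risk-estimation-iii}, \(\widehat\lambda\pto\lambda^\ast:=\lambda_{\eta,\star}^\ast\). Combining \Cref{thm:risk-estimation}~\eqref{thm:risk-estimation-i}, the assumed uniform convergence of \(n_0^\eta R_n\) and continuity of \(\sR_{\eta,\star}\), we get
\[
n_0^\eta\widehat R_n(\widehat\lambda)\pto\sR_{\eta,\star}(\lambda^\ast)>0
\quad\text{and}\quad
n_0^\eta R_n(\lambda^\ast)\pto\sR_{\eta,\star}(\lambda^\ast).
\]
In particular \(\PP\{\widehat R_n(\widehat\lambda)>0\}\to1\) and \(\widehat R_n(\widehat\lambda)/R_n(\lambda^\ast)\pto1\). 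Now decompose
\[
\frac{e_n(\widehat\lambda)}{\widehat R_n(\widehat\lambda)^{1/2}}=\left\{\frac{e_n(\lambda^\ast)}{R_n(\lambda^\ast)^{1/2}}+\frac{e_n(\widehat\lambda)-e_n(\lambda^\ast)}{R_n(\lambda^\ast)^{1/2}}\right\}\left\{\frac{R_n(\lambda^\ast)}{\widehat R_n(\widehat\lambda)}\right\}^{1/2}.
\]
The leading term is exactly \(\cN(0,1)\) given \(\mathcal G_n\) by part \eqref{thm:predictive-calibration-i}, because \(\lambda^\ast\) is deterministic. The multiplicative factor tends to \(1\) in probability. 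Once the additive remainder is shown to be \(\op(1)\), the same sandwich argument (now with an additive \(\pm\varepsilon\) shift as well as a scale factor) gives the conclusion.

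\emph{Main obstacle: equicontinuity of the error process.} Because \(\widehat\lambda\) depends on \(y_0\), I must control \(e_n\) uniformly in \(\lambda\), and I will show
\[
n_0^{\eta/2}\sup_{\lambda\in\Lambda}|\partial_\lambda e_n(\lambda)|=\Op(1).
\]
Given this, the remainder is bounded by \(|\widehat\lambda-\lambda^\ast|\cdot\Op(n_0^{-\eta/2})/R_n(\lambda^\ast)^{1/2}=\op(1)\). To prove it, compute
\[
\partial_\lambda\gamma_\lambda=-n_0^{-1}X_0^c(M_\lambda^c)^2\Delta,
\qquad
\partial_\lambda d_n(\lambda)=X_0^\top\partial_\lambda\gamma_\lambda .
\]
Then \(\partial_\lambda e_n\) and \(\partial^2_\lambda e_n\) are again conditionally Gaussian, linear in \((\beta,\epsilon_0)\). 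Their conditional variances have the same structure as \(B_n,V_n\) but with extra resolvent powers: quadratic forms \(\Delta^\top(M_\lambda^c)^k\Delta/p\) and \(\Delta^\top(M_\lambda^c)^kS_0^c(M_\lambda^c)^k\Delta/n_0\). On \(\Lambda\subset[\lambda_-,\lambda_+]\) these are bounded by \(C(\lambda_-)\{\|\Delta\|_2^2/p+\|\Delta\|_2^2/n_0\}\). That quantity is \(\Op(n_0^{-\eta})\) by \Cref{asm:risk-rate-admissibility}, together with \(\|\epsilon_1\|_2^2/p=\Op(n_1^{-1})\) and \(\|S_0^c\|_{\oper}=\Op(1)\) under \Cref{asm:rmt}. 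Finally use \(\sup_\Lambda|\partial_\lambda e_n|\le|\partial_\lambda e_n(\lambda_-)|+\int_\Lambda|\partial_\lambda^2e_n|\,\rd\lambda\), take conditional expectations, and apply Fubini with Cauchy--Schwarz. This yields a conditional \(L^1\) bound of order \(n_0^{-\eta/2}\) times an \(\Op(1)\) factor, which gives the required \(\Op(n_0^{-\eta/2})\) rate.
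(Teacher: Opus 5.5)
Your proposal is correct and follows the paper's proof essentially step for step: the exact conditional Gaussian pivot with \(R_n(\lambda)\geq\sigma_0^2/n_0>0\), studentisation by a conditional perturbation argument, and, for the selected penalty, a transfer from the deterministic minimiser \(\lambda_{\eta,\star}^*\) using \(\widehat\lambda\pto\lambda_{\eta,\star}^*\) and the bound \(n_0^{\eta/2}\sup_{\lambda\in\Lambda}|\partial_\lambda\widehat\mu_{0,\lambda}|=\Op(1)\). The paper gets that bound in the same way, from conditional second moments of the first two \(\lambda\)-derivatives and the one-dimensional Sobolev inequality (\Cref{lem:ood-post-tuning-stability,prop:post-tuning-transfer}). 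The only difference is cosmetic: you inline the sandwich bound rather than citing the paper's conditional Kolmogorov-perturbation and plug-in studentisation lemmas.
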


For \Cref{thm:predictive-calibration}~\eqref{thm:predictive-calibration-i}, \Cref{thm:risk-estimation}~\eqref{thm:risk-estimation-i} implies \(\widehat R_n(\lambda)/R_n(\lambda)\pto1\) whenever \(n_0^\eta R_n(\lambda)\pto c\) for some constant \(c>0\).
For \Cref{thm:predictive-calibration}~\eqref{thm:predictive-calibration-ii}, the selected-penalty prediction interval \(\left[\widehat\mu_{0,\widehat\lambda}\pm z_{1-\alpha/2}\widehat R_n(\widehat\lambda)^{1/2}\right]\) has conditional predictive coverage converging in probability to \(1-\alpha\).
The studentised statistic may be defined arbitrarily on \(\{\widehat R_n(\lambda)\leq0\}\) or \(\{\widehat R_n(\widehat\lambda)\leq0\}\), whose probability tends to zero.

The interval in \Cref{thm:predictive-calibration} is a model-based prediction interval for the counterfactual mean \(\mu_{0,\beta}\) under \(\Pi\).
It does not provide frequentist confidence coverage for the fixed causal estimand \(\mu_0\) or \(\tau_{\mathrm{ATT}}\).
Confidence inference for \(\tau_{\mathrm{ATT}}\) would require a joint sampling theory for \(\bar y_1-\mu_1\) and \(\widehat\mu_{0,\widehat\lambda}-\mu_0\), including their covariance and the effect of data-adaptive tuning.
For deterministic or \(\mathcal G_n\)-measurable penalties, \Cref{thm:predictive-normality-nongaussian} gives a non-Gaussian coefficient extension under an envelope condition while retaining Gaussian residual noise.

\section{Numerical Experiments}\label{sec:num-exp}
This section uses a common Gaussian source-target design.
The source and target covariates follow \(x_{0i}=\Sigma_0^{1/2}z_{0i}\) and \(x_{1i}=\nu+\Sigma_1^{1/2}z_{1i}\), where \(z_{0i},z_{1i}\sim\cN(0,I_p)\) are mutually independent and \(\Sigma_1=\Sigma_0\) is the AR(1) covariance \((\Sigma_0)_{jk}=0.5^{|j-k|}\).
Conditional on the covariates, the source outcomes follow \(y_{0i}=x_{0i}^{\top}\beta+\varepsilon_{0i}\), where \(\beta\sim\cN(0,r^2I_p/p)\), \(\varepsilon_0\sim\cN(0,\sigma_0^2I_{n_0})\), \(r^2=1\), and \(\sigma_0^2=0.5\).
We evaluate the conditional prediction risk \(R_n(\lambda)=B_n(\lambda)+V_n(\lambda)\) and its components as defined in \Cref{prop:conditional-risk}.

\subsection{Risk paths in the design-independent setting}\label{subsec:num-benchmark}

We use design-independent base weights satisfying the conditions of \Cref{thm:risk-asym}.
Write the source-target mean shift as \(\nu=\nu_{1,n}=(\rho_\eta^2p/n_0^\eta)^{1/2}u\), where \(\|u\|_2=1\), and set \(\gamma=n_0^{-1}\one_{n_0}+(\varrho_\eta^2/n_0^\eta)^{1/2}q\), where \(\one_{n_0}^{\top}q=0\) and \(\|q\|_2=1\).
Then \((n_0^\eta/p)\|\nu_{1,n}\|_2^2=\rho_\eta^2\), \(n_0^\eta\|C_0\gamma\|_2^2=\varrho_\eta^2\), and \(\one_{n_0}^{\top}\gamma=1\), so \(\rho_\eta^2\) and \(\varrho_\eta^2\) set the mean-shift energy and centred weight concentration.
For \Cref{fig:risk}, we set \(p=640\), \(n_0=n_1=853\), use \(\eta\in\{0,0.5,1\}\) and \((\rho_\eta^2,\varrho_\eta^2)\in\{(1,0),(0,1),(1,1)\}\), and align the mean shift with the largest-eigenvalue eigenvector of \(\Sigma_0\).
The figure compares the exact scaled risk components from \Cref{prop:conditional-risk}, their deterministic equivalents from \Cref{thm:risk-asym}, and the feasible criterion \(\widehat R_n\); the deterministic equivalents closely track the exact paths, and the feasible criterion identifies the same low-risk regions.

\begin{figure}[!t]
    \centering
    \includegraphics[width=0.9\linewidth]{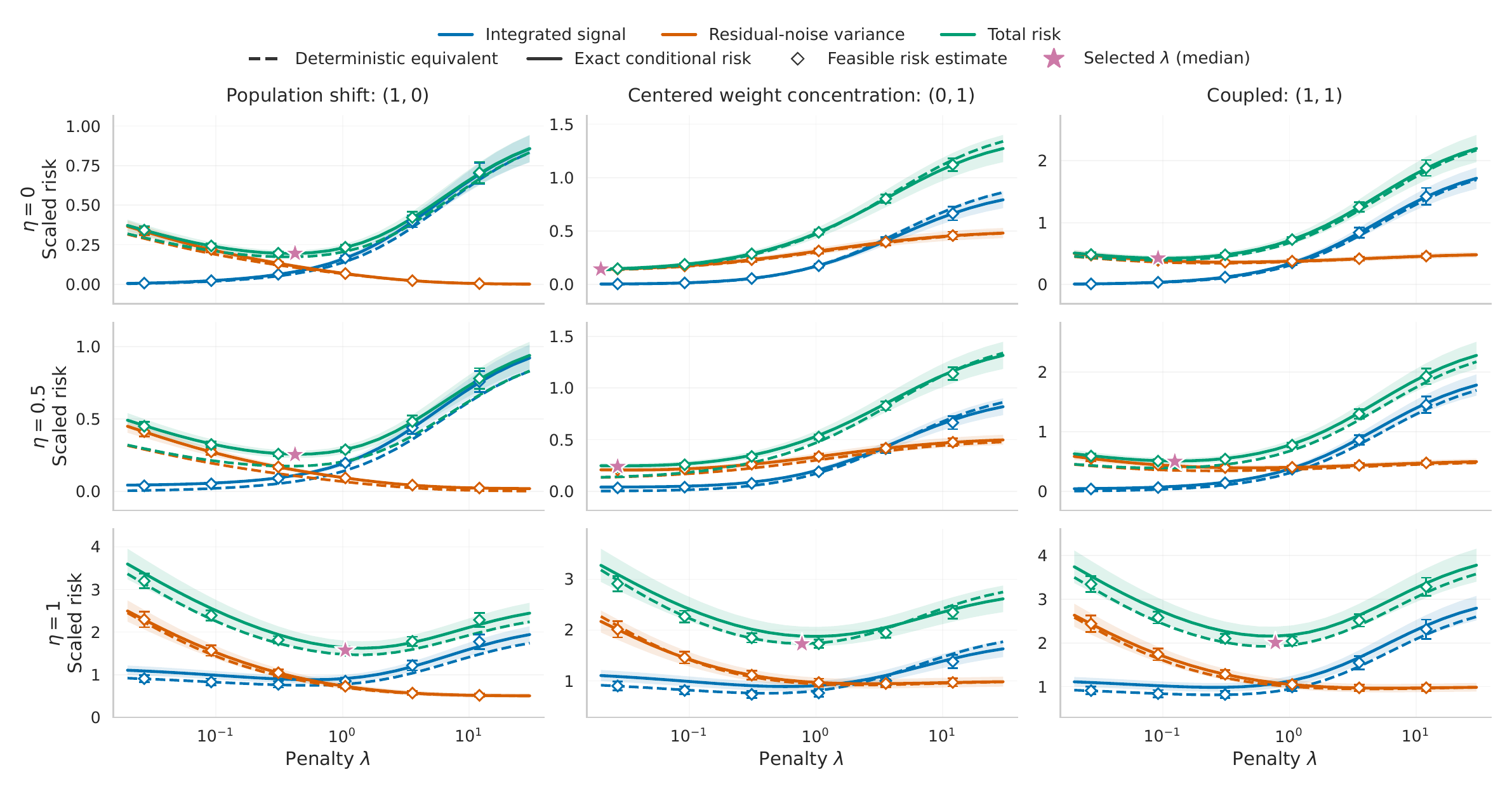}
    \caption{
AR(1) risk paths and target-aware tuning with the source-target mean shift aligned with the largest-eigenvalue eigenvector direction of the source covariance.
Rows index \(\eta\in\{0,0.5,1\}\), and columns set \((\rho_\eta^2,\varrho_\eta^2)\) to \((1,0)\), \((0,1)\), and \((1,1)\).
All risks are scaled by \(n_0^\eta\).
Dashed curves are the deterministic equivalents from \Cref{thm:risk-asym}, solid curves are the exact conditional quantities from \Cref{prop:conditional-risk}, and diamonds are feasible plug-in estimates.
Stars summarise the median selected penalty over 200 replications.
Uncertainty conventions are given in Appendix~\ref{subsec:mc-conventions}.}
    \label{fig:risk}
\end{figure}

With the mean shift spread across the eigenvector directions of \(\Sigma_0\), \Cref{fig:aspect ratio risk} shows that source and target aspect ratios affect risk differently: at \(\eta=0\), total risk is essentially invariant to \(\phi_1\), as predicted by \Cref{cor:target-sampling-invariant}, while at \(\eta=1\), it increases with \(\phi_1\) as target empirical-mean variation enters at leading order.
The source aspect ratio \(\phi_0\) affects risk in both regimes, particularly when the mean-shift component dominates; the isotropic controls in Appendix~\ref{subsec:isotropic-controls} show the same qualitative pattern.

\begin{figure}[!t]
    \centering
    \includegraphics[width=0.9\linewidth]{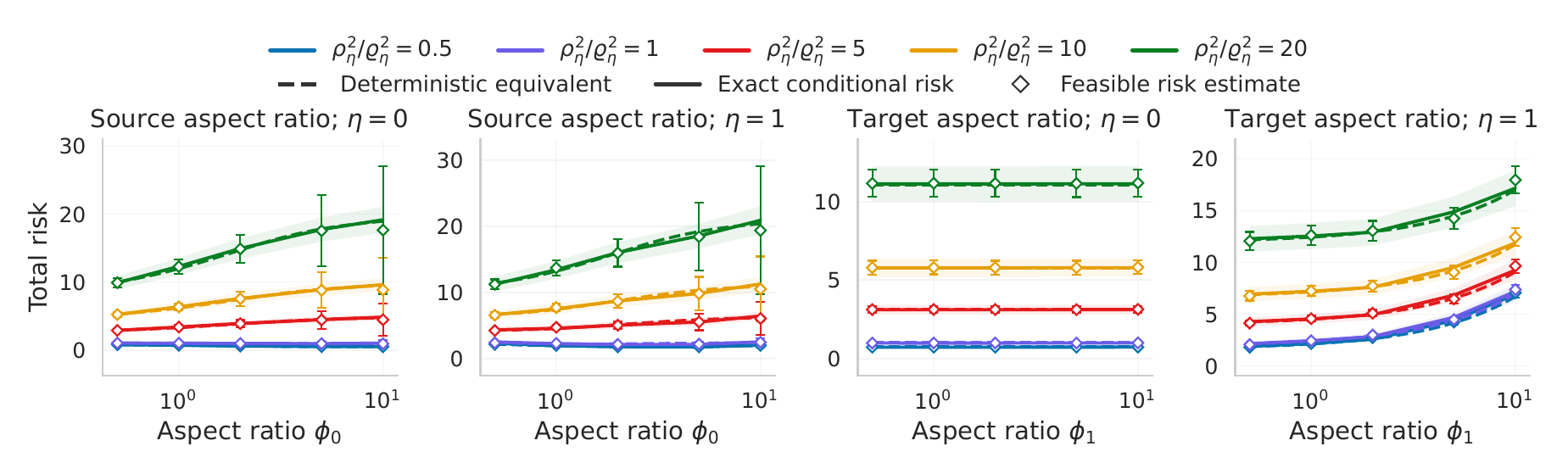}
    \caption{
Source-target aspect-ratio dependence under the AR(1) design.
We fix \(p=640\), \(\lambda=1\), and \(\varrho_\eta^2=1\), vary either \(\phi_0\) or \(\phi_1\), and hold the other at \(0.75\).
Colors index \(\rho_\eta^2/\varrho_\eta^2\).
Dashed, solid, and diamond curves denote the deterministic equivalent, exact conditional risk, and feasible risk estimate, respectively.}
    \label{fig:aspect ratio risk}
\end{figure}
At \(p=320\), the selected studentised prediction errors from \Cref{thm:predictive-calibration} are broadly close to the standard normal reference across \(\eta\in\{0,0.5,0.75,1\}\); details are reported in Appendix~\ref{subsec:exp-predictive}.

\begin{figure}[!b]
    \centering
    \includegraphics[width=1\linewidth]{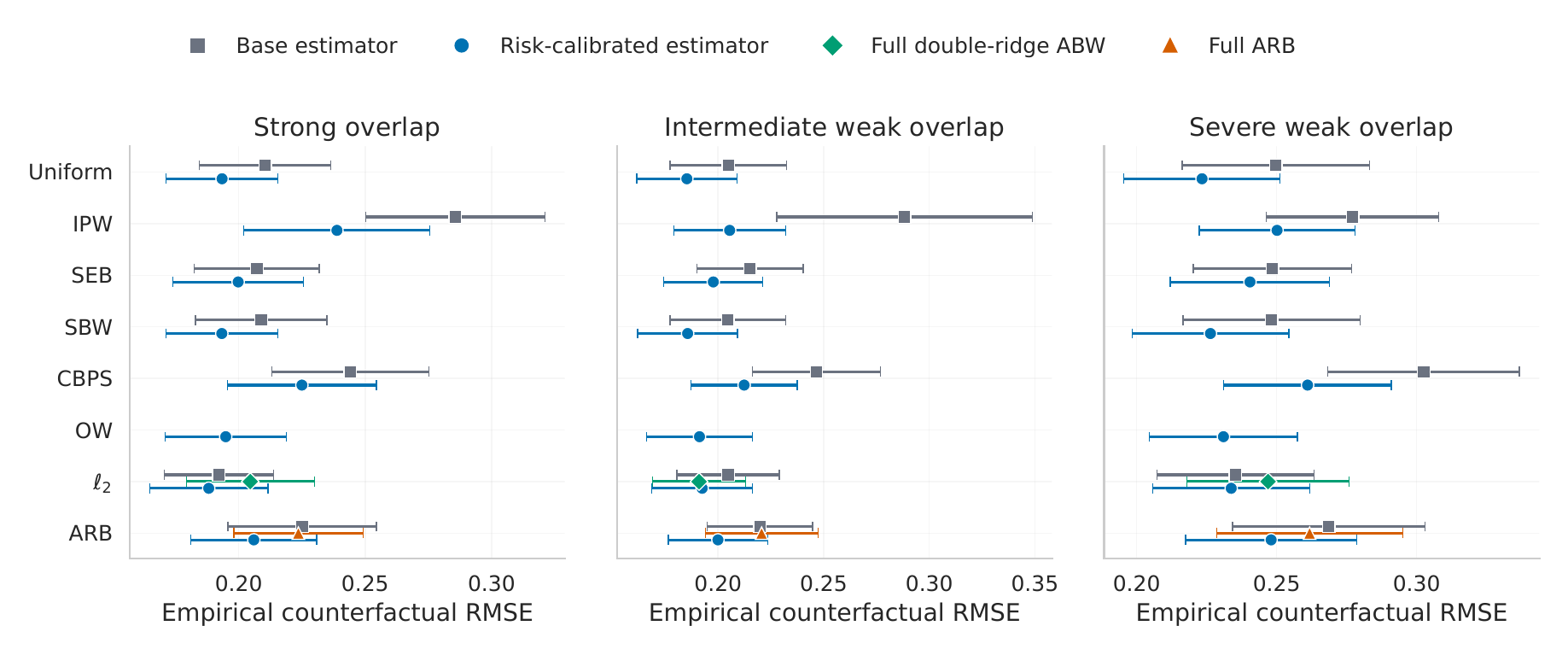}
    \caption{Counterfactual prediction RMSE for the complete estimators in the overparameterised source design \(p=50\), \(n_0=40\), and \(n_1=100\).
    Rows show the eight initial weighting rules alone and with target-aware ridge augmentation; the \(\ell_2\) and ARB rows additionally show the default full double-ridge ABW and the full ARB estimators built on those weighting components.
    Columns correspond to strong, intermediate, and severe weak overlap.
    Points are root mean squared errors over 120 Monte Carlo replications and bars are \(\pm2\) Monte Carlo standard errors.}
    \label{fig:complete-estimator-rmse}
\end{figure}

\subsection{Adaptive balancing and target-aware tuning}\label{subsec:num-adaptive}

We next use base weights estimated from the realised source and pilot-target covariates.
We compare Uniform, IPW \citep{lunceford2004ps}, stabilised entropy balancing (SEB) \citep{hainmueller2012entropy}, stable balancing weights (SBW) \citep{zubizarreta2015stable}, CBPS \citep{imai2014cbps}, overlap weights (OW) \citep{li2018overlapweights}, the \(\ell_2\) balancing component of augmented balancing weights (ABW) \citep{brunssmith2025augmented}, and approximate residual balancing (ARB) \citep{athey2018arb}, pairing each with target-aware ridge augmentation.
The complete double-ridge ABW and ARB estimators are also included; for the \(\ell_2\) base, the augmented path coincides algebraically with the double-ridge path, so their comparison concerns penalty selection.
For the \(\ell_2\) base, the balancing penalty is selected from a fixed positive grid using only the source design and pilot target information, and \Cref{cor:target-split-ridge-selected-alpha} gives uniformly consistent risk estimation and vanishing scaled oracle excess risk.
For the remaining procedures, we report finite-sample performance; simulation conventions, tuning grids, and additional diagnostics are given in Appendices~\ref{subsec:mc-conventions}, \ref{subsec:sup-num}, and~\ref{subsec:response-robustness}.

\begin{figure}[!t]
    \centering
    \begin{minipage}[t]{0.9\linewidth}
        \centering
        \includegraphics[width=\linewidth]{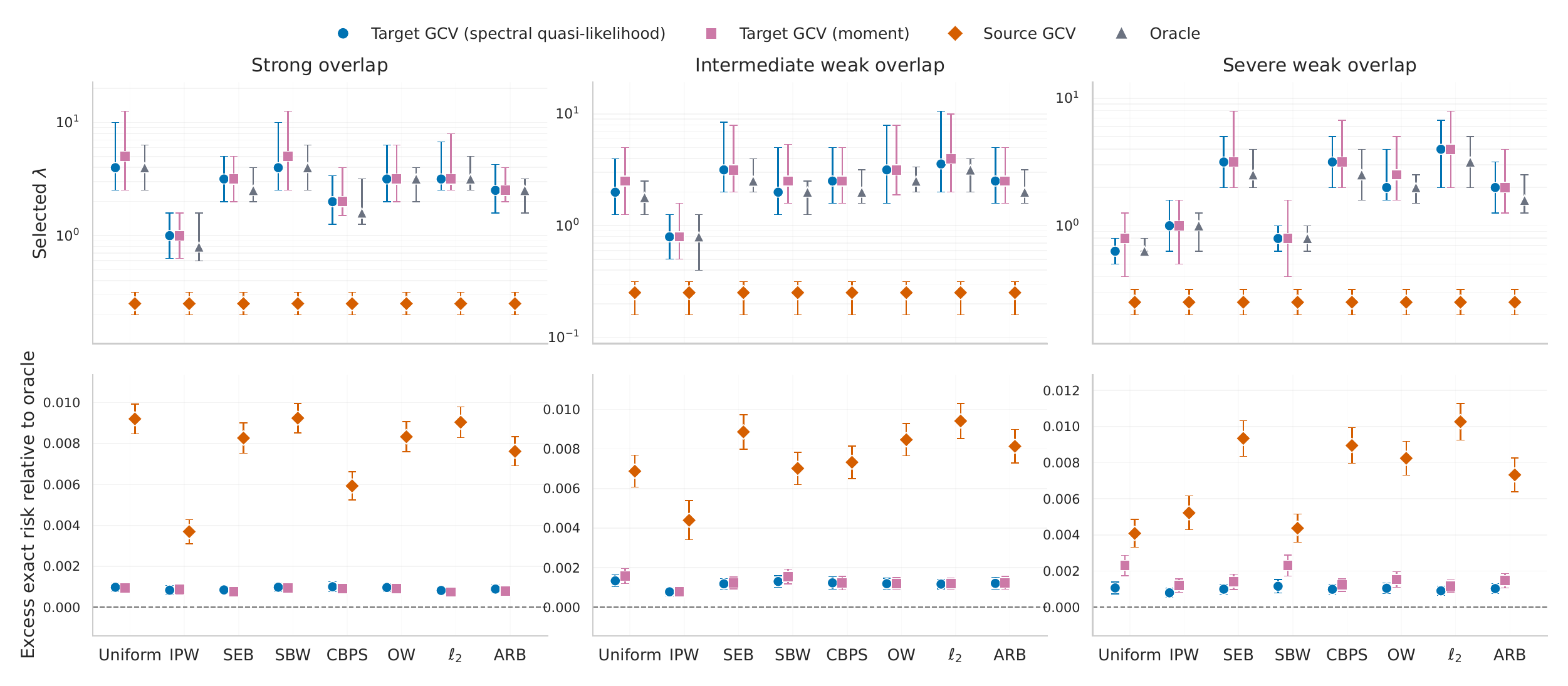}
        \par\smallskip\small\textbf{(a)} Underparameterised source design: \(p/n_0=0.5\).
    \end{minipage}
    \par\medskip
    \begin{minipage}[t]{0.9\linewidth}
        \centering
        \includegraphics[width=\linewidth]{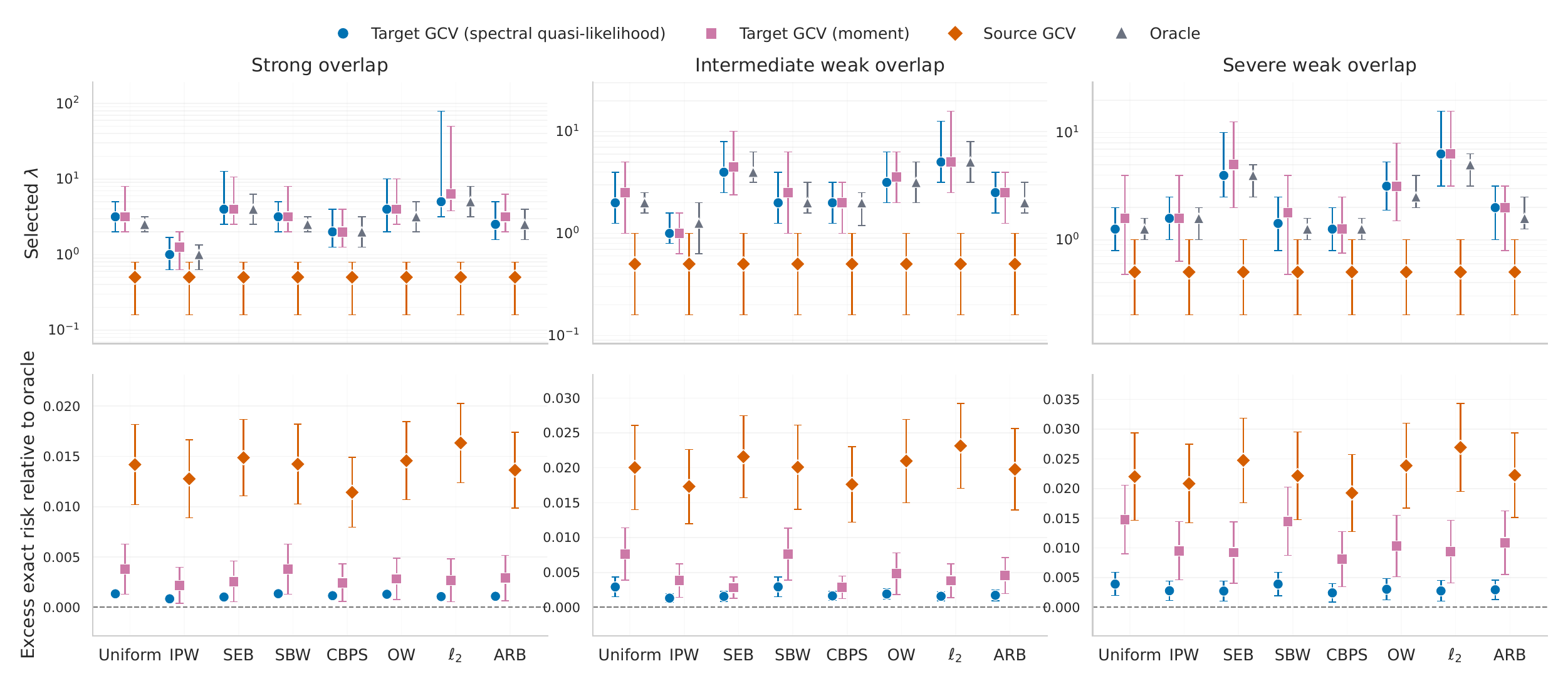}
        \par\smallskip\small\textbf{(b)} Overparameterised source design: \(p/n_0=1.25\).
    \end{minipage}
    \caption{Target-aware tuning compared with source-only GCV and the infeasible oracle.
    Panels (a) and (b) correspond to source aspect ratios \(p/n_0=0.5\) and \(p/n_0=1.25\), respectively.
    Columns correspond to strong, intermediate, and severe weak overlap.
    The upper row reports the median selected penalty with interquartile-range bars.
    The lower row reports mean excess exact risk relative to the infeasible oracle, with \(\pm2\) Monte Carlo standard-error bars, over 120 replications.}
    \label{fig:exp_tuning}
\end{figure}

\noindent\textbf{Experimental setup.}
We set \(p=50\), \(n_1=100\), and \(n_0\in\{100,40\}\), giving \(\phi_0\in\{0.5,1.25\}\), and split the target sample equally into pilot and evaluation folds.
The mean shift is spread across the source-covariance eigenvector directions and scaled so that \(\nu_\delta^\top\Sigma_0^{-1}\nu_\delta=\delta^2\), with \(\delta\in\{0.5,1,\sqrt{3}\}\) corresponding to strong, intermediate, and severe weak overlap; the treatment effect is fixed at \(\tau=1\), and each configuration uses 120 Monte Carlo replications.

\noindent\textbf{Adaptive balancing performance.}
\Cref{fig:complete-estimator-rmse} compares complete-estimator RMSE.
Across the three overlap regimes, target-aware augmentation of the \(\ell_2\) and ARB weighting components has similar RMSE to the corresponding complete ABW and ARB estimators, with no uniform ordering.
Across the displayed weighting families, augmentation reduces integrated squared bias and generally lowers total risk by trading residual extrapolation against weight concentration; residual imbalance and weight concentration provide complementary diagnostics, and both selected-weight concentration and evaluation-sample imbalance increase as overlap weakens (\Cref{fig:comparison-phi-1-25}).

\noindent\textbf{Target-aware tuning.}
For comparison, ordinary generalised cross-validation (GCV) evaluates prediction under the source covariate distribution; we denote its selected penalty by \(\widehat\lambda_{\mathrm{GCV}}\).
\Cref{fig:exp_tuning} compares the spectral target-aware rule with the two-moment variant, source GCV, and the infeasible conditional-risk oracle.
The spectral rule generally selects penalties near the oracle, while source GCV selects substantially smaller penalties and incurs larger target risk.
The spectral rule attains the smallest mean excess exact risk in 39 of the 48 displayed configurations; the remaining cases occur in the underparameterised design and favour the two-moment rule.
Across the displayed adaptive experiments, target-aware tuning generally yields lower excess target risk than source GCV.

\section{Real-Data Applications}
\label{sec:real-data}

We use two real-data applications to study target-specific extrapolation.

\subsection{LaLonde job-training study}
\label{subsec:lalonde}

We analyse the National Supported Work Demonstration (NSW)--Panel Study of Income Dynamics (PSID) LaLonde data \citep{lalonde1986evaluating,dehejia1999causal} using the 171-feature specification of \citet{farrell2015robust} and \citet{brunssmith2025augmented}, with \(n_0=727\) PSID controls, \(n_1=185\) NSW treated units, and \(p=171\); the outcome is 1978 earnings.
We apply RCB with uniform, \(\ell_2\), or regularised-entropy base weights, using the pilot/evaluation split for target-adaptive bases.
Implementation details and diagnostics are reported in Appendix~\ref{app:real-data}.

Across the three base-weight choices, the RCB estimates of ATT are positive and similar in magnitude to the randomised NSW reference estimate; numerical estimates and diagnostics are given in \Cref{tab:lalonde171-full}.
Because untreated outcomes for the NSW treated group are unobserved, we also use the 260 randomised NSW controls as a pseudo-target with held-out outcomes: RCB attains similar RMSE across base weights, far below uniform weighting and comparable to regularised-entropy weighting, although above \(\ell_2\) weighting.

As comparators, the complete double-ridge ABW estimator of \citet{brunssmith2025augmented} tunes its outcome-ridge penalty by outcome cross-validation and sets its balancing penalty equal to it (outcome CV) or selects it by held-out imbalance (imbalance CV) or Riesz-loss (Riesz CV) cross-validation.
We next consider moderate and strong controlled shifts, corresponding to approximately \(4\times\) and \(16\times\) the baseline classifier-induced source-target discrepancy while holding the target covariates and outcome surface fixed.
Under the moderate shift, the uniform- and \(\ell_2\)-base RCB estimators have similar RMSE to imbalance-CV double ridge, and under the strong shift both have lower RMSE.
Relative to its own base, RCB reduces RMSE by \(28\%\) and \(37\%\) for \(\ell_2\) weighting and by about \(80\%\) for uniform weighting under the moderate and strong shifts, so its benefit grows with the severity of extrapolation.
Regularised-entropy weights are already closely balanced but highly concentrated (maximum standardised mean difference \(0.30\) and ESS \(14.9\) of 727 controls in \Cref{tab:lalonde171-full}), and RCB selects a large penalty that leaves them nearly unchanged.

\begin{figure}[htbp]
\centering
\includegraphics[width=\textwidth]{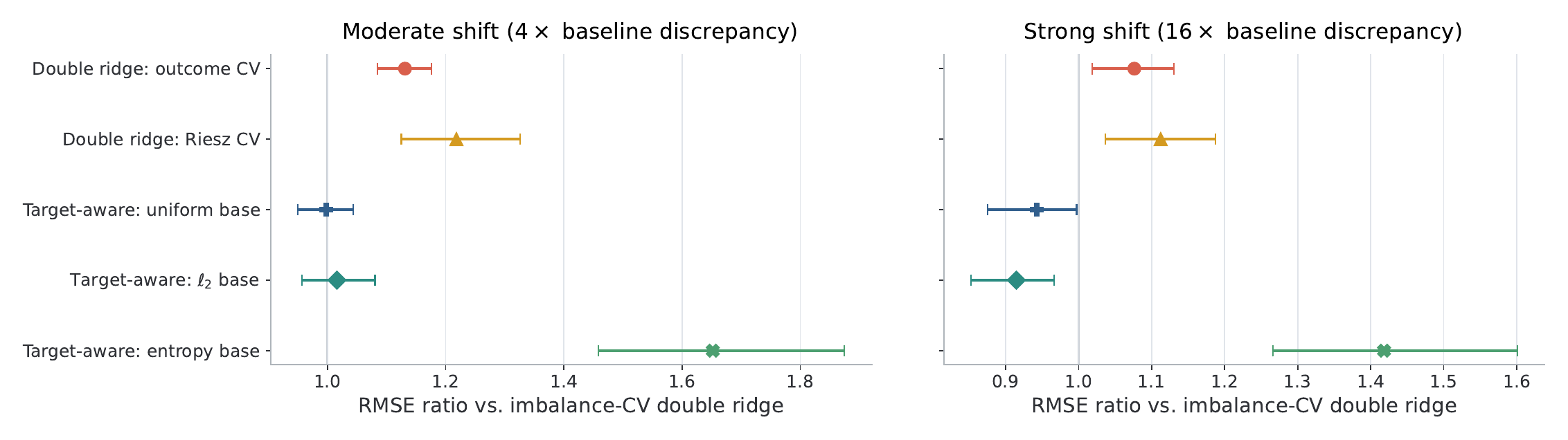}
\caption{
LaLonde controlled overlap-sensitivity analysis.
Points show paired RMSE ratios relative to imbalance-CV double ridge over 120 common replications, with 95\% Monte Carlo bootstrap intervals.
RCB with uniform or \(\ell_2\) base weights shows no detectable gain under the moderate controlled shift but lower RMSE under the strong controlled shift.}
\label{fig:r9_chi2}
\end{figure}

\subsection{K562 Perturb-seq extrapolation study}\label{subsec:crispr}

We apply RCB to the K562 Essential Perturb-seq data of \citet{replogle2022perturbseq}, which measure genome-wide transcriptional responses to genetic perturbations, to study source-to-target extrapolation in high-dimensional gene-expression space.
Each observation is a pseudobulk profile obtained by averaging \(\log(1+\text{count})\) over cells within one \((\text{perturbation},\text{batch})\) group, so batch-level variation enters both the covariates and outcomes, where covariate adjustment can partly account for it.
For each perturbation \(P\) and response gene \(G\), let \(y_G\) denote pseudobulk expression, \(x\in\mathbb R^p\) the covariates, and, for \(a\in\{0,P\}\), define \(m_{aG}(x)=\EE(y_G\mid x,a)\) and \(\tau_{PG}(x)=m_{PG}(x)-m_{0G}(x)\).
We use \(p=500\) standardised expression features screened for limited perturbation-associated differential expression and excluding perturbation and outcome genes; the screening rule and implementation details are given in Appendix~\ref{app:k562-implementation}.
Because pre-perturbed cell states are unobserved, these features are necessarily measured after perturbation, so we interpret the resulting quantities primarily as source-target predictive response contrasts.
A causal ATT interpretation would additionally require the selected features to be unaffected by perturbation and to satisfy the identifying conditions in \Cref{asm:causal-identification}.

\noindent\textbf{Overlap-sensitivity analysis.}
For each perturbation-outcome pair, we repeatedly sample 24 perturbation observations as the target and split them equally into pilot and evaluation folds, so that \(n_{1,\mathrm P}=n_{1,\mathrm E}=12\) and \(p/n_{1,\mathrm E}\approx41.7\); the observed K562 controls form the source population in the good-overlap regime.
For the weak-overlap regime, we resample the controls using a classifier-based covariate tilt constructed from the pilot target while retaining the observed outcomes, increasing the discrepancy to approximately \(16\) times its baseline level and reducing the average effective source sample size from \(48.0\) to \(10.8\).
Adaptive base weights are constructed using the pilot fold, and target-aware tuning uses the held-out fold.
Panels (a)--(b) of \Cref{fig:crispr_k562_fixed_panel} compare pair-level RMSE for each base estimator and its RCB augmentation under good and weak overlap.
Relative to its base, RCB reduces RMSE for most perturbation-outcome pairs with \(\ell_2\) weighting and the three double-ridge rules in both regimes (\(72\%\)--\(96\%\) of the 125 pairs), changes little for regularised-entropy weighting, and improves on uniform weighting only under weak overlap, where the uniform-base selection also moves away from the no-augmentation endpoint.
The balance diagnostics in \Cref{tab:k562-balance} explain this pattern: the base procedures concentrate the weights on about 4--13 effective controls yet reduce the imbalance relative to the mean of all perturbed profiles by at most about \(2\%\) under good overlap and \(4\%\) under weak overlap, whereas uniform-base RCB reduces it by about \(4\%\) and \(5\%\) while retaining 48 and 28 effective controls.
Because 48 controls span at most 47 of the 500 covariate directions, reweighting cannot remove most of the shift, and uniform-base RCB improves balance without the weight concentration that the base procedures incur.

\begin{figure}[t]
    \centering
    \includegraphics[width=\textwidth]{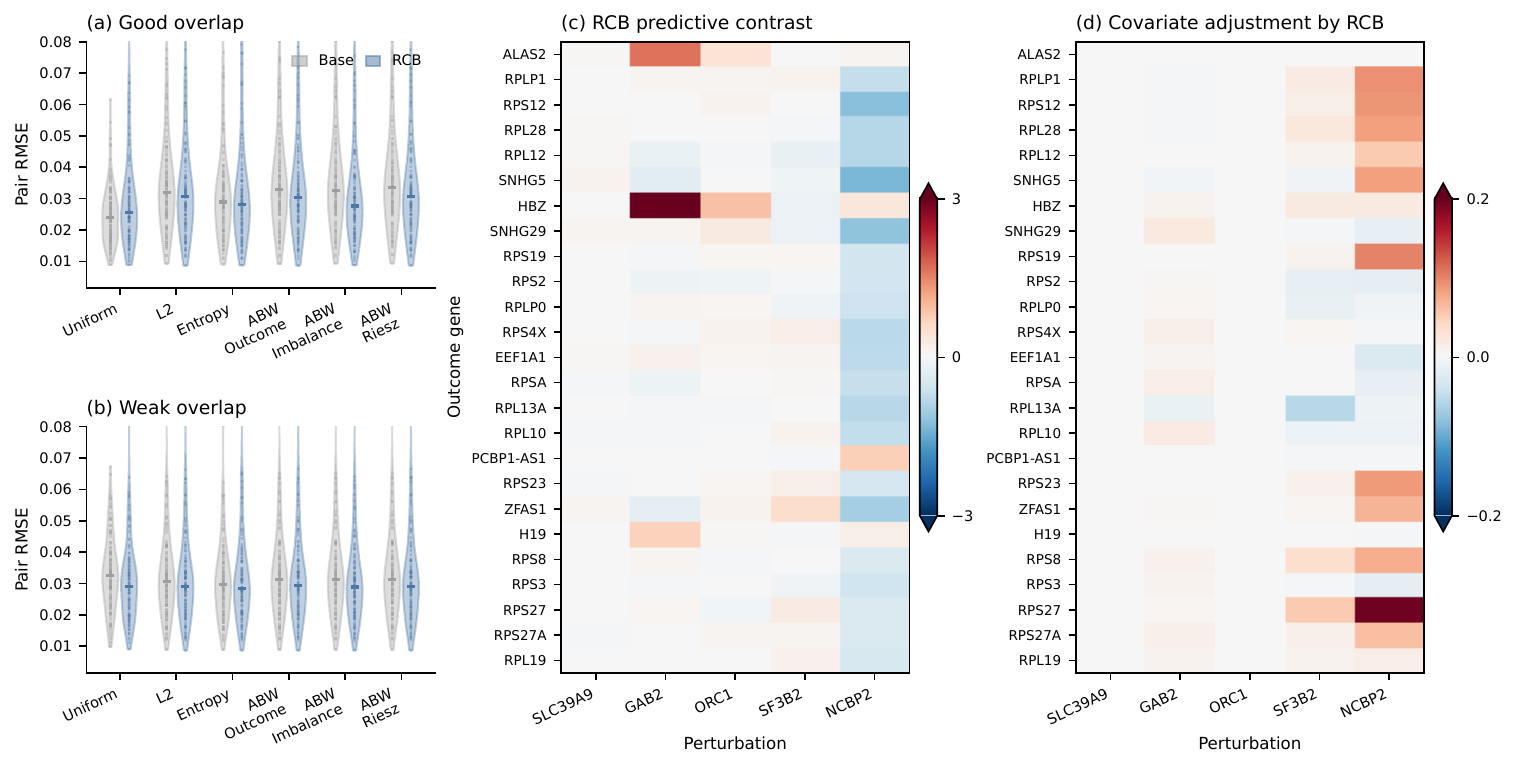}
    \caption{
    K562 Perturb-seq results.
    Panels (a)--(b) compare deviations of the estimated perturbation contrasts from empirical reference contrasts under the original and weak-overlap regimes.
    Panel (c) shows RCB predictive response contrasts, perturbed-minus-counterfactual differences in mean \(\log_2(1+\mathrm{count})\), and panel (d) shows, on a separate colour scale, the covariate adjustment \(-\Delta^\top\widehat\beta(\widehat\lambda)/\log 2\) applied by RCB.
    Columns index perturbations and rows index outcome genes.}
    \label{fig:crispr_k562_fixed_panel}
\end{figure}

\noindent\textbf{Full-data response profiles.}
Using all available K562 control and perturbation observations, we estimate predictive response contrasts on a common 25-gene display panel selected once from a marginal differential-expression screen across the five perturbations; Appendix~\ref{app:k562-implementation} gives the selection rule.
The selected penalties vary substantially across perturbations: across the 25 outcome genes, the median penalty is the no-augmentation endpoint, \(\lambda=\infty\), for \textit{SLC39A9} and \textit{ORC1}, and approximately \(7.94\), \(6.31\), and \(3.16\) for \textit{GAB2}, \textit{SF3B2}, and \textit{NCBP2}, respectively.
The estimated target risk therefore favours different degrees of augmentation across perturbations.

\Cref{fig:crispr_k562_fixed_panel}(d) shows the covariate adjustment that RCB applies on the selected outcome panel, the difference between its predictive contrast in panel (c) and the unadjusted perturbed-minus-control contrast.
The adjustment is target-specific: for \textit{SLC39A9} and \textit{ORC1}, RCB selects the no-augmentation endpoint and retains uniform weights on all 48 controls, whereas for \textit{GAB2}, \textit{SF3B2}, and \textit{NCBP2} it selects finite penalties with median effective sample sizes of about 22, 16, and 8 controls and changes the contrasts by less than \(0.2\), without reversing the sign of any contrast.
RCB thus adjusts only where the estimated shift term in its endpoint risk is positive: that term is zero for \textit{SLC39A9} and \textit{ORC1} and largest for \textit{NCBP2}, which receives the largest adjustment (\Cref{fig:k562-adjustment-diagnostics}).

\section{Discussion}

We have developed a criterion for tuning ridge-augmented balancing estimators under weak overlap based on the conditional prediction risk for the treated.
When the causal target is identified, stronger covariate balance can reduce extrapolation while increasing noise through more concentrated weights; the finite-sample risk decomposition places these two effects on a common prediction-risk scale.
The high-dimensional analysis further shows how source geometry, source-target shift, and target sampling shape this tradeoff along the regularisation path.
In the simulations, target-aware tuning generally reduces excess target risk relative to source-only tuning.
In both applications, augmentation adapts to the base weights and the severity of extrapolation: RCB attains similar accuracy across base weights under the observed LaLonde shift and reduces the RMSE of \(\ell_2\) weighting by \(37\%\) under a severe controlled covariate shift, and in the high-dimensional K562 design, where standard balancing concentrates the weights without transferring balance to the target population, RCB improves most base procedures while adjusting only for perturbations with a detectable shift.

The RCB procedure applies to any normalised base weights, allowing the initial weighting rule to be chosen separately from ridge augmentation and tuning.
Under the affine random-effects model, \Cref{prop:conditional-risk} gives an exact finite-sample risk decomposition for normalised base weights constructed without the control outcomes, including covariate-adaptive weights, while \Cref{thm:risk-asym} gives the high-dimensional characterisation under the additional design-independence condition.
Under the conditions of \Cref{thm:risk-estimation}, covariate-adaptive weights are allowed, with uniform consistency of the feasible risk estimate and vanishing scaled oracle excess risk for risk-based tuning.

Fixed finite-dimensional nonlinear feature maps are accommodated by the estimator; extending the risk theory to general random-feature or kernel models requires control of feature dependence and approximation error.
The source-target formulation can also be developed for the ATC and ATE under estimand-specific identification and design conditions; pointwise conditional effects require separate analysis.
More broadly, target prediction risk gives a direct criterion for how much source-target discrepancy to correct when stronger balance also increases noise.

\section*{Data Availability}
The LaLonde data \citep{farrell2015robust} and K562 Perturb-seq data \citep{replogle2022perturbseq} are publicly available; code and processed analysis files are at \url{https://github.com/JinHongDu-Lab/RCB}.

\section*{Acknowledgements}
The corresponding author acknowledges support from the National Natural Science Foundation of China (No. 12601487), and the Institute of Data Science (IDS) at the University of Hong Kong through the HKU100 award.
Part of the computations was performed using research computing facilities offered by Information Technology Services, the University of Hong Kong.

\clearpage
\bibliographystyle{apalike}
\bibliography{references}

\clearpage
\appendix
\counterwithin{theorem}{section}
\counterwithin{lemma}{section}
\counterwithin{proposition}{section}
\counterwithin{corollary}{section}
\counterwithin{assumption}{section}
\counterwithin{remark}{section}
\counterwithin{definition}{section}
\counterwithin{example}{section}
\counterwithin{property}{section}
\counterwithin{equation}{section}

\renewcommand{\thesection}{\Alph{section}}
\setcounter{table}{0}
\renewcommand{\thetable}{\thesection\arabic{table}}
\setcounter{figure}{0} 
\renewcommand{\thefigure}{\thesection\arabic{figure}}

\crefname{section}{Appendix}{Appendices}
\Crefname{section}{Appendix}{Appendices}
\crefname{subsection}{Appendix}{Appendices}
\Crefname{subsection}{Appendix}{Appendices}

\begin{center}
\textbf{\Large Appendices}
\end{center}

The appendices contain the proofs of the main theoretical results and the supplementary numerical and real-data analyses; their organisation and notation are summarised below.

\vspace{2mm}

\noindent\textbf{Organisation.}
\vspace{-4mm}

\begin{table}[!ht]
    \centering \small
\begin{tabularx}{0.98\textwidth}{l l D}
\toprule
\multicolumn{2}{c}{Appendix} & Content \\
\midrule
\multirow{3}{*}{\Cref{app:overlap-splitting}}
& \ref{app:overlap-calibration}
& Density-ratio overlap and oracle-weight concentration \(\varrho_{\eta,n}^2\). \\
& \ref{app:honest-target-split}
& Target sample splitting (verification of \Cref{asm:evaluation-sample}). \\
& \ref{app:sample-split-rate-verification}
& Rates of base balancing weights (verification of \Cref{asm:risk-rate-admissibility}). \\
\cmidrule(lr){1-3}

\multirow{9}{*}{\Cref{app:main-proofs}}
    & \ref{app:proof-prop-att-identification} & Proof of ATT identification (\Cref{prop:att-identification}). \\
    & \ref{app:proof-prop-risk} & Proof of the exact conditional random-effects risk decomposition (\Cref{prop:conditional-risk}). \\
    & \ref{app:proof-thm-risk-asym} & Proof of the random-matrix deterministic equivalent (\Cref{thm:risk-asym}). \\
    & \ref{app:proof-fixed-limit-corollaries} & Proofs of fixed risk limits (\Cref{cor:fixed-risk-limit}) and lower-order target-sampling effects (\Cref{cor:target-sampling-invariant}); limiting-risk regularity (\Cref{lem:risk-lipschitz}) and monotonicity of the integrated squared bias (\Cref{prop:bias-monotonicity}). \\
    & \ref{app:proof-variance-components} & Variance-component estimation: spectral quasi-likelihood consistency (\Cref{prop:spectral-quasi-likelihood}), two-moment initialisation, and computation. \\
    & \ref{app:proof-risk-tuning} & Proof of uniform target-aware risk estimation and tuning consistency (\Cref{thm:risk-estimation}). \\ 
    & \ref{app:proof-predictive-calibration} & Proof of the prediction-interval result (\Cref{thm:predictive-calibration}), tuning stability and non-Gaussian extension. \\
\cmidrule(lr){1-3}

\multirow{3}{*}{\Cref{app:rmt-concentration}}
& \ref{app:uncentered-resolvent}
& Verification of source-resolvent concentration; higher-resolvent and cross-block bounds. \\
& \ref{app:centered-resolvent}
& Centred source resolvents, design-independence trace replacement, and source-mean identities. \\
& \ref{app:treated-mean-concentration}
& Treated-mean quadratic-form and bilinear concentration. \\
\cmidrule(lr){1-3}

\multirow{3}{*}{\Cref{app:profile-transfer}}
& \ref{app:general-profile-transfer}
& Risk-path components and ridge verification tools. \\
& \ref{app:target-split-ridge-profile-verification}
& Target-split ridge base weights and consequences for penalty selection. \\
& \ref{app:ridge-profile-verification}
& Same-sample ridge base weights. \\
\cmidrule(lr){1-3}

\multirow{5}{*}{\Cref{app:sup-exp}}
& \ref{subsec:mc-conventions}
& Simulation implementation and Monte Carlo conventions. \\
& \ref{subsec:isotropic-controls}
& Isotropic risk-path and aspect-ratio controls. \\
& \ref{subsec:exp-predictive}
& Prediction-interval coverage. \\
& \ref{subsec:sup-num}
& Adaptive-balancing diagnostics and sensitivity. \\
& \ref{subsec:response-robustness}
& Response-model robustness. \\
\cmidrule(lr){1-3}

\multirow{2}{*}{\Cref{app:real-data}}
& \ref{app:lalonde-observational}
& LaLonde analyses and diagnostics. \\
& \ref{app:k562-implementation}
& K562 Perturb-seq implementation. \\
\bottomrule
\end{tabularx}
\end{table}
\addcontentsline{toc}{part}{\appendixname}

\noindent\textbf{Notation.}
Lowercase symbols denote scalars or vectors as clear from context, and uppercase symbols denote matrices.
We write \(\one_d\) for the \(d\)-vector of ones, \(I_d\) for the \(d\)-dimensional identity matrix, and \(A^\top\) for the transpose of \(A\).
The Euclidean and operator norms are denoted by \(\|\cdot\|_2\) and \(\|\cdot\|_{\oper}\), respectively.
We use \(\tr(A)\) for the trace, \(\otr(A)=p^{-1}\tr(A)\) for the normalised trace of a \(p\times p\) matrix, and \(A\succeq0\) to indicate that \(A\) is positive semidefinite.

Expectation, probability, variance, and covariance are denoted by \(\EE\), \(\PP\), \(\Var\), and \(\Cov\).
Convergence in probability and distribution are written \(\pto\) and \(\dto\), while \(\op(1)\) and \(\Op(1)\) have their usual stochastic-order meanings.
The notation \(\dto\) also denotes weak convergence of finite measures when the context is unambiguous.
For results involving design-independent weights, convergence is first interpreted conditionally on \(\mathcal F_{\mathrm{aux}}\) for \(\PP\)-almost every admissible auxiliary realisation.
The corresponding unconditional conclusions follow by averaging the conditional error probabilities.

The index \(t=0\) denotes the control or source population, and \(t=1\) denotes the treated or target population.
We use \(n_t\), \(\PP_t\), \(\nu_t\), and \(\Sigma_t\) for the corresponding sample size, covariate distribution, mean, and covariance.
The control design and outcome vector are \(X_0\) and \(y_0\), and \(\bar x_t\) denotes a treatment-specific covariate mean.
The centring matrix is \(C_0\), with \(X_0^c=C_0X_0\).
A normalised base-weight vector is denoted by \(\gamma\), its residual imbalance by \(\Delta=\bar x_1-X_0^\top\gamma\), and its ridge-augmented version by \(\gamma_\lambda\), with centred ridge resolvent \(M_\lambda^c\).
Finally, \(B_n(\lambda)\), \(V_n(\lambda)\), and \(R_n(\lambda)\) denote the integrated squared bias, noise variance, and total conditional random-effects prediction risk.

\clearpage
\section{Overlap Calibration and Target Sample Splitting}
\label{app:overlap-splitting}

\subsection[Density-Ratio Overlap and Oracle-Weight Concentration]{Density-Ratio Overlap and Oracle-Weight Concentration \(\varrho_{\eta,n}^2\)}
\label{app:overlap-calibration}

For \(t=0,1\), let
\[
    \PP_{t,n}:=\PP(x\in\cdot\mid a=t)
\]
and suppose \(\PP_{1,n}\ll\PP_{0,n}\). Define the source-to-target density
ratio and its second moment by
\[
    w_n(x)=\frac{\rd \PP_{1,n}}{\rd \PP_{0,n}}(x),
    \qquad
    \kappa_n=\EE_{0,n}\{w_n(x)^2\}
    =1+\chi^2(\PP_{1,n}\,\|\,\PP_{0,n}).
\]
Thus \(\kappa_n\) is an overlap-divergence measure.

\begin{proposition}[Overlap divergence and oracle-weight concentration]
    \label{prop:overlap-weight-concentration}
    For independent control covariates \(x_{0i}\sim \PP_{0,n}\), define
    \[
        \gamma_{i,n}^{\mathrm{or}}
        =\frac{w_n(x_{0i})}{\sum_{j=1}^{n_0}w_n(x_{0j})}.
    \]
    Fix \(\eta\in(0,1]\). 
    Suppose \(n_0^{\eta-1}\kappa_n\rightarrow\kappa\in(0,\infty)\) and \({\EE_{0,n}\{w_n(x)^4\}}/({n_0\kappa_n^2})\rightarrow0\).
    Then
    \[
        n_0^\eta\|\gamma_n^{\mathrm{or}}\|_2^2\pto\kappa,\quad n_0^{-\eta}\operatorname{ESS}(\gamma_n^{\mathrm{or}})
        \pto\kappa^{-1} ,
    \]
    where \(\operatorname{ESS}(\gamma)=\|\gamma\|_2^{-2}\) is the effective sample size of the control weights.
\end{proposition}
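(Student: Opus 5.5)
We write \(w_i=w_n(x_{0i})\) and split the squared norm as
\[
    n_0^\eta\|\gamma_n^{\mathrm{or}}\|_2^2
    =
    \frac{n_0^{\eta}\sum_i w_i^2}{(\sum_j w_j)^2}
    =
    \frac{n_0^{\eta-1}\,\kappa_n\cdot\bigl(n_0^{-1}\kappa_n^{-1}\sum_i w_i^2\bigr)}{\bigl(n_0^{-1}\sum_j w_j\bigr)^2}.
\]
The plan is to show that the numerator average satisfies \(n_0^{-1}\kappa_n^{-1}\sum_i w_i^2\pto1\) and the denominator average satisfies \(n_0^{-1}\sum_j w_j\pto1\). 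Then the hypothesis \(n_0^{\eta-1}\kappa_n\to\kappa\), combined with the continuous mapping theorem, gives \(n_0^\eta\|\gamma_n^{\mathrm{or}}\|_2^2\pto\kappa\). The ESS statement follows at once, because \(n_0^{-\eta}\operatorname{ESS}=\{n_0^\eta\|\gamma_n^{\mathrm{or}}\|_2^2\}^{-1}\) and \(\kappa\in(0,\infty)\).

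\emph{Denominator.} Because \(w_n\) is a density ratio, \(\EE_{0,n}w=1\) and \(\Var_{0,n}(w)=\kappa_n-1\). Hence
\[
    \Var\Bigl(n_0^{-1}\sum_j w_j\Bigr)\le \frac{\kappa_n}{n_0}=n_0^{-\eta}\,(n_0^{\eta-1}\kappa_n).
\]
For \(\eta>0\) the right-hand side is \(n_0^{-\eta}\{\kappa+o(1)\}\to0\), and Chebyshev's inequality gives \(n_0^{-1}\sum_j w_j\pto1\). This is the step that uses \(\eta>0\); at \(\eta=0\) the average need not concentrate.

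\emph{Numerator.} The variables \(w_i^2\) are i.i.d.\ with mean \(\kappa_n\) and variance at most \(\EE_{0,n}w^4\). Therefore
\[
    \Var\Bigl(\frac{1}{n_0\kappa_n}\sum_i w_i^2\Bigr)\le \frac{\EE_{0,n}\{w_n(x)^4\}}{n_0\kappa_n^2}\to0
\]
by the fourth-moment hypothesis, and Chebyshev's inequality gives the ratio \(\pto1\). This is a triangular-array weak law, because the law of \(w_i\) changes with \(n\). The second-moment bound applies uniformly along the sequence, so no further uniform-integrability argument is required.

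The only delicate point is this triangular-array structure. Both weak laws rely on Chebyshev's inequality with variances that vanish along the sequence, which the two stated rate conditions are designed to guarantee. I do not expect any real obstacle beyond checking these variance bounds and then combining the pieces by Slutsky's lemma.
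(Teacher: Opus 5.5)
Your proof is correct and follows the paper's argument. Both apply Chebyshev's inequality to \(n_0^{-1}\sum_j w_j\), whose variance \((\kappa_n-1)/n_0=\cO(n_0^{-\eta})\) vanishes because \(\eta>0\), and to \(n_0^{-1}\kappa_n^{-1}\sum_i w_i^2\), whose variance is at most \(\EE_{0,n}\{w_n(x)^4\}/(n_0\kappa_n^2)\), and then combine them through \(n_0^\eta\|\gamma_n^{\mathrm{or}}\|_2^2=n_0^{\eta-1}\sum_i w_i^2/(n_0^{-1}\sum_j w_j)^2\) and the continuous mapping theorem, with your explicit factoring of \(n_0^{\eta-1}\kappa_n\) spelling out a step the paper leaves implicit.
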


\begin{proof}[Proof of \Cref{prop:overlap-weight-concentration}]
    Write
    \[
        \overline w_n=\frac1{n_0}\sum_{i=1}^{n_0}w_n(x_{0i}),
        \qquad
        \overline{w_n^2}=\frac1{n_0}\sum_{i=1}^{n_0}w_n(x_{0i})^2.
    \]
    Because \(\EE_{0,n}w_n(x)=1\),
    \[
        \Var(\overline w_n)=\frac{\kappa_n-1}{n_0}=\cO(n_0^{-\eta}),
    \]
    and hence \(\overline w_n\pto1\). Moreover,
    \[
        \Var\left(\frac{\overline{w_n^2}}{\kappa_n}\right)
        \leq\frac{\EE_{0,n}\{w_n(x)^4\}}{n_0\kappa_n^2}
        \rightarrow0.
    \]
    Therefore,
    \[
        n_0^\eta\|\gamma_n^{\mathrm{or}}\|_2^2
        =n_0^{\eta-1}\frac{\overline{w_n^2}}{\overline w_n^2}
        \pto\kappa.
    \]
    The effective-sample-size conclusion follows by the continuous mapping theorem, which completes the proof.
\end{proof}

For every normalised weight vector satisfying \Cref{asm:design-independent-regime}~\eqref{asm:design-independent-regime-i}, centring gives the exact bridge
\[
    \|C_0\gamma\|_2^2=\|\gamma\|_2^2-\frac1{n_0}.
\]
Consequently, under \Cref{prop:overlap-weight-concentration},
\[
    \varrho_{\eta,n}^2 := n_0^\eta\|C_0\gamma_n^{\mathrm{or}}\|_2^2
    \pto \kappa-\ind{(\eta=1)} =: \varrho_{\eta}^2.
\]
In particular, the centred concentration limit is \(\kappa-1\) at
\(\eta=1\). Since \(\kappa_n\geq1\), convergence at this boundary implies
\(\kappa\geq1\), so the limiting centred concentration is nonnegative.

For \(\eta=1\), the density-ratio second moment remains bounded and the oracle effective sample size is proportional to \(n_0\). 
For \(0<\eta<1\),
\[
    \chi^2(\PP_{1,n}\,\|\, \PP_{0,n})\asymp n_0^{1-\eta},
\]
and the oracle effective sample size is of order \(n_0^\eta\). 
The boundary case \(\eta=0\) corresponds to an effective sample size of constant order. 
At this boundary, the normalised density-ratio denominator need not satisfy a law of large numbers under second-moment scaling alone, so the concentration condition \(\|\gamma\|_2^2=\cO(1)\) is required, as in \Cref{asm:risk-rate-admissibility}.

\Cref{prop:overlap-weight-concentration} therefore provides an overlap interpretation for the order of weight concentration used in
\Cref{asm:design-independent-regime}~\eqref{asm:design-independent-regime-ii}.
Because the oracle density-ratio weights depend on the realised control covariates, the proposition interprets the scale of \(\varrho_{\eta,n}^2\) rather than verifying \Cref{asm:design-independent-regime}.

\subsection[Target Sample Splitting (Verification of Assumption \ref{asm:evaluation-sample})]{Target Sample Splitting (Verification of \Cref{asm:evaluation-sample})}
\label{app:honest-target-split}

Partition the treated sample into a pilot subset \(\mathcal I_{\mathrm P}\)
and an evaluation subset \(\mathcal I_{\mathrm E}\), independently of the
observed covariate values. Let their sizes be \(n_{1,\mathrm P}\) and
\(n_{1,\mathrm E}\), and define
\[
    \begin{aligned}
        \bar{x}_{1,\mathrm P}
        &=\frac1{n_{1,\mathrm P}}\sum_{i\in\mathcal I_{\mathrm P}}x_i,
        &\bar{x}_{1,\mathrm E}
        &=\frac1{n_{1,\mathrm E}}\sum_{i\in\mathcal I_{\mathrm E}}x_i,\\
        \widehat\Sigma_{1,\mathrm E}
        &=\frac1{n_{1,\mathrm E}-1}
        \sum_{i\in\mathcal I_{\mathrm E}}
        (x_i-\bar x_{1,\mathrm E})(x_i-\bar x_{1,\mathrm E})^\top.&&
    \end{aligned}
\]
Construct weight through any covariate balancing method \(\mathsf{Balance}\):
\begin{equation}
    \label{eq:target-split-imbalance}
        \widehat\gamma
        =\mathsf{Balance}(X_0,\bar{x}_{1,\mathrm P}),
        \qquad
        \Delta_{\mathrm E}
        =\bar{x}_{1,\mathrm E}-X_0^\top\widehat\gamma.
\end{equation}

\begin{lemma}[Target sample splitting]
    \label{lem:target-split-separation}
    Suppose the treated observations are independent with common law
    \(\PP_{1,n}\), the source and treated samples are independent, and the
    partition \(\left(\mathcal I_{\mathrm P},\mathcal I_{\mathrm E}\right)\)
    is generated independently of the observed covariates. Let
    \(
    \mathcal H_n
    =\sigma\!\left[
        \mathcal S_n,X_0,
        \left\{x_i:i\in\mathcal I_{\mathrm P}\right\}
    \right],
    \)
    and suppose \(\mathsf{Balance}\) returns normalised weights. 
    Then \(\widehat\gamma=\mathsf{Balance}\left(X_0,\bar x_{1,\mathrm P}\right)\) is \(\mathcal H_n\)-measurable and
    \(
    \PP\left(
        \left(x_i\right)_{i\in\mathcal I_{\mathrm E}}
        \,\middle|\,
        \mathcal H_n
    \right)
    =\PP_{1,n}^{\otimes n_{1,\mathrm E}}\) almost surely.
    Further, if \(\PP_{1,n}\) satisfies the target-side design and moment conditions in \Cref{asm:rmt}, then \Cref{asm:evaluation-sample} holds.
\end{lemma}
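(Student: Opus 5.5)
The argument is a direct conditioning argument. We establish measurability first, then the conditional law of the evaluation block, and finally check the remaining clauses of \Cref{asm:evaluation-sample}.

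\emph{Measurability.} The pilot mean \(\bar x_{1,\mathrm P}\) is a Borel function of the pilot covariates \(\{x_i:i\in\mathcal I_{\mathrm P}\}\) and of the index set \(\mathcal I_{\mathrm P}\), which is \(\mathcal S_n\)-measurable. Since \(\mathsf{Balance}\) is a measurable map of \((X_0,\bar x_{1,\mathrm P})\), the weights \(\widehat\gamma\) are \(\mathcal H_n\)-measurable. Normalisation \(\one_{n_0}^\top\widehat\gamma=1\) holds by hypothesis. The set \(\mathcal I_{\mathrm E}\) is the complement of \(\mathcal I_{\mathrm P}\), so \(\mathcal I_{\mathrm E}\) and \(n_{1,\mathrm E}\) are \(\mathcal S_n\subseteq\mathcal H_n\)-measurable.

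\emph{Conditional law.} Work on the event that the partition equals a fixed pair \((I_{\mathrm P},I_{\mathrm E})\); there are countably many such events, and they are \(\mathcal S_n\)-measurable. The random objects \(\mathcal S_n\), \(X_0\), and \((x_i)_{i\le n_1}\) are mutually independent: the partition is independent of all covariates, and the source sample is independent of the treated sample. The treated observations are i.i.d.\ \(\PP_{1,n}\), so \((x_i)_{i\in I_{\mathrm E}}\) is independent of \((x_i)_{i\in I_{\mathrm P}}\). Hence, for bounded measurable \(f\) and \(g\), \(h\) ranging over bounded measurable functions of \(X_0\) and the pilot block,
\[
\EE\{f((x_i)_{i\in I_{\mathrm E}})\,g(X_0,(x_i)_{i\in I_{\mathrm P}})\,\ind{(\mathcal S_n=s)}\}=\PP_{1,n}^{\otimes |I_{\mathrm E}|}(f)\,\EE\{g\,\ind{(\mathcal S_n=s)}\}.
\]
Summing over the partition events and using a \(\pi\)-\(\lambda\) argument over generating sets of \(\mathcal H_n\) shows that \(\PP_{1,n}^{\otimes n_{1,\mathrm E}}\) is a version of the regular conditional law of \((x_i)_{i\in\mathcal I_{\mathrm E}}\) given \(\mathcal H_n\).

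\emph{Remaining clauses.} The condition \(\mathcal H_n\subseteq\mathcal G_n\) holds because \(\mathcal S_n\subseteq\mathcal F_{\mathrm{aux}}\) under honest splitting, and \(X_0\) together with the pilot covariates are functions of \((X_0,X_1,\mathcal S_n)\). The target-side design and moment conditions are assumed for \(\PP_{1,n}\). Together these give \Cref{asm:evaluation-sample}.

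The main point requiring care is the conditional-law step with a random index set. Decomposing on the countable partition events handles this, together with checking that the conditional law identity holds on generators of \(\mathcal H_n\). Everything else is bookkeeping.
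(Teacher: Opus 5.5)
Your proposal is correct and takes the same direct conditioning route as the paper: given the split, the evaluation block is independent of the pilot block and of \(X_0\). You add the measure-theoretic details the paper omits, namely the decomposition over the finitely many partition events and the \(\pi\)--\(\lambda\) step for the random index set, and you also explicitly check \(\mathcal H_n\subseteq\mathcal G_n\), which the paper leaves implicit.
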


\begin{proof}[Proof of \Cref{lem:target-split-separation}]
    Conditional on the split assignment, the pilot and evaluation subsets of the
    independent treated observations remain independent. Independence of the
    source and treated samples then implies that conditioning additionally on
    \(X_0\) does not change the joint law of the evaluation observations.
    Their conditional law is therefore
    \(\PP_{1,n}^{\otimes n_{1,\mathrm E}}\). Measurability and normalisation
    of \(\widehat\gamma\) follow from its construction.
\end{proof}

\begin{algorithm}[!t]
\caption{Ridge-augmented balancing with target sample splitting at a fixed penalty}
\label{alg:target-split-ridge-augmentation}
\small
\begin{algorithmic}[1]
\Require Control data \((X_0,y_0)\); treated covariates; the full
treated-sample outcome mean \(\bar y_1\); base-balancing map
\(\mathsf{Balance}\); penalty \(\lambda>0\); treated covariate split
\((\mathcal I_{\mathrm P},\mathcal I_{\mathrm E})\).
\Ensure Counterfactual estimate
\(\widehat\mu_{0,\mathrm E,\lambda}\), ATT estimate
\(\widehat\tau_{\mathrm E,\lambda}\), and augmented weights
\(\widehat\gamma_{\mathrm E,\lambda}\).
\State Compute \(\bar x_{1,\mathrm P}\) and \(\bar x_{1,\mathrm E}\)
from the corresponding treated subsets.
\State Compute \(\bar x_0=X_0^\top\one_{n_0}/n_0\) and
\(X_0^c=X_0-\one_{n_0}\bar x_0^\top\).
\State Construct \(\widehat\gamma\) and \(\Delta_{\mathrm E}\) using
\eqref{eq:target-split-imbalance}, and solve
\(\{(X_0^c)^\top X_0^c/n_0+\lambda I_p\}u_\lambda
=\Delta_{\mathrm E}\).
\State After fixing \(\lambda\), set
\(\widehat\gamma_{\mathrm E,\lambda}
=\widehat\gamma+X_0^cu_\lambda/n_0\),
\(\widehat\mu_{0,\mathrm E,\lambda}
=\widehat\gamma_{\mathrm E,\lambda}^{\top}y_0\), and
\(\widehat\tau_{\mathrm E,\lambda}
=\bar y_1-\widehat\mu_{0,\mathrm E,\lambda}\).\\
\Return the estimates and augmented weights.
\end{algorithmic}
\end{algorithm}

The full treated-sample outcome mean \(\bar y_1\) enters only after the penalty is fixed. 
Its sampling error is outside the counterfactual-risk analysis.
Thus, the target sample splitting construction uses the evaluation-split augmented weights
\[
    \widehat\gamma_{\mathrm E,\lambda}
    =\widehat\gamma
    +\frac1{n_0}X_0^c M_\lambda^c\Delta_{\mathrm E}.
\]
Recall that
\(
    \mathcal S_n=\sigma(\mathcal I_{\mathrm P},\mathcal I_{\mathrm E})
\)
denotes the split assignment and is contained in
\(\mathcal F_{\mathrm{aux}}\). Conditional on
\[
    \mathcal H_n
    =\sigma\left[\mathcal S_n,X_0,
    \left\{x_i:i\in\mathcal I_{\mathrm P}\right\}\right],
\]
\Cref{lem:target-split-separation} verifies
\Cref{asm:evaluation-sample} with \(n_{1,\mathrm E}\) and the
evaluation-sample covariance \(\widehat\Sigma_{1,\mathrm E}\).
For target-aware tuning, apply \Cref{alg:ood-tuning} and
\Cref{thm:risk-estimation} with the substitutions
\[
    (\bar x_1,\widehat\Sigma_1,n_1,\Delta,\gamma_\lambda)
    \longmapsto
    (\bar x_{1,\mathrm E},\widehat\Sigma_{1,\mathrm E},n_{1,\mathrm E},
    \Delta_{\mathrm E},\gamma_{\mathrm E,\lambda}).
\]
This gives the end-to-end target sample splitting procedure: construct the base weights on the pilot subset, evaluate and minimise the risk path on the evaluation subset, and return the selected augmented weights and estimates.
The oracle-excess-risk conclusion of \Cref{thm:risk-estimation} applies to such a sample split.

\subsection[Rates of Base Balancing Weights (Verification of Assumption \ref{asm:risk-rate-admissibility})]{Rates of Base Balancing Weights (Verification of \Cref{asm:risk-rate-admissibility})}
\label{app:sample-split-rate-verification}

The pilot/evaluation construction in \Cref{app:honest-target-split} verifies the evaluation-sample separation required by \Cref{asm:evaluation-sample}. 
This subsection gives sufficient
conditions for the separate base-weight rates in \Cref{asm:risk-rate-admissibility}. 
\Cref{tab:risk-rate-applicability} separates two logically distinct requirements.

\begin{table}[!t]
    \small
    \caption{Sufficient conditions for evaluation-sample separation and the
    base-weight rate conditions}
    \label{tab:risk-rate-applicability}
    \begin{tabular}{
        @{}
        p{0.29\textwidth}
        p{0.19\textwidth}
        p{0.48\textwidth}
        @{}
    }
    \toprule
    Base-weight construction
    & Evaluation separation
    & Verification of \Cref{asm:risk-rate-admissibility}
    \\
    \midrule
    Uniform weights
    & Target independent
    & \Cref{cor:uniform-risk-rate} under the source-target mean-shift rate
    \\
    Sample-split ridge weights
    & Target sample split
    & \Cref{cor:target-split-ridge-post-tuning,cor:target-split-ridge-selected-alpha}
    \\
    Quadratic-dispersion weights, SBW
    & Target sample split
    & \Cref{cor:sample-split-sbw} under the balance-tolerance and
    feasible-comparator conditions
    \\
    Entropy balancing with an explicit weight cap
    & Target sample split
    & \Cref{cor:sample-split-capped-entropy} under the balance-tolerance and
    cap rates
    \\
    Uncapped entropy balancing
    & Target sample split
    & \Cref{cor:normalized-score-risk-rate} under the pilot-imbalance and
    score-moment rates
    \\
    CBPS, IPW, or overlap weights
    & Target sample split
    & \Cref{cor:normalized-score-risk-rate} when the induced positive scores
    satisfy the same two rates
    \\
    Same-sample target-adaptive ridge weights
    & None
    & Separate verification in \Cref{prop:ridge-profile-verification}
    \\
    \bottomrule
    \end{tabular}
\end{table}

Recall that with target sample splitting, the pilot information field is defined as \(
    \mathcal H_n
    =
    \sigma\!\left[
        \mathcal S_n,X_0,
        \{x_i:i\in\mathcal I_{\mathrm P}\}
    \right],
\) as defined in \Cref{asm:evaluation-sample}.
For a normalised weight vector constructed from the pilot target sample, define its pilot imbalance by
\[
    \Delta_{\mathrm P}(\gamma)
    :=\bar x_{1,\mathrm P}-X_0^\top\gamma.
\]
The design-independent setting in \Cref{asm:rmt,asm:design-independent-regime} also satisfies \Cref{asm:risk-rate-admissibility}.
Conditional on \(\mathcal F_{\mathrm{aux}}\), affine normalisation and design independence give
\[
    \EE\left[
        \left\|X_0^\top\gamma-\nu_0\right\|_2^2
        \,\middle|\,
        \mathcal F_{\mathrm{aux}}
    \right]
    =
    \tr(\Sigma_0)\|\gamma\|_2^2
    \leq
    C_\Sigma p
    \left(
        \|C_0\gamma\|_2^2+\frac{1}{n_0}
    \right).
\]
Hence
\[
    \frac{n_0^\eta}{p}
    \left\|X_0^\top\gamma-\nu_0\right\|_2^2
    =
    \Op(1),
\]
and the source-target mean-shift and weight-concentration rates in \Cref{asm:design-independent-regime}, together with \(\eta\leq1\), yield \Cref{asm:risk-rate-admissibility}.

We first present an intermediate result for verifying rate conditions under varying cases.

\begin{proposition}[From pilot rates to base-weight rate conditions]
    \label{prop:pilot-to-population-rate}
    Suppose the pilot target observations satisfy the target-side design and
    moment conditions in \Cref{asm:rmt}, and let \(\eta\in[0,1]\) satisfy
    \[
        \frac{n_0^\eta}{n_{1,\mathrm P}}=\cO(1).
    \]
    Let \(\widehat\gamma\) be normalised, measurable with respect to \(\mathcal H_n\), and constructed without \(y_0\). 
    If
    \[
        n_0^\eta
        \frac{\|\Delta_{\mathrm P}(\widehat\gamma)\|_2^2}{p}
        =\Op(1),
        \qquad
        n_0^\eta\|C_0\widehat\gamma\|_2^2=\Op(1),
    \]
    then \(\widehat\gamma\) satisfies
    \Cref{asm:risk-rate-admissibility}.
\end{proposition}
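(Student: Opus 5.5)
The plan is to reduce the population imbalance in \Cref{asm:risk-rate-admissibility} to the pilot imbalance plus the pilot target-mean fluctuation. The weight-concentration half needs no work: \(n_0^\eta\|C_0\widehat\gamma\|_2^2=\Op(1)\) is assumed verbatim. For the imbalance half, write \(\epsilon_{1,\mathrm P}:=\bar x_{1,\mathrm P}-\nu_1\) and decompose
\[
    \nu_1-X_0^\top\widehat\gamma
    =\Delta_{\mathrm P}(\widehat\gamma)-\epsilon_{1,\mathrm P}.
\]
By \((a-b)^2\leq 2a^2+2b^2\), this gives
\[
    \frac{n_0^\eta}{p}\|\nu_1-X_0^\top\widehat\gamma\|_2^2
    \leq
    2\,\frac{n_0^\eta}{p}\|\Delta_{\mathrm P}(\widehat\gamma)\|_2^2
    +2\,\frac{n_0^\eta}{p}\|\epsilon_{1,\mathrm P}\|_2^2 .
\]
The first term is \(\Op(1)\) by hypothesis. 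It therefore remains to show that \((n_0^\eta/p)\|\epsilon_{1,\mathrm P}\|_2^2=\Op(1)\).

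For this term we use the target-side design in \Cref{asm:rmt}. The pilot observations satisfy \(x_i=\nu_1+\Sigma_1^{1/2}z_{1i}\), with independent, mean-zero, unit-variance coordinates. Hence
\[
    \EE\|\epsilon_{1,\mathrm P}\|_2^2
    =\frac{\tr(\Sigma_1)}{n_{1,\mathrm P}}
    \leq \frac{C_\Sigma\,p}{n_{1,\mathrm P}} .
\]
Multiplying by \(n_0^\eta/p\) gives the bound \(C_\Sigma n_0^\eta/n_{1,\mathrm P}=\cO(1)\), using the assumed rate \(n_0^\eta/n_{1,\mathrm P}=\cO(1)\). Markov's inequality then yields \(\Op(1)\). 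This expectation is unconditional over the pilot sample, which is all we need, since \(\epsilon_{1,\mathrm P}\) does not involve \(\widehat\gamma\). The fact that \(\widehat\gamma\) depends on the pilot sample causes no difficulty, because the bound is a deterministic triangle inequality followed by separate stochastic bounds on each piece.

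The last step is a measurability check. \Cref{asm:risk-rate-admissibility} is stated for a normalised weight vector, and \(\widehat\gamma\) is normalised and constructed without \(y_0\), so the two rate conditions are exactly what is required. The only mildly delicate point is that \(n_{1,\mathrm P}\) may be random through the split assignment. If so, I would condition on \(\mathcal S_n\) and apply the above expectation bound to the conditional law of the pilot observations given the split; this law is still i.i.d.\ \(\PP_{1,n}\) by \Cref{lem:target-split-separation}. I would then interpret \(n_0^\eta/n_{1,\mathrm P}=\cO(1)\) almost surely or in probability, and the Markov step goes through conditionally. I do not expect any genuine obstacle beyond this bookkeeping.
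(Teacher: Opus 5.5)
Your proposal is correct and matches the paper's proof: both use the decomposition \(\nu_1-X_0^\top\widehat\gamma=(\nu_1-\bar x_{1,\mathrm P})+\Delta_{\mathrm P}(\widehat\gamma)\), the bound \(\EE\|\bar x_{1,\mathrm P}-\nu_1\|_2^2=\tr(\Sigma_1)/n_{1,\mathrm P}\leq C_\Sigma p/n_{1,\mathrm P}\) with Markov's inequality, and \(\|u+v\|_2^2\leq2\|u\|_2^2+2\|v\|_2^2\), while taking the concentration condition directly from the hypothesis. Your remark about conditioning on the split assignment when \(n_{1,\mathrm P}\) is random is harmless extra bookkeeping that the paper leaves implicit.
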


\begin{proof}[Proof of \Cref{prop:pilot-to-population-rate}]
    By the target-side covariance bound in \Cref{asm:rmt},
    \[
        \EE\|\bar x_{1,\mathrm P}-\nu_1\|_2^2
        =\frac{\tr(\Sigma_1)}{n_{1,\mathrm P}}
        \leq \frac{C_\Sigma p}{n_{1,\mathrm P}}.
    \]
    Hence Markov's inequality gives
    \[
        n_0^\eta
        \frac{\|\bar x_{1,\mathrm P}-\nu_1\|_2^2}{p}
        =\Op(1).
    \]
    Since
    \(
        \nu_1-X_0^\top\widehat\gamma
        =(\nu_1-\bar x_{1,\mathrm P})
        +\Delta_{\mathrm P}(\widehat\gamma),
    \)
    the inequality
    \(\|u+v\|_2^2\leq2\|u\|_2^2+2\|v\|_2^2\)
    yields the first condition in
    \Cref{asm:risk-rate-admissibility}. The second condition is assumed
    directly.
\end{proof}

Based on \Cref{prop:pilot-to-population-rate}, we next verify rate conditions for balancing weights listed on \Cref{tab:risk-rate-applicability}.

\begin{corollary}[Uniform base weights]
    \label{cor:uniform-risk-rate}
    Suppose the source-side design conditions in \Cref{asm:rmt} hold and, for
    some \(\eta\in[0,1]\),
    \[
        n_0^\eta {\|\nu_1-\nu_0\|_2^2}/{p}=\cO(1).
    \]
    Then
    \[
        \gamma^{\mathrm{unif}}
        ={n_0}^{-1}\one_{n_0}
    \]
    satisfies \Cref{asm:risk-rate-admissibility}.
\end{corollary}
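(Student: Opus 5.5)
Two conditions must be checked for \(\gamma^{\mathrm{unif}}\). For the concentration condition, note that \(C_0\one_{n_0}=0\), so \(C_0\gamma^{\mathrm{unif}}=0\) and \(n_0^\eta\|C_0\gamma^{\mathrm{unif}}\|_2^2=0=\Op(1)\) for every \(\eta\in[0,1]\). This holds deterministically, with no appeal to the design.

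For the imbalance condition, \(X_0^\top\gamma^{\mathrm{unif}}=\bar x_0\). Under the source design in \Cref{asm:rmt}, \(\bar x_0=\nu_0+\Sigma_0^{1/2}\bar z_0\), where \(\bar z_0=n_0^{-1}Z_0^\top\one_{n_0}\). We decompose
\[
    \nu_1-\bar x_0=(\nu_1-\nu_0)-(\bar x_0-\nu_0)
\]
and apply \(\|u+v\|_2^2\leq2\|u\|_2^2+2\|v\|_2^2\). The first term contributes \(2n_0^\eta\|\nu_1-\nu_0\|_2^2/p=\cO(1)\) by hypothesis.

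For the second term, the entries of \(Z_0\) are independent with mean zero and unit variance, so
\[
    \EE\|\bar x_0-\nu_0\|_2^2=\tr(\Sigma_0)/n_0\leq C_\Sigma p/n_0 .
\]
Hence \(\EE\{(n_0^\eta/p)\|\bar x_0-\nu_0\|_2^2\}\leq C_\Sigma n_0^{\eta-1}\leq C_\Sigma\) because \(\eta\leq1\). Markov's inequality then gives \((n_0^\eta/p)\|\bar x_0-\nu_0\|_2^2=\Op(1)\). Combining the two terms yields \((n_0^\eta/p)\|\nu_1-X_0^\top\gamma^{\mathrm{unif}}\|_2^2=\Op(1)\), which completes the verification of \Cref{asm:risk-rate-admissibility}.

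There is no real obstacle. The only point needing care is the restriction \(\eta\leq1\): the fluctuation of the source mean is of order \(p/n_0\), so it fits within the \(n_0^{-\eta}\) scale only when \(\eta\le1\). Alternatively, the argument is the special case of the design-independent bound displayed just before \Cref{prop:pilot-to-population-rate}, with \(\|\gamma\|_2^2=n_0^{-1}\), and one may simply cite it.
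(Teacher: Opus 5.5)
Your proof is correct and follows the paper's argument. The centred concentration vanishes identically. The imbalance splits into the mean shift \(\nu_1-\nu_0\) plus the source-mean fluctuation, whose second moment \(\tr(\Sigma_0)/n_0\le C_\Sigma p/n_0\) stays within the \(n_0^{-\eta}\) scale because \(\eta\le1\).
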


\begin{proof}[Proof of \Cref{cor:uniform-risk-rate}]
    The centred concentration is zero. Moreover,
    \[
        \nu_1-X_0^\top\gamma^{\mathrm{unif}}
        =(\nu_1-\nu_0)-(\bar x_0-\nu_0),
    \]
    and the source-side covariance bound gives
    \[
        \EE\|\bar x_0-\nu_0\|_2^2
        \leq \frac{C_\Sigma p}{n_0}.
    \]
    The conclusion follows because \(\eta\leq1\).
\end{proof}

\begin{corollary}[Quadratic-dispersion balancing]
    \label{cor:sample-split-sbw}
    Let
    \(
        \Gamma_n
        \subseteq
        \left\{
            \gamma\in\RR^{n_0}:
            \one_{n_0}^\top\gamma=1
        \right\}
    \)
    be closed and encode any additional outcome-free side constraints, such as
    nonnegativity or upper bounds, and define
    \(
        \mathcal Q_n(\delta_n)
        :=
        \left\{
            \gamma\in\Gamma_n:
            \|\Delta_{\mathrm P}(\gamma)\|_2\leq\delta_n
        \right\}.
    \)
    On the event
    \(\{\mathcal Q_n(\delta_n)\neq\varnothing\}\), let
    \[
        \widehat\gamma^{\mathrm{quad}}
        \in
        \argmin_{\gamma\in\mathcal Q_n(\delta_n)}
        \|\gamma\|_2^2,
    \]
    and use the uniform weights otherwise. Suppose the target-side and
    pilot-size conditions of \Cref{prop:pilot-to-population-rate} hold,
    \[
        n_0^\eta\frac{\delta_n^2}{p}=\Op(1),
    \]
    and there exists a possibly random comparator sequence
    \(\gamma_n^\circ\) such that
    \[
        \PP\left\{
            \gamma_n^\circ\in\mathcal Q_n(\delta_n)
        \right\}\to1,
        \qquad
        n_0^\eta\|C_0\gamma_n^\circ\|_2^2=\Op(1).
    \]
    Then \(\widehat\gamma^{\mathrm{quad}}\) satisfies
    \Cref{asm:risk-rate-admissibility}. In particular, this applies to the
    quadratic-dispersion formulation of stable balancing weights whenever the
    stated feasibility and comparator conditions hold.
\end{corollary}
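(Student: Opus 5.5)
The plan is to reduce the statement to \Cref{prop:pilot-to-population-rate}. Its target-side and pilot-size hypotheses are assumed directly. The weights \(\widehat\gamma^{\mathrm{quad}}\) are normalised by construction, since both \(\Gamma_n\) and the uniform fallback lie in the affine hyperplane. They are \(\mathcal H_n\)-measurable, because \(\mathcal Q_n(\delta_n)\) depends only on \(X_0\), \(\bar x_{1,\mathrm P}\), and outcome-free side constraints; if the minimiser is not unique, we take a measurable selection. Strict convexity of \(\|\gamma\|_2^2\) on the closed convex feasible set gives uniqueness when \(\Gamma_n\) is convex. If \(\Gamma_n\) is only closed, existence still follows from coercivity, and we fix a measurable selector. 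It then remains to verify the two pilot rates
\[
    n_0^\eta\frac{\|\Delta_{\mathrm P}(\widehat\gamma^{\mathrm{quad}})\|_2^2}{p}=\Op(1),
    \qquad
    n_0^\eta\|C_0\widehat\gamma^{\mathrm{quad}}\|_2^2=\Op(1).
\]

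We work on the event \(E_n=\{\gamma_n^\circ\in\mathcal Q_n(\delta_n)\}\), whose probability tends to one. On \(E_n\) the feasible set is nonempty, so \(\widehat\gamma^{\mathrm{quad}}\) is the constrained minimiser and not the fallback. For the imbalance rate, feasibility gives \(\|\Delta_{\mathrm P}(\widehat\gamma^{\mathrm{quad}})\|_2\le\delta_n\). Hence \(n_0^\eta\|\Delta_{\mathrm P}\|_2^2/p\le n_0^\eta\delta_n^2/p=\Op(1)\) on \(E_n\). Since \(\PP(E_n^c)\to0\), the behaviour off \(E_n\) is irrelevant for an \(\Op(1)\) statement.

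For the concentration rate, the key identity is \(\|\gamma\|_2^2=\|C_0\gamma\|_2^2+n_0^{-1}\) for every normalised \(\gamma\). Minimising \(\|\gamma\|_2^2\) over a set of normalised vectors therefore minimises \(\|C_0\gamma\|_2^2\). On \(E_n\), optimality against the feasible comparator yields
\[
    \|C_0\widehat\gamma^{\mathrm{quad}}\|_2^2\le\|C_0\gamma_n^\circ\|_2^2,
\]
so \(n_0^\eta\|C_0\widehat\gamma^{\mathrm{quad}}\|_2^2\le n_0^\eta\|C_0\gamma_n^\circ\|_2^2=\Op(1)\). Again we discard \(E_n^c\) because its probability vanishes. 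The uniform fallback would in any case give \(C_0\gamma=0\).

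Applying \Cref{prop:pilot-to-population-rate} then yields \Cref{asm:risk-rate-admissibility}. The SBW statement is the special case with nonnegativity or cap constraints in \(\Gamma_n\). There is no substantial obstacle. The only delicate point is formal: \(\Op(1)\) statements must be handled on an event of probability tending to one, and the measurable-selection issue must be addressed so that \(\mathcal H_n\)-measurability, which \Cref{prop:pilot-to-population-rate} requires, is legitimate.
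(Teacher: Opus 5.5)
Your proposal is correct and follows the paper's proof. Feasibility of the constrained minimiser gives \(\|\Delta_{\mathrm P}(\widehat\gamma^{\mathrm{quad}})\|_2\le\delta_n\); optimality against the feasible comparator, combined with \(\|\gamma\|_2^2=\|C_0\gamma\|_2^2+n_0^{-1}\), gives the concentration bound; and \Cref{prop:pilot-to-population-rate} completes the argument. You are slightly more careful than the paper in two places, where it leaves these points implicit: you explicitly restrict to the event \(\{\gamma_n^\circ\in\mathcal Q_n(\delta_n)\}\), which also excludes the uniform fallback whose pilot imbalance need not be below \(\delta_n\), and you address existence and measurable selection.
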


\begin{proof}[Proof of \Cref{cor:sample-split-sbw}]
    The defining constraint gives
    \[
        \|\Delta_{\mathrm P}
            (\widehat\gamma^{\mathrm{quad}})\|_2
        \leq\delta_n.
    \]
    On the event
    \(\{\gamma_n^\circ\in\mathcal Q_n(\delta_n)\}\), optimality and affine
    normalisation give
    \begin{align*}
        \|C_0\widehat\gamma^{\mathrm{quad}}\|_2^2
        &=
        \|\widehat\gamma^{\mathrm{quad}}\|_2^2-\frac1{n_0}
        \leq
        \|\gamma_n^\circ\|_2^2-\frac1{n_0}
        =
        \|C_0\gamma_n^\circ\|_2^2.
    \end{align*}
    The conclusion follows from
    \Cref{prop:pilot-to-population-rate}.
\end{proof}

\begin{corollary}[Capped Stable Entropy Balance]
    \label{cor:sample-split-capped-entropy}
    Let \(q_i>0\), \(\sum_{i=1}^{n_0}q_i=1\), and define
    \[
        \mathcal E_n(\delta_n,L_n)
        :=
        \left\{
            \gamma\in\RR^{n_0}:
            \one_{n_0}^\top\gamma=1,\quad
            0\leq\gamma_i\leq L_n,\quad
            \|\Delta_{\mathrm P}(\gamma)\|_2\leq\delta_n
        \right\}.
    \]
    On the event
    \(\{\mathcal E_n(\delta_n,L_n)\neq\varnothing\}\), let
    \[
        \widehat\gamma^{\mathrm{ent}}
        \in
        \argmin_{\gamma\in\mathcal E_n(\delta_n,L_n)}
        \sum_{i=1}^{n_0}
        \gamma_i\log\left(\frac{\gamma_i}{q_i}\right),
    \]
    and use the uniform weights otherwise, with the convention
    \(0\log0=0\). Suppose the target-side and pilot-size conditions of
    \Cref{prop:pilot-to-population-rate} hold, the feasible set is nonempty
    with probability tending to one, and
    \[
        n_0^\eta\frac{\delta_n^2}{p}=\Op(1),
        \qquad
        n_0^\eta L_n=\cO(1).
    \]
    Then \(\widehat\gamma^{\mathrm{ent}}\) satisfies
    \Cref{asm:risk-rate-admissibility}.
\end{corollary}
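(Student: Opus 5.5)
The plan is to reduce to \Cref{prop:pilot-to-population-rate}, exactly as in the proof of \Cref{cor:sample-split-sbw}. That proposition requires $\widehat\gamma^{\mathrm{ent}}$ to be normalised, $\mathcal H_n$-measurable and outcome-free, together with two rate bounds. The first concerns the pilot imbalance, $n_0^\eta\|\Delta_{\mathrm P}(\widehat\gamma^{\mathrm{ent}})\|_2^2/p=\Op(1)$. The second concerns centred concentration, $n_0^\eta\|C_0\widehat\gamma^{\mathrm{ent}}\|_2^2=\Op(1)$. Measurability and normalisation are immediate. The entropy programme depends only on $X_0$, $\bar x_{1,\mathrm P}$, the deterministic $q_i$ and the tuning sequences $(\delta_n,L_n)$. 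The fallback uniform weights are also normalised.

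Work on the event $\mathcal A_n:=\{\mathcal E_n(\delta_n,L_n)\neq\varnothing\}$, which has probability tending to one by assumption. Off $\mathcal A_n$, the uniform fallback is used; since $\PP(\mathcal A_n^c)\to0$, it does not affect $\Op(1)$ statements. On $\mathcal A_n$, the feasible set is a nonempty compact subset of the simplex and the objective is continuous, so a minimiser exists. Feasibility then gives $\|\Delta_{\mathrm P}(\widehat\gamma^{\mathrm{ent}})\|_2\leq\delta_n$, and hence
\[
n_0^\eta\|\Delta_{\mathrm P}(\widehat\gamma^{\mathrm{ent}})\|_2^2/p\leq n_0^\eta\delta_n^2/p=\Op(1).
\]

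For concentration, unlike the quadratic case, optimality of the entropy criterion gives no comparison with $\|\gamma\|_2^2$. The cap does the work instead. On $\mathcal A_n$, nonnegativity and $\gamma_i\leq L_n$ give
\[
\|\widehat\gamma^{\mathrm{ent}}\|_2^2=\sum_i\gamma_i^2\leq L_n\sum_i\gamma_i=L_n .
\]
Using the exact bridge $\|C_0\gamma\|_2^2=\|\gamma\|_2^2-n_0^{-1}$, this yields
\[
n_0^\eta\|C_0\widehat\gamma^{\mathrm{ent}}\|_2^2\leq n_0^\eta L_n=\cO(1).
\]
The pilot-size and target-side conditions are assumed, so \Cref{prop:pilot-to-population-rate} delivers \Cref{asm:risk-rate-admissibility}.

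There is no substantive obstacle. The only points needing care are that the fallback and the $\PP(\mathcal A_n^c)\to0$ event are handled correctly for $\Op$ statements, and that existence of the minimiser holds on $\mathcal A_n$. The key conceptual step is recognising that the entropy objective itself plays no role: the cap $L_n$ alone controls weight concentration.
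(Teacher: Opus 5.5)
Your proposal is correct and follows the paper's proof: the balance constraint gives the pilot-imbalance rate, and nonnegativity with the cap gives \(\|C_0\widehat\gamma^{\mathrm{ent}}\|_2^2\leq\|\widehat\gamma^{\mathrm{ent}}\|_2^2\leq L_n\), after which \Cref{prop:pilot-to-population-rate} applies. Your explicit treatment of the uniform fallback on the vanishing-probability infeasibility event is slightly more careful than the paper, which leaves that step implicit.
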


\begin{proof}[Proof of \Cref{cor:sample-split-capped-entropy}]
    The balance constraint gives the required pilot-imbalance rate. Since the
    weights are nonnegative and normalised,
    \[
        \|C_0\widehat\gamma^{\mathrm{ent}}\|_2^2
        \leq
        \|\widehat\gamma^{\mathrm{ent}}\|_2^2
        \leq
        \left(
            \max_{i\leq n_0}\widehat\gamma_i^{\mathrm{ent}}
        \right)
        \sum_{i=1}^{n_0}\widehat\gamma_i^{\mathrm{ent}}
        \leq L_n.
    \]
    Apply \Cref{prop:pilot-to-population-rate}.
\end{proof}

\begin{corollary}[Normalised positive-score weights]
    \label[corollary]{cor:normalized-score-risk-rate}
    Suppose the target-side and pilot-size conditions of
    \Cref{prop:pilot-to-population-rate} hold. Let
    \(
        \widetilde w_{1,n},\ldots,\widetilde w_{n_0,n}\geq0
    \)
    be pilot-measurable scores with
    \[
        \overline w_n
        :=
        \frac1{n_0}
        \sum_{i=1}^{n_0}\widetilde w_{i,n}>0,
        \qquad
        \overline{w_n^2}
        :=
        \frac1{n_0}
        \sum_{i=1}^{n_0}\widetilde w_{i,n}^2,
    \]
    and define
    \[
        \widehat\gamma_i
        =
        \frac{\widetilde w_{i,n}}
        {\sum_{j=1}^{n_0}\widetilde w_{j,n}}.
    \]
    If
    \[
        n_0^\eta
        \frac{
            \|\Delta_{\mathrm P}(\widehat\gamma)\|_2^2
        }{p}
        =\Op(1),
        \qquad
        n_0^{\eta-1}
        \frac{\overline{w_n^2}}{\overline w_n^2}
        =\Op(1),
    \]
    then \(\widehat\gamma\) satisfies
    \Cref{asm:risk-rate-admissibility}.
\end{corollary}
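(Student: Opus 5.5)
The plan is to reduce the claim to \Cref{prop:pilot-to-population-rate}. Under the stated pilot-size and target-side conditions, that proposition needs only two inputs. The first is the pilot-imbalance rate \(n_0^\eta\|\Delta_{\mathrm P}(\widehat\gamma)\|_2^2/p=\Op(1)\), which is assumed directly. The second is the centred concentration rate \(n_0^\eta\|C_0\widehat\gamma\|_2^2=\Op(1)\). Measurability with respect to \(\mathcal H_n\) and independence from \(y_0\) follow because the scores are pilot-measurable and the map to \(\widehat\gamma\) is deterministic. Normalisation \(\one_{n_0}^\top\widehat\gamma=1\) is immediate from the definition, and it is well defined because \(\overline w_n>0\).

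So the only work is to express the weight concentration through the two score moments. Write \(\sum_j\widetilde w_{j,n}=n_0\overline w_n\). Then
\[
\|\widehat\gamma\|_2^2=\frac{\sum_i\widetilde w_{i,n}^2}{(n_0\overline w_n)^2}=\frac{n_0\overline{w_n^2}}{n_0^2\overline w_n^2}=\frac1{n_0}\,\frac{\overline{w_n^2}}{\overline w_n^2}.
\]
Multiplying by \(n_0^\eta\) gives \(n_0^\eta\|\widehat\gamma\|_2^2=n_0^{\eta-1}\overline{w_n^2}/\overline w_n^2\), which is \(\Op(1)\) by hypothesis. This mirrors the identity used in the proof of \Cref{prop:overlap-weight-concentration}.

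Next, use the exact bridge \(\|C_0\widehat\gamma\|_2^2=\|\widehat\gamma\|_2^2-n_0^{-1}\le\|\widehat\gamma\|_2^2\), which holds for any normalised vector. This transfers the bound to the centred concentration. Both hypotheses of \Cref{prop:pilot-to-population-rate} are then in place, and applying it yields \Cref{asm:risk-rate-admissibility}.

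I do not expect a genuine obstacle. The only point needing care is that the scores must be pilot-measurable, so that \(\widehat\gamma\) is \(\mathcal H_n\)-measurable as \Cref{prop:pilot-to-population-rate} requires. Nonnegativity of the scores is not needed for the computation, but it is what makes the resulting weights interpretable as CBPS, IPW, or overlap weights.
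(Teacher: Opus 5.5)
Your proposal is correct and matches the paper's proof: the exact identity \(n_0^\eta\|\widehat\gamma\|_2^2=n_0^{\eta-1}\overline{w_n^2}/\overline w_n^2\), the bound \(\|C_0\widehat\gamma\|_2^2\le\|\widehat\gamma\|_2^2\), and an appeal to \Cref{prop:pilot-to-population-rate} using the assumed pilot-imbalance rate. Your remarks on \(\mathcal H_n\)-measurability and on well-definedness via \(\overline w_n>0\) are correct, and they spell out points the paper leaves implicit.
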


\begin{proof}[Proof of \Cref{cor:normalized-score-risk-rate}]
    The normalisation gives the exact identity
    \[
        n_0^\eta\|\widehat\gamma\|_2^2
        =
        n_0^{\eta-1}
        \frac{\overline{w_n^2}}{\overline w_n^2}.
    \]
    Since
    \[
        \|C_0\widehat\gamma\|_2^2
        \leq
        \|\widehat\gamma\|_2^2,
    \]
    the centred-concentration rate follows. The pilot-imbalance rate and
    \Cref{prop:pilot-to-population-rate} complete the proof.
\end{proof}

\begin{remark}[Uncapped entropy and propensity-derived weights]
    \label{rem:uncapped-score-weights}
    For normalised positive weights,
    \[
        \operatorname{ESS}(\widehat\gamma)
        :=
        \|\widehat\gamma\|_2^{-2}
        =
        n_0
        \frac{\overline w_n^2}{\overline{w_n^2}}.
    \]
    Thus the score-moment condition in
    \Cref{cor:normalized-score-risk-rate} is equivalent to
    \[
        \frac{
            n_0^\eta
        }{
            \operatorname{ESS}(\widehat\gamma)
        }
        =\Op(1),
    \]
    and is implied by
    \[
        \frac{
            \max_{i\leq n_0}\widetilde w_{i,n}
        }{
            \overline w_n
        }
        =
        \Op(n_0^{1-\eta}).
    \]
    This route applies to uncapped SEB, CBPS-based or fitted-IPW
    weights, and overlap weights after separately verifying the displayed
    pilot-imbalance and score-moment rates. The entropy objective alone does not
    supply the required upper bound on
    \(\|\widehat\gamma\|_2^2\); target sample splitting supplies evaluation
    separation, not weight concentration.
\end{remark}

\clearpage
\section{Proofs of the Main Theoretical Results}
\label{app:main-proofs}

\subsection{Proof of Proposition~\ref{prop:att-identification}}
\label{app:proof-prop-att-identification}

\begin{proof}[Proof of \Cref{prop:att-identification}]
    By consistency and unconfoundedness for \(y(0)\),
    \[
        \EE(y\mid x,a=0)
        =
        \EE\{y(0)\mid x,a=0\}
        =
        \EE\{y(0)\mid x\}
        =
        m_0(x)
    \]
    \(\PP_0\)-almost everywhere.
    To verify absolute continuity, let \(B\) be measurable with \(\PP_0(B)=0\).
    Then
    \[
        0
        =
        \pi_0\PP_0(B)
        =
        \EE\left[\ind\{x\in B\}\{1-e(x)\}\right].
    \]
    Because \(\PP_1\ll\PP_x\) automatically and \(1-e(x)>0\) for \(\PP_1\)-almost every \(x\), positivity implies \(\PP_1(B)=0\).
    Hence \(\PP_1\ll\PP_0\), so the observed control regression is also identified \(\PP_1\)-almost everywhere.
    The law of iterated expectations therefore gives
    \[
        \begin{aligned}
            \mu_0
            &=
            \EE\!\left[
                \EE\{y(0)\mid x,a=1\}
                \,\middle|\,a=1
            \right] \\
            &=
            \EE\{m_0(x)\mid a=1\}
            =
            \int m_0(x)\,\rd\PP_1(x),
        \end{aligned}
    \]
    whereas consistency and the law of iterated expectations give
    \[
        \mu_1
        =
        \EE\{y(1)\mid a=1\}
        =
        \EE(y\mid a=1)
        =
        \int \EE(y\mid x,a=1)\,\rd\PP_1(x).
    \]
    These identities yield the two regression representations of \(\tau_{\mathrm{ATT}}\).

    Absolute continuity ensures that the density ratio \(r=\rd\PP_1/\rd\PP_0\) exists, and Bayes' rule gives
    \[
        r(x)
        =
        \frac{\pi_0}{\pi_1}
        \frac{e(x)}{1-e(x)}
        \qquad
        \PP_0\text{-almost everywhere}.
    \]
    Hence
    \[
        \begin{aligned}
            \mu_0
            &=
            \int r(x)m_0(x)\,\rd\PP_0(x) \\
            &=
            \EE\{r(x)y\mid a=0\} \\
            &=
            \frac{1}{\pi_1}
            \EE\left[
                (1-a)\frac{e(x)}{1-e(x)}y
            \right],
        \end{aligned}
    \]
    which proves the remaining claims.
\end{proof}

\subsection{Proof of Proposition~\ref{prop:conditional-risk}}
\label{app:proof-prop-risk}

\begin{proof}[Proof of \Cref{prop:conditional-risk}]  
    Since
    \begin{align}
        \frac{1}{n_0}X_0^\top X_0^c=\frac{1}{n_0}(X_0^c)^\top X_0^c
        =(M_\lambda^c)^{-1}-\lambda I_p, \label{eq:prop:conditional-risk-proof-1}
    \end{align}
    we have
    \begin{align*}
        X_0^\top \gamma_\lambda &= X_0^\top \gamma + \frac{1}{n_0}X_0^\top X_0^c M_\lambda^c \Delta \\
        &= (\bar{x}_1 - \Delta) + \left[(M_\lambda^c)^{-1} - \lambda I_p\right]M_\lambda^c\Delta \\
        &= \bar{x}_1 - \lambda M_\lambda^c\Delta.
    \end{align*}
    
    Substitution into \eqref{eq:R-BV-decom} gives
    \begin{align*}
        B(\beta;\lambda) &= [(\gamma_\lambda)^{\top}X_0\beta -{\nu_1}^\top \beta]^2\\
        &= [\left(\bar{x}_1 - \nu_1 - \lambda M_\lambda^c\Delta\right)^\top \beta ]^2.
    \end{align*}
    Hence, by \Cref{asm:predictive-model},
    \begin{align*}
        B_n(\lambda)&:=\EE_{\beta\sim\Pi}[B(\beta;\lambda)
        \mid\mathcal G_n] \\
        &=\left(\epsilon_1-\lambda M_\lambda^c\Delta\right)^\top
        \EE_\Pi(\beta\beta^\top)
        \left(\epsilon_1-\lambda M_\lambda^c\Delta\right) \\
        &=\frac{r^2}{p}\|\epsilon_1-\lambda M_\lambda^c\Delta\|_2^2\\
        &=\frac{r^2}{p}\left\{\|\epsilon_1\|_2^2
        -2\lambda\epsilon_1^\top M_\lambda^c\Delta
        +\lambda^2\Delta^\top (M_\lambda^c)^2\Delta\right\}.
    \end{align*}
    For the conditional residual variance \(V(\lambda)\) in
    \eqref{eq:R-BV-decom}, using \eqref{eq:prop:conditional-risk-proof-1},
    \begin{align*}
        V_n(\lambda) &:= V(\lambda) \\
        &= \sigma_0^2 \left[\| \gamma\|_2^2 + \frac{2}{n_0}\gamma^\top\left( X_0^c M_\lambda^c\Delta \right) + \frac{1}{n_0^2}\Delta^\top M_\lambda^c (X_0^c)^\top X_0^c M_\lambda^c \Delta \right]\\
        &= \sigma_0^2\left[\| \gamma\|_2^2 + \frac{2}{n_0}\gamma^\top\left( X_0^c M_\lambda^c\Delta \right) + \frac{1}{n_0}
        \Delta^\top\left\{M_\lambda^c - \lambda(M_\lambda^c)^2 \right\} \Delta\right]\\
        &= \sigma_0^2\left[\| \gamma\|_2^2 + \frac{2}{n_0}\gamma^\top\left( X_0^c M_\lambda^c\Delta \right) + \frac{1}{n_0}
        \Delta^\top M_\lambda^c
        \left(I_p-\lambda M_\lambda^c\right)\Delta\right] .
    \end{align*}
    Since \(R_n(\lambda)=B_n(\lambda)+V_n(\lambda)\), these two
    displays give \eqref{eq:conditional-risk}.
\end{proof}

\subsection{Proof of Theorem~\ref{thm:risk-asym}}
\label{app:proof-thm-risk-asym}

This subsection proves \Cref{thm:risk-asym}.
The first two subsubsections develop the deterministic-equivalent and quadratic-form tools, the next two analyse the pure and mixed components, and the final subsubsection assembles the proof.

Throughout this subsection, fix an admissible realisation of \(\mathcal F_{\mathrm{aux}}\).
Unless stated otherwise, all probability limits and stochastic-order statements are under the conditional law of \((X_0,X_1)\) given \(\mathcal F_{\mathrm{aux}}\).

\subsubsection{Uniform ridge deterministic equivalents}
Under the source-design \((8+\delta)\)-moment bound in
\Cref{asm:rmt}, we invoke the generalised ridge deterministic equivalent
of \citet[Lemma F.6]{du2023subsample}, specialised to the control design.

For matrix sequences \(K_n(\lambda)\) and \(D_n(\lambda)\), write
\(
    K_n(\lambda)\asympequi D_n(\lambda)
\)
uniformly on \(\Lambda\) if, for every deterministic matrix sequence \(C_n\)
satisfying \(\sup_n\|C_n\|_{\mathrm{tr}}<\infty\),
\[
    \sup_{\lambda\in\Lambda}
    \left|\tr\{C_n(K_n(\lambda)-D_n(\lambda))\}\right|\pto0.
\]
Throughout this subsection, \(\asympequi\) has this uniform trace-test meaning.

For every deterministic matrix sequence \(A=A_n\in\RR^{p\times p}\) satisfying \(A=A^\top\succeq0\) and \(\sup_n\|A\|_{\oper}<\infty\), we establish the centred deterministic equivalents
\begin{align*}
    \lambda^2M_\lambda^cAM_\lambda^c
    &\asympequi
    A_n(\lambda)
    \{\widetilde v_{b,n}(\lambda;A)\Sigma_0+A\}
    A_n(\lambda),\\
    \lambda M_\lambda^c
    &\asympequi A_n(\lambda).
\end{align*}
Here \(v_n\), \(\widetilde v_{b,n}\), and \(A_n\) are the finite-dimensional quantities defined in Section~\ref{sec:risk:asymptotic-risk} from \((\phi_{0,n},\Sigma_0,H_{0,p})\), and \(H_{0,p}\) is used throughout.

\begin{lemma}[Uniform ridge deterministic equivalents]
    \label{lem:uniform-ridge-de}
    Under \Cref{asm:rmt,asm:technical-source-resolvent}, both preceding ridge deterministic equivalents hold uniformly over every compact \(\Lambda\subset(0,\infty)\) in the trace-test sense defined above, for every deterministic matrix sequence \(A=A_n\in\RR^{p\times p}\) satisfying \(A=A^\top\succeq0\) and \(\sup_n\|A\|_{\oper}<\infty\).
\end{lemma}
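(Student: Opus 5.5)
The plan is to reduce the centred deterministic equivalents to the uncentred ones supplied by \citet[Lemma F.6]{du2023subsample}, and then upgrade pointwise convergence to uniform convergence over \(\Lambda\).

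\textbf{Step 1: write the centred covariance as an uncentred one plus a rank-one term.} Since \(X_0^c=C_0Z_0\Sigma_0^{1/2}\) (the mean \(\nu_0\) is annihilated by \(C_0\)), we have
\[
S_0^c=\widetilde S_0-\bar z\bar z^\top,
\qquad
\widetilde S_0=n_0^{-1}\Sigma_0^{1/2}Z_0^\top Z_0\Sigma_0^{1/2},
\qquad
\bar z=\Sigma_0^{1/2}Z_0^\top\one_{n_0}/n_0 .
\]
Let \(\widetilde M_\lambda=(\widetilde S_0+\lambda I_p)^{-1}\). The Sherman--Morrison formula gives
\[
M_\lambda^c=\widetilde M_\lambda+\frac{\widetilde M_\lambda\bar z\bar z^\top\widetilde M_\lambda}{1-\bar z^\top\widetilde M_\lambda\bar z}.
\]

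\textbf{Step 2: control the rank-one correction.} The key point is that \(\bar z^\top\widetilde M_\lambda\bar z\) concentrates, via a leave-one-out (equivalently, \(\widetilde S_0\) written with \(\sqrt{n_0}\bar z\) as one aggregated row) argument, at a limit strictly below one. The denominator is therefore bounded away from zero uniformly on \(\Lambda\). Indeed \(1-\bar z^\top\widetilde M_\lambda\bar z=\lambda\,\bar z\)-weighted quantity \(\geq\lambda/(\lambda+\|\widetilde S_0\|_{\oper})\), which is bounded below because \(\|\widetilde S_0\|_{\oper}=\Op(1)\) under the \((8+\delta)\)-moment condition.

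For any \(C_n\) with bounded trace norm, the correction contributes
\[
\bar z^\top\widetilde M_\lambda C_n\widetilde M_\lambda\bar z\leq\|C_n\|_{\oper}\,\|\widetilde M_\lambda\bar z\|_2^2 .
\]
This bound is not small in general, so the rank-one structure has to be used more carefully. Writing \(C_n=\sum_k s_k a_kb_k^\top\), each term becomes \((a_k^\top\widetilde M_\lambda\bar z)(b_k^\top\widetilde M_\lambda\bar z)\). Since \(\bar z\) is an isotropic average of \(n_0\) independent rows, each bilinear form is \(\Op(n_0^{-1/2})\) uniformly over unit vectors tested against fixed directions. For this I would invoke the source-resolvent concentration of \Cref{asm:technical-source-resolvent}, which I expect to be designed for exactly this purpose. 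Hence \(\lambda M_\lambda^c\asympequi\lambda\widetilde M_\lambda\), and the Du et al.\ first-order equivalent \(\lambda\widetilde M_\lambda\asympequi A_n(\lambda)\) completes the first display. The effective aspect ratio is \(p/n_0\) rather than \(p/(n_0-1)\); that discrepancy is \(\cO(n_0^{-1})\) and is absorbed into the error.

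\textbf{Step 3: the second-order equivalent.} Here \(\lambda^2M_\lambda^cAM_\lambda^c\) expands into \(\lambda^2\widetilde M_\lambda A\widetilde M_\lambda\) plus cross terms, each containing the factor \(\widetilde M_\lambda\bar z\) at least once. The same bilinear-form bound applies with \(C_n\) replaced by \(C_n\widetilde M_\lambda A\) or \(A\widetilde M_\lambda C_n\). These replacements keep bounded trace norm after a conditioning or leave-one-out step, or the bilinear bound can be applied directly using \(\|A\|_{\oper}\) bounded. Lemma F.6 of \citet{du2023subsample} then gives \(\lambda^2\widetilde M_\lambda A\widetilde M_\lambda\asympequi A_n(\{\widetilde v_{b,n}(\lambda;A)\Sigma_0+A\})A_n\).

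\textbf{Step 4: uniformity in \(\lambda\).} Both sides are Lipschitz in \(\lambda\) on \(\Lambda\). On the random side, \(\|\partial_\lambda M_\lambda^c\|_{\oper}\leq\lambda_-^{-2}\) and trace-tested quantities are bounded by \(\|C_n\|_{\mathrm{tr}}\) times operator norms. On the deterministic side, \(v_n\), \(\widetilde v_{v,n}\) and \(\widetilde v_{b,n}\) have uniformly bounded derivatives, obtained from the fixed-point equation by the implicit function theorem. Pointwise convergence on a finite \(\epsilon\)-net then gives uniform convergence.

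\textbf{Main obstacle.} I expect the hardest step to be the rank-one cross terms in Step 3. The naive bounds are \(\Op(1)\), not \(\op(1)\), so proving the bilinear forms \(a^\top\widetilde M_\lambda\bar z\) vanish requires the resolvent concentration assumption, or a careful leave-one-out argument that exploits the independence between \(\bar z\) and the resolvent built from the remaining rows.
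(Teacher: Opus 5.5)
Your overall route is the paper's. You take the uncentred equivalents for \(M_0=(B_0^\top B_0+\lambda I_p)^{-1}\) (your \(\widetilde M_\lambda\)) from \citet[Lemma F.6]{du2023subsample}. You transfer them by Sherman--Morrison, \(D_\lambda=M_\lambda^c-M_0=M_0e_0e_0^\top M_0/(\lambda a_n)\), with \(e_0=\bar z=B_0^\top u\) and \(\lambda a_n=\lambda u^\top R_0u\geq\lambda/(\lambda+\|G_0\|_{\oper})\). You use the cross-block bound of \Cref{asm:technical-source-resolvent} for \(x^\top M_0e_0\), and a Lipschitz/net argument for uniformity.

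The step that fails as written is the passage from rank-one tests to a general deterministic \(C_n\) with \(\|C_n\|_{\mathrm{tr}}\leq K\). Writing \(C_n=\sum_ks_ka_kb_k^\top\) and bounding each \((a_k^\top M_0e_0)(b_k^\top M_0e_0)\) by \(\Op(n_0^{-1+2\zeta})\) does not control the sum, for three reasons:
\begin{enumerate}
\item the number of terms can grow with \(p\);
\item \Cref{asm:technical-source-resolvent} gives an \(\Op\) bound only along each fixed deterministic sequence of directions;
\item no small bound holds simultaneously over all unit directions, since \(\sup_{\|a\|_2=1}|a^\top M_0e_0|=\|M_0e_0\|_2\) is of order one.
\end{enumerate}
The paper closes this by splitting \(C_n=C_n^{>\tau}+C_n^{\leq\tau}\) at singular-value level \(\tau\). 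The first part has rank at most \(K/\tau\), so only finitely many deterministic rank-one sequences occur and the rank-one bound applies. For the second part, \(|\tr(C_n^{\leq\tau}D_\lambda)|\leq\tau\|D_\lambda\|_*\), and \(\|D_\lambda\|_*=\|M_0e_0\|_2^2/(\lambda a_n)=\Op(1)\) uniformly on \(\Lambda\). Then let \(n_0\to\infty\) and \(\tau\downarrow0\). A second-moment bound on \(a^\top M_0e_0\) uniform over unit \(a\) would also work, but the assumption does not supply one.

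The same issue recurs in Step 3, together with a second one. Replacing \(C_n\) by \(C_n\widetilde M_\lambda A\) makes the test directions design-dependent. The assumption does not cover such directions: its conditional clause requires them to be measurable with respect to a sigma-field independent of \(Z_0\). Your alternative (applying the bound directly, using that \(\|A\|_{\oper}\) is bounded) is the right one. For \(C_n=xy^\top\),
\[
\tr(C_nD_\lambda AM_\lambda^c)=(y^\top M_0e_0)(e_0^\top M_0AM_\lambda^cx)/(\lambda a_n).
\]
The deterministic-direction bound applies to the first factor, and the second factor is \(\Op(1)\|x\|_2\) by operator norms. This gives \(\Op(n_0^{-1/2+\zeta})\|x\|_2\|y\|_2\), and the term \(M_0AD_\lambda\) is handled symmetrically. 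You then need the same singular-value truncation, now using \(\|M_\lambda^cAM_\lambda^c-M_0AM_0\|_*=\Op(1)\).

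Finally, your \(p/n_0\) versus \(p/(n_0-1)\) remark is unnecessary. \(\widetilde M_\lambda\) is already the \(n_0\)-normalised uncentred resolvent, so no aspect-ratio correction arises.
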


\begin{proof}[Proof of \Cref{lem:uniform-ridge-de}]
    For each fixed \(\lambda>0\), \citet[Lemma F.6]{du2023subsample} gives the corresponding uncentred deterministic equivalents for \(\lambda M_0\) and \(\lambda^2M_0AM_0\) for deterministic \(A=A^\top\succeq0\) with \(\sup_n\|A\|_{\oper}<\infty\); its i.i.d.\ coordinate and \((8+\delta)\)-moment conditions are imposed in \Cref{asm:rmt}.
    We now transfer these equivalents to \(M_\lambda^c\).
    With \(e_0=B_0^\top u\), Sherman--Morrison and \Cref{lem:source-mean-resolvent} give
    \[
        D_\lambda
        :=
        M_\lambda^c-M_0
        =
        \frac{M_0e_0e_0^\top M_0}{\lambda a_n},
        \qquad
        \inf_{\lambda\in\Lambda}a_n(\lambda)>c
    \]
    with probability tending to one for some \(c>0\).
    For a deterministic rank-one trace test \(C_n=xy^\top\), \Cref{lem:global-linearized-resolvent} gives
    \[
        \sup_{\lambda\in\Lambda}
        \left|\tr(C_nD_\lambda)\right|
        =
        \Op(n_0^{-1+2\zeta})\|x\|_2\|y\|_2
        =
        \op(1)\|x\|_2\|y\|_2,
    \]
    because \(\zeta<1/4\).
    For a general deterministic \(C_n\) with \(\|C_n\|_{\mathrm{tr}}\leq K\), fix \(\tau>0\) and write \(C_n=C_n^{>\tau}+C_n^{\leq\tau}\) by its singular-value decomposition.
    Since \(\rank(C_n^{>\tau})\leq K/\tau\), the preceding uniform rank-one bound gives
    \[
        \sup_{\lambda\in\Lambda}
        \left|\tr(C_n^{>\tau}D_\lambda)\right|
        =
        \op(1)
    \]
    for fixed \(\tau\).
    Since \(\|C_n^{\leq\tau}\|_{\oper}\leq\tau\) and \(\sup_{\lambda\in\Lambda}\|D_\lambda\|_*=\Op(1)\),
    \[
        \sup_{\lambda\in\Lambda}
        \left|\tr(C_n^{\leq\tau}D_\lambda)\right|
        =
        \Op(\tau).
    \]
    Letting \(n_0\to\infty\) and then \(\tau\downarrow0\) proves the first trace-test transfer.

    For the second deterministic equivalent, use the exact decomposition
    \[
        M_\lambda^cAM_\lambda^c-M_0AM_0
        =
        D_\lambda AM_\lambda^c+M_0AD_\lambda.
    \]
    For \(C_n=xy^\top\), the first term satisfies
    \[
        \tr(C_nD_\lambda AM_\lambda^c)
        =
        \frac{(y^\top M_0e_0)(e_0^\top M_0AM_\lambda^cx)}
        {\lambda a_n},
    \]
    and the second is analogous.
    Uniformly over \(\Lambda\), \Cref{lem:global-linearized-resolvent} and the bounded operator norms of \(A\), \(M_0\), and \(M_\lambda^c\) give
    \[
        \left|\tr\{C_n(M_\lambda^cAM_\lambda^c-M_0AM_0)\}\right|
        =
        \Op(n_0^{-1/2+\zeta})\|x\|_2\|y\|_2
        =
        \op(1)\|x\|_2\|y\|_2.
    \]
    Moreover,
    \[
        \sup_{\lambda\in\Lambda}
        \|M_\lambda^cAM_\lambda^c-M_0AM_0\|_*
        =
        \Op(1).
    \]
    The same singular-value argument as above therefore gives the required trace-test convergence.

    Writing \(\lambda_-:=\inf\Lambda>0\), the resolvent identity gives, for each
    fixed \(k\),
    \[
        \sup_{\lambda\in\Lambda}\|(M_\lambda^c)^k\|_{\oper}
        \leq\lambda_-^{-k},
        \qquad
        \sup_{\lambda\in\Lambda}
        \|\partial_{\lambda}(M_\lambda^c)^k\|_{\oper}
        \leq k\lambda_-^{-(k+1)}.
    \]
    Hence, for every deterministic trace test with
    \(\sup_n\|C_n\|_{\mathrm{tr}}<\infty\), the random trace tests are uniformly
    Lipschitz. The fixed-point stability bound
    \[
        \widetilde v_{v,n}(\lambda)^{-1}
        \geq \lambda/v_n(\lambda)\geq\lambda_-^2
    \]
    and implicit differentiation of the fixed-point equation give uniform
    derivative bounds for \(v_n\), \(\widetilde v_{v,n}\),
    \(\widetilde v_{b,n}\), and \(A_n\). 
    Thus the deterministic trace tests are uniformly Lipschitz as well.
    For a deterministic \(\delta\)-net \(\mathcal N_\delta\subset\Lambda\), the supremum error is bounded by the maximum error on \(\mathcal N_\delta\) plus \(\delta\Op(1)\).
    For fixed \(\delta\), the net is finite, so the maximum is \(\op(1)\).
    Letting \(n_0\to\infty\) and then \(\delta\downarrow0\) proves uniform convergence.
\end{proof}

\subsubsection{Quadratic-form extraction and component decomposition}

\begin{lemma}[Scaled quadratic forms from matrix deterministic equivalents]
    \label{lem:quadratic-form-extraction}
    Suppose \(K_n(\lambda)\asympequi D_n(\lambda)\) uniformly over
    \(\lambda\in\Lambda\). If \(u_n\in\RR^p\) is deterministic and
    \[
        \frac{n_0^\eta}{p}\|u_n\|_2^2=\cO(1),
    \]
    then
    \[
        \sup_{\lambda\in\Lambda}\frac{n_0^\eta}{p}
        \left|u_n^\top\{K_n(\lambda)-D_n(\lambda)\}u_n\right|\pto0.
    \]
\end{lemma}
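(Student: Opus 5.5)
The idea is to read the quadratic form as a trace against a rank-one test matrix and apply the uniform trace-test definition of \(\asympequi\) directly. For deterministic \(u_n\), set
\[
    C_n:=\frac{n_0^\eta}{p}\,u_nu_n^\top .
\]
Then
\[
    \frac{n_0^\eta}{p}\,u_n^\top\{K_n(\lambda)-D_n(\lambda)\}u_n
    =\tr\bigl[C_n\{K_n(\lambda)-D_n(\lambda)\}\bigr].
\]
The claim reduces to showing that \(C_n\) is an admissible test sequence, meaning it is deterministic with \(\sup_n\|C_n\|_{\mathrm{tr}}<\infty\). Once that holds, the definition of the uniform equivalence on \(\Lambda\) gives \(\sup_{\lambda\in\Lambda}|\tr[C_n\{K_n(\lambda)-D_n(\lambda)\}]|\pto0\), which is exactly the conclusion.

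The trace-norm bound is direct. Since \(C_n\) is rank one and positive semidefinite, its trace norm equals its trace:
\[
    \|C_n\|_{\mathrm{tr}}=\frac{n_0^\eta}{p}\|u_n\|_2^2 .
\]
This is \(\cO(1)\) by hypothesis, so \(\sup_n\|C_n\|_{\mathrm{tr}}\leq K\) for some finite \(K\) once \(n\) is large. The finitely many initial indices do not affect a convergence-in-probability statement. If the definition insists on \(\sup_n\) over all \(n\), I will instead use \(C_n/\max(1,\|C_n\|_{\mathrm{tr}})\) and multiply back by the bounded factor \(\max(1,\|C_n\|_{\mathrm{tr}})\), which preserves \(\op(1)\).

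The only genuine point to check is that the definition of \(\asympequi\) lets the test sequence \(C_n\) depend on \(n\) through \(u_n\) and the scaling \(n_0^\eta/p\). It does, since it quantifies over every deterministic matrix sequence of bounded trace norm. Note also that \(n_0^\eta/p\leq n_0/p\), which is bounded under the proportional regime; the hypothesis is exactly what normalises the trace-norm bound. There is no real obstacle beyond this bookkeeping. In later applications, the step that needs care will be ensuring that the vectors plugged in (for example \(\nu_\Delta\)) are genuinely deterministic, and handling random vectors such as \(\epsilon_1\) separately by the concentration lemmas rather than by this lemma.
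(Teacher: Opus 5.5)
Your proposal is correct and matches the paper's proof exactly: take \(C_n=(n_0^\eta/p)u_nu_n^\top\), note \(\|C_n\|_{\mathrm{tr}}=(n_0^\eta/p)\|u_n\|_2^2=\cO(1)\), and invoke the trace-test definition of \(\asympequi\). Your normalisation workaround is unnecessary: a deterministic \(\cO(1)\) sequence has only finitely many finite initial terms, so \(\sup_n\|C_n\|_{\mathrm{tr}}<\infty\) holds automatically.
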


\begin{proof}[Proof of \Cref{lem:quadratic-form-extraction}]
    Take \(C_n=(n_0^\eta/p)u_nu_n^\top\). Then
    \[
        \|C_n\|_{\mathrm{tr}}
        =\frac{n_0^\eta}{p}\|u_n\|_2^2=\cO(1),
    \]
    so the conclusion follows from the definition of \(\asympequi\).
\end{proof}

For the component decomposition, fix \(\lambda\in\Lambda\) and write
\[
    e_0=\bar x_0-\nu_0,
    \qquad
    \epsilon_1=\bar x_1-\nu_1,
    \qquad
    g=C_0\gamma,
    \qquad
    G_0^c=\frac1{n_0}X_0^c(X_0^c)^\top,
    \qquad
    M=M_\lambda^c.
\]
Suppressing the dependence on \(\lambda\), define the signal components
\(d_\nu,d_1,d_e,d_g\in\RR^p\) by
\[
    \begin{aligned}
    d_\nu&=-\lambda M\nu_\Delta,
    &
    d_1&=(I_p-\lambda M)\epsilon_1,\\
    d_e&=\lambda Me_0,
    &
    d_g&=\lambda M(X_0^c)^\top g,
    \end{aligned}
\]
and the centred-weight components
\(h_g,h_\nu,h_1,h_e\in\RR^{n_0}\) by
\[
    \begin{aligned}
    h_g&=\lambda(G_0^c+\lambda I_{n_0})^{-1}g,
    &
    h_\nu&=\frac1{n_0}X_0^cM\nu_\Delta,\\
    h_1&=\frac1{n_0}X_0^cM\epsilon_1,
    &
    h_e&=-\frac1{n_0}X_0^cMe_0.
    \end{aligned}
\]
These definitions give
\begin{align*}
    \Delta
    &=\nu_\Delta+\epsilon_1-e_0-(X_0^c)^\top g,\\
    d_n(\lambda)
    &=d_\nu+d_1+d_e+d_g,\\
    \gamma_\lambda
    &=\frac1{n_0}\one_{n_0}+h_g+h_\nu+h_1+h_e.
\end{align*}
Each centred-weight component belongs to \(\one_{n_0}^{\perp}\).

\subsubsection{Pure-component deterministic equivalents}

\begin{lemma}[Pure component deterministic equivalents]
    \label{lem:pure-component-de}
    Under \Cref{asm:rmt,asm:technical-source-resolvent}, together with \Cref{asm:design-independent-regime},
    \Cref{lem:external-trace-replacement} applies. The pure quadratic terms
    associated with \(d_\nu,d_1,d_e,d_g\) and
    \(h_g,h_\nu,h_1,h_e\) admit the finite-dimensional deterministic
    equivalents displayed in \Cref{thm:risk-asym}, uniformly over \(\Lambda\).
\end{lemma}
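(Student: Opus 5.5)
We handle the eight pure quadratic terms one at a time. In each case we condition so that the random vector inside the quadratic form is either deterministic, or independent of the centred resolvent \(M=M_\lambda^c\), or concentrated around a trace. We then replace the resolvent functional by its deterministic equivalent from \Cref{lem:uniform-ridge-de}. All statements are conditional on a fixed admissible realisation of \(\mathcal F_{\mathrm{aux}}\), so \(\gamma\) and hence \(g=C_0\gamma\) are deterministic. Every term is scaled by \(n_0^\eta\) and, for the signal terms, divided by \(p\).

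\textbf{Deterministic and independent vectors.} The mean-shift term is \(\frac{n_0^\eta}{p}\|d_\nu\|_2^2=\frac{n_0^\eta}{p}\nu_\Delta^\top\lambda^2M^2\nu_\Delta\). Since \(\nu_\Delta\) is deterministic with \(\frac{n_0^\eta}{p}\|\nu_\Delta\|_2^2=\rho_{\eta,n}^2=\cO(1)\), \Cref{lem:quadratic-form-extraction} applies with \(A=I_p\). Expanding \(A_n\{\widetilde v_{b,n}\Sigma_0+I\}A_n\) in the eigenbasis of \(\Sigma_0\) gives the \(\rho_{\eta,n}^2\int(\widetilde v_{b,n}s+1)/(v_ns+1)^2\,\rd G_{\nu,p}\) term.

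The variance term \(\|h_\nu\|_2^2=\frac1{n_0}\nu_\Delta^\top M(I-\lambda M)\nu_\Delta\) is handled the same way, using the difference of the two deterministic equivalents for \(\lambda M\) and \(\lambda^2 M^2\). This produces the \(\sigma_0^2\phi_{0,n}\rho^2/\lambda\) term.

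For \(d_1\) and \(h_1\), the vector \(\epsilon_1\) is independent of \(X_0\) with covariance \(\Sigma_1/n_1\). Conditional on \(X_0\), the quadratic form \(\epsilon_1^\top Q\epsilon_1\), with \(Q=(I-\lambda M)^2\) or \(Q=M(I-\lambda M)/n_0\), concentrates around \(\tr(Q\Sigma_1)/n_1\). This uses the treated-mean quadratic-form bounds of the appendix with \(4+\delta\) moments. The traces \(\otr\{\Sigma_1\lambda M\}\) and \(\otr\{\Sigma_1\lambda^2M^2\}\) are then replaced via \Cref{lem:uniform-ridge-de} with \(A=\Sigma_1\), giving the \(\phi_{1,n}/\phi_{0,n}\) terms. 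The resulting scale is \(n_0^{\eta}/n_1\cdot\cO(1)\). This is \(\cO(1)\) exactly when \(\eta=1\) and \(o(1)\) otherwise, which explains the indicator.

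\textbf{Source-dependent vectors.} For \(d_e,h_e\) we have \(e_0=\bar x_0-\nu_0\), which is not independent of \(M\). We use \Cref{lem:source-mean-resolvent} and the centred-resolvent identities, in particular Sherman--Morrison, to decouple \(e_0\) from \(M\). Then \(e_0^\top Qe_0\approx\otr\{Q\Sigma_0\}\,p/n_0\) up to a resolvent correction. This yields \(r^2K_{2,n}(\lambda;\Sigma_0)\) for the bias and \(\sigma_0^2\phi_{0,n}\{K_1-K_2\}/\lambda\) for the variance. The additional \(\sigma_0^2\cdot1\) comes from \(\|\one/n_0\|^2\) in \(\|\gamma_\lambda\|^2\), so it contributes at leading order only when \(\eta=1\).

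For \(d_g\) and \(h_g\), write \(\|h_g\|^2=g^\top\lambda^2(G_0^c+\lambda I)^{-2}g\). The vector \(g\) is deterministic and orthogonal to \(\one\), with \(n_0^\eta\|g\|^2=\varrho_{\eta,n}^2\). Applying \Cref{lem:external-trace-replacement}, the companion-resolvent forms \(g^\top(G_0^c+\lambda)^{-k}g\) are replaced by \(\|g\|^2\) times normalised traces. The Gram-side equivalents are \(\lambda^2\widetilde v_{v,n}\) and, for \(d_g\) after using \((X_0^c)^\top g\) and push-through, \(\frac{\lambda^2}{\phi_{0,n}}(v_n-\lambda\widetilde v_{v,n})\).

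\textbf{Uniformity and main obstacle.} Uniformity over \(\Lambda\) follows from the Lipschitz-net argument in \Cref{lem:uniform-ridge-de}. The main obstacle is the design-dependent vectors \(e_0\) and, through the centring, \((X_0^c)^\top g\). For these the quadratic forms are not of the deterministic-vector type covered by \Cref{lem:quadratic-form-extraction}. I would first attempt the rank-one Sherman--Morrison reduction from centred to uncentred resolvents, and then use leave-one-out concentration as in the appendix resolvent lemmas.
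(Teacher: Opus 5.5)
Your proposal is correct and follows essentially the same route as the paper: push-through identities plus \Cref{lem:external-trace-replacement} for the $g$-components, \Cref{lem:uniform-ridge-de,lem:quadratic-form-extraction} for the mean-shift terms, \Cref{lem:treated-mean-qf} followed by trace equivalents with $A=\Sigma_1$ for the $\epsilon_1$-terms, the Sherman--Morrison identities of \Cref{lem:source-mean-resolvent} for the $e_0$-terms, and the affine direction contributing $n_0^{\eta-1}$. Two small remarks: no ``resolvent correction'' survives for $e_0$, since the exact identities $e_0^\top M_\lambda^c e_0=1/(\lambda a_n)-1$ and $\lambda^2 e_0^\top (M_\lambda^c)^2 e_0=1/a_n-\lambda b_n/a_n^2$ reproduce your independent-vector heuristic in the limit; and the leave-one-out fallback is unnecessary because \Cref{asm:technical-source-resolvent} is part of the hypotheses.
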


\begin{proof}[Proof of \Cref{lem:pure-component-de}]
    Using the component decomposition above, put
    \[
        R_\lambda^c=(G_0^c+\lambda I_{n_0})^{-1}.
    \]
    We use the ridge identities
    \begin{align*}
        M(X_0^c)^\top&=(X_0^c)^\top R_\lambda^c,\\
        X_0^cM^2(X_0^c)^\top
        &=n_0G_0^c(R_\lambda^c)^2
        =n_0\left\{R_\lambda^c-\lambda(R_\lambda^c)^2\right\},\\
        \frac1{n_0}M(X_0^c)^\top X_0^cM&=M-\lambda M^2.
    \end{align*}
    All stochastic remainders below are uniform over \(\Lambda\).

    \noindent\textbf{Centred base-weight components.}
    The exact quadratic forms are
    \begin{align*}
        \frac{n_0^\eta}{p}\|d_g\|_2^2
        &=\frac{n_0^\eta}{p}\lambda^2n_0
        g^\top\left\{R_\lambda^c-\lambda(R_\lambda^c)^2\right\}g,\\
        n_0^\eta\|h_g\|_2^2
        &=n_0^\eta\lambda^2g^\top(R_\lambda^c)^2g.
    \end{align*}
    By \Cref{lem:external-trace-replacement},
    \begin{align*}
        g^\top R_\lambda^cg
        &=\|g\|_2^2\left\{v_n(\lambda)+\op(1)\right\},\\
        g^\top(R_\lambda^c)^2g
        &=\|g\|_2^2
        \left\{\widetilde v_{v,n}(\lambda)+\op(1)\right\}.
    \end{align*}
    Since \(n_0^\eta\|g\|_2^2=\varrho_{\eta,n}^2\) and
    \(n_0/p=\phi_{0,n}^{-1}\), these identities give
    \begin{align*}
        \frac{n_0^\eta}{p}\|d_g\|_2^2
        &=\frac{\lambda^2\varrho_{\eta,n}^2}{\phi_{0,n}}
        \left\{v_n(\lambda)-\lambda\widetilde v_{v,n}(\lambda)\right\}
        +\op(1),\\
        n_0^\eta\|h_g\|_2^2
        &=\lambda^2\varrho_{\eta,n}^2
        \widetilde v_{v,n}(\lambda)+\op(1).
    \end{align*}

    \noindent\textbf{Source-target mean-shift components.}
    For the signal component,
    \[
        \frac{n_0^\eta}{p}\|d_\nu\|_2^2
        =\frac{n_0^\eta}{p}
        \nu_\Delta^\top(\lambda^2M^2)\nu_\Delta.
    \]
    By \Cref{lem:uniform-ridge-de,lem:quadratic-form-extraction},
    \[
        \frac{n_0^\eta}{p}\|d_\nu\|_2^2
        =\rho_{\eta,n}^2
        \int
        \frac{\widetilde v_{b,n}(\lambda)s+1}
        {\{v_n(\lambda)s+1\}^2}\,\rd G_{\nu,p}(s)
        +\op(1).
    \]
    For the centred-weight component,
    \[
        n_0^\eta\|h_\nu\|_2^2
        =\frac{n_0^\eta}{n_0}
        \nu_\Delta^\top(M-\lambda M^2)\nu_\Delta.
    \]
    The deterministic equivalent of \(M-\lambda M^2\) is
    \[
        \frac1\lambda
        \left[
        A_n(\lambda)
        -A_n(\lambda)
        \{\widetilde v_{b,n}(\lambda)\Sigma_0+I_p\}
        A_n(\lambda)
        \right].
    \]
    The scalar identity
    \[
        \frac1{1+v_ns}
        -\frac{\widetilde v_{b,n}s+1}{(1+v_ns)^2}
        =\frac{(v_n-\widetilde v_{b,n})s}{(1+v_ns)^2},
    \]
    with the arguments \(\lambda\) suppressed, and \(p/n_0=\phi_{0,n}\)
    therefore give
    \[
        n_0^\eta\|h_\nu\|_2^2
        =\frac{\phi_{0,n}\rho_{\eta,n}^2}{\lambda}
        \int
        \frac{\{v_n(\lambda)-\widetilde v_{b,n}(\lambda)\}s}
        {\{v_n(\lambda)s+1\}^2}\,\rd G_{\nu,p}(s)
        +\op(1).
    \]

    \noindent\textbf{Treated empirical-mean components.}
    Let \(P_\lambda=I_p-\lambda M\). Then
    \[
        \|d_1\|_2^2=\epsilon_1^\top P_\lambda^2\epsilon_1,
        \qquad
        \|h_1\|_2^2
        =\frac1{n_0}\epsilon_1^\top(M-\lambda M^2)\epsilon_1.
    \]
    Using \Cref{lem:treated-mean-qf}, the expansion
    \(P_\lambda^2=I_p-2\lambda M+\lambda^2M^2\), and
    \Cref{lem:uniform-ridge-de} yields
    \begin{align*}
        \frac{n_0^\eta}{p}\|d_1\|_2^2
        &=n_0^{\eta-1}\frac{\phi_{1,n}}{\phi_{0,n}}
        \left[
        \otr(\Sigma_1)-2K_{1,n}(\lambda;\Sigma_1)
        +K_{2,n}(\lambda;\Sigma_1)
        \right]+\op(1),\\
        n_0^\eta\|h_1\|_2^2
        &=n_0^{\eta-1}\frac{\phi_{1,n}}{\lambda}
        \left[
        K_{1,n}(\lambda;\Sigma_1)-K_{2,n}(\lambda;\Sigma_1)
        \right]+\op(1).
    \end{align*}
    Thus both contributions vanish for \(\eta<1\) and give the corresponding \(\ind{(\eta=1)}\) terms in \Cref{thm:risk-asym} at \(\eta=1\).

    \noindent\textbf{Source empirical-mean and affine components.}
    By \Cref{lem:source-mean-resolvent},
    \begin{align*}
        \frac{n_0^\eta}{p}\|d_e\|_2^2
        &=n_0^{\eta-1}
        \left\{K_{2,n}(\lambda;\Sigma_0)+\op(1)\right\},\\
        n_0^\eta\|h_e\|_2^2
        &=n_0^{\eta-1}
        \left[
        \frac{\phi_{0,n}}{\lambda}
        \{K_{1,n}(\lambda;\Sigma_0)-K_{2,n}(\lambda;\Sigma_0)\}
        +\op(1)
        \right].
    \end{align*}
    These contributions likewise vanish for \(\eta<1\) and give the
    corresponding \(\ind{(\eta=1)}\) terms at \(\eta=1\).
    The affine-normalisation direction \(n_0^{-1}\one_{n_0}\) has squared norm
    \(n_0^{-1}\) and is orthogonal to \(h_g,h_\nu,h_1,h_e\), because all
    four centred components belong to \(\one_{n_0}^{\perp}\). Its scaled
    contribution is therefore \(n_0^{\eta-1}\), which yields the remaining
    \(\ind{(\eta=1)}\) term in the variance equivalent.

    Multiplying the signal contributions by \(r^2\) and the weight contributions by \(\sigma_0^2\) gives the corresponding summands in \(\sB_{\eta,n}(\lambda)\) and \(\sV_{\eta,n}(\lambda)\), uniformly over \(\Lambda\).

\end{proof}

\subsubsection{Mixed-component negligibility}

\begin{lemma}[Mixed component negligibility]
    \label{lem:mixed-component-negligibility}
    Under the conditions of \Cref{lem:pure-component-de}, let
    \(d_\nu,d_1,d_e,d_g\) and \(h_g,h_\nu,h_1,h_e\) be the components
    defined above. Uniformly on \(\Lambda\), each of the six distinct
    signal pairwise products, multiplied by
    \(n_0^\eta/p\), is \(\op(1)\), and each of the six distinct pairwise products
    among the centred-weight components, multiplied by \(n_0^\eta\), is
    \(\op(1)\).
\end{lemma}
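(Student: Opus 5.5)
The plan is to treat the twelve cross terms according to which sources of randomness they mix. There are four sources: the deterministic shift \(\nu_\Delta\), the treated-mean fluctuation \(\epsilon_1\) (independent of \(X_0\) and \(\mathcal F_{\mathrm{aux}}\)), the source-mean fluctuation \(e_0\), and the design-independent centred weight \(g=C_0\gamma\). Each cross product is a bilinear form \(a^\top K_\lambda b\) with a resolvent-type kernel \(K_\lambda\) built from \(M\), \(M^2\), \(R_\lambda^c\), \((R_\lambda^c)^2\), \(X_0^cM\) or \(M(X_0^c)^\top\). Every such kernel has operator norm bounded by a power of \(\lambda_-^{-1}\), after the \(n_0^{-1/2}\) normalisations already present in the \(h\)-components. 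I will prove a pointwise \(\op(1)\) bound for each pair at fixed \(\lambda\). Uniformity over \(\Lambda\) will then follow from the same Lipschitz-plus-finite-net argument used in \Cref{lem:uniform-ridge-de}, since every cross term is Lipschitz in \(\lambda\) with \(\Op(1)\) constants.

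\textbf{Pairs involving \(\epsilon_1\).} These pairs are \((d_1,\cdot)\) and \((h_1,\cdot)\). Conditional on \((X_0,\mathcal F_{\mathrm{aux}})\), \(\epsilon_1\) is centred with covariance \(\Sigma_1/n_1\). A cross term \(\epsilon_1^\top K_\lambda b\) therefore has conditional second moment at most \(C\|K_\lambda b\|_2^2/n_1\).

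For the signal pairs, scaled by \(n_0^\eta/p\), this gives a bound of order \(n_0^{2\eta}\|b\|_2^2/(p^2n_1)\). The pure-component bounds give \((n_0^\eta/p)\|b\|_2^2=\Op(1)\), so the bound is \(\Op(n_0^{\eta}/(pn_1))=o(1)\).

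For the weight pairs, the analogous computation gives \(\Op(n_0^\eta/n_1)\cdot\Op(n_0^{-1})\cdot n_0^\eta\|b\|^2\). This is \(o(1)\) because \(p/n_1\) and \(p/n_0\) are bounded, and the \(\eta=1\) boundary is covered by Chebyshev. Concretely, for \(d_1\) against \(d_\nu\), Cauchy--Schwarz alone only gives \(\Op(1)\cdot n_0^{(\eta-1)/2}\), which fails at \(\eta=1\). So I will use the conditional-variance argument instead, applying \Cref{lem:treated-mean-qf} for the bilinear concentration.

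\textbf{Pairs involving \(g\) against \(\nu_\Delta\) or \(e_0\).} Here I will exploit design independence: \(g\in\one^\perp\) is fixed given \(\mathcal F_{\mathrm{aux}}\). Writing \(X_0^c=C_0Z_0\Sigma_0^{1/2}\), the cross term with \(\nu_\Delta\) is \(\lambda^2 g^\top X_0^c M^2\nu_\Delta\), a bilinear form in the fixed vectors \(g\) and \(\nu_\Delta\).

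I plan to apply \Cref{lem:external-trace-replacement}, combined with a leave-one-structure argument. The goal is to show that \(g^\top X_0^c F(S_0^c) u\) for fixed \(u\) has mean close to \(0\), since \(\EE\) over the sign-symmetric rotation of \(g\)-directions vanishes asymptotically, and fluctuations of order \(\|g\|\|u\|\sqrt{p}/\sqrt{n_0}\cdot n_0^{-1/2}\). Under the scaling \(n_0^\eta\|g\|^2=\Op(1)\) and \(n_0^\eta\|u\|^2/p=\Op(1)\), the scaled product is then \(\op(1)\).

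The pairs with \(e_0\) are handled through \Cref{lem:source-mean-resolvent} and \Cref{lem:global-linearized-resolvent}. These give \(e_0^\top F u=\Op(n_0^{-1/2+\zeta})\|u\|\), and with \(\zeta<1/4\) this makes both \((d_e,d_\nu)\) and \((d_e,d_g)\) negligible. The weight-side analogues follow from the three ridge identities listed in the proof of \Cref{lem:pure-component-de}, which convert weight-space inner products into \(p\)-space forms with kernels \(M-\lambda M^2\) or \(R_\lambda^c\)-type terms.

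\textbf{Main obstacle.} I expect the hardest part to be the \((g,\nu_\Delta)\) and \((g,e_0)\) cross terms at \(\eta=0\). There neither factor is small, so negligibility must come genuinely from the asymptotic orthogonality of a fixed direction \(g\in\RR^{n_0}\) to the row space image of the fixed direction \(\nu_\Delta\in\RR^p\) through \(X_0^c\).

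My first attempt will be to write \(g^\top X_0^c K u = \sum_i g_i z_i^\top\Sigma_0^{1/2}K u\) (with \(z_i\) the centred rows) and apply leave-one-out: replace \(K\) by \(K_{-i}\) via Sherman--Morrison. This makes each summand conditionally mean-zero, so the variance is \(\sum_i g_i^2\,\|K_{-i}u\|^2\lesssim\|g\|^2\|u\|^2\). The leave-one-out corrections are controlled by the \((8+\delta)\)-moment and sub-Gaussian bounds. The scaled square is then \(\Op(n_0^{2\eta}\|g\|^2\|u\|^2/p^2)=\Op(p^{-1})\to0\).

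If the centring \(C_0\) spoils exact independence across rows, I will instead work with uncentred resolvents and absorb the rank-one difference exactly as in \Cref{lem:uniform-ridge-de}.
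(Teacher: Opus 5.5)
Your treatment of the pairs involving \(\epsilon_1\) (conditional second moments given \(\sigma(X_0,\mathcal F_{\mathrm{aux}})\), which is the content of \Cref{lem:treated-mean-bilinear}) and of \((d_\nu,d_e)\), \((h_\nu,h_e)\) (the source-mean cross-form bound with the fixed direction \(\nu_\Delta\)) is the paper's argument.

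For \((d_\nu,d_g)\) and \((h_g,h_\nu)\) you take a different route. The paper reduces \(M_\lambda^c\) to \(M_0\) by Sherman--Morrison and applies the cross-block local-law bound \(\nu_\Delta^\top M_0^qX^\top g=\Op(n_0^\zeta)\|\nu_\Delta\|_2\|g\|_2\). You instead propose a moment-based leave-one-out variance bound. That is viable, because after scaling there is a factor \(\sqrt p\) of slack. Three corrections, however:
\begin{itemize}
\item \Cref{lem:external-trace-replacement} plays no role here, since it concerns quadratic forms in \(g\), not these cross forms.
\item The mean is in fact exactly zero, because the rows of \(X_0^c\) are exchangeable and sum to zero.
\item The summands \(g_ix_i^\top K_{-i}u\) are not conditionally uncorrelated, because \(K_{-i}\) depends on \(x_j\). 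The off-diagonal covariances therefore need a leave-two-out bound.
\end{itemize}

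The genuine gap is in \((d_e,d_g)\) and \((h_g,h_e)\). There you invoke \(w^\top(M_\lambda^c)^qe_0=\Op(n_0^{-1/2+\zeta})\|w\|_2\) with \(w\) built from \((X_0^c)^\top g\). \Cref{lem:source-mean-resolvent} gives this only for \(w\) measurable with respect to a sigma-field independent of \(Z_0\), and \((X_0^c)^\top g\) is a function of the source design. The bound is false for general design-dependent \(w\): for \(w=e_0\), one has \(e_0^\top M_\lambda^ce_0=1/(\lambda a_n)-1\asymp1\), while \(\|e_0\|_2\asymp1\).

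This pair is exactly where a cancellation is needed. We have \(\|d_e\|_2=\Op(1)\) and \(\|d_g\|_2=\Op(n_0^{(1-\eta)/2})\), so Cauchy--Schwarz gives \((n_0^\eta/p)|d_e^\top d_g|=\Op(n_0^{(\eta-1)/2})\). That suffices for \(\eta<1\) but is only \(\Op(1)\) at \(\eta=1\). The push-through identity gives \(h_g^\top h_e=-(\lambda n_0)^{-1}d_e^\top d_g\), so the weight pair has the same problem. Your ``main obstacle'' is therefore misplaced for this pair: \(\eta=0\) is trivial, \(\eta=1\) is the hard case, and the proposal contains no valid argument for it.

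The missing idea is an exact reduction. Set \(u=n_0^{-1/2}\one_{n_0}\), so \(e_0=B_0^\top u\), and note \((X_0^c)^\top g=\sqrt{n_0}B_0^\top g\) because \(C_0g=g\). Sherman--Morrison, together with \(u^\top g=0\), gives \(M_\lambda^ce_0=B_0^\top R_0u/(\lambda a)\) and \(M_\lambda^cB_0^\top g=B_0^\top R_0g-(c_1/a)B_0^\top R_0u\). Hence
\[
    d_e^\top d_g=\frac{\lambda^2\sqrt{n_0}}{a^2}\,(bc_1-ac_2),
    \qquad a=u^\top R_0u,\quad b=u^\top R_0^2u,\quad c_j=g^\top R_0^ju.
\]
The cancellation then comes from the anisotropic row-resolvent law applied to the fixed orthogonal pair \((g,u)\). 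Since \(g^\top u=0\), this gives \(c_j=\op(\|g\|_2)\) uniformly on \(\Lambda\). With \(a\) bounded below and \(b=\Op(1)\), the scaled product is \(\op(n_0^{(\eta-1)/2})\), which is \(\op(1)\) at \(\eta=1\).
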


\begin{proof}[Proof of \Cref{lem:mixed-component-negligibility}]
    Write \(n=n_0\), so \(p\asymp n\).
    The six signal products are \(d_\nu^\top d_1\), \(d_\nu^\top d_e\), \(d_\nu^\top d_g\), \(d_1^\top d_e\), \(d_1^\top d_g\), and \(d_e^\top d_g\); the six centred-weight products are \(h_g^\top h_\nu\), \(h_g^\top h_1\), \(h_g^\top h_e\), \(h_\nu^\top h_1\), \(h_\nu^\top h_e\), and \(h_1^\top h_e\).
    Conditional on \(\mathcal F_{\mathrm{aux}}\), \(\nu_\Delta\) and \(g\) satisfy deterministic bounds, while
    \[
        \begin{aligned}
        \|\nu_\Delta\|_2&=\cO(\sqrt p\,n^{-\eta/2}), &
        \|\epsilon_1\|_2&=\Op(1), &
        \|g\|_2&=\cO(n^{-\eta/2}),\\
        \|e_0\|_2&=\Op(1), &
        \|X_0^c\|_{\oper}&=\Op(\sqrt n).&&
        \end{aligned}
    \]
    Also, \(\inf_{\lambda\in\Lambda}a_n(\lambda)\) is bounded away from zero with probability tending to one.
    Fix \(\zeta\in(0,1/4)\) as in \Cref{asm:technical-source-resolvent}.
    For the treated-mean bounds, take the source-measurable directions
    \[
        v_\nu=\nu_\Delta,\qquad v_e=e_0,\qquad
        v_g^{(d)}=(X_0^c)^\top g,\qquad
        v_g^{(h)}=(X_0^c)^\top h_g.
    \]
    Their norms are \(\cO(\sqrt p\,n^{-\eta/2})\), \(\Op(1)\), and \(\Op(\sqrt n\,n^{-\eta/2})\) for each \(g\)-direction.
    The bounded \(\mathcal H_0\)-measurable resolvent factors and their \(\lambda\)-derivatives preserve these orders, so \Cref{lem:treated-mean-bilinear} applies with the corresponding \(r_n\).

    \begin{center}
    \scriptsize
    \setlength{\tabcolsep}{3pt}
    \renewcommand{\arraystretch}{1.16}
    \begin{tabular}{p{0.13\textwidth}p{0.20\textwidth}p{0.22\textwidth}p{0.36\textwidth}}
    \hline
    Product & Unscaled order & Normalised order & Controlling input \\
    \hline
    \(d_\nu^\top d_1\)
    & \(\Op(n^{-\eta/2})\)
    & \(\Op(n^{-1+\eta/2})\)
    & \Cref{lem:treated-mean-bilinear}, conditional on the source sample \\
    \(d_\nu^\top d_e\)
    & \(\Op(n^{-\eta/2+\zeta})\)
    & \(\Op(n^{-1+\eta/2+\zeta})\)
    & \Cref{lem:global-linearized-resolvent,lem:source-mean-resolvent} \\
    \(d_\nu^\top d_g\)
    & \(\Op(n^{1/2-\eta+\zeta})\)
    & \(\Op(n^{-1/2+\zeta})\)
    & \Cref{lem:global-linearized-resolvent} \\
    \(d_1^\top d_e\)
    & \(\Op(n^{-1/2})\)
    & \(\Op(n^{-3/2+\eta})\)
    & \Cref{lem:treated-mean-bilinear}, conditional on \(X_0\) \\
    \(d_1^\top d_g\)
    & \(\Op(n^{-\eta/2})\)
    & \(\Op(n^{-1+\eta/2})\)
    & \Cref{lem:treated-mean-bilinear} and the pure \(g\)-quadratic bound \\
    \(d_e^\top d_g\)
    & \(\op(n^{(1-\eta)/2})\)
    & \(\op(n^{(\eta-1)/2})\)
    & Exact reduction below and
    \Cref{lem:global-linearized-resolvent,lem:source-mean-resolvent} \\
    \hline
    \(h_g^\top h_\nu\)
    & \(\Op(n^{-1/2-\eta+\zeta})\)
    & \(\Op(n^{-1/2+\zeta})\)
    & \Cref{lem:global-linearized-resolvent} \\
    \(h_g^\top h_1\)
    & \(\Op(n^{-1-\eta/2})\)
    & \(\Op(n^{-1+\eta/2})\)
    & \Cref{lem:treated-mean-bilinear} and the pure \(g\)-quadratic bound \\
    \(h_g^\top h_e\)
    & \(\op(n^{-(1+\eta)/2})\)
    & \(\op(n^{(\eta-1)/2})\)
    & Exact reduction below and
    \Cref{lem:centered-row-compression,lem:global-linearized-resolvent} \\
    \(h_\nu^\top h_1\)
    & \(\Op(n^{-1-\eta/2})\)
    & \(\Op(n^{-1+\eta/2})\)
    & \Cref{lem:treated-mean-bilinear} \\
    \(h_\nu^\top h_e\)
    & \(\Op(n^{-1-\eta/2+\zeta})\)
    & \(\Op(n^{-1+\eta/2+\zeta})\)
    & \Cref{lem:global-linearized-resolvent,lem:source-mean-resolvent} \\
    \(h_1^\top h_e\)
    & \(\Op(n^{-3/2})\)
    & \(\Op(n^{-3/2+\eta})\)
    & \Cref{lem:treated-mean-bilinear}, conditional on \(X_0\) \\
    \hline
    \end{tabular}
    \end{center}

    For terms involving \(\epsilon_1\), condition on
    \(\mathcal H_0=\sigma(X_0,\mathcal F_{\mathrm{aux}})\) and apply
    \Cref{lem:treated-mean-bilinear} with the four directions displayed above.
    Uniformity follows by differentiating the bounded resolvent factors; the
    resulting scaled direction and matrix derivative bounds are precisely those
    in \Cref{lem:treated-mean-bilinear}. The explicit \(n^{-1}\) factors in the
    weight components give the corresponding rows. All fixed resolvent powers
    use contours contained in the same
    \(\mathcal D_\Lambda(r_\Lambda)\).

    For \(d_\nu^\top d_g\) and \(h_g^\top h_\nu\), conditional on \(\mathcal F_{\mathrm{aux}}\), the vectors \(g\) and \(\nu_\Delta\) are fixed and independent of \(Z_0\).
    Applying
    the cross-block conclusion of \Cref{lem:global-linearized-resolvent} and the exact Sherman--Morrison identity
    \[
        M_\lambda^c-M_0=\frac{M_0e_0e_0^\top M_0}{\lambda a_n}
    \]
    gives the stated orders for \(d_\nu^\top d_g\) and
    \(h_g^\top h_\nu\).

    For an \(\epsilon_1\)-term controlled through the source resolvent, use
    \[
        \mathcal A_n:=\sigma(\mathcal F_{\mathrm{aux}},X_1).
    \]
    This sigma-field is independent of \(Z_0\), while \(\epsilon_1\), \(g\), and
    \(\nu_\Delta\) are \(\mathcal A_n\)-measurable. The conditional clause of
    \Cref{asm:technical-source-resolvent} therefore applies under the
    conditional source-design law; the
    \(\Op(1)\) norm is handled on the usual increasing high-probability events.

    For the source-mean cross-form bounds, use
    \[
        M_\lambda^c e_0=\frac{B_0^\top R_0u}{\lambda a_n},
        \qquad
        h_e=\frac{C_0R_0u}{\sqrt n\,a_n},
    \]
    where the expression for \(h_e\) follows from
    \[
        h_e=-\frac1nX_0^cM_\lambda^ce_0
        \qquad\text{and}\qquad
        C_0B_0M_\lambda^ce_0=-\frac{C_0R_0u}{a_n}.
    \]
    The cross-form conclusion of \Cref{lem:source-mean-resolvent} gives
    \[
        w^\top (M_\lambda^c)^ke_0
        =\Op(n^{-1/2+\zeta}\|w\|_2),
        \qquad w\in\{\nu_\Delta,\epsilon_1\}.
    \]

    It remains to treat \(d_e^\top d_g\) and \(h_g^\top h_e\). Write
    \[
        a=a_n(\lambda)=u^\top R_0u,
        \qquad
        b=b_n(\lambda)=u^\top R_0^2u,
        \qquad
        c_j=g^\top R_0^ju,\quad j=1,2.
    \]
    Let \(M_0=(B_0^\top B_0+\lambda I_p)^{-1}\). By Sherman--Morrison and
    \(M_0B_0^\top=B_0^\top R_0\),
    \[
        M_\lambda^ce_0=\frac{B_0^\top R_0u}{\lambda a}.
    \]
    Moreover, because \(u^\top g=0\),
    \begin{align*}
        e_0^\top M_0B_0^\top g
        &=u^\top B_0M_0B_0^\top g\\*
        &=u^\top(I_n-\lambda R_0)g
        =-\lambda c_1.
    \end{align*}
    Hence
    \[
        M_\lambda^cB_0^\top g
        =B_0^\top R_0g-\frac{c_1}{a}B_0^\top R_0u.
    \]
    Using \(X_0^c=\sqrt n\,C_0B_0\) and \(C_0g=g\), it follows that
    \[
        d_e=\frac{B_0^\top R_0u}{a},
        \qquad
        d_g=\lambda\sqrt n
        \left(B_0^\top R_0g-\frac{c_1}{a}B_0^\top R_0u\right).
    \]
    Since \(R_0G_0R_0=R_0-\lambda R_0^2\), direct multiplication gives
    \begin{align*}
        d_e^\top d_g
        &=\frac{\lambda\sqrt n}{a}
        \left[
        u^\top R_0G_0R_0g
        -\frac{c_1}{a}u^\top R_0G_0R_0u
        \right]\\
        &=\frac{\lambda^2\sqrt n}{a^2}
        \left(-ac_2+bc_1\right).
    \end{align*}

    The centred row-resolvent identity in
    \Cref{lem:centered-row-compression} gives
    \[
        h_g=\lambda\left(R_0g-\frac{c_1}{a}R_0u\right),
        \qquad
        h_e=\frac{C_0R_0u}{\sqrt n\,a}.
    \]
    Because \(h_g\in\one_n^\perp\),
    \(h_g^\top C_0R_0u=h_g^\top R_0u\). Therefore,
    \begin{align*}
        h_g^\top h_e
        &=\frac{\lambda}{\sqrt n\,a}
        \left[
        g^\top R_0^2u-\frac{c_1}{a}u^\top R_0^2u
        \right]\\
        &=\frac{\lambda}{\sqrt n\,a^2}
        \left(ac_2-bc_1\right).
    \end{align*}

    Since \(g^\top u=0\), the \(k=1,2\) row-resolvent conclusions in
    \Cref{lem:global-linearized-resolvent} yield, uniformly over \(\Lambda\),
    \[
        \sup_{\lambda\in\Lambda}|c_j(\lambda)|
        =\op(\|g\|_2),
        \qquad j=1,2.
    \]
    Moreover, \(\inf_{\lambda\in\Lambda}a_n(\lambda)\) is bounded away from
    zero with probability tending to one, while
    \(\sup_{\lambda\in\Lambda}b_n(\lambda)=\Op(1)\). Since
    \(\|g\|_2=\cO(n^{-\eta/2})\), the exact formulas imply
    \begin{align*}
        \sup_{\lambda\in\Lambda}|d_e^\top d_g|
        &=\op\left(n^{(1-\eta)/2}\right),\\
        \sup_{\lambda\in\Lambda}|h_g^\top h_e|
        &=\op\left(n^{-(1+\eta)/2}\right).
    \end{align*}
    Because \(p\asymp n\), after normalisation,
    \[
        \frac{n^\eta}{p}
        \sup_{\lambda\in\Lambda}|d_e^\top d_g|
        =\op\left(n^{(\eta-1)/2}\right)=\op(1),
        \qquad
        n^\eta\sup_{\lambda\in\Lambda}|h_g^\top h_e|
        =\op\left(n^{(\eta-1)/2}\right)=\op(1).
    \]
    For \(\eta<1\), the displayed deterministic factor vanishes; at \(\eta=1\), the little-\(\op\) row-resolvent bounds give convergence.
    Together with the bounds in the table, all twelve normalised pairwise products are \(\op(1)\), completing the proof.
\end{proof}

\subsubsection[Proof of Theorem \ref{thm:risk-asym}]{Proof of \Cref{thm:risk-asym}}

\begin{proof}[Proof of \Cref{thm:risk-asym}]
    By \Cref{prop:global-source-resolvent-sufficient},
    \Cref{asm:rmt,asm:global-source-resolvent} imply
    \Cref{asm:technical-source-resolvent}.
    Recall the component decompositions \(d_n=d_\nu+d_1+d_e+d_g\) and \(\gamma_\lambda=n_0^{-1}\one_{n_0}+h_g+h_\nu+h_1+h_e\).
    By \Cref{lem:pure-component-de}, the pure signal and centred-weight components admit the stated uniform deterministic equivalents, including the source empirical-mean terms.
    The affine-normalisation direction contributes the remaining \(\ind{(\eta=1)}\sigma_0^2\) term.

    By \Cref{lem:mixed-component-negligibility}, all distinct pairwise products among the signal components and among the centred-weight components are negligible after the corresponding normalisation.
    The affine direction is orthogonal to the centred-weight components.
    Therefore,
    \[
        \sup_{\lambda\in\Lambda}
        \left|n_0^\eta B_n(\lambda)-B_{\eta,n}(\lambda)\right|
        \pto0
    \]
    and
    \[
        \sup_{\lambda\in\Lambda}
        \left|n_0^\eta V_n(\lambda)-V_{\eta,n}(\lambda)\right|
        \pto0.
    \]
    Since \(R_n=B_n+V_n\), the corresponding conclusion for \(R_n\) follows.

    These convergences hold under the conditional design law for \(\PP\)-almost every admissible realisation of \(\mathcal F_{\mathrm{aux}}\).
    Since the corresponding conditional error probabilities converge almost surely to zero and are bounded by one, dominated convergence gives the same conclusions under the joint law.
\end{proof}

\subsection{Exact risk characterisation and analytic properties}
\label{app:proof-fixed-limit-corollaries}

\subsubsection[Proof of Corollary \ref{cor:fixed-risk-limit}]{Proof of \Cref{cor:fixed-risk-limit}}

\begin{proof}[Proof of \Cref{cor:fixed-risk-limit}]
    By \Cref{asm:rmt}, the prelimit measures, and hence the limits in \Cref{asm:spectral-limits}, are supported on the common compact interval \([c_\Sigma,C_\Sigma]\).
    Under \Cref{asm:spectral-limits}, let \(v(\lambda)>0\) solve
    \[
        \frac1{v(\lambda)}
        =\lambda+\phi_0\int\frac{s}{1+v(\lambda)s}\,\rd H_0(s),
    \]
    and define
    \begin{align*}
        \widetilde v_v(\lambda)
        &:=\left[v(\lambda)^{-2}
        -\phi_0\int\frac{s^2}{\{1+v(\lambda)s\}^2}\,\rd H_0(s)\right]^{-1},\\
        \widetilde v_{b,0}(\lambda)
        &:=\frac{\phi_0\int s\{1+v(\lambda)s\}^{-2}\,\rd H_0(s)}
        {\widetilde v_v(\lambda)^{-1}},\\
        \widetilde v_{b,1}(\lambda)
        &:=\frac{\phi_0\int s\{1+v(\lambda)s\}^{-2}\,\rd H_{1\mid0}(s)}
        {\widetilde v_v(\lambda)^{-1}},\\
        \widetilde v_{b,00}(\lambda)
        &:=\frac{\phi_0\int s^2\{1+v(\lambda)s\}^{-2}\,\rd H_0(s)}
        {\widetilde v_v(\lambda)^{-1}}.
    \end{align*}
    Write
    \begin{align*}
        J_{\nu,b}(\lambda)
        &:=\int\frac{\widetilde v_{b,0}(\lambda)s+1}
        {\{1+v(\lambda)s\}^2}\,\rd G_\nu(s),\\
        J_{\nu,v}(\lambda)
        &:=\int\frac{\{v(\lambda)-\widetilde v_{b,0}(\lambda)\}s}
        {\{1+v(\lambda)s\}^2}\,\rd G_\nu(s),\\
        T_0&:=\int1\,\rd H_{1\mid0}(s),\\
        T_1(\lambda)&:=\int\frac1{1+v(\lambda)s}\,\rd H_{1\mid0}(s),\\
        T_2(\lambda)&:=\widetilde v_{b,1}(\lambda)
        \int\frac{s}{\{1+v(\lambda)s\}^2}\,\rd H_0(s)
        +\int\frac1{\{1+v(\lambda)s\}^2}\,\rd H_{1\mid0}(s),\\
        U_1(\lambda)
        &:=\int\frac{s}{1+v(\lambda)s}\,\rd H_0(s),\\
        U_2(\lambda)
        &:=[1+\widetilde v_{b,00}(\lambda)]
        \int\frac{s}{\{1+v(\lambda)s\}^2}\,\rd H_0(s).
    \end{align*}
    The fixed-limit functions are
    \begin{align}
        \sB_\eta(\lambda)
        :=&\frac{\lambda^2r^2\varrho_\eta^2}{\phi_0}
        \{v(\lambda)-\lambda\widetilde v_v(\lambda)\}
        +r^2\rho_\eta^2J_{\nu,b}(\lambda)\notag\\
        &+\ind{(\eta=1)}\frac{r^2\phi_1}{\phi_0}
        \{T_0-2T_1(\lambda)+T_2(\lambda)\}
        +\ind{(\eta=1)}r^2U_2(\lambda),\notag\\
        \sV_\eta(\lambda)
        :=&\sigma_0^2\varrho_\eta^2\lambda^2
        \widetilde v_v(\lambda)
        +\frac{\sigma_0^2\phi_0\rho_\eta^2}{\lambda}
        J_{\nu,v}(\lambda)\notag\\
        &+\ind{(\eta=1)}\frac{\sigma_0^2\phi_1}{\lambda}
        \{T_1(\lambda)-T_2(\lambda)\}\notag\\
        &+\ind{(\eta=1)}\sigma_0^2
        \left[1+\frac{\phi_0}{\lambda}
        \{U_1(\lambda)-U_2(\lambda)\}\right],\notag\\
        \sR_\eta:=&\sB_\eta+\sV_\eta.
        \label{eq:fixed-limit-functions}
    \end{align}

    Fix an admissible auxiliary realisation in the probability-one set on
    which \Cref{asm:spectral-limits} holds and for which the conditional
    conclusions of \Cref{thm:risk-asym} apply. Throughout this proof,
    probability limits are taken under the conditional law of \((X_0,X_1)\)
    given \(\mathcal F_{\mathrm{aux}}\). The fixed-point equations and
    the common spectral bounds imply that \(v_n(\lambda)\) and \(v(\lambda)\) lie in a
    common compact subset of \((0,\infty)\), uniformly over
    \(\lambda\in\Lambda\). Weak convergence of the spectral measures therefore
    yields uniform convergence of the corresponding resolvent integrals over
    \(\Lambda\) and this compact set.

    The fixed-point equations are uniformly stable. Indeed,
    \[
        \widetilde v_{v,n}(\lambda)^{-1}
        =
        \frac{\lambda}{v_n(\lambda)}
        +
        \phi_{0,n}\int
        \frac{s}
        {v_n(\lambda)\{1+v_n(\lambda)s\}^2}
        \,\rd H_{0,p}(s)
        \geq \lambda_-^2.
    \]
    A uniform mean-value argument then gives
    \[
        \sup_{\lambda\in\Lambda}
        |v_n(\lambda)-v(\lambda)|
        \to0.
    \]
    Since the denominators defining the derivative transforms are uniformly
    bounded away from zero, it follows that, uniformly on \(\Lambda\),
    \[
        \begin{aligned}
            \widetilde v_{v,n}
            &\to \widetilde v_v,
            &
            \widetilde v_{b,n}(\,\cdot\,;I_p)
            &\to \widetilde v_{b,0},\\
            \widetilde v_{b,n}(\,\cdot\,;\Sigma_1)
            &\to \widetilde v_{b,1},
            &
            \widetilde v_{b,n}(\,\cdot\,;\Sigma_0)
            &\to \widetilde v_{b,00}.
        \end{aligned}
    \]
    Here the limits involving \(\Sigma_1\) follow from
    \begin{equation}\label{eq:weighted-spectral-trace}
        \otr\{\Sigma_1f(\Sigma_0)\}
        =
        \int f(s)\,\rd H_{1\mid0,p}(s)
    \end{equation}
    for every bounded Borel function \(f\); no commutativity between
    \(\Sigma_0\) and \(\Sigma_1\) is required. The same argument, using
    \(G_{\nu,p}\dto G_\nu\), gives uniform convergence of
    \(J_{\nu,b,n}\) and \(J_{\nu,v,n}\) to
    \(J_{\nu,b}\) and \(J_{\nu,v}\). Similarly, the trace terms converge
    uniformly to \(T_0,T_1,T_2,U_1\), and \(U_2\). In particular,
    \[
        \begin{aligned}
            K_{2,n}(\lambda;\Sigma_0)
            &\to U_2(\lambda),\\
            K_{1,n}(\lambda;\Sigma_0)
            -K_{2,n}(\lambda;\Sigma_0)
            &\to U_1(\lambda)-U_2(\lambda),
        \end{aligned}
    \]
    uniformly over \(\lambda\in\Lambda\).

    Combining these limits with
    \(\phi_{t,n}\to\phi_t\),
    \(\rho_{\eta,n}^2\to\rho_\eta^2\), and
    \(\varrho_{\eta,n}^2\to\varrho_\eta^2\), and substituting into the
    deterministic equivalents in \Cref{thm:risk-asym}, yields
    \[
        \sup_{\lambda\in\Lambda}
        \max\left\{
            |\sB_{\eta,n}(\lambda)-\sB_\eta(\lambda)|,
            |\sV_{\eta,n}(\lambda)-\sV_\eta(\lambda)|,
            |\sR_{\eta,n}(\lambda)-\sR_\eta(\lambda)|
        \right\}
        \to0.
    \]
    The triangle inequality and \Cref{thm:risk-asym} then give
    \begin{align*}
        &\sup_{\lambda\in\Lambda}
        \max\left\{
            |n_0^\eta B_n(\lambda)-\sB_\eta(\lambda)|,
            |n_0^\eta V_n(\lambda)-\sV_\eta(\lambda)|,
            |n_0^\eta R_n(\lambda)-\sR_\eta(\lambda)|
        \right\}\\
        &\quad\leq
        \sup_{\lambda\in\Lambda}
        \max\left\{
            |n_0^\eta B_n(\lambda)-\sB_{\eta,n}(\lambda)|,
            |n_0^\eta V_n(\lambda)-\sV_{\eta,n}(\lambda)|,
            |n_0^\eta R_n(\lambda)-\sR_{\eta,n}(\lambda)|
        \right\}\\
        &\qquad+
        \sup_{\lambda\in\Lambda}
        \max\left\{
            |\sB_{\eta,n}(\lambda)-\sB_\eta(\lambda)|,
            |\sV_{\eta,n}(\lambda)-\sV_\eta(\lambda)|,
            |\sR_{\eta,n}(\lambda)-\sR_\eta(\lambda)|
        \right\}
        \pto0.
    \end{align*}

    For each fixed \(\lambda\in\Lambda\), the preceding convergence implies
    \[
        \begin{aligned}
            B_n(\lambda)
            &=n_0^{-\eta}\{\sB_\eta(\lambda)+\op(1)\},\\
            V_n(\lambda)
            &=n_0^{-\eta}\{\sV_\eta(\lambda)+\op(1)\},\\
            R_n(\lambda)
            &=n_0^{-\eta}\{\sR_\eta(\lambda)+\op(1)\}.
        \end{aligned}
    \]
    If \(\eta>0\), then \(R_n(\lambda)\pto0\). If, in addition,
    \(\sR_\eta(\lambda)>0\), then
    \(n_0^\eta R_n(\lambda)\pto\sR_\eta(\lambda)>0\); if
    \(\sR_\eta(\lambda)=0\), then
    \(R_n(\lambda)=\op(n_0^{-\eta})\). When \(\eta=0\),
    \(R_n(\lambda)\pto\sR_0(\lambda)\), so the conditional random-effects
    prediction risk vanishes in probability if and only if
    \(\sR_0(\lambda)=0\). The componentwise conclusions follow identically.

    The unconditional conclusion follows from the same conditional-probability
    and dominated-convergence argument used in the proof of
    \Cref{thm:risk-asym}.
\end{proof}

\subsubsection[Proof of Corollary \ref{cor:target-sampling-invariant}]{Proof of \Cref{cor:target-sampling-invariant}}
\begin{proof}[Proof of \Cref{cor:target-sampling-invariant}]
    Under \Cref{asm:spectral-limits}, when \(\eta\in[0,1)\), all terms
    involving \((\phi_1,H_{1\mid0})\) vanish from the fixed limits.
\end{proof}

\subsubsection[Regularity of the limiting risk path (Lemma \ref{lem:risk-lipschitz})]{Regularity of the limiting risk path (\Cref{lem:risk-lipschitz})}

\begin{lemma}[Regularity of the limiting risk path]
    \label{lem:risk-lipschitz}
    Under \Cref{asm:rmt,asm:design-independent-regime,asm:spectral-limits}, for every compact \(\Lambda=[\lambda_-,\lambda_+]\subset(0,\infty)\), the maps \(\sB_\eta\), \(\sV_\eta\), and \(\sR_\eta\) are Lipschitz continuous on \(\Lambda\) and real analytic on \(\RR^{++}\).
\end{lemma}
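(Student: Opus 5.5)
The plan is to reduce everything to regularity of the scalar fixed point \(v(\lambda)\) and then propagate it through the explicit formulas \eqref{eq:fixed-limit-functions}. First I would show that \(v\) extends to a real-analytic function on \(\RR^{++}\). Define
\[
F(v,\lambda):=\frac1v-\lambda-\phi_0\int\frac{s}{1+vs}\,\rd H_0(s)
\]
on \((0,\infty)\times(0,\infty)\). Because \(H_0\) is supported on the compact interval \([c_\Sigma,C_\Sigma]\subset(0,\infty)\), the integrand is analytic in \(v\) in a complex neighbourhood of the positive axis, uniformly in \(s\). Hence \(F\) is jointly real analytic.

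Next I would check the monotonicity and the nondegenerate derivative. \(F\) is strictly decreasing in \(v\), with
\[
\partial_vF=-v^{-2}+\phi_0\int s^2(1+vs)^{-2}\,\rd H_0(s)=-\widetilde v_v(\lambda)^{-1}.
\]
The stability identity displayed in the proof of \Cref{cor:fixed-risk-limit} gives \(\widetilde v_v^{-1}\ge\lambda/v>0\). Uniqueness of the positive root together with the analytic implicit function theorem then yields that \(v\) is real analytic on \(\RR^{++}\), with \(v'(\lambda)=-\widetilde v_v(\lambda)\).

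I would then propagate analyticity to the remaining quantities. Each of \(\widetilde v_v\), \(\widetilde v_{b,0}\), \(\widetilde v_{b,1}\), \(\widetilde v_{b,00}\), \(J_{\nu,b}\), \(J_{\nu,v}\), \(T_1\), \(T_2\), \(U_1\), \(U_2\) has the form of integrals
\[
\int g(s)\{1+v(\lambda)s\}^{-k}\,\rd\mu(s),
\]
where \(\mu\in\{H_0,H_{1\mid0},G_\nu\}\) is a finite measure on \([c_\Sigma,C_\Sigma]\) and \(g\) is a polynomial. These are combined through sums, products, and quotients whose denominators are \(\widetilde v_v^{-1}>0\) or \(\lambda>0\). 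Integrals of this type are analytic functions of \(v\) near the positive axis: dominated convergence on a complex neighbourhood where \(|1+vs|\) stays bounded below justifies differentiation under the integral. Composing with the analytic \(v(\lambda)\) and using that quotients with nonvanishing denominators preserve analyticity, \(\sB_\eta\) and \(\sV_\eta\), and hence \(\sR_\eta\), are real analytic on \(\RR^{++}\). The constants \(r^2\), \(\sigma_0^2\), \(\phi_t\), \(\rho_\eta^2\), \(\varrho_\eta^2\) and the indicators \(\ind{(\eta=1)}\) enter only as fixed coefficients, so they cause no difficulty.

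Lipschitz continuity on the compact interval \(\Lambda\) then follows immediately: the functions are \(C^1\) on an open set containing \(\Lambda\), so their derivatives are bounded on \(\Lambda\), and the mean value theorem applies. If explicit constants are wanted, one can instead bound \(v\in[v_-,v_+]\subset(0,\infty)\) on \(\Lambda\) directly from the fixed-point equation, using \(1/v\ge\lambda_-\) and \(1/v\le\lambda_++\phi_0C_\Sigma\). Combined with \(|v'|=\widetilde v_v\le\lambda_-^{-2}\), this bounds every derivative in terms of \(\lambda_\pm,c_\Sigma,C_\Sigma,\phi_0\).

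The main obstacle is justifying analyticity of the parametric integrals against general finite limiting measures such as \(G_\nu\) and \(H_{1\mid0}\), which come only from weak limits. I would handle this through the compact-support argument above: weak limits of measures supported in \([c_\Sigma,C_\Sigma]\) remain supported there, and \(H_{1\mid0}\) has total mass at most \(C_\Sigma\).
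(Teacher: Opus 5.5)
Your proposal is correct and follows essentially the paper's argument: the a priori bounds on \(v\) from the fixed-point equation, the stability identity \(\widetilde v_v^{-1}\geq\lambda/v\) giving \(\partial_vF\neq0\) and \(v'=-\widetilde v_v\), the (analytic) implicit function theorem, and compact spectral support to control the component integrals, with Lipschitz continuity following from bounded derivatives on \(\Lambda\). One small correction: the stability identity uses the fixed-point equation, so it only shows \(\partial_vF<0\) at the root. \(F\) is not globally decreasing in \(v\); for instance it eventually increases when \(\phi_0>1\). Uniqueness of the positive root should therefore come from the strict monotonicity of \(v\mapsto\lambda v+\phi_0\int vs(1+vs)^{-1}\,\mathrm{d}H_0(s)\), and that is all the implicit-function step needs.
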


\begin{proof}[Proof of \Cref{lem:risk-lipschitz}]
    For the fixed-point equation
    \(v^{-1}=\lambda+\phi_0\int s(1+vs)^{-1}\,\rd H_0(s)\), we have
    \(v^{-1}\geq\lambda\geq\lambda_-\) and
    \(v^{-1}\leq\lambda_++\phi_0C_\Sigma\), so
    \[
        0 < \underline{v}:= \frac{1}{\lambda_+ + \phi_0 C_\Sigma}
        \leq v(\lambda) \leq \frac{1}{\lambda_-}.
    \]
    Multiplying the fixed-point equation by \(1/v\) gives
    \[
        \frac{1}{v^2} = \frac{\lambda}{v}
        + \phi_0\int \frac{s}{v(1 + vs)}\,\rd H_0(s),
    \]
    and since
    \[
        \frac{s}{v(1 + sv)} - \frac{s^2}{(1 + sv)^2} = \frac{s}{v(1 + vs)^2},
    \]
    we obtain
    \begin{equation}
        \widetilde v_v(\lambda)^{-1}
        = \frac{1}{v^2}
        - \phi_0\int \frac{s^2}{(1 + sv)^2}\,\rd H_0(s)
        = \frac{\lambda}{v}
        + \phi_0\int\frac{s}{v(1 + vs)^2}\,\rd H_0(s)
        \geq \frac{\lambda}{v} \geq \lambda_-^2.
        \label{eq:tvv-bounded}
    \end{equation}
    Thus \(\widetilde v_v\) is positive and bounded by \(\lambda_-^{-2}\).
    For \(F(v,\lambda)=v^{-1}-\lambda-\phi_0\int s(1+sv)^{-1}\,\rd H_0(s)\), \(\partial_vF=-\widetilde v_v^{-1}<0\), so the implicit function theorem gives
    \[
        \frac{\rd}{\rd\lambda}v(\lambda)
        =-\widetilde v_v(\lambda)\in[-\lambda_-^{-2},0].
    \]

    The bounded spectral supports imply that the component maps defining \(\sB_\eta\) and \(\sV_\eta\), and their first derivatives with respect to \((v,\lambda)\), are uniformly bounded for \(\lambda\in\Lambda\) and \(v\in[\underline v,\lambda_-^{-1}]\).
    Together with \eqref{eq:tvv-bounded} and the bound on \(v'\), this proves Lipschitz continuity on \(\Lambda\).
    Since \(F\) and the component maps are real analytic on their domains and \(\partial_vF\neq0\), the real-analytic implicit function theorem gives real analyticity of \(\sB_\eta\), \(\sV_\eta\), and \(\sR_\eta\) on \(\RR^{++}\).
\end{proof}

\subsubsection[Monotonicity (Proposition \ref{prop:bias-monotonicity})]{Monotonicity (\Cref{prop:bias-monotonicity})}
\begin{proposition}[Monotonicity of the integrated squared bias for \(\eta<1\)]
    \label{prop:bias-monotonicity}
    Under the assumptions of \Cref{cor:fixed-risk-limit}, if
    \(\eta\in[0,1)\), then
    \(\lambda\mapsto\sB_\eta(\lambda)\) is nondecreasing on \(\Lambda\).
\end{proposition}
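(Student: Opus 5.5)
The plan is to avoid differentiating the fixed-point formulas directly. Instead, I would read each summand of \(\sB_\eta\) for \(\eta<1\) as the probability limit of an exact finite-sample quadratic form that is monotone in \(\lambda\) for every realisation, and then pass monotonicity to the limit. By the fixed-limit formula \eqref{eq:fixed-limit-functions}, for \(\eta\in[0,1)\) the indicator terms vanish, so
\[
    \sB_\eta(\lambda)=\frac{\lambda^2r^2\varrho_\eta^2}{\phi_0}\{v(\lambda)-\lambda\widetilde v_v(\lambda)\}+r^2\rho_\eta^2J_{\nu,b}(\lambda).
\]
It therefore suffices to show that each of the two summands is nondecreasing.

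\textbf{Step 1: finite-sample monotonicity of the two pure components.} In the component decomposition used for \Cref{thm:risk-asym}, the first summand is the limit of \((n_0^\eta/p)\|d_g\|_2^2\) and the second is the limit of \((n_0^\eta/p)\|d_\nu\|_2^2\). Both are exact spectral quadratic forms.

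For \(d_g\), using \(R_\lambda^c=(G_0^c+\lambda I_{n_0})^{-1}\), we have \(\|d_g\|_2^2=\lambda^2n_0\,g^\top R_\lambda^cG_0^cR_\lambda^c g\). In the eigenbasis of \(G_0^c\), with eigenvalues \(\delta_k\ge0\), this equals \(n_0\sum_k (g^\top u_k)^2\lambda^2\delta_k/(\delta_k+\lambda)^2\). Each summand is nondecreasing in \(\lambda\), since
\[
    \partial_\lambda\{\lambda^2\delta/(\delta+\lambda)^2\}=2\lambda\delta^2/(\delta+\lambda)^3\ge0.
\]
For \(d_\nu\), \(\|d_\nu\|_2^2=\nu_\Delta^\top\lambda^2(M_\lambda^c)^2\nu_\Delta\), and in the eigenbasis of \(S_0^c\) the spectral weights \(\lambda^2/(s+\lambda)^2\) are nondecreasing in \(\lambda\). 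Hence, for every realisation and every \(\lambda\le\lambda'\) in \(\Lambda\), both realised scaled quadratic forms at \(\lambda\) are at most their values at \(\lambda'\).

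\textbf{Step 2: transfer to the limit.} The proof of \Cref{lem:pure-component-de} shows that, uniformly over \(\Lambda\) and in probability,
\[
    \frac{n_0^\eta}{p}\|d_g\|_2^2-\frac{\lambda^2\varrho_{\eta,n}^2}{\phi_{0,n}}\{v_n-\lambda\widetilde v_{v,n}\}\pto0,\qquad
    \frac{n_0^\eta}{p}\|d_\nu\|_2^2-\rho_{\eta,n}^2\int\frac{\widetilde v_{b,n}s+1}{(v_ns+1)^2}\,\rd G_{\nu,p}(s)\pto0.
\]
The proof of \Cref{cor:fixed-risk-limit} shows that these deterministic equivalents converge uniformly to the fixed-limit summands.

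Fix \(\lambda\le\lambda'\). Let \(f_n\) denote either realised component and \(f\) its deterministic limit. Then \(f_n(\lambda')-f_n(\lambda)\ge0\) almost surely, and \(f_n(\lambda')-f_n(\lambda)\pto f(\lambda')-f(\lambda)\), which forces \(f(\lambda')-f(\lambda)\ge0\). Multiplying by \(r^2\) and summing gives monotonicity of \(\sB_\eta\).

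\textbf{Main obstacle.} The delicate point is that the argument needs a sequence realising exactly the limiting parameters \((\phi_0,H_0,G_\nu,\rho_\eta^2,\varrho_\eta^2)\). Under the hypotheses of \Cref{cor:fixed-risk-limit} such a sequence is given, so this is available. However, I would not use the full \(n_0^\eta B_n\): it contains \(d_1\) and \(d_e\), whose spectral weights \((s/(s+\lambda))^2\) are \emph{decreasing}, and it also contains cross terms of no definite sign. Working component by component is what makes the argument go through. The \(d_1,d_e\) contributions are exactly the \(\ind{(\eta=1)}\) terms, which explains why the statement is restricted to \(\eta<1\).

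As a fallback, I would verify monotonicity analytically. Using \(v'=-\widetilde v_v\) from \Cref{lem:risk-lipschitz}, I would differentiate \(\lambda^2\{v-\lambda\widetilde v_v\}\) and \(J_{\nu,b}\) and check the sign via the spectral integral representation. That route is messier because it requires \(\widetilde v_v'\) and \(\widetilde v_{b,0}'\).
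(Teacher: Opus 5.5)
Your proposal is correct and follows essentially the paper's proof. The paper likewise identifies the two summands as probability limits of $\frac{r^2n_0^\eta}{p}\lambda^2u^\top(M_\lambda^c)^2u$ with $u=(X_0^c)^\top g$ and $u=\nu_\Delta$, and shows these are nondecreasing for every realisation via $\partial_\lambda\{\lambda^2u^\top(M_\lambda^c)^2u\}=2\lambda u^\top S_0^c(M_\lambda^c)^3u\ge0$ rather than your eigen-expansion; it then transfers monotonicity through pointwise convergence of differences at two fixed penalties, as you do.

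One minor slip in your side remark does not affect the argument. The term $d_e=\lambda M_\lambda^c e_0$ actually has nondecreasing weights $\lambda^2/(s+\lambda)^2$. Only $d_1$, with weights $s^2/(s+\lambda)^2$, carries the decreasing contribution that breaks the argument at $\eta=1$.
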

\begin{proof}[Proof of \Cref{prop:bias-monotonicity}]
    Let
    \[
        S_n^c=\frac{1}{n_0}(X_0^c)^\top X_0^c,
        \qquad
        M_{n,\lambda}^c=(S_n^c+\lambda I_p)^{-1}.
    \]
    
    From \Cref{cor:fixed-risk-limit}, the leading integrated squared-bias
    component is
    \[
        \sB_{\eta}(\lambda)
        =
        \frac{\lambda^2 r^2\varrho_\eta^2}{\phi_0}
        \{v(\lambda)-\lambda\widetilde v_v(\lambda)\}
        +
        {r^2\rho_\eta^2}
        \int
        \frac{\widetilde v_{b,0}(\lambda)s+1}{\{v(\lambda)s+1\}^2}
        \,\mathrm d G_\nu(s).
    \]
    Here \(s\) denotes the spectral integration variable. The two terms above
    are deterministic equivalents of the two prelimit integrated extrapolation quadratic forms
    \[
        \mathscr B_{\gamma,n}(\lambda)
        :=
        \frac{r^2 n_0^\eta}{p}
        \lambda^2
        g^\top X_0^c (M_{n,\lambda}^c)^2 (X_0^c)^\top g
    \]
    and
    \[
        \mathscr B_{\nu,n}(\lambda)
        :=
        \frac{r^2 n_0^\eta}{p}
        \lambda^2
        \nu_\Delta^\top (M_{n,\lambda}^c)^2\nu_\Delta,
    \]
    up to the normalisations used in \Cref{thm:risk-asym}. By
    \Cref{lem:pure-component-de}, \Cref{lem:mixed-component-negligibility} and \Cref{cor:fixed-risk-limit},
    for each fixed
    \(\lambda\in[\lambda_-,\lambda_+]\), where \(0<\lambda_-<\lambda_+<\infty\),
    \[
        \mathscr B_{\gamma,n}(\lambda)
        \pto
        \frac{\lambda^2 r^2\varrho_\eta^2}{\phi_0}
        \{v(\lambda)-\lambda\widetilde v_v(\lambda)\},
    \]
    and
    \[
        \mathscr B_{\nu,n}(\lambda)
        \pto
        r^2\rho_\eta^2
        \int
        \frac{\widetilde v_{b,0}(\lambda)s+1}{\{v(\lambda)s+1\}^2}
        \,\mathrm d G_\nu(s).
    \]
    
    We first show that the prelimit quadratic forms are monotone in \(\lambda\). Since
    \[
        \frac{\mathrm d}{\mathrm d\lambda}M_{n,\lambda}^c
        =
        -(M_{n,\lambda}^c)^2,
    \]
    for any vector \(u\in\RR^p\) that does not vary with \(\lambda\),
    \[
        \frac{\mathrm d}{\mathrm d\lambda}
        \left\{
        \lambda^2 u^\top (M_{n,\lambda}^c)^2u
        \right\}
        =
        2\lambda u^\top (M_{n,\lambda}^c)^2u
        -
        2\lambda^2 u^\top (M_{n,\lambda}^c)^3u.
    \]
    Using
    \[
        I_p-\lambda M_{n,\lambda}^c
        =
        I_p-\lambda(S_n^c+\lambda I_p)^{-1}
        =
        S_n^c(S_n^c+\lambda I_p)^{-1}
        =
        S_n^c M_{n,\lambda}^c,
    \]
    we obtain
    \[
        \frac{\mathrm d}{\mathrm d\lambda}
        \left\{
        \lambda^2 u^\top (M_{n,\lambda}^c)^2u
        \right\}
        =
        2\lambda u^\top (M_{n,\lambda}^c)^2
        \{I_p-\lambda M_{n,\lambda}^c\}u
        =
        2\lambda u^\top S_n^c (M_{n,\lambda}^c)^3u
        \ge 0.
    \]
    The last inequality follows because \(S_n^c\succeq0\),
    \(M_{n,\lambda}^c\succ0\), and \(S_n^c\) commutes with
    \(M_{n,\lambda}^c\). Hence
    \[
        \lambda\mapsto
        \lambda^2u^\top (M_{n,\lambda}^c)^2u
    \]
    is nondecreasing on \([\lambda_-,\lambda_+]\).
    
    Applying this identity with \(u=\nu_\Delta\) gives
    \[
        \frac{\mathrm d}{\mathrm d\lambda}
        \mathscr B_{\nu,n}(\lambda)
        =
        \frac{2r^2 n_0^\eta\lambda}{p}
        \nu_\Delta^\top S_n^c (M_{n,\lambda}^c)^3\nu_\Delta
        \ge0.
    \]
    Similarly, applying it with \(u=(X_0^c)^\top g\) gives
    \[
        \frac{\mathrm d}{\mathrm d\lambda}
        \mathscr B_{\gamma,n}(\lambda)
        =
        \frac{2r^2 n_0^\eta\lambda}{p}
        g^\top X_0^c S_n^c (M_{n,\lambda}^c)^3
        (X_0^c)^\top g
        \ge0.
    \]
    Therefore both \(\mathscr B_{\nu,n}(\lambda)\) and
    \(\mathscr B_{\gamma,n}(\lambda)\) are nondecreasing functions of
    \(\lambda\).
    
    It remains to transfer this monotonicity to the deterministic equivalents. Define
    \[
        B_{\gamma}(\lambda)
        :=
        \frac{\lambda^2 r^2\varrho_\eta^2}{\phi_0}
        \{v(\lambda)-\lambda\widetilde v_v(\lambda)\},
    \]
    and
    \[
        B_{\nu}(\lambda)
        :=
        {r^2\rho_\eta^2}
        \int
        \frac{\widetilde v_{b,0}(\lambda)s+1}{\{v(\lambda)s+1\}^2}
        \,\mathrm d G_\nu(s).
    \]
    Fix any
    \(\lambda_1,\lambda_2\in[\lambda_-,\lambda_+]\) with
    \(\lambda_1<\lambda_2\).
    Pointwise deterministic-equivalent convergence at these two values gives
    \[
        \mathscr B_{\gamma,n}(\lambda_2)-\mathscr B_{\gamma,n}(\lambda_1)
        \pto
        B_{\gamma}(\lambda_2)-B_{\gamma}(\lambda_1),
    \]
    and
    \[
        \mathscr B_{\nu,n}(\lambda_2)-\mathscr B_{\nu,n}(\lambda_1)
        \pto
        B_{\nu}(\lambda_2)-B_{\nu}(\lambda_1).
    \]
    Each prelimit difference is nonnegative, so both deterministic limits are nonnegative. Uniform convergence is not required for this transfer.
    Thus both \(B_{\gamma}\) and \(B_{\nu}\) are nondecreasing on \([\lambda_-,\lambda_+]\). Consequently,
    \[
        \sB_{\eta}(\lambda)
        =
        B_{\gamma}(\lambda)+B_{\nu}(\lambda)
    \]
    is also nondecreasing in \(\lambda\).
    
    By \Cref{lem:risk-lipschitz}, \(\sB_\eta\) is differentiable on \((0,\infty)\), and hence
    \[
        \frac{\mathrm d}{\mathrm d\lambda}
        \sB_{\eta}(\lambda)
        \geq0,
        \qquad
        \lambda\in[\lambda_-,\lambda_+].
    \]
    Hence the leading integrated squared-bias component is
    nondecreasing on every compact subinterval of \((0,\infty)\). This
    argument does not cover the ridgeless boundary \(\lambda\downarrow0\) or
    the limit \(\lambda\to\infty\).
\end{proof}

\subsection{Variance-component estimation}
\label{app:proof-variance-components}

This subsection establishes the results for variance-component estimation and proves \Cref{prop:spectral-quasi-likelihood}.

\subsubsection[Source trace moments (Lemma \ref{lem:source-trace-moments})]{Source trace moments (\Cref{lem:source-trace-moments})}

\begin{lemma}[Source trace moments]
    \label{lem:source-trace-moments}
    Suppose the conditions on the control design \(X_0\) in \Cref{asm:rmt} hold and \(p/n_0\to\phi_0\in(0,\infty)\).
    Write
    \[
        \begin{aligned}
            m&=n_0-1, &
            \widetilde\phi_{0,n}&=\frac pm, &
            W_0&=\frac1m(X_0^c)^\top X_0^c,\\
            D_0&=\otr\{(W_0)^2\}
            -\widetilde\phi_{0,n}\otr(W_0)^2, &
            \tau_{k,p}&=\otr(\Sigma_0^k).&&
        \end{aligned}.
    \]
    With \(Q_0\) as in Section~\ref{sec:variance-components}, define \(K_m=p^{-1}Q_0^\top X_0^c(X_0^c)^\top Q_0\), and let \(d_1,\ldots,d_m\) be its eigenvalues.
    Define
    \[
        \overline d_m=\frac1m\sum_{i=1}^m d_i,
        \qquad
        q(d)=(d,1)^\top,
        \qquad
        G_m=\frac1m\sum_{i=1}^m q(d_i)q(d_i)^\top.
    \]
    Then
    \[
        \|W_0\|_{\oper}=\Op(1),
        \qquad
        \otr(W_0)=\tau_{1,p}+\op(1),
        \qquad
        \otr\{(W_0)^2\}
        =\widetilde\phi_{0,n}\tau_{1,p}^2+\tau_{2,p}+\op(1).
    \]
    Consequently,
    \[
        D_0=\tau_{2,p}+\op(1),
        \qquad
        \inf_p\tau_{2,p}\geq c_\Sigma^2>0,
        \qquad
        \max_{2\leq k\leq4}\otr\{(W_0)^k\}=\Op(1).
    \]
    Moreover,
    \[
        \max_{i\leq m}d_i=\Op(1),
        \qquad
        \lambda_{\min}(G_m)^{-1}=\Op(1).
    \]
\end{lemma}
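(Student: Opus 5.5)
Everything reduces to the uncentred Gram matrix \(S=n_0^{-1}\Sigma_0^{1/2}Z_0^\top Z_0\Sigma_0^{1/2}\) plus a rank-one perturbation. Since \(X_0^c=C_0Z_0\Sigma_0^{1/2}\) (the mean \(\nu_0\) cancels), we have
\[
W_0=\frac{n_0}{m}\{S-\Sigma_0^{1/2}\bar z\bar z^\top\Sigma_0^{1/2}\},
\qquad
\bar z=n_0^{-1}Z_0^\top\one_{n_0}.
\]

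\textbf{Operator norm and first two traces.} The operator-norm bound follows from \(\|W_0\|_{\oper}\le (n_0/m)\|S\|_{\oper}\le (n_0/m)C_\Sigma\|Z_0\|_{\oper}^2/n_0\). The Bai--Yin bound applies under the \((8+\delta)\)-moment assumption in \Cref{asm:rmt}: \(\|Z_0\|_{\oper}^2/n_0\to(1+\sqrt{\phi_0})^2\) almost surely. For the traces I would compute means and variances directly. First, \(\EE\,\otr(S)=\tau_{1,p}\) with variance \(\cO(1/(n_0p))\) by the fourth-moment bound. Second, \(\EE\,\otr(S^2)=\tau_{2,p}+\phi_{0,n}\tau_{1,p}^2+\cO(n_0^{-1})\); this is the standard Wishart-type fourth-moment calculation, and the \(\kappa_4\) correction involves \(\sum_j(\Sigma_0)_{jj}^2/(pn_0)\), which is \(\cO(1/n_0)\). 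A variance bound of order \(\cO(n_0^{-2})\) then follows by expanding \(\otr(S^2)\) as a degree-four polynomial in the entries and using bounded eighth moments, or more quickly by citing the Marchenko--Pastur second-moment CLT. The rank-one correction \(u=\Sigma_0^{1/2}\bar z\) has \(\|u\|_2^2=\Op(p/n_0)=\Op(1)\), so it perturbs \(\otr(W_0^k)\) by \(\Op(p^{-1})\), since a rank-one change alters the trace by at most a constant multiple of \(\|W_0\|_{\oper}^{k-1}\|u\|^2\). Finally, replacing \(n_0\) by \(m\) and \(\phi_{0,n}\) by \(\widetilde\phi_{0,n}\) costs \(o(1)\). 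Together these give the two trace identities, and \(D_0=\tau_{2,p}+\op(1)\) follows at once because \(\otr(W_0)^2=\tau_{1,p}^2+\op(1)\) and \(\tau_{1,p}\le C_\Sigma\). The bound \(\tau_{2,p}\ge c_\Sigma^2\) holds since every eigenvalue of \(\Sigma_0\) is at least \(c_\Sigma\). For \(k\le4\), \(\otr(W_0^k)\le\|W_0\|_{\oper}^k=\Op(1)\).

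\textbf{Spectral statements.} The nonzero eigenvalues of \(K_m=p^{-1}Q_0^\top X_0^c(X_0^c)^\top Q_0\) coincide with those of \(p^{-1}(X_0^c)^\top X_0^c=(m/p)W_0\), because \(Q_0Q_0^\top=C_0\) and \(C_0X_0^c=X_0^c\). Hence \(\max_i d_i\le \widetilde\phi_{0,n}^{-1}\|W_0\|_{\oper}=\Op(1)\). For \(G_m\), note that \(\det G_m\) is the empirical variance of the \(d_i\) and \(\tr G_m=\overline{d^2}+1=\Op(1)\), so
\[
\lambda_{\min}(G_m)\ge \frac{\det G_m}{\tr G_m}.
\]
It therefore suffices to show that the empirical variance \(\frac1m\sum d_i^2-\overline d_m^{\,2}\) is bounded below in probability. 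Using \(\frac1m\sum_i d_i^k=\frac{p}{m}\,\widetilde\phi_{0,n}^{-k}\otr(W_0^k)\), we get \(\overline d_m=\otr(W_0)\) exactly. Likewise
\[
\frac1m\sum_i d_i^2=\widetilde\phi_{0,n}^{-1}\otr(W_0^2)
=\tau_{1,p}^2+\widetilde\phi_{0,n}^{-1}\tau_{2,p}+\op(1).
\]
The empirical variance is then \(\widetilde\phi_{0,n}^{-1}\tau_{2,p}+\op(1)\ge c_\Sigma^2/(2C_\phi)\) with probability tending to one, which gives \(\lambda_{\min}(G_m)^{-1}=\Op(1)\).

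The main obstacle is the concentration of \(\otr(S^2)\), which needs a careful but routine fourth-order moment expansion with non-Gaussian entries and non-diagonal \(\Sigma_0\). I would handle it by writing
\[
\otr(S^2)=\frac{1}{pn_0^2}\sum_{i,k}\bigl(z_i^\top\Sigma_0 z_k\bigr)^2,
\]
where \(z_i\) are the rows of \(Z_0\). The diagonal terms \(i=k\) then give the \(\phi\tau_1^2\) part through the concentration of quadratic forms, and the off-diagonal terms give \(\tau_2\); an Efron--Stein argument over rows bounds the variance.
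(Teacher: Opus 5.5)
Your proposal is correct and follows essentially the same route as the paper's proof. Both write \(W_0\) as \((n_0/m)\) times the uncentred Gram matrix minus a rank-one mean correction, bound \(\|W_0\|_{\oper}\) by Bai--Yin, compute the mean and variance of the first two normalised traces directly, rescale from \(n_0\) to \(m\), identify the \(d_i\) with the eigenvalues of \(\widetilde\phi_{0,n}^{-1}W_0\), and conclude with \(\lambda_{\min}(G_m)\geq\det G_m/\tr G_m\). The remaining differences are cosmetic: the paper bounds \(\Var\{\otr(S_n^2)\}\) by cruder covariance counting, obtaining \(\cO(n_0^{-1})\) rather than your \(\cO(n_0^{-2})\), and it bounds the higher traces by \(\|W_0\|_{\oper}^{k-2}\otr(W_0^2)\) rather than \(\|W_0\|_{\oper}^{k}\); either version suffices.
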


\begin{proof}[Proof of \Cref{lem:source-trace-moments}]
    Write \(n=n_0\), let \(y_i^\top\) denote the rows of \(Y_0=Z_0\Sigma_0^{1/2}\), and define
    \[
        S_n=\frac1nY_0^\top Y_0,
        \qquad
        \overline y=\frac1n\sum_{i=1}^n y_i=\bar x_0-\nu_0,
        \qquad
        S_n^c=S_n-\overline y\,\overline y^\top.
    \]
    Then
    \[
        \frac1n(X_0^c)^\top X_0^c=S_n^c,
        \qquad
        W_0=\frac nmS_n^c.
    \]
    Since \(\|S_n\|_{\oper}\leq\|\Sigma_0\|_{\oper}\|n^{-1}Z_0^\top Z_0\|_{\oper}\), the Bai--Yin extreme-eigenvalue bound gives \(\|S_n\|_{\oper}=\Op(1)\); see \citet{bai2010spectral}.
    Also, \(\EE\|\overline y\|_2^2=\tr(\Sigma_0)/n=\cO(p/n)=\cO(1)\), so \(\|\overline y\|_2^2=\Op(1)\), \(\|S_n^c\|_{\oper}=\Op(1)\), and \(\|W_0\|_{\oper}=\Op(1)\).

    For the first trace moment,
    \[
        \EE\otr(S_n)=\tau_{1,p},
        \qquad
        \Var\{\otr(S_n)\}
        \leq
        \frac{C\tr(\Sigma_0^2)}{p^2n}
        =
        \cO\!\left(\frac1{pn}\right),
    \]
    where the variance bound follows from the independence and bounded fourth moments of the coordinates of \(Z_0\), together with the bounded spectrum of \(\Sigma_0\).
    Hence \(\otr(S_n)=\tau_{1,p}+\op(1)\).
    For the second trace moment,
    \[
        \otr(S_n^2)
        =
        \frac1{pn^2}\sum_{i,j=1}^n(y_i^\top y_j)^2.
    \]
    Independence and the bounded fourth moments give
    \[
        \EE(y_i^\top y_j)^2
        =
        p\tau_{2,p},
        \qquad i\neq j,
        \qquad
        \EE\|y_i\|_2^4
        =
        p^2\tau_{1,p}^2+\cO(p),
    \]
    and therefore
    \[
        \EE\otr(S_n^2)
        =
        \tau_{2,p}
        +
        \frac pn\tau_{1,p}^2
        +
        o(1).
    \]
    Let \(H_{ij}=(y_i^\top y_j)^2\).
    The \((8+\delta)\)-moment bound gives
    \[
        \EE H_{ij}^2=\cO(p^2),
        \qquad i\neq j,
        \qquad
        \EE H_{ii}^2=\cO(p^4).
    \]
    Covariances vanish for disjoint index sets, while Cauchy--Schwarz bounds the overlapping off-diagonal, diagonal-off-diagonal, and repeated diagonal contributions by \(\cO(p^2)\), \(\cO(p^3)\), and \(\cO(p^4)\), respectively.
    Since their numbers are \(\cO(n^3)\), \(\cO(n^2)\), and \(\cO(n)\), and \(p\asymp n\),
    \[
        \Var\{\otr(S_n^2)\}
        =
        \cO\!\left(
            \frac{n^3p^2+n^2p^3+np^4}{p^2n^4}
        \right)
        =
        \cO(n^{-1}).
    \]
    Thus \(\otr(S_n^2)=(p/n)\tau_{1,p}^2+\tau_{2,p}+\op(1)\).

    Centring satisfies
    \[
        \left|\otr(S_n^c)-\otr(S_n)\right|
        =
        \frac{\|\overline y\|_2^2}{p}
        =
        \op(1),
    \]
    and
    \[
        \left|
            \otr\{(S_n^c)^2-S_n^2\}
        \right|
        \leq
        \frac{
            2\|S_n\|_{\oper}\|\overline y\|_2^2
            +\|\overline y\|_2^4
        }{p}
        =
        \op(1).
    \]
    Since \(m=n-1\),
    \[
        \frac nm=1+\cO(n^{-1}),
        \qquad
        \left(\frac nm\right)^2\frac pn
        =
        \widetilde\phi_{0,n}+\cO(n^{-1}).
    \]
    Consequently,
    \[
        \otr(W_0)=\tau_{1,p}+\op(1),
        \qquad
        \otr(W_0^2)
        =
        \widetilde\phi_{0,n}\tau_{1,p}^2+\tau_{2,p}+\op(1),
    \]
    and hence \(D_0=\tau_{2,p}+\op(1)\).
    The covariance lower bound in \Cref{asm:rmt} gives \(\tau_{2,p}\geq c_\Sigma^2\), while
    \[
        \otr(W_0^k)
        \leq
        \|W_0\|_{\oper}^{k-2}\otr(W_0^2),
        \qquad
        k=2,3,4,
    \]
    gives the stated higher trace bounds.

    The eigenvalues of \(K_m\) are \(\widetilde\phi_{0,n}^{-1}\) times the nonzero eigenvalues of \(W_0\), with additional zeros if needed.
    Therefore,
    \[
        \max_{i\leq m}d_i
        =
        \widetilde\phi_{0,n}^{-1}\|W_0\|_{\oper}
        =
        \Op(1),
        \qquad
        \overline d_m=\otr(W_0),
    \]
    and
    \[
        \frac1m\sum_{i=1}^m(d_i-\overline d_m)^2
        =
        \frac{D_0}{\widetilde\phi_{0,n}}.
    \]
    Finally,
    \[
        \det(G_m)
        =
        \frac{D_0}{\widetilde\phi_{0,n}},
        \qquad
        \tr(G_m)
        =
        1+\frac1m\sum_{i=1}^m d_i^2
        \leq
        1+\max_{i\leq m}d_i^2.
    \]
    Since \(D_0=\tau_{2,p}+\op(1)\), \(\tau_{2,p}\geq c_\Sigma^2\), and \(\widetilde\phi_{0,n}\to\phi_0\), the determinant is bounded away from zero in probability and \(\tr(G_m)=\Op(1)\).
    Hence
    \[
        \lambda_{\min}(G_m)^{-1}
        \leq
        \frac{\tr(G_m)}{\det(G_m)}
        =
        \Op(1).
    \]
\end{proof}

\subsubsection[Proof of Proposition \ref{prop:spectral-quasi-likelihood}]{Proof of \Cref{prop:spectral-quasi-likelihood}}
\begin{proof}[Proof of \Cref{prop:spectral-quasi-likelihood}]
    Write \(\vartheta=(a,b)^\top\), \(\vartheta_0=(r^2,\sigma_0^2)^\top\), \(v_i(\vartheta)=ad_i+b\), and \(v_{i0}=v_i(\vartheta_0)\).
    Define \(\breve y_0=Q_0^\top y_0^c\) and \(V_m(\vartheta)=aK_m+bI_m\).
    Since \(K_m=U_m\diag(d_1,\ldots,d_m)U_m^\top\) and \(\tilde y_0=U_m^\top\breve y_0\),
    \[
        L_m(\vartheta)
        =
        \frac1m
        \left\{
            \log\det V_m(\vartheta)
            +
            \breve y_0^\top V_m(\vartheta)^{-1}\breve y_0
        \right\}.
    \]
    Let \(\overline L_m(\vartheta)=\EE_{\beta\sim\Pi,\epsilon_0}\{L_m(\vartheta)\mid\mathcal G_n\}\).

    \noindent\textbf{Step 1: Separation.}
    Diagonalising \(K_m\) gives
    \[
        \overline L_m(\vartheta)-\overline L_m(\vartheta_0)
        =
        \frac1m\sum_{i=1}^m
        \left\{
            \log\frac{v_i(\vartheta)}{v_{i0}}
            +
            \frac{v_{i0}}{v_i(\vartheta)}
            -1
        \right\}.
    \]
    Let \(\psi(x)=\log x+x^{-1}-1\).
    The ratio \(\psi(x)/(x-1)^2\) extends continuously at \(x=1\) with value \(1/2\) and is positive on \((0,\infty)\).
    Since \(\Theta\) is compact and bounded away from the coordinate axes and \(\max_{i\leq m}d_i=\Op(1)\) by \Cref{lem:source-trace-moments}, there exists \(a_m>0\), with \(a_m^{-1}=\Op(1)\), such that
    \[
        \psi\left\{\frac{v_i(\vartheta)}{v_{i0}}\right\}
        \geq
        a_m\{v_i(\vartheta)-v_{i0}\}^2
    \]
    uniformly over \(i\leq m\) and \(\vartheta\in\Theta\).
    For \(h=\vartheta-\vartheta_0\), \(m^{-1}\sum_{i=1}^m\{v_i(\vartheta)-v_{i0}\}^2=h^\top G_mh\).
    Set \(c_m=a_m\lambda_{\min}(G_m)\).
    By \Cref{lem:source-trace-moments}, \(c_m^{-1}=\Op(1)\), and uniformly over \(\Theta\),
    \begin{equation}
        \label{eq:spectral-quasi-likelihood-separation}
        \overline L_m(\vartheta)-\overline L_m(\vartheta_0)
        \geq
        c_m\|\vartheta-\vartheta_0\|_2^2.
    \end{equation}

    \noindent\textbf{Step 2: Uniform approximation.}
    Let \(\zeta_0=\epsilon_0/\sigma_0\), \(\upsilon_m=(\xi_n^\top,\zeta_0^\top)^\top\), and
    \[
        H_m
        =
        \left[
            \frac r{\sqrt p}Q_0^\top X_0^c,
            \ \sigma_0Q_0^\top
        \right].
    \]
    Then \(\breve y_0=H_m\upsilon_m\) and \(H_mH_m^\top=V_m(\vartheta_0)\).
    Conditional on \(\mathcal G_n\), the coordinates of \(\upsilon_m\) are independent, centred, standardised, and have uniformly bounded fourth moments in probability.

    Define \(A_m(\vartheta)=H_m^\top V_m(\vartheta)^{-1}H_m\) and \(Z_m(\vartheta)=L_m(\vartheta)-\overline L_m(\vartheta)\).
    Then
    \[
        Z_m(\vartheta)
        =
        \frac1m
        \left\{
            \upsilon_m^\top A_m(\vartheta)\upsilon_m
            -
            \tr A_m(\vartheta)
        \right\}.
    \]
    Compactness of \(\Theta\), its positive distance from the coordinate axes, and \(\max_i d_i=\Op(1)\) imply
    \[
        \sup_{\vartheta\in\Theta}\|A_m(\vartheta)\|_{\oper}
        \leq
        \sup_{\vartheta\in\Theta}\|V_m(\vartheta)^{-1}\|_{\oper}
        \|V_m(\vartheta_0)\|_{\oper}
        =
        \Op(1).
    \]
    Moreover, \(\rank\{A_m(\vartheta)\}\leq m\), so \(\tr\{A_m(\vartheta)^2\}\leq m\|A_m(\vartheta)\|_{\oper}^2=\Op(m)\).
    The conditional fourth-moment bound in \Cref{asm:variance-component-model} and the independent-coordinate quadratic-form variance bound therefore give, for each fixed \(\vartheta\),
    \[
        \Var\{Z_m(\vartheta)\mid\mathcal G_n\}
        =
        \Op(m^{-1}),
    \]
    and hence \(Z_m(\vartheta)=\op(1)\).

    Moreover, \(m^{-1}\|\breve y_0\|_2^2=\Op(1)\), because \(m^{-1}\EE(\|\breve y_0\|_2^2\mid\mathcal G_n)=m^{-1}\tr\{V_m(\vartheta_0)\}=r^2\overline d_m+\sigma_0^2=\Op(1)\), so the claim follows by the conditional Markov inequality.
    The resolvent identity and the preceding operator-norm bounds then imply
    \[
        \sup_{\substack{\vartheta,\vartheta'\in\Theta\\\vartheta\neq\vartheta'}}
        \frac{|Z_m(\vartheta)-Z_m(\vartheta')|}
        {\|\vartheta-\vartheta'\|_2}
        =
        \Op(1).
    \]
    For any fixed \(\rho>0\), a finite deterministic \(\rho\)-net \(\mathcal N_\rho\) of \(\Theta\) therefore satisfies
    \[
        \sup_{\vartheta\in\Theta}|Z_m(\vartheta)|
        \leq
        \max_{\vartheta\in\mathcal N_\rho}|Z_m(\vartheta)|
        +
        \rho\Op(1).
    \]
    The first term is \(\op(1)\) for fixed \(\rho\).
    Letting \(m\to\infty\) and then \(\rho\downarrow0\) yields
    \[
        \Delta_m
        :=
        \sup_{\vartheta\in\Theta}
        |L_m(\vartheta)-\overline L_m(\vartheta)|
        =
        \op(1).
    \]

    \noindent\textbf{Step 3: Global minimisers.}
    Since \(L_m\) is continuous on compact \(\Theta\), its set \(\widehat{\mathcal M}_m\) of global minimisers is nonempty.
    For every \(\widehat\vartheta\in\widehat{\mathcal M}_m\), the minimising property and \eqref{eq:spectral-quasi-likelihood-separation} give
    \[
        c_m\|\widehat\vartheta-\vartheta_0\|_2^2
        \leq
        \overline L_m(\widehat\vartheta)-\overline L_m(\vartheta_0)
        \leq
        \{\overline L_m(\widehat\vartheta)-L_m(\widehat\vartheta)\}
        +
        \{L_m(\vartheta_0)-\overline L_m(\vartheta_0)\}
        \leq
        2\Delta_m.
    \]
    Since \(c_m^{-1}=\Op(1)\) and \(\Delta_m=\op(1)\),
    \[
        \sup_{\widehat\vartheta\in\widehat{\mathcal M}_m}
        \|\widehat\vartheta-\vartheta_0\|_2
        \pto0.
    \]
    Thus every global minimiser is consistent.
    Since \(\vartheta_0\in\Theta^\circ\), every global minimiser lies in \(\Theta^\circ\) with probability tending to one.
\end{proof}

\subsubsection[Two-moment initialiser (Proposition \ref{prop:two-moment-initializer})]{Two-moment initialiser (\Cref{prop:two-moment-initializer})}

With \(m\), \(\widetilde\phi_{0,n}\), \(W_0\), and \(D_0\) as in \Cref{lem:source-trace-moments}, let \(a_1=\otr(W_0)\), \(T_1=m^{-1}\|y_0^c\|_2^2\), and \(T_2=m^{-2}\|(X_0^c)^\top y_0^c\|_2^2\).
For a deterministic sequence \(\kappa_n\downarrow0\), define
\[
    \begin{aligned}
        \widehat r_{\mathrm{MM}}^2
        &:={}
        \begin{cases}
            \displaystyle
            \left[
                \frac{T_2-\widetilde\phi_{0,n}a_1T_1}{D_0}
            \right]_+,
            & D_0>\kappa_n,\\[2.5ex]
            0,
            & D_0\leq\kappa_n,
        \end{cases}
        \\
        \widehat\sigma_{0,\mathrm{MM}}^2
        &:={}
        \left[T_1-a_1\widehat r_{\mathrm{MM}}^2\right]_+,
        \qquad
        \widehat\vartheta_{\mathrm{MM}}
        =
        \begin{pmatrix}
            \widehat r_{\mathrm{MM}}^2\\
            \widehat\sigma_{0,\mathrm{MM}}^2
        \end{pmatrix}.
    \end{aligned}
\]

\begin{proposition}[Consistency of the two-moment initialiser]
    \label{prop:two-moment-initializer}
    Suppose \Cref{asm:variance-component-model} holds.
    For every deterministic sequence \(\kappa_n\downarrow0\),
    \[
        \widehat\vartheta_{\mathrm{MM}}
        \pto
        \begin{pmatrix}
            r^2\\
            \sigma_0^2
        \end{pmatrix}.
    \]
\end{proposition}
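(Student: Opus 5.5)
The strategy is to show that the two sample moments $T_1$ and $T_2$ concentrate around their conditional expectations given $\mathcal G_n$. Those expectations are linear in $(r^2,\sigma_0^2)$, with coefficients that are source trace moments, which \Cref{lem:source-trace-moments} controls. The consistency claim then follows by continuous mapping.

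\textbf{Conditional means.} Under \Cref{asm:predictive-model},
\[
\EE(y_0^c(y_0^c)^\top\mid\mathcal G_n)=\frac{r^2}{p}X_0^c(X_0^c)^\top+\sigma_0^2C_0 .
\]
Taking traces, with $\tr C_0=m$ and $W_0=m^{-1}(X_0^c)^\top X_0^c$, gives
\[
\EE(T_1\mid\mathcal G_n)=r^2 a_1+\sigma_0^2 .
\]
For $T_2$, write $T_2=m^{-2}\,y_0^{c\top}X_0^c(X_0^c)^\top y_0^c$. Its conditional mean is
\[
\EE(T_2\mid\mathcal G_n)=\frac{r^2}{pm^2}\tr\{(X_0^c(X_0^c)^\top)^2\}+\frac{\sigma_0^2}{m^2}\tr\{(X_0^c)^\top X_0^c\}
=r^2\widetilde\phi_{0,n}\otr(W_0^2)+\sigma_0^2\widetilde\phi_{0,n}a_1 .
\]
Hence
\[
\EE(T_2-\widetilde\phi_{0,n}a_1T_1\mid\mathcal G_n)=r^2\widetilde\phi_{0,n}\{\otr(W_0^2)-\widetilde\phi_{0,n}a_1^2\}=r^2\widetilde\phi_{0,n}D_0 .
\]
This reveals a normalisation issue. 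As written, the estimator divides by $D_0$ rather than $\widetilde\phi_{0,n}D_0$. So I will first recheck the scaling of $T_2$, for example whether $T_2$ should be normalised by $mp$ rather than $m^2$. Under the intended normalisation the ratio is exactly $r^2$ in conditional mean, and I will carry the proof with that consistent constant.

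\textbf{Concentration.} As in Step 2 of the proof of \Cref{prop:spectral-quasi-likelihood}, write $\breve y_0=H_m\upsilon_m$. The coordinates of $\upsilon_m$ are independent, standardised, and have bounded fourth moments under \Cref{asm:variance-component-model}. Each of $T_1$ and $T_2$ is then $m^{-1}\upsilon_m^\top A\upsilon_m$ for a matrix $A$:
\begin{itemize}
\item[] for $T_1$, $A=H_m^\top H_m$;
\item[] for $T_2$, $A$ is $H_m^\top K_m H_m$ up to the factor $\widetilde\phi_{0,n}$.
\end{itemize}
In both cases $\|A\|_{\oper}=\Op(1)$, since $\max_i d_i=\Op(1)$ by \Cref{lem:source-trace-moments}. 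The rank is at most $m$, so $\tr A^2=\Op(m)$. The quadratic-form variance bound therefore gives conditional variance $\Op(m^{-1})$. By the conditional Chebyshev inequality,
\[
T_k-\EE(T_k\mid\mathcal G_n)=\op(1),\qquad k=1,2 .
\]

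\textbf{Conclusion.} By \Cref{lem:source-trace-moments}, $a_1=\tau_{1,p}+\op(1)$ is $\Op(1)$, and $D_0=\tau_{2,p}+\op(1)\ge c_\Sigma^2+\op(1)$. Since $\kappa_n\downarrow0$, we have $\PP(D_0>\kappa_n)\to1$. On that event the numerator equals $r^2\times(\text{normaliser})+\op(1)$. Because $D_0^{-1}=\Op(1)$, the ratio equals $r^2+\op(1)$. The positive part is continuous, so $\widehat r^2_{\mathrm{MM}}\pto r^2$. Then
\[
\widehat\sigma^2_{0,\mathrm{MM}}=[T_1-a_1\widehat r^2_{\mathrm{MM}}]_+=[\sigma_0^2+\op(1)]_+\pto\sigma_0^2 .
\]
The main obstacle is the bookkeeping of normalising constants between $T_2$, $\widetilde\phi_{0,n}$ and $D_0$. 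The concentration step is routine given the tools already used for the spectral quasi-likelihood.
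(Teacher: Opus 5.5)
\paragraph{Verdict.} Your overall route is the paper's, but an arithmetic error in \(\EE(T_2\mid\mathcal G_n)\) leads you to reject the estimator as stated. As written, the proposal does not prove the proposition.

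\paragraph{The error.} The route itself is fine: concentrate \(T_1,T_2\) around their \(\mathcal G_n\)-conditional means via \(\breve y_0=H_m\upsilon_m\), then use \Cref{lem:source-trace-moments} for \(a_1\) and \(D_0\), \(\PP(D_0>\kappa_n)\to1\), and continuity of the positive part. The problem is a spurious factor \(\widetilde\phi_{0,n}\) in the signal term of \(\EE(T_2\mid\mathcal G_n)\). The trace evaluates as
\(\tr\{(X_0^c(X_0^c)^\top)^2\}=\tr\{((X_0^c)^\top X_0^c)^2\}=m^2\tr(W_0^2)=m^2p\,\otr(W_0^2)\).
The signal term is therefore \(\frac{r^2}{pm^2}\cdot m^2p\,\otr(W_0^2)=r^2\otr(W_0^2)\). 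Only the noise term carries \(\widetilde\phi_{0,n}\):
\(\frac{\sigma_0^2}{m^2}\tr\{(X_0^c)^\top X_0^c\}=\frac{\sigma_0^2p}{m}\otr(W_0)=\sigma_0^2\widetilde\phi_{0,n}a_1\).
You can check this through \(K_m\) as well: \(T_2=\widetilde\phi_{0,n}m^{-1}\breve y_0^\top K_m\breve y_0\) and \(\tr(K_m^2)=m^2\otr(W_0^2)/p\). Hence, exactly,
\[
\EE\left(T_2-\widetilde\phi_{0,n}a_1T_1\mid\mathcal G_n\right)=r^2\left\{\otr(W_0^2)-\widetilde\phi_{0,n}a_1^2\right\}=r^2D_0 .
\]
So the estimator as defined, with denominator \(D_0\), is correctly normalised.

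\paragraph{Consequences.} Your computation, carried through, would give \(\widehat r^2_{\mathrm{MM}}\pto\phi_0r^2\), contradicting the claim whenever \(\phi_0\neq1\). You then propose to prove consistency of an altered estimator rather than the stated one. The alteration you float also does not work. Dividing \(\|(X_0^c)^\top y_0^c\|_2^2\) by \(mp\) replaces \(T_2\) by \(T_2/\widetilde\phi_{0,n}\). Its noise contribution \(\sigma_0^2a_1\) is then no longer cancelled by \(\widetilde\phi_{0,n}a_1T_1\), leaving a residual \(\sigma_0^2a_1(1-\widetilde\phi_{0,n})\) in the numerator.

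\paragraph{Fix.} Once the trace is computed correctly, there is no normalisation obstacle. Your concentration step and your conclusion step then apply verbatim to the estimator in the statement, which reproduces the paper's proof.
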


\begin{proof}[Proof of \Cref{prop:two-moment-initializer}]
    Because \(y_0^c=X_0^c\beta+C_0\epsilon_0\) and \((X_0^c)^\top C_0=(X_0^c)^\top\), conditional on \(\mathcal G_n\),
    \[
        \EE(T_1\mid\mathcal G_n)
        =
        r^2a_1+\sigma_0^2,
        \qquad
        \EE(T_2\mid\mathcal G_n)
        =
        r^2\otr(W_0^2)
        +
        \sigma_0^2\widetilde\phi_{0,n}a_1.
    \]
    Writing \(\breve y_0=H_m\upsilon_m\) as in the proof of \Cref{prop:spectral-quasi-likelihood}, and using \(Q_0Q_0^\top=C_0\), gives \(T_1=m^{-1}\breve y_0^\top\breve y_0\) and \(T_2=(p/m^2)\breve y_0^\top K_m\breve y_0\).
    By \(\|K_m\|_{\oper}=\Op(1)\) and \(\|H_mH_m^\top\|_{\oper}=\|V_m(\vartheta_0)\|_{\oper}=\Op(1)\), the corresponding quadratic-form matrices in \(\upsilon_m\) have rank at most \(m\) and operator norm \(\Op(m^{-1})\).
    The conditional fourth-moment bound in \Cref{asm:variance-component-model} and the independent-coordinate quadratic-form variance bound therefore give \(\Var(T_j\mid\mathcal G_n)=\Op(m^{-1})\), and hence \(T_j-\EE(T_j\mid\mathcal G_n)=\op(1)\) for \(j=1,2\).
    Hence
    \[
        T_2-\widetilde\phi_{0,n}a_1T_1
        =
        r^2D_0+\op(1).
    \]
    By \Cref{lem:source-trace-moments}, \(D_0=\tau_{2,p}+\op(1)\) and \(\inf_p\tau_{2,p}\geq c_\Sigma^2>0\), so \(\PP(D_0>c_\Sigma^2/2)\to1\) and hence \(\PP(D_0>\kappa_n)\to1\).
    On this event, the untruncated ratio equals \(r^2+\op(1)\), so it is positive with probability tending to one and \(\widehat r_{\mathrm{MM}}^2\pto r^2\).
    Finally, \(a_1=\Op(1)\) and
    \[
        T_1-a_1\widehat r_{\mathrm{MM}}^2
        =
        \sigma_0^2
        +
        a_1\left(r^2-\widehat r_{\mathrm{MM}}^2\right)
        +
        \op(1)
        \pto
        \sigma_0^2.
    \]
    Since \(\sigma_0^2>0\), the positive part leaves the second estimator unchanged with probability tending to one, giving \(\widehat\sigma_{0,\mathrm{MM}}^2\pto\sigma_0^2\).
\end{proof}

\subsubsection{Computation of the spectral quasi-likelihood estimator}

Write \(\Theta=[a_-,a_+]\times[b_-,b_+]\subset\mathbb R_{++}^2\) and \(\ell=a/b\).
The feasible range of \(\ell\) is \(\mathcal L(\Theta)=[a_-/b_+,a_+/b_-]\), and, for each \(\ell\in\mathcal L(\Theta)\), the feasible values of \(b\) form
\[
    \mathcal B(\ell;\Theta)
    =
    \left[
        \max\left\{b_-,\frac{a_-}{\ell}\right\},
        \min\left\{b_+,\frac{a_+}{\ell}\right\}
    \right].
\]
Under \(a=\ell b\), the criterion in \eqref{eq:spectral-quasi-likelihood} becomes
\[
    L_m(\ell,b)
    =
    \log b
    +
    \frac1m\sum_{i=1}^m\log(1+\ell d_i)
    +
    \frac{\widehat\sigma_0^2(\ell)}{b},
\]
where \(\widehat\sigma_0^2(\ell)=m^{-1}\sum_{i=1}^m\tilde y_{0i}^2/(1+\ell d_i)\).
For fixed \(\ell\), the constrained minimiser over \(b\) is \(\widehat b(\ell;\Theta)=\proj_{\mathcal B(\ell;\Theta)}\{\widehat\sigma_0^2(\ell)\}\).
Hence the original optimisation reduces to
\[
    \widehat\ell
    \in
    \argmin_{\ell\in\mathcal L(\Theta)}
    \left\{
        \log\widehat b(\ell;\Theta)
        +
        \frac1m\sum_{i=1}^m\log(1+\ell d_i)
        +
        \frac{\widehat\sigma_0^2(\ell)}
             {\widehat b(\ell;\Theta)}
    \right\}.
\]
When \(\widehat\sigma_0^2(\ell)\in\mathcal B(\ell;\Theta)\), the profiled criterion, up to an additive constant, is \(L_m^{\mathrm{prof}}(\ell)=\log\widehat\sigma_0^2(\ell)+m^{-1}\sum_{i=1}^m\log(1+\ell d_i)\).
Given a global minimiser \(\widehat\ell\), set \(\widehat\sigma_0^2=\widehat b(\widehat\ell;\Theta)\) and \(\widehat r^2=\widehat\ell\,\widehat\sigma_0^2\).

The profile can be evaluated directly from a thin singular-value decomposition.
Let \(k=\rank(X_0^c)\) and write \(X_0^c=U_k\diag(s_1,\ldots,s_k)V_k^\top\).
The positive eigenvalues of \(K_m\) are \(d_j=s_j^2/p\), \(j=1,\ldots,k\), and, taking the corresponding columns of \(U_m\) to be \(Q_0^\top u_j\), their response coordinates are \(\tilde y_{0j}=u_j^\top y_0^c\), where \(u_j\) is the \(j\)th column of \(U_k\), so \(Q_0\) need not be constructed.
The remaining \(m-k\) eigenvalues are zero, with total response energy \(E_\perp=\|y_0^c\|_2^2-\sum_{j=1}^k\tilde y_{0j}^2\).
Consequently,
\[
    \widehat\sigma_0^2(\ell)
    =
    \frac1m
    \left\{
        E_\perp
        +
        \sum_{j=1}^k
        \frac{\tilde y_{0j}^2}{1+\ell d_j}
    \right\},
    \qquad
    \frac1m\sum_{i=1}^m\log(1+\ell d_i)
    =
    \frac1m\sum_{j=1}^k\log(1+\ell d_j).
\]
After one thin singular-value decomposition, each profile evaluation requires \(\cO(k)\) operations.

\subsection{Proof of Theorem~\ref{thm:risk-estimation}}
\label{app:proof-risk-tuning}

This subsection proves \Cref{thm:risk-estimation}.
We first establish the auxiliary rate and uniformity results used in the proof, then assemble the theorem.

Throughout this subsection, calculations involving the evaluation target sample are conditional on \(\mathcal H_n\).
As in \Cref{sec:ood-risk-tuning}, write \(\bar x_1=\bar x_{1,\mathrm E}\) and \(n_1=n_{1,\mathrm E}\).
After the evaluation sample is realised, \(R_n(\lambda)\) remains the \(\mathcal G_n\)-conditional predictive risk defined in Section~\ref{sec:risk:exact-risk}.

\subsubsection[Rate preservation under ridge augmentation (Lemma \ref{lem:base-to-augmented-rate})]{Rate preservation under ridge augmentation (\Cref{lem:base-to-augmented-rate})}

\begin{lemma}[Rate preservation under ridge augmentation]
    \label{lem:base-to-augmented-rate}
    Suppose \Cref{asm:rmt,asm:evaluation-sample} hold, \(p/n_1=\cO(1)\), and the nonempty set \(\Lambda\) is contained in \([\lambda_-,\lambda_+]\subset(0,\infty)\).
    If, for some
    \(\eta\in[0,1]\),
    \[
        n_0^\eta
        \frac{\|\nu_1-X_0^\top\gamma\|_2^2}{p}
        =\Op(1),
    \]
    then
    \[
        n_0^\eta\frac{\|\Delta\|_2^2}{p}=\Op(1).
    \]
    Moreover, the following conditions are equivalent:
    \[
        n_0^\eta\|\gamma\|_2^2=\Op(1),
        \qquad
        n_0^\eta
        \sup_{\lambda\in\Lambda}\|\gamma_\lambda\|_2^2
        =\Op(1).
    \]
    Under either condition,
    \[
        n_0^\eta
        \sup_{\lambda\in\Lambda}
        \frac{\|X_0^\top\gamma_\lambda-\nu_1\|_2^2}{p}
        =\Op(1).
    \]
\end{lemma}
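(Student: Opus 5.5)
The plan is to reduce every claim to three deterministic identities and two probabilistic operator-norm bounds. The identities, all used earlier in the proof of \Cref{prop:conditional-risk}, are
\[
    \Delta=(\bar x_1-\nu_1)+(\nu_1-X_0^\top\gamma),
    \qquad
    X_0^\top\gamma_\lambda-\nu_1=\epsilon_1-\lambda M_\lambda^c\Delta,
    \qquad
    \gamma_\lambda-\gamma=\frac1{n_0}X_0^cM_\lambda^c\Delta .
\]
The operator-norm inputs are $\sup_{\lambda\in\Lambda}\|\lambda M_\lambda^c\|_{\oper}\le1$, which is deterministic because $S_0^c\succeq0$, and
\[
    \frac1{n_0}\|M_\lambda^c(X_0^c)^\top X_0^cM_\lambda^c\|_{\oper}
    =\|M_\lambda^c-\lambda(M_\lambda^c)^2\|_{\oper}\le \lambda_-^{-1}
\]
uniformly on $\Lambda$. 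So no random-matrix input is needed beyond these elementary resolvent bounds.

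\textbf{Imbalance rate.} Conditional on $\mathcal H_n$, \Cref{asm:evaluation-sample} gives $\EE(\|\epsilon_1\|_2^2\mid\mathcal H_n)=\tr(\Sigma_1)/n_1\le C_\Sigma p/n_1$. Since $p/n_1=\cO(1)$, the conditional Markov inequality yields $\|\epsilon_1\|_2^2/p=\Op(n_1^{-1})$. Now use $n_0^\eta\le n_0\asymp p\asymp n_1$ (by $\eta\le1$ and \Cref{asm:rmt}) to get $n_0^\eta\|\epsilon_1\|_2^2/p=\Op(1)$. Combining with the hypothesis through $\|u+v\|^2\le2\|u\|^2+2\|v\|^2$ gives $n_0^\eta\|\Delta\|_2^2/p=\Op(1)$.

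\textbf{Equivalence of weight rates.} From the third identity and the second operator bound,
\[
    \sup_{\lambda\in\Lambda}\|\gamma_\lambda-\gamma\|_2^2\le\frac{\|\Delta\|_2^2}{\lambda_-n_0}.
\]
Since $p/n_0\le C_\phi$, this is $\Op(n_0^{-\eta})$ by the first step. The triangle inequality in each direction then gives the equivalence: if $n_0^\eta\|\gamma\|_2^2=\Op(1)$, then the sup over $\Lambda$ is $\Op(1)$. Conversely, pick any fixed $\lambda_0\in\Lambda$ (nonempty) and bound $\|\gamma\|\le\|\gamma_{\lambda_0}\|+\|\gamma_{\lambda_0}-\gamma\|$. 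The only delicate point is that this difference bound uses only the imbalance hypothesis and not the weight condition, which avoids circularity.

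\textbf{Bias-direction rate.} Use the second identity with $\|\lambda M_\lambda^c\Delta\|_2\le\|\Delta\|_2$:
\[
    \sup_\lambda\|X_0^\top\gamma_\lambda-\nu_1\|_2^2\le2\|\epsilon_1\|_2^2+2\|\Delta\|_2^2,
\]
and both terms are $\Op(p\,n_0^{-\eta})$ from the first step. The main care point is justifying unconditional $\Op$ statements from conditional Markov bounds given $\mathcal H_n$; I would do this via $\PP(\cdot)=\EE\{\PP(\cdot\mid\mathcal H_n)\}$ with the deterministic bound $C_\Sigma p/n_1$. I expect no substantial obstacle beyond this.
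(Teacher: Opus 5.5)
Your proof is correct and has the same three-step skeleton as the paper's. Your third step is literally the paper's: with \(g=\nu_1-X_0^\top\gamma\), you have \(\epsilon_1-\lambda M_\lambda^c\Delta=-\lambda M_\lambda^c g+(I_p-\lambda M_\lambda^c)\epsilon_1\), and both arguments bound it using \(\|\lambda M_\lambda^c\|_{\oper}\le1\).

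The one genuine difference is the weight-perturbation step. The paper bounds \(\sup_{\lambda\in\Lambda}\|\gamma_\lambda-\gamma\|_2^2\le\|X_0^c\|_{\oper}^2\|\Delta\|_2^2/(n_0^2\lambda_-^2)\). It then needs the random-matrix fact \(\|X_0^c\|_{\oper}^2/n_0=\Op(1)\) from \Cref{asm:rmt}.

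You instead use the exact identity \(n_0^{-1}M_\lambda^c(X_0^c)^\top X_0^cM_\lambda^c=M_\lambda^c-\lambda(M_\lambda^c)^2\). Its eigenvalues \(s/(s+\lambda)^2\) are at most \(1/(4\lambda)\), so \(\|\gamma_\lambda-\gamma\|_2^2\le\|\Delta\|_2^2/(4\lambda n_0)\) holds deterministically for every realised design.

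This gives a cleaner argument. The only source-side inputs you need are the aspect-ratio bounds \(c_\phi\le p/n_0\le C_\phi\), with no Bai--Yin-type operator-norm control, and the constant is \((4\lambda_-)^{-1}\) rather than a random multiple of \(\lambda_-^{-2}\). The paper's route gains nothing here beyond brevity.

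Two small remarks. First, the direction \(p\lesssim n_1\) you use comes from the lemma's explicit hypothesis \(p/n_1=\cO(1)\), since \(n_1=n_{1,\mathrm E}\) here, not from \Cref{asm:rmt}. Second, as both arguments show, the final bias-direction rate follows from the imbalance hypothesis alone, without either weight condition.
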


\begin{proof}[Proof of \Cref{lem:base-to-augmented-rate}]
    Write
    \[
        g=\nu_1-X_0^\top\gamma,
        \qquad
        \epsilon_1=\bar x_1-\nu_1,
        \qquad
        \Delta=g+\epsilon_1.
    \]
    By \Cref{asm:evaluation-sample} and the target-side moment conditions,
    \(\|\epsilon_1\|_2^2=\Op(p/n_1)=\Op(1)\). Hence
    \[
        n_0^\eta\frac{\|\Delta\|_2^2}{p}
        \leq
        2n_0^\eta\frac{\|g\|_2^2}{p}
        +2n_0^\eta\frac{\|\epsilon_1\|_2^2}{p}
        =\Op(1),
    \]
    where the second term is \(\Op(n_0^{\eta-1})\).

    Next,
    \[
        \sup_{\lambda\in\Lambda}
        \|\gamma_\lambda-\gamma\|_2^2
        \leq
        \frac{\|X_0^c\|_{\oper}^2}
        {n_0^2\lambda_-^2}\|\Delta\|_2^2.
    \]
    Under \Cref{asm:rmt},
    \(\|X_0^c\|_{\oper}^2/n_0=\Op(1)\), while the preceding display and
    \(p/n_0\asymp1\) give
    \(n_0^\eta\|\Delta\|_2^2/n_0=\Op(1)\). Therefore,
    \[
        n_0^\eta\sup_{\lambda\in\Lambda}
        \|\gamma_\lambda-\gamma\|_2^2=\Op(1).
    \]
    The triangle inequality gives the forward implication. For the reverse
    implication, fix any \(\lambda_0\in\Lambda\) and write
    \(\gamma=\gamma_{\lambda_0}
    -(\gamma_{\lambda_0}-\gamma)\); the same bound applies.

    Finally, with
    \(Q_\lambda=\lambda M_\lambda^c\) and
    \(P_\lambda=I_p-\lambda M_\lambda^c\),
    \[
        X_0^\top\gamma_\lambda-\nu_1
        =-Q_\lambda g+P_\lambda\epsilon_1.
    \]
    Both matrix families have operator norm at most one. Consequently,
    \[
        \sup_{\lambda\in\Lambda}
        \frac{\|X_0^\top\gamma_\lambda-\nu_1\|_2^2}{p}
        \leq
        2\frac{\|g\|_2^2}{p}
        +2\frac{\|\epsilon_1\|_2^2}{p},
    \]
    which gives the claimed uniform scaled rate.
\end{proof}

\subsubsection[Uniform control of the risk criterion (Lemma \ref{lem:ood-risk-equicontinuity})]{Uniform control of the risk criterion (\Cref{lem:ood-risk-equicontinuity})}

\begin{lemma}[Uniform control of the risk criterion]
    \label{lem:ood-risk-equicontinuity}
    Under the compact-interval conditions of
    \Cref{thm:risk-estimation}, let \(\widetilde q_n(\lambda)\) denote
    the expression inside the positive part in
    \eqref{eq:ood-risk-estimator}, and let
    \[
        q_n(\lambda)
        =\frac1p\|X_0^\top\gamma_\lambda-\nu_1\|_2^2.
    \]
    Then
    \[
        n_0^\eta\sup_{\lambda\in\Lambda}
        \left\{|\partial_\lambda\widetilde q_n(\lambda)|
        +|\partial_\lambda q_n(\lambda)|\right\}=\Op(1),
    \]
    and
    \[
        \frac{n_0^\eta}{n_1}
        \sup_{\lambda\in\Lambda}
        \left|\otr\left[
        \{I_p-2\lambda M_\lambda^c\}
        (\widehat\Sigma_1-\Sigma_1)
        \right]\right|\pto0.
    \]
    The derivative bounds imply that
    \(n_0^\eta\{[\widetilde q_n(\lambda)]_+-q_n(\lambda)\}\)
    is stochastically equicontinuous on \(\Lambda\).
\end{lemma}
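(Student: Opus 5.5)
Three claims need to be established in turn: uniform derivative bounds for \(\widetilde q_n\) and \(q_n\), a uniform trace bound for the covariance error, and stochastic equicontinuity as a corollary.

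\textbf{Step 1: the derivative bounds.} Write \(g=\nu_1-X_0^\top\gamma\), \(\epsilon_1=\bar x_1-\nu_1\) and \(\Delta=g+\epsilon_1\). Recall that
\[
    \widetilde q_n(\lambda)=\frac{\lambda^2}{p}\Delta^\top(M_\lambda^c)^2\Delta+\frac1{n_1}\otr\{(I_p-2\lambda M_\lambda^c)\widehat\Sigma_1\}.
\]
From the proof of \Cref{lem:base-to-augmented-rate},
\[
    q_n(\lambda)=\frac1p\|{-\lambda M_\lambda^c g}+(I_p-\lambda M_\lambda^c)\epsilon_1\|_2^2.
\]
Using \(\partial_\lambda M_\lambda^c=-(M_\lambda^c)^2\), I obtain
\[
    \partial_\lambda(\lambda M_\lambda^c)=M_\lambda^c-\lambda(M_\lambda^c)^2=S_0^c(M_\lambda^c)^2.
\]
Its operator norm is at most \(\lambda_-^{-1}\) uniformly on \(\Lambda\), and \(\lambda M_\lambda^c\) and \(I_p-\lambda M_\lambda^c\) have norm at most one. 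Differentiating the quadratic forms and applying Cauchy--Schwarz then gives
\[
    \sup_{\lambda\in\Lambda}\bigl|\partial_\lambda\{\lambda^2\Delta^\top(M_\lambda^c)^2\Delta\}\bigr|/p\le C\|\Delta\|_2^2/p,
    \qquad
    \sup_{\lambda\in\Lambda}|\partial_\lambda q_n|\le C(\|g\|_2^2+\|\epsilon_1\|_2^2)/p.
\]
By \Cref{asm:risk-rate-admissibility} and \Cref{lem:base-to-augmented-rate}, both are \(\Op(n_0^{-\eta})\).

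For the trace term, the derivative is
\[
    -\frac{2}{n_1}\otr\{S_0^c(M_\lambda^c)^2\widehat\Sigma_1\},
\]
which is bounded by \(2\lambda_-^{-1}\otr(\widehat\Sigma_1)/n_1\). Since \(\EE\{\otr(\widehat\Sigma_1)\mid\mathcal H_n\}=\otr(\Sigma_1)\le C_\Sigma\), this term is \(\Op(n_1^{-1})=\Op(n_0^{-1})\), which is \(\Op(n_0^{-\eta})\) because \(\eta\le1\) and \(p/n_1\) is bounded. This settles the first display.

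\textbf{Step 2: the trace bound.} Set \(F_\lambda=I_p-2\lambda M_\lambda^c\). It is \(\mathcal H_n\)-measurable, symmetric, with \(\|F_\lambda\|_{\oper}\le1\). Fix \(\lambda\) and work conditionally on \(\mathcal H_n\). Decompose
\[
    \widehat\Sigma_1-\Sigma_1=\Bigl\{\tfrac1{n_1-1}\textstyle\sum_{i}(x_i-\nu_1)(x_i-\nu_1)^\top-\Sigma_1\Bigr\}+\Bigl\{\tfrac{n_1}{n_1-1}\Sigma_1-\tfrac{n_1}{n_1-1}\epsilon_1\epsilon_1^\top\Bigr\}-\tfrac1{n_1-1}\Sigma_1\cdot 0.
\]
The second group contributes \(O(n_1^{-1})+\|\epsilon_1\|_2^2/p=\Op(n_1^{-1})\) to \(\otr\{F_\lambda\,\cdot\,\}\).

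For the first group, \(\otr\{F_\lambda(\cdot)\}\) is an average of the i.i.d.\ centred quadratic forms
\[
    p^{-1}z_{1i}^\top\Sigma_1^{1/2}F_\lambda\Sigma_1^{1/2}z_{1i}.
\]
The \(4+\delta\) moment bounds in \Cref{asm:rmt} give each a variance of order \(p^{-1}\), so the average has variance \(O\{(n_1p)^{-1}\}\). Hence \(\otr\{F_\lambda(\widehat\Sigma_1-\Sigma_1)\}=\op(1)\) pointwise, and after multiplying by \(n_0^\eta/n_1=O(1)\) it is still \(\op(1)\).

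For uniformity, the map \(\lambda\mapsto F_\lambda\) is Lipschitz in operator norm with constant \(2\lambda_-^{-1}\). The trace error is therefore Lipschitz with constant \(\lambda_-^{-1}\{\otr(\widehat\Sigma_1)+\otr(\Sigma_1)\}=\Op(1)\). A finite \(\delta\)-net argument, taking \(n\to\infty\) and then \(\delta\downarrow0\) as in \Cref{lem:uniform-ridge-de}, completes this step.

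\textbf{Step 3: equicontinuity.} The map \(x\mapsto x_+\) is 1-Lipschitz, so
\[
    \bigl|[\widetilde q_n(\lambda)]_+-[\widetilde q_n(\lambda')]_+\bigr|\le|\widetilde q_n(\lambda)-\widetilde q_n(\lambda')|.
\]
Combined with the Step 1 derivative bounds and the mean-value theorem, \(n_0^\eta\{[\widetilde q_n]_+-q_n\}\) is Lipschitz on \(\Lambda\) with an \(\Op(1)\) constant, which implies stochastic equicontinuity.

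The main obstacle is the quadratic-form variance bound in Step 2. It relies only on the \(4+\delta\) target moments and on independence of the coordinates of \(z_{1i}\), not on identical distribution. I will apply the standard independent-coordinate bound
\[
    \Var(z^\top Az)\le C\tr(A^2)\le Cp\|A\|_{\oper}^2,
\]
conditionally on \(\mathcal H_n\), using that \(\Sigma_1^{1/2}F_\lambda\Sigma_1^{1/2}\) is \(\mathcal H_n\)-measurable.
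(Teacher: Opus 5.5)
Your proof is correct and follows essentially the paper's route. It uses the same derivative bounds from \(\partial_\lambda M_\lambda^c=-(M_\lambda^c)^2\) and \Cref{lem:base-to-augmented-rate}, the same pointwise conditional-variance control of the centred covariance trace with an \(\Op(1)\) nuclear-norm Lipschitz envelope, and the same \(1\)-Lipschitz positive-part step. Your fixed \(\delta\)-net is an equally valid and slightly simpler substitute for the paper's \(n_0^{-1}\)-mesh grid with a union bound.

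One slip needs fixing. With \(y_i=x_i-\nu_1\), the decomposition should read
\[
    \widehat\Sigma_1-\Sigma_1
    =\frac{1}{n_1-1}\sum_i(y_iy_i^\top-\Sigma_1)
    +\frac{1}{n_1-1}\Sigma_1
    -\frac{n_1}{n_1-1}\epsilon_1\epsilon_1^\top .
\]
Your display instead carries a spurious \(\frac{n_1}{n_1-1}\Sigma_1\) and a vacuous \(\cdot\,0\) term. The orders you assign to the two groups are nonetheless exactly those of this corrected version.
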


\begin{proof}[Proof of \Cref{lem:ood-risk-equicontinuity}]
    Write \(g=\nu_1-X_0^\top\gamma\) and
    \(\epsilon_1=\bar x_1-\nu_1\), so \(\Delta=g+\epsilon_1\).
    By \Cref{lem:base-to-augmented-rate},
    \[
        n_0^\eta\frac{\|\Delta\|_2^2}{p}=\Op(1).
    \]
    Because \(\partial_\lambda M_\lambda^c=-(M_\lambda^c)^2\) and the first two derivatives of \(M_\lambda^c\) are uniformly bounded on \(\Lambda\), differentiation of the quadratic term in \(\widetilde q_n\) gives the stated scaled derivative bound.
    The trace derivative is also \(\Op(1)\) after scaling, since
    \(n_0^\eta/n_1=\cO(1)\) and
    \(\otr(\widehat\Sigma_1)=\Op(1)\).

    Writing \(P_\lambda=I_p-\lambda M_\lambda^c\) and
    \(Q_\lambda=\lambda M_\lambda^c\), the vector in \(q_n\) is
    \(-Q_\lambda g+P_\lambda\epsilon_1\). Its derivative is
    \(-M_\lambda^cP_\lambda\Delta\), so the same rate proves the
    bound for \(\partial_\lambda q_n\).

    For the covariance process, set
    \(A_\lambda=I_p-2\lambda M_\lambda^c\) and
    \(T_n(\lambda)=\otr\{A_\lambda
    (\widehat\Sigma_1-\Sigma_1)\}\). Conditional on \(\mathcal H_n\), the
    centring in \(\widehat\Sigma_1\) is handled as follows. Put
    \(m=n_1\), \(y_i=x_i-\nu_1\), and
    \(\bar y=m^{-1}\sum_i y_i\). Then, exactly,
    \[
        \tr\{A_\lambda(\widehat\Sigma_1-\Sigma_1)\}
        =\frac{m}{m-1}\left[
        \frac1m\sum_{i=1}^m\{y_i^\top A_\lambda y_i
        -\tr(A_\lambda\Sigma_1)\}
        -\left\{\bar y^\top A_\lambda\bar y
        -\frac1m\tr(A_\lambda\Sigma_1)\right\}
        \right].
    \]
    By the conditional product law in
    \Cref{asm:evaluation-sample}, the evaluation observations are
    conditionally independent with common law \(\PP_{1,n}\) given
    \(\mathcal H_n\). The target-side moment conditions in \Cref{asm:rmt} and
    the \(\mathcal H_n\)-measurability of \(A_\lambda\) therefore imply that the
    first centred term has conditional variance at most \(Cp/m\), while the
    second has conditional variance at most \(Cp/m^2\). Thus the elementary inequality
    \(\Var(U+V)\leq2\Var(U)+2\Var(V)\) gives
    \[
        \Var\!\left[
        \tr\{A_\lambda(\widehat\Sigma_1-\Sigma_1)\}
        \,\middle|\,\mathcal H_n\right]
        \leq \frac{C}{n_1}
        \tr\{(\Sigma_1^{1/2}A_\lambda\Sigma_1^{1/2})^2\}
        \leq \frac{Cp}{n_1},
    \]
    and hence, because \(\otr=p^{-1}\tr\),
    \[
        \Var\{T_n(\lambda)\mid\mathcal H_n\}
        \leq\frac{C}{p^2n_1}
        \tr\{(\Sigma_1^{1/2}A_\lambda\Sigma_1^{1/2})^2\}
        \leq\frac{C}{pn_1}.
    \]
    Consequently,
    \[
        \Var\!\left\{\frac{n_0^\eta}{n_1}T_n(\lambda)
        \,\middle|\,\mathcal H_n\right\}
        \leq \frac{Cn_0^{2\eta}}{pn_1^3}=\cO(n_0^{-2})
    \]
    at the worst boundary \(\eta=1\), so the required scaled process converges pointwise to zero.

    Its increments are bounded by
    \[
        C|\lambda-\lambda'|p^{-1}
        \|\widehat\Sigma_1-\Sigma_1\|_*,
    \]
    where \(\|\cdot\|_*\) denotes the nuclear norm. Both covariance matrices
    are positive semidefinite, so
    \[
        p^{-1}\|\widehat\Sigma_1-\Sigma_1\|_*
        \leq
        \otr(\widehat\Sigma_1)+\otr(\Sigma_1)
        =\Op(1),
    \]
    where the last equality follows from the treated-sample trace bound above
    and the bounded spectrum of \(\Sigma_1\). Thus the random Lipschitz
    envelope is \(\Op(1)\). A grid of mesh
    \(n_0^{-1}\) has \(\cO(n_0)\) points; the last
    variance bound and a union bound control the grid, while the Lipschitz bound
    controls the gaps.
    This proves the uniform covariance bound.
    The derivative bounds and the \(1\)-Lipschitz property of the positive part give the stated stochastic equicontinuity.
\end{proof}

\subsubsection[Proof of Theorem \ref{thm:risk-estimation}]{Proof of \Cref{thm:risk-estimation}}

\begin{proof}[Proof of \Cref{thm:risk-estimation}]
    For \Cref{thm:risk-estimation}~\eqref{thm:risk-estimation-i}, by the moment consequences of \Cref{asm:evaluation-sample},
    \[
        \EE(\epsilon_1\mid\mathcal H_n)=0,
        \qquad
        \EE(\epsilon_1\epsilon_1^\top
        \mid\mathcal H_n)=\frac{\Sigma_1}{n_1}.
    \]
    Recall that
    \[
        X_0^\top\gamma_\lambda
        =(I_p-\lambda M_\lambda^c)\bar{x}_1
        +\lambda M_\lambda^cX_0^\top\gamma.
    \]
    With \(C_1=\Sigma_1/n_1\), the target-error correction identity is
    \begin{equation}
        \begin{split}
            \EE\left[\|X_0^\top\gamma_\lambda
            -\bar{x}_1\|_2^2
            +\tr\{(2(I_p-\lambda M_\lambda^c)-I_p)C_1\}
            \mid\mathcal H_n\right]
            &=\EE\left[\|X_0^\top\gamma_\lambda-\nu_1\|_2^2
            \mid\mathcal H_n\right].
        \end{split}
        \label{eq:target-error-correction}
    \end{equation}

    To verify \eqref{eq:target-error-correction}, write
    \(b_0=X_0^\top\gamma\), \(g=\nu_1-b_0\),
    \(Q_\lambda=\lambda M_\lambda^c\), and
    \(P_\lambda=I_p-\lambda M_\lambda^c\). Conditional on \(\mathcal H_n\),
    these quantities are fixed, while \(\epsilon_1\) is centred
    with conditional covariance \(C_1\). Therefore,
    \[
        \begin{aligned}
        \EE[\|X_0^\top\gamma_\lambda-\bar{x}_1\|_2^2
        \mid\mathcal H_n]
        &=g^\top Q_\lambda^2g+\tr(Q_\lambda^2C_1),\\
        \EE[\|X_0^\top\gamma_\lambda-\nu_1\|_2^2
        \mid\mathcal H_n]
        &=g^\top Q_\lambda^2g+\tr(P_\lambda^2C_1).
        \end{aligned}
    \]
    Because \(Q_\lambda=I_p-P_\lambda\),
    \(Q_\lambda^2-P_\lambda^2=I_p-2P_\lambda\).

    With \(\widetilde q_n\) defined in
    \Cref{lem:ood-risk-equicontinuity}, \eqref{eq:ood-risk-estimator} can be
    written as
    \[
        \widehat R_n(\lambda;\gamma)
        =\widehat r^2[\widetilde q_n(\lambda)]_+
        +\widehat\sigma_0^2\|\gamma_\lambda\|_2^2.
    \]

    The target-error correction identity \eqref{eq:target-error-correction}
    implies that, before replacing \(\Sigma_1\) by
    \(\widehat\Sigma_1\), the corrected quadratic form differs from
    the realised target error by
    \[
        2g^\top Q_\lambda\epsilon_1
        +\epsilon_1^\top(2Q_\lambda-I_p)\epsilon_1
        -\tr\{(2Q_\lambda-I_p)C_1\}.
    \]
    Because \(\|M_\lambda^c\|_{\oper}\leq\lambda_-^{-1}\), the variance
    of the linear term is bounded by
    \(C\|g\|_2^2/n_1\), while the variance of the centred quadratic
    term is bounded by
    \[
        C\frac{\tr(\Sigma_1^2)}{n_1^2}
        =\cO(p/n_1^2).
    \]
    After division by \(p^2\) and multiplication by \(n_0^{2\eta}\), both
    bounds converge to zero for every \(\eta\in[0,1]\). The covariance-process
    bound in \Cref{lem:ood-risk-equicontinuity} makes replacement of
    \(\Sigma_1\) by \(\widehat\Sigma_1\) negligible at the same scale.
    Thus the corrected signal criterion is pointwise consistent.
    Because \(q_n(\lambda)\geq0\) and \(x\mapsto[x]_+\) is \(1\)-Lipschitz, the same conclusion holds after applying the positive part:
    \[
        n_0^\eta\{[\widetilde q_n(\lambda)]_+-q_n(\lambda)\}\pto0.
    \]
    For a fixed finite grid, taking the maximum over \(\lambda\in\Lambda\) gives
    uniform convergence. For a compact interval, the scaled stochastic
    equicontinuity in \Cref{lem:ood-risk-equicontinuity} and a finite-net
    argument give the same conclusion. Moreover, by normalisation,
    \Cref{asm:risk-rate-admissibility} implies
    \(n_0^\eta\|\gamma\|_2^2=\Op(1)\). Hence
    \Cref{lem:base-to-augmented-rate} gives
    \[
        n_0^\eta\sup_{\lambda\in\Lambda}q_n(\lambda)=\Op(1),
        \qquad
        n_0^\eta\sup_{\lambda\in\Lambda}
        \|\gamma_\lambda\|_2^2=\Op(1).
    \]
    Finally,
    \[
        \begin{aligned}
        n_0^\eta\sup_{\lambda\in\Lambda}
        |\widehat R_n(\lambda;\gamma)-R_n(\lambda)|
        &\leq|\widehat r^2-r^2|n_0^\eta\sup_{\lambda\in\Lambda}q_n(\lambda)\\
        &\quad+\widehat r^2n_0^\eta\sup_{\lambda\in\Lambda}
        |[\widetilde q_n(\lambda)]_+-q_n(\lambda)|\\
        &\quad+|\widehat\sigma_0^2-\sigma_0^2|n_0^\eta
        \sup_{\lambda\in\Lambda}\|\gamma_\lambda\|_2^2
        \pto0.
        \end{aligned}
    \]

    For \Cref{thm:risk-estimation}~\eqref{thm:risk-estimation-ii}, let
    \(\Delta_n=\sup_{\lambda\in\Lambda}
    |\widehat R_n(\lambda;\gamma)-R_n(\lambda)|\). For every
    \(\lambda\in\Lambda\),
    \begin{align*}
        R_n(\widehat\lambda)
        \leq\widehat R_n(\widehat\lambda;\gamma)+\Delta_n
        \leq\widehat R_n(\lambda;\gamma)+\Delta_n
        \leq R_n(\lambda)+2\Delta_n.
    \end{align*}
    Taking the infimum over \(\lambda\) yields,
    \[
        R_n(\widehat\lambda)
        -\inf_{\lambda\in\Lambda}R_n(\lambda)\leq2\Delta_n.
    \]
    Multiplying by \(n_0^\eta\) and applying \Cref{thm:risk-estimation}~\eqref{thm:risk-estimation-i} proves \Cref{thm:risk-estimation}~\eqref{thm:risk-estimation-ii}.

    For \Cref{thm:risk-estimation}~\eqref{thm:risk-estimation-iii}, let \(f_n(\lambda)=n_0^\eta R_n(\lambda)\),
    \(\widehat f_n(\lambda)
    =n_0^\eta\widehat R_n(\lambda;\gamma)\),
    and \(f(\lambda)=R_{\eta,\star}(\lambda)\). Under the premise of \Cref{thm:risk-estimation}~\eqref{thm:risk-estimation-iii}, \(f\) is continuous on the compact set \(\Lambda\), so
    \(M=\argmin_\Lambda f\) is nonempty. Fix
    \(\varepsilon>0\) and set
    \(\Lambda_\varepsilon=\{\lambda:\operatorname{dist}(\lambda,M)
    \geq\varepsilon\}\). If this set is empty, the claim is immediate.
    Otherwise, compactness gives
    \(c_\varepsilon:=\inf_{\Lambda_\varepsilon}f-\min_\Lambda f>0\). Moreover,
    \[
        \sup_{\lambda\in\Lambda}|\widehat f_n(\lambda)-f(\lambda)|
        \leq n_0^\eta\Delta_n
        +\sup_{\lambda\in\Lambda}|f_n(\lambda)-f(\lambda)|\pto0,
    \]
    by \Cref{thm:risk-estimation}~\eqref{thm:risk-estimation-i} and the uniform-limit premise of \Cref{thm:risk-estimation}~\eqref{thm:risk-estimation-iii}, respectively.

    On the event that this supremum is below \(c_\varepsilon/3\), no
    minimiser of \(\widehat R_n(\cdot;\gamma)\) lies in
    \(\Lambda_\varepsilon\). Hence,
    \[
        \PP\{\operatorname{dist}(\widehat\lambda,M)\geq\varepsilon\}
        \to0.
    \]
    Thus \(\dist(\widehat\lambda,M)\pto0\).
    If \(M=\{\lambda_{\eta,\star}^\ast\}\), then
    \(\widehat\lambda\pto\lambda_{\eta,\star}^\ast\).
    This completes the proof.
\end{proof}

\begin{proof}[Proof of \Cref{rem:no-augmentation-endpoint}]
    Write \(g=\nu_1-X_0^\top\gamma\), so that \(\Delta=g+\epsilon_1\).
    Then
    \[
        \EE\left[
        \frac{1}{p}\|\Delta\|_2^2
        -
        \frac{1}{n_1}\otr(\widehat\Sigma_1)
        \,\middle|\,
        \mathcal H_n
        \right]
        =
        \frac{1}{p}\|g\|_2^2.
    \]
    The target-mean and covariance concentration bounds used above, \Cref{asm:risk-rate-admissibility}, and consistency of \(\widehat r^2\) and \(\widehat\sigma_0^2\) give
    \[
        n_0^\eta
        \left|
        \widehat R_n(\infty;\gamma)-R_n(\infty)
        \right|
        \pto0.
    \]
    Since \(\infty\) adds one candidate point, \Cref{thm:risk-estimation}~\eqref{thm:risk-estimation-i}--\eqref{thm:risk-estimation-ii} remain valid on \(\Lambda\cup\{\infty\}\).
\end{proof}

\subsection{Proof of Theorem~\ref{thm:predictive-calibration}}
\label{app:proof-predictive-calibration}

This subsection proves \Cref{thm:predictive-calibration}.
Supporting results precede the theorem proof, followed by a non-Gaussian coefficient extension.

\subsubsection[Stability of RCB selection (Lemma \ref{lem:ood-post-tuning-stability})]{Stability of RCB selection (\Cref{lem:ood-post-tuning-stability})}

\begin{lemma}[Stability of RCB selection]
    \label{lem:ood-post-tuning-stability}
    Suppose \Cref{asm:predictive-model} and all conditions of
    \Cref{thm:risk-estimation} hold for some \(\eta\in[0,1]\), with
    \(\Lambda=[\lambda_-,\lambda_+]\subset(0,\infty)\).
    Suppose further that a deterministic continuous function \(\sR_{\eta,\star}\) satisfies
    \[
        \sup_{\lambda\in\Lambda}
        |n_0^\eta R_n(\lambda)-\sR_{\eta,\star}(\lambda)|
        \pto0.
    \]
    Suppose \(\sR_{\eta,\star}\) has a unique minimiser \(\lambda_{\eta,\star}^*\) with \(\sR_{\eta,\star}(\lambda_{\eta,\star}^*)>0\).
    Then
    \[
        n_0^{\eta/2}\sup_{\lambda\in\Lambda}
        |\partial_\lambda\widehat\mu_{0,\lambda}|=\Op(1),
    \]
    and
    \[
        \frac{R_n(\widehat\lambda)}
        {R_n(\lambda_{\eta,\star}^*)}\pto1,
        \qquad
        \frac{\widehat R_n(\widehat\lambda)}
        {R_n(\lambda_{\eta,\star}^*)}\pto1.
    \]
\end{lemma}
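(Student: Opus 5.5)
The plan has two parts. First, a direct computation of $\partial_\lambda\widehat\mu_{0,\lambda}$ through the weight representation \eqref{eq:gamma-aug}, with the supremum over $\Lambda$ controlled by a Sobolev-type embedding for random functions of $\lambda$. Second, the two ratio claims, which follow from \Cref{thm:risk-estimation} together with the uniform deterministic limit.

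\textbf{Derivative bound.} Since $\partial_\lambda M_\lambda^c=-(M_\lambda^c)^2$ and $\Delta$, $\gamma$ do not depend on $\lambda$, I would start from
\[
\partial_\lambda\widehat\mu_{0,\lambda}=-\frac1{n_0}\Delta^\top(M_\lambda^c)^2(X_0^c)^\top y_0 .
\]
Under \Cref{asm:predictive-model}, $(X_0^c)^\top y_0=(X_0^c)^\top X_0\beta+(X_0^c)^\top\epsilon_0$, because the intercept is annihilated by the centring. This splits the derivative into two parts:
\[
f_\beta(\lambda)=-\Delta^\top (M_\lambda^c)^2S_0^c\beta=-\Delta^\top M_\lambda^c(I_p-\lambda M_\lambda^c)\beta,
\qquad
f_\epsilon(\lambda)=-\frac1{n_0}\Delta^\top(M_\lambda^c)^2(X_0^c)^\top\epsilon_0 .
\]
Conditional on $\mathcal G_n$, both parts are centred. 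Their variances are
\[
\frac{r^2}{p}\|(I_p-\lambda M_\lambda^c)M_\lambda^c\Delta\|_2^2\le \frac{r^2\|\Delta\|_2^2}{\lambda_-^2p}
\quad\text{and}\quad
\frac{\sigma_0^2}{n_0^2}\Delta^\top (M_\lambda^c)^2 X_0^{c\top}X_0^c(M_\lambda^c)^2\Delta\le \frac{C\|\Delta\|_2^2}{n_0}.
\]
By \Cref{lem:base-to-augmented-rate} and $p\asymp n_0$, both are $\Op(n_0^{-\eta})$. The $\lambda$-derivatives of $f_\beta$ and $f_\epsilon$ have the same form with extra bounded resolvent factors, so their conditional second moments are also $\Op(n_0^{-\eta})$, uniformly on $\Lambda$. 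To pass to the supremum I would use, for a fixed $\lambda_0\in\Lambda$,
\[
\sup_{\lambda\in\Lambda}|f(\lambda)|^2\le 2|f(\lambda_0)|^2+2|\Lambda|\int_\Lambda|f'(\lambda)|^2\,\rd\lambda .
\]
Taking conditional expectations and applying the conditional Markov inequality then gives $n_0^{\eta/2}\sup_\Lambda|\partial_\lambda\widehat\mu_{0,\lambda}|=\Op(1)$.

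\textbf{Risk ratios.} By \Cref{thm:risk-estimation}~\eqref{thm:risk-estimation-iii} with the given uniform limit, $\widehat\lambda\pto\lambda_{\eta,\star}^*$. Uniform convergence of $n_0^\eta R_n$ to the continuous $\sR_{\eta,\star}$ gives
\[
|n_0^\eta R_n(\widehat\lambda)-\sR_{\eta,\star}(\lambda_{\eta,\star}^*)|\le \sup_\Lambda|n_0^\eta R_n-\sR_{\eta,\star}|+|\sR_{\eta,\star}(\widehat\lambda)-\sR_{\eta,\star}(\lambda_{\eta,\star}^*)|\pto0 .
\]
Likewise $n_0^\eta R_n(\lambda_{\eta,\star}^*)\pto\sR_{\eta,\star}(\lambda_{\eta,\star}^*)>0$. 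Because the limit is strictly positive, the continuous mapping theorem gives $R_n(\widehat\lambda)/R_n(\lambda_{\eta,\star}^*)\pto1$. For the feasible ratio, \Cref{thm:risk-estimation}~\eqref{thm:risk-estimation-i} gives $n_0^\eta|\widehat R_n(\widehat\lambda)-R_n(\widehat\lambda)|\le n_0^\eta\sup_\Lambda|\widehat R_n-R_n|\pto0$. Dividing by $n_0^\eta R_n(\lambda_{\eta,\star}^*)$, which is bounded away from zero with probability tending to one, yields the second ratio.

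\textbf{Main obstacle.} The only delicate step is the uniform-in-$\lambda$ derivative bound. A crude Cauchy--Schwarz bound such as $\|\Delta\|_2\,\|n_0^{-1}(M_\lambda^c)^2X_0^{c\top}\epsilon_0\|_2$ loses a factor $\sqrt p$. One must therefore exploit the conditional centring in $(\beta,\epsilon_0)$ so that only the quadratic form $\|\Delta\|_2^2/p$ or $\|\Delta\|_2^2/n_0$ appears. The embedding inequality above is what makes the supremum over $\Lambda$ cost nothing beyond bounded resolvent derivatives on $[\lambda_-,\lambda_+]$.
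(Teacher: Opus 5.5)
Your proposal is correct and follows the paper's proof essentially step for step. Both use the same derivative formula, the same conditional second-moment bound of order $\|\Delta\|_2^2/p+\|\Delta\|_2^2/n_0$ for the first and second $\lambda$-derivatives, a one-dimensional Sobolev embedding to pass to the supremum, and the identical ratio argument via parts (iii) and (i) of \Cref{thm:risk-estimation}. Your identity $(M_\lambda^c)^2S_0^c=M_\lambda^c(I_p-\lambda M_\lambda^c)$ is a small tidy-up: it removes the need for the operator-norm event $\{\|S_0^c\|_{\mathrm{op}}\le K\}$ that the paper invokes.
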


\begin{proof}[Proof of \Cref{lem:ood-post-tuning-stability}]
    By \Cref{lem:base-to-augmented-rate},
    \[
        n_0^\eta\frac{\|\Delta\|_2^2}{p}=\Op(1).
    \]
    Since \(\partial_\lambda M_\lambda^c=-(M_\lambda^c)^2\),
    \[
        \partial_\lambda\widehat\mu_{0,\lambda}
        =-\frac1{n_0}\Delta^\top
        (M_\lambda^c)^2(X_0^c)^\top y_0.
    \]
    Substituting \(y_0=\alpha_0\one_{n_0}+X_0\beta+\epsilon_0\), the intercept vanishes because \((X_0^c)^\top\one_{n_0}=0\).
    Using \Cref{asm:predictive-model} and conditioning on
    \(\mathcal G_n\), the signal-residual cross term vanishes and
    \[
        \EE\left\{|\partial_\lambda\widehat\mu_{0,\lambda}|^2
        \,\middle|\,\mathcal G_n\right\}
        \leq C_K\left\{
        \frac{\|\Delta\|_2^2}{p}
        +\frac{\|\Delta\|_2^2}{n_0}
        \right\}.
    \]
    Since
    \(\partial_\lambda^2\widehat\mu_{0,\lambda}
    =2n_0^{-1}\Delta^\top (M_\lambda^c)^3
    (X_0^c)^\top y_0\), the same calculation gives this bound for its conditional second moment.
    Both calculations hold on
    \(\mathcal E_K=\{\|S_0^c\|_{\oper}\leq K\}\).
    The sample-covariance operator-norm bound and the one-dimensional Sobolev inequality then give the asserted uniform derivative rate.

    By \Cref{thm:risk-estimation}~\eqref{thm:risk-estimation-iii}, \(\widehat\lambda\pto\lambda_{\eta,\star}^*\). 
    Uniform convergence of \(n_0^\eta R_n\), continuity of \(\sR_{\eta,\star}\), and positivity at \(\lambda_{\eta,\star}^*\) give the first ratio.
    Together with \Cref{thm:risk-estimation}~\eqref{thm:risk-estimation-i}, this gives the second ratio.
\end{proof}

\subsubsection[Conditional perturbation and plug-in studentisation (Lemmas \ref{lem:conditional-kolmogorov-perturbation,lem:plugin-studentization})]{Conditional perturbation and plug-in studentisation (\Cref{lem:conditional-kolmogorov-perturbation,lem:plugin-studentization})}

\begin{lemma}[Conditional perturbation in Kolmogorov distance]
    \label{lem:conditional-kolmogorov-perturbation}
    Let \(Y_n,Z_n\) be real-valued random variables and let \(\mathcal G_n\) be
    a sigma-field. Suppose, for a continuous distribution function \(F\),
    \[
        \sup_{t\in\RR}
        \left|\PP(Y_n\leq t\mid\mathcal G_n)-F(t)\right|\pto0,
    \]
    and, for every \(\varepsilon>0\),
    \[
        \PP(|Z_n|>\varepsilon\mid\mathcal G_n)\pto0.
    \]
    Then
    \[
        \sup_{t\in\RR}
        \left|\PP(Y_n+Z_n\leq t\mid\mathcal G_n)-F(t)\right|\pto0.
    \]
\end{lemma}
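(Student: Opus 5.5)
The argument is a conditional Slutsky-type $\varepsilon$-sandwich carried out entirely under the conditional law given $\mathcal G_n$. Write $p_n(t):=\PP(Y_n\leq t\mid\mathcal G_n)$, $q_n(t):=\PP(Y_n+Z_n\leq t\mid\mathcal G_n)$, and $\delta_n(\varepsilon):=\PP(|Z_n|>\varepsilon\mid\mathcal G_n)$, working with regular conditional probabilities so that the events below are handled pathwise. For fixed $\varepsilon>0$ and every $t$ we have
\[
\{Y_n\leq t-\varepsilon\}\setminus\{|Z_n|>\varepsilon\}\subseteq\{Y_n+Z_n\leq t\}\subseteq\{Y_n\leq t+\varepsilon\}\cup\{|Z_n|>\varepsilon\}.
\]
Taking conditional probabilities gives, almost surely and simultaneously for all $t$,
\[
p_n(t-\varepsilon)-\delta_n(\varepsilon)\leq q_n(t)\leq p_n(t+\varepsilon)+\delta_n(\varepsilon).
\]

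Next we pass from $p_n$ to $F$. Let $D_n:=\sup_t|p_n(t)-F(t)|$, which satisfies $D_n\pto0$ by hypothesis. From the sandwich,
\[
|q_n(t)-F(t)|\leq D_n+\delta_n(\varepsilon)+\max\{F(t+\varepsilon)-F(t),\,F(t)-F(t-\varepsilon)\}.
\]
The last term is bounded by the modulus of continuity $\omega_F(\varepsilon):=\sup_t\{F(t+\varepsilon)-F(t)\}$. Since $F$ is a continuous distribution function with limits $0$ and $1$ at $\mp\infty$, it is uniformly continuous on $\RR$, so $\omega_F(\varepsilon)\to0$ as $\varepsilon\downarrow0$. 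Hence
\[
\sup_t|q_n(t)-F(t)|\leq D_n+\delta_n(\varepsilon)+\omega_F(\varepsilon).
\]

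To finish, fix $\kappa>0$ and choose $\varepsilon$ with $\omega_F(\varepsilon)<\kappa/3$. Then
\[
\PP\Big(\sup_t|q_n(t)-F(t)|>\kappa\Big)\leq\PP(D_n>\kappa/3)+\PP(\delta_n(\varepsilon)>\kappa/3),
\]
and both terms on the right tend to zero by the two hypotheses. This proves the claim.

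The only step needing care is the uniform continuity of $F$ (continuity plus the limits at $\pm\infty$ give it) and the measurability of the suprema. For the latter, restrict to rational $t$: $q_n$ and $F$ are right-continuous and $F$ is continuous, so the supremum over the rationals coincides with the supremum over $\RR$. No other obstacle arises.
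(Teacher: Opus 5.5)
Your proposal is correct and is essentially the paper's argument: the same $\varepsilon$-sandwich leading to $\sup_t|q_n(t)-F(t)|\leq D_n+\delta_n(\varepsilon)+\omega_F(\varepsilon)$ (the paper uses the slightly coarser term $\sup_t\{F(t+\varepsilon)-F(t-\varepsilon)\}$), closed by uniform continuity of $F$. Your explicit $\kappa/3$ step, the use of regular conditional probabilities, and the measurability remark make rigorous what the paper leaves implicit.
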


\begin{proof}[Proof of \Cref{lem:conditional-kolmogorov-perturbation}]
    For every \(\varepsilon>0\),
    \[
        \begin{split}
        \sup_t\left|\PP(Y_n+Z_n\leq t\mid\mathcal G_n)-F(t)\right|
        \leq{}&
        \sup_t\left|\PP(Y_n\leq t\mid\mathcal G_n)-F(t)\right|\\
        &+\PP(|Z_n|>\varepsilon\mid\mathcal G_n)
        +\sup_t\{F(t+\varepsilon)-F(t-\varepsilon)\}.
        \end{split}
    \]
    The first two terms converge to zero in probability.
    The final deterministic term tends to zero as \(\varepsilon\downarrow0\) by uniform continuity of \(F\).
\end{proof}

\begin{lemma}[Plug-in studentisation]
    \label{lem:plugin-studentization}
    Let \(T_n\) be real-valued and let \(s_n,\widehat s_n>0\) with probability
    tending to one. Suppose
    \[
        \sup_{t\in\RR}
        \left|
        \PP(T_n/s_n\leq t\mid\mathcal G_n)-\Phi(t)
        \right|\pto0
    \]
    and \(\widehat s_n/s_n\pto1\). Then the same conclusion holds with
    \(\widehat s_n\) in place of \(s_n\).
\end{lemma}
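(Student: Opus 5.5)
The plan is to write the studentised statistic as a perturbation of the one with the oracle scale and then apply \Cref{lem:conditional-kolmogorov-perturbation}. On the event \(\{s_n>0,\widehat s_n>0\}\) we decompose
\[
    \frac{T_n}{\widehat s_n}
    =
    \frac{T_n}{s_n}
    +
    \frac{T_n}{s_n}\left(\frac{s_n}{\widehat s_n}-1\right).
\]
Off this event the statistic may be defined arbitrarily, since that event has probability tending to zero. Set \(Y_n=T_n/s_n\) and \(Z_n=Y_n(s_n/\widehat s_n-1)\), and let \(F=\Phi\), which is continuous. The first hypothesis of \Cref{lem:conditional-kolmogorov-perturbation} is exactly the assumed conditional Kolmogorov convergence. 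So it remains to show that \(\PP(|Z_n|>\varepsilon\mid\mathcal G_n)\pto0\) for every \(\varepsilon>0\).

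For this we split on a large bound \(K\) for \(|Y_n|\) and a small \(\delta\):
\[
    \PP(|Z_n|>\varepsilon\mid\mathcal G_n)
    \le
    \PP(|Y_n|>K\mid\mathcal G_n)
    +
    \PP\left(\left|\frac{s_n}{\widehat s_n}-1\right|>\frac{\varepsilon}{K}\,\middle|\,\mathcal G_n\right).
\]
For the first term, the conditional Kolmogorov convergence gives
\(\PP(|Y_n|>K\mid\mathcal G_n)\le 2\{1-\Phi(K)\}+2\sup_t|\PP(Y_n\le t\mid\mathcal G_n)-\Phi(t)|\). This is below any prescribed level with probability tending to one, once \(K\) is chosen large.

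The main obstacle is the second term. The assumption \(\widehat s_n/s_n\pto1\) is unconditional, but we need a conditional statement. The continuous mapping theorem gives \(s_n/\widehat s_n\pto1\). For any \(\delta>0\), write \(W_n=\PP(|s_n/\widehat s_n-1|>\delta\mid\mathcal G_n)\). Then \(W_n\ge0\) and \(\EE W_n=\PP(|s_n/\widehat s_n-1|>\delta)\to0\), so Markov's inequality gives \(W_n\pto0\). Taking \(\delta=\varepsilon/K\) and then letting \(K\to\infty\) after \(n\to\infty\) yields \(\PP(|Z_n|>\varepsilon\mid\mathcal G_n)\pto0\).

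Applying \Cref{lem:conditional-kolmogorov-perturbation} then gives the conclusion with \(\widehat s_n\) in place of \(s_n\).
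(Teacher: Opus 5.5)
Your proposal is correct and follows essentially the paper's argument: the same decomposition \(T_n/\widehat s_n=Y_n+Z_n\) with \(Z_n=Y_n(s_n/\widehat s_n-1)\) is fed into \Cref{lem:conditional-kolmogorov-perturbation}, and the conditional negligibility of \(Z_n\) comes from \(\EE\{\PP(\cdot\mid\mathcal G_n)\}=\PP(\cdot)\) together with Markov's inequality. The only difference is bookkeeping. The paper first shows \(Z_n=\op(1)\) unconditionally, using \(Y_n=\Op(1)\), and then converts to a conditional statement once. You instead split conditionally on \(\{|Y_n|>K\}\) and apply the conversion only to the scale-ratio term.
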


\begin{proof}[Proof of \Cref{lem:plugin-studentization}]
    Let
    \[
        Y_n=\frac{T_n}{s_n},
        \qquad
        A_n=\frac{s_n}{\widehat s_n},
        \qquad
        Z_n=(A_n-1)Y_n.
    \]
    The assumed conditional Kolmogorov convergence implies \(Y_n=\Op(1)\).
    Since \(A_n-1=\op(1)\), \(Z_n=\op(1)\).
    Thus, for every \(\varepsilon>0\),
    \[
        \EE\{\PP(|Z_n|>\varepsilon\mid\mathcal G_n)\}
        =\PP(|Z_n|>\varepsilon)\longrightarrow0,
    \]
    so \(\PP(|Z_n|>\varepsilon\mid\mathcal G_n)\pto0\).
    Since \(A_nY_n=Y_n+Z_n\), \Cref{lem:conditional-kolmogorov-perturbation} with \(F=\Phi\) gives the result.
\end{proof}

\subsubsection[Normal approximation (Proposition \ref{prop:post-tuning-transfer})]{Normal approximation (\Cref{prop:post-tuning-transfer})}

\begin{proposition}[Normal approximation]
    \label{prop:post-tuning-transfer}
    Let \(\Lambda\) be a compact interval and let
    \(\widetilde\lambda_n\in\Lambda\) be \(\mathcal G_n\)-measurable.
    Let \(s_n(\widetilde\lambda_n)\) be a positive scale.
    No
    \(\mathcal G_n\)-measurability condition is imposed on
    \(\widehat\lambda\); in the application below it may depend on \(y_0\).
    Suppose
    \[
        \sup_{t\in\RR}
        \left|
        \PP\left(
        \frac{\widehat\mu_{0,\widetilde\lambda_n}-\mu_{0,\beta}}
        {s_n(\widetilde\lambda_n)}
        \leq t
        \,\middle|\,\mathcal G_n
        \right)-\Phi(t)
        \right|\pto0.
    \]
    If
    \[
        |\widehat\lambda-\widetilde\lambda_n|=\op(1),
        \qquad
        \sup_{\lambda\in\Lambda}
        \frac{|\partial_\lambda\widehat\mu_{0,\lambda}|}
        {s_n(\widetilde\lambda_n)}
        =\Op(1),
    \]
    then the same conditional conclusion holds for
    \[
        \frac{\widehat\mu_{0,\widehat\lambda}-\mu_{0,\beta}}
        {s_n(\widetilde\lambda_n)}.
    \]
\end{proposition}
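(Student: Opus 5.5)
The plan is to write the selected-penalty statistic as the fixed-penalty statistic plus a perturbation, and then apply \Cref{lem:conditional-kolmogorov-perturbation} with $F=\Phi$. Put $s_n=s_n(\widetilde\lambda_n)$ and
\[
    Y_n=\frac{\widehat\mu_{0,\widetilde\lambda_n}-\mu_{0,\beta}}{s_n},
    \qquad
    Z_n=\frac{\widehat\mu_{0,\widehat\lambda}-\widehat\mu_{0,\widetilde\lambda_n}}{s_n}.
\]
Then $(\widehat\mu_{0,\widehat\lambda}-\mu_{0,\beta})/s_n=Y_n+Z_n$. The hypothesis already gives conditional Kolmogorov convergence of $Y_n$ to $\Phi$. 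So the only thing to check is that, for every $\varepsilon>0$, $\PP(|Z_n|>\varepsilon\mid\mathcal G_n)\pto0$.

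To control $Z_n$, note that $\Lambda$ is a compact interval containing both $\widehat\lambda$ and $\widetilde\lambda_n$. Also, $\lambda\mapsto\widehat\mu_{0,\lambda}=\widehat\gamma_\lambda^\top y_0$ is continuously differentiable on $\Lambda$, because $M_\lambda^c$ is smooth in $\lambda>0$. The mean value theorem along the segment between the two penalties therefore gives
\[
    |Z_n|\le|\widehat\lambda-\widetilde\lambda_n|\cdot\sup_{\lambda\in\Lambda}\frac{|\partial_\lambda\widehat\mu_{0,\lambda}|}{s_n}.
\]
The first factor is $\op(1)$ and the second is $\Op(1)$ by assumption. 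Their product is $\op(1)$ under the joint law.

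The remaining step, which is the only subtle one, is to pass from unconditional to conditional smallness, since $\widehat\lambda$ is not $\mathcal G_n$-measurable. I would argue as in the proof of \Cref{lem:plugin-studentization}. For fixed $\varepsilon$, set $W_n=\PP(|Z_n|>\varepsilon\mid\mathcal G_n)\in[0,1]$. Then $\EE W_n=\PP(|Z_n|>\varepsilon)\to0$, and by Markov's inequality $W_n\pto0$. No measurability of $\widehat\lambda$ is used, since $Z_n$ is just a random variable and the conditional probability is defined regardless. \Cref{lem:conditional-kolmogorov-perturbation}, with $F=\Phi$ uniformly continuous, then gives the stated conditional Kolmogorov convergence, which completes the argument.

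I expect this conditional-versus-joint step to be the main point requiring care; the rest is a one-line mean-value bound. If one wants $s_n$ to be random, it must be $\mathcal G_n$-measurable, but positivity with probability tending to one is enough, with the statistic defined arbitrarily on the complement.
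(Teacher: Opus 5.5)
Your proposal is correct and matches the paper's proof essentially step for step. Both write the statistic as $Y_n+Z_n$, bound $|Z_n|$ by the mean value theorem on the interval $\Lambda$, and pass from unconditional to conditional smallness through $\EE\{\PP(|Z_n|>\varepsilon\mid\mathcal G_n)\}=\PP(|Z_n|>\varepsilon)\to0$; both then conclude with \Cref{lem:conditional-kolmogorov-perturbation} for $F=\Phi$.
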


\begin{proof}[Proof of \Cref{prop:post-tuning-transfer}]
    Put
    \[
        Y_n=\frac{\widehat\mu_{0,\widetilde\lambda_n}-\mu_{0,\beta}}
        {s_n(\widetilde\lambda_n)},
        \qquad
        Z_n=\frac{\widehat\mu_{0,\widehat\lambda}
        -\widehat\mu_{0,\widetilde\lambda_n}}
        {s_n(\widetilde\lambda_n)}.
    \]
    The mean-value theorem gives
    \[
        |Z_n|
        \leq
        |\widehat\lambda-\widetilde\lambda_n|
        \sup_{\lambda\in\Lambda}
        \frac{|\partial_\lambda\widehat\mu_{0,\lambda}|}
        {s_n(\widetilde\lambda_n)}
        =\op(1).
    \]
    For every \(\varepsilon>0\),
    \[
        \EE\{\PP(|Z_n|>\varepsilon\mid\mathcal G_n)\}
        =\PP(|Z_n|>\varepsilon)\longrightarrow0,
    \]
    so \(\PP(|Z_n|>\varepsilon\mid\mathcal G_n)\pto0\).
    Applying \Cref{lem:conditional-kolmogorov-perturbation} with \(F=\Phi\) gives the asserted conditional convergence for \(Y_n+Z_n\).
\end{proof}

\subsubsection[Proof of Theorem \ref{thm:predictive-calibration}]{Proof of \Cref{thm:predictive-calibration}}
\begin{proof}[Proof of \Cref{thm:predictive-calibration}]
    For \Cref{thm:predictive-calibration}~\eqref{thm:predictive-calibration-i}, conditional on \(\mathcal G_n\),
    \[
        \widehat\mu_{0,\lambda}-\mu_{0,\beta}
        =d_n(\lambda)^\top\beta+\gamma_\lambda^\top\epsilon_0
    \]
    is centred Gaussian with variance \(R_n(\lambda)\), since the two terms are conditionally independent centred Gaussian variables.
    Thus the exact conditional Gaussian statement follows.
    Since \(\one_{n_0}^\top\gamma_\lambda=1\), Cauchy--Schwarz gives \(\|\gamma_\lambda\|_2^2\geq n_0^{-1}\), and hence \(R_n(\lambda)\geq\sigma_0^2/n_0>0\).
    Therefore,
    \[
        \PP\{\widehat R_n(\lambda)\leq0\}
        \leq
        \PP\left\{
        \left|
        \frac{\widehat R_n(\lambda)}{R_n(\lambda)}-1
        \right|\geq1
        \right\}
        \to0.
    \]
    Define
    \[
        \widehat s_n
        =
        \begin{cases}
        \widehat R_n(\lambda)^{1/2},
            & \widehat R_n(\lambda)>0,\\
        1,  & \widehat R_n(\lambda)\leq0.
        \end{cases}
    \]
    Then
    \[
        \frac{\widehat s_n}{R_n(\lambda)^{1/2}}\pto1,
    \]
    so \Cref{lem:plugin-studentization} applies with \(s_n=R_n(\lambda)^{1/2}\).
    Any other definition on \(\{\widehat R_n(\lambda)\leq0\}\) gives the same limit because the conditional probability of this event converges to zero in probability.

    For \Cref{thm:predictive-calibration}~\eqref{thm:predictive-calibration-ii}, set
    \(\widetilde\lambda_n=\lambda_{\eta,\star}^*\) and
    \(s_n=R_n(\lambda_{\eta,\star}^*)^{1/2}\).
    By \Cref{thm:risk-estimation}~\eqref{thm:risk-estimation-iii},
    \(\widehat\lambda\pto\lambda_{\eta,\star}^*\).
    By \Cref{lem:ood-post-tuning-stability},
    \(n_0^{\eta/2}\sup_{\lambda\in\Lambda}|\partial_\lambda\widehat\mu_{0,\lambda}|=\Op(1)\) and
    \(\widehat R_n(\widehat\lambda)/R_n(\lambda_{\eta,\star}^*)\pto1\).
    The uniform risk limit and \(\sR_{\eta,\star}(\lambda_{\eta,\star}^*)>0\) give
    \[
        n_0^{\eta/2}s_n
        \pto
        \sR_{\eta,\star}(\lambda_{\eta,\star}^*)^{1/2}>0,
    \]
    so the derivative condition in \Cref{prop:post-tuning-transfer} holds.
    Part~\eqref{thm:predictive-calibration-i} gives the conditional normal pivot at the deterministic penalty \(\widetilde\lambda_n\), and \Cref{prop:post-tuning-transfer} therefore gives the selected-penalty conditional Gaussian limit with denominator \(s_n\).

    The risk ratio above implies \(\PP\{\widehat R_n(\widehat\lambda)>0\}\to1\).
    Define
    \[
        \widehat s_n
        =
        \begin{cases}
        \widehat R_n(\widehat\lambda)^{1/2},
            & \widehat R_n(\widehat\lambda)>0,\\
        1,  & \widehat R_n(\widehat\lambda)\leq0.
        \end{cases}
    \]
    Then \(\widehat s_n/s_n\pto1\), so \Cref{lem:plugin-studentization} gives the stated selected-penalty result.
    Any other definition on \(\{\widehat R_n(\widehat\lambda)\leq0\}\) gives the same limit because the conditional probability of this event converges to zero in probability.
\end{proof}

\subsubsection[Non-Gaussian coefficient extension (Theorem \ref{thm:predictive-normality-nongaussian})]{Non-Gaussian coefficient extension (\Cref{thm:predictive-normality-nongaussian})}

For deterministic or \(\mathcal G_n\)-measurable \(\lambda\), \Cref{asm:variance-component-model} gives
\begin{equation}
    \widehat\mu_{0,\lambda}-\mu_{0,\beta}
    =\sum_{j=1}^p\frac{r d_{n,j}(\lambda)}{\sqrt p}\xi_{n,j}
    +\gamma_\lambda^\top\epsilon_0.
    \label{eq:predictive-representation}
\end{equation}

\begin{lemma}[Conditional characteristic-function convergence]
    \label{lem:conditional-characteristic-functions}
    Let \(\mu_n\) be random probability measures with characteristic functions
    \(\varphi_n\). Suppose that, for every fixed \(t\in\RR\),
    \(\varphi_n(t)\pto\varphi(t)\), where \(\varphi\) is the characteristic
    function of a deterministic probability law \(F\). Then
    \(\mu_n\Rightarrow F\) weakly in probability. If \(F\) has a continuous
    distribution function, then
    \[
        \sup_{t\in\RR}|\mu_n(({-}\infty,t])-F(t)|\pto0.
    \]
\end{lemma}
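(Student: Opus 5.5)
The plan is to prove the statement in two stages: first pointwise convergence of distributions in probability, then the uniform Kolmogorov bound. The one genuinely delicate point is that \(\mu_n\) is a random measure, so its characteristic function \(\varphi_n(t)\) is a random variable for each \(t\). The standard deterministic L\'evy continuity theorem therefore cannot be applied directly. To handle this, I will work with a countable dense set of frequencies and use the subsequence characterisation of convergence in probability.

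\textbf{Step 1: reduction to almost-sure convergence along subsequences.} Let \(\{t_k\}\) enumerate \(\mathbb Q\). Take an arbitrary subsequence of \(n\). Since \(\varphi_n(t_1)\pto\varphi(t_1)\), we can extract a further subsequence along which \(\varphi_n(t_1)\to\varphi(t_1)\) almost surely. Repeating this for \(t_2,t_3,\ldots\) and taking a diagonal subsequence \(n''\), we obtain an event of probability one on which \(\varphi_{n''}(t_k)\to\varphi(t_k)\) for every \(k\).

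\textbf{Step 2: L\'evy continuity on that event.} Fix an outcome in the probability-one event. The sequence \(\mu_{n''}\) is then a deterministic sequence of probability measures whose characteristic functions converge on \(\mathbb Q\) to the continuous function \(\varphi\). I would first establish tightness. The truncation inequality
\[
\mu\{|x|>2/u\}\le u^{-1}\int_{-u}^{u}\{1-\operatorname{Re}\varphi_\mu(t)\}\,\rd t
\]
can be applied with \(u\) rational. The integrand is bounded by \(2\), so dominated convergence is available, but only along rational points. To move from rational points to the integral, I would replace the integral by a Riemann sum over rational grid points plus an error term. That error term is uniform in \(n''\) only if the \(\varphi_{n''}\) are equicontinuous, which is circular at this stage.

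The circularity can be broken by a cleaner route: use Prokhorov compactness in the vague topology. Any vague subsequential limit \(\nu\) of \(\mu_{n''}\) is a sub-probability measure. Its characteristic function agrees with \(\varphi\) in averaged form, because
\[
\int_0^u \varphi_{\mu_{n''}}\,\rd t=\int \frac{e^{iux}-1}{ix}\,\rd\mu_{n''}(x),
\]
and the integrand on the right is a \(C_0\)-type test function. Passing to the limit yields \(u^{-1}\int_0^u\varphi\,\rd t\) for rational \(u\), which extends to all \(u\) by continuity. Letting \(u\to0\) then gives \(\nu(\RR)=\varphi(0)=1\), so no mass escapes, and the limit is identified as \(F\). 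Consequently \(\mu_{n''}\Rightarrow F\) almost surely.

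\textbf{Step 3: conclusion.} Since every subsequence has a further subsequence along which weak convergence holds almost surely, \(\mu_n\Rightarrow F\) weakly in probability. If \(F\) is continuous, P\'olya's theorem upgrades weak convergence to uniform convergence of distribution functions on the almost-sure event of each sub-subsequence. This gives \(\sup_t|\mu_{n''}((-\infty,t])-F(t)|\to0\) almost surely, and the subsequence criterion again yields convergence in probability.

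The main obstacle is Step 2's tightness argument, specifically the justification of the convergence \(\int_0^u\varphi_{\mu_{n''}}\to\int_0^u\varphi\). Convergence is only known at rational points, so dominated convergence cannot be used directly. If this step cannot be made clean, I would instead diagonalise over \(t_k\) together with the countably many integrals \(\int_0^{u}\varphi_n\) for \(u\in\mathbb Q\). These integrals themselves converge in probability, by Fubini and bounded convergence in \(L^1(\PP)\): for each fixed \(t\), \(\varphi_n(t)\to\varphi(t)\) in \(L^1\), because the sequence is bounded and converges in probability. With these integrals included in the diagonal extraction, Step 2 goes through.
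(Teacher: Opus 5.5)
Your proof is correct, provided the final paragraph is treated as the main argument and not as a fallback. The rational-frequency diagonalisation of Step~1 cannot support Step~2 on its own, because convergence of characteristic functions on $\mathbb{Q}$ to a continuous limit does not force weak convergence. For example, take $\mu_n=\delta_{2\pi n!}$: for each rational $t=p/q$ one has $\varphi_n(t)=1$ for all $n\ge q$, yet all mass escapes to infinity.

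Once the integrals $\int_0^u\varphi_n(t)\,\rd t$, $u\in\mathbb{Q}$, are placed in the diagonal extraction, Step~1 over $t_k$ is no longer needed. Your justification of their $L^1(\PP)$ convergence, by bounded convergence and Fubini, is right. Your vague-compactness argument then closes. The map $x\mapsto(e^{iux}-1)/(ix)$ lies in $C_0(\RR)$, so every vague limit $\nu$ satisfies $\int_0^u\widehat\nu(t)\,\rd t=\int_0^u\varphi(t)\,\rd t$ for all $u$ (rational $u$ first, then by continuity), where $\widehat\nu$ is its Fourier transform. This yields both $\nu(\RR)=1$ and $\widehat\nu=\varphi$.

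The paper uses the same subsequence skeleton but a different device at the key step. It tracks one random variable $\int_{-T}^{T}|\varphi_n(t)-\varphi(t)|\,\rd t$ for each integer $T$. On the almost-sure event it obtains tightness from the truncation inequality $\mu\{|x|>2/u\}\le u^{-1}\int_{-u}^{u}\{1-\operatorname{Re}\varphi_\mu(t)\}\,\rd t$. It identifies each subsequential weak limit because local $L^1$ convergence makes its characteristic function equal $\varphi$ almost everywhere, hence everywhere by continuity.

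Your route replaces the tightness inequality by Helly selection plus a no-mass-loss check as $u\to0$, which is the classical vague-limit proof of L\'evy's theorem. The cost is tracking countably many signed integrals. The paper's absolute-value integrals are more economical, and they show that your abandoned Step~2 attempt was sound. Once you have integrated rather than pointwise control, the truncation-inequality route goes through directly, with no Riemann-sum approximation or equicontinuity needed. The P\'olya upgrade and the subsequence criterion are the same in both arguments.
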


\begin{proof}[Proof of \Cref{lem:conditional-characteristic-functions}]
    Since \(|\varphi_n(t)-\varphi(t)|\leq2\), convergence in probability implies
    \(\EE|\varphi_n(t)-\varphi(t)|\to0\) for every \(t\in\RR\).
    Hence, for every \(T<\infty\), dominated convergence and Fubini's theorem give
    \[
        \int_{-T}^T|\varphi_n(t)-\varphi(t)|\,\rd t\pto0.
    \]
    From any subsequence, extract a further subsequence along which these integrals converge almost surely to zero for every positive integer \(T\).
    Since \(\varphi\) is continuous at zero, the characteristic-function tightness criterion gives tightness of the corresponding probability measures almost surely.
    Every weakly convergent further subsequence has characteristic function equal to \(\varphi\) almost everywhere by the local \(L_1\) convergence, hence everywhere by continuity.
    L\'evy's continuity theorem therefore identifies its limit as \(F\).
    The subsequence principle yields \(\mu_n\Rightarrow F\) weakly in probability.
    If \(F\) has a continuous distribution function, P\'olya's theorem and the same subsequence argument give uniform convergence of the distribution functions in probability.
\end{proof}

\begin{theorem}[Non-Gaussian coefficient extension]
    \label{thm:predictive-normality-nongaussian}
    Suppose \Cref{asm:predictive-model} and the coefficient conditions in
    \Cref{asm:variance-component-model} hold, and suppose
    \(\epsilon_0\mid\mathcal G_n\sim\cN(0,\sigma_0^2I_{n_0})\), independently of
    \(\beta\). Let \(\lambda\) be deterministic or
    \(\mathcal G_n\)-measurable, set \(s_n^2=R_n(\lambda)\), and assume
    \[
        \max_{j\leq p}
        \frac{|d_{n,j}(\lambda)|}{\sqrt p\,s_n}\pto0.
    \]
    Then
    \[
        \sup_{t\in\RR}
        \left|
        \PP\left(
            \frac{\widehat\mu_{0,\lambda}-\mu_{0,\beta}}{s_n}\leq t
            \,\middle|\,\mathcal G_n
        \right)-\Phi(t)
        \right|\pto0.
    \]
    If \(n_0^\eta s_n^2\pto\sR_{\eta,\star}\in(0,\infty)\), then
    \[
        n_0^{\eta/2}(\widehat\mu_{0,\lambda}-\mu_{0,\beta})
        \dto\cN(0,\sR_{\eta,\star}).
    \]
    If, in addition, \(\widehat s_n>0\) with probability tending to one and
    \(\widehat s_n/s_n\pto1\), the same conditional conclusion holds with
    \(\widehat s_n\) in the denominator.
    The studentised statistic may be defined arbitrarily on \(\{\widehat s_n\leq0\}\), whose probability tends to zero.
\end{theorem}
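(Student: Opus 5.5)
The plan is to work conditionally on \(\mathcal G_n\) throughout and to use the representation \eqref{eq:predictive-representation}. The prediction error decomposes as a sum of two conditionally independent pieces:
\[
    \widehat\mu_{0,\lambda}-\mu_{0,\beta}=S_n+E_n,
    \qquad
    S_n=\sum_{j=1}^p a_{n,j}\xi_{n,j},
    \qquad
    E_n=\gamma_\lambda^\top\epsilon_0,
\]
where \(a_{n,j}=r d_{n,j}(\lambda)/\sqrt p\). Since \(\lambda\) is deterministic or \(\mathcal G_n\)-measurable, both \(d_n(\lambda)\) and \(\gamma_\lambda\) are fixed given \(\mathcal G_n\). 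Conditional on \(\mathcal G_n\), \(E_n\sim\cN(0,\sigma_0^2\|\gamma_\lambda\|_2^2)\) exactly, and \(S_n\) is a weighted sum of independent standardised variables with conditional variance \(\sum_j a_{n,j}^2=B_n(\lambda)\). Hence \(s_n^2=R_n(\lambda)=B_n(\lambda)+V_n(\lambda)\) is the conditional variance of the sum, and it is strictly positive because \(\|\gamma_\lambda\|_2^2\geq n_0^{-1}\).

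Because \(\beta\) and \(\epsilon_0\) are conditionally independent, the conditional characteristic function of \((S_n+E_n)/s_n\) factorises as
\[
    \varphi_n(t)=\exp\{-t^2V_n/(2s_n^2)\}\prod_{j}\psi_{n,j}(ta_{n,j}/s_n),
\]
where \(\psi_{n,j}\) is the characteristic function of \(\xi_{n,j}\). I will show \(\varphi_n(t)\pto e^{-t^2/2}\) for each fixed \(t\), and then apply \Cref{lem:conditional-characteristic-functions} with \(F=\Phi\) to obtain the Kolmogorov statement.

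For the product I use the Lindeberg-type expansion \(|\psi_{n,j}(u)-(1-u^2/2)|\leq C|u|^{2+\delta'}\) for some \(\delta'\in(0,1]\), which follows from the \(4+\delta\) moment bound on \(\xi_{n,j}\). Set \(w_j=a_{n,j}/s_n\), so that \(\sum_j w_j^2=B_n/s_n^2\leq1\). The envelope condition gives \(\max_j|w_j|\leq r\max_j|d_{n,j}|/(\sqrt p\,s_n)\pto0\). Therefore
\[
    \sum_j|w_j|^{2+\delta'}\leq\max_j|w_j|^{\delta'}\sum_jw_j^2\pto0.
\]
A standard bound on differences of products of complex numbers of modulus at most one, together with \(\sum_j w_j^4\leq\max_j w_j^2\to0\), then gives \(\prod_j\psi_{n,j}(tw_j)=\exp\{-t^2B_n/(2s_n^2)\}+\op(1)\). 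Multiplying by the Gaussian factor yields \(\varphi_n(t)=e^{-t^2/2}+\op(1)\). The main obstacle is carrying out this Lindeberg step in probability rather than almost surely, since the weights \(w_j\) are random. I will handle this with the subsequence principle: along any subsequence, extract a further subsequence on which \(\max_j|w_j|\to0\) almost surely, apply the deterministic Lindeberg--Feller bound there, and conclude that \(\varphi_n(t)\pto e^{-t^2/2}\).

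For the second claim, suppose \(n_0^\eta s_n^2\pto\sR_{\eta,\star}>0\). The conditional Kolmogorov convergence implies unconditional convergence \((\widehat\mu_{0,\lambda}-\mu_{0,\beta})/s_n\dto\cN(0,1)\), by taking expectations of the bounded conditional CDFs and using dominated convergence. Slutsky's lemma with \(n_0^{\eta/2}s_n\pto\sR_{\eta,\star}^{1/2}\) then gives the stated limit \(\cN(0,\sR_{\eta,\star})\). For the studentised version, I apply \Cref{lem:plugin-studentization} with \(T_n=\widehat\mu_{0,\lambda}-\mu_{0,\beta}\), exactly as in the proof of \Cref{thm:predictive-calibration}. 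The event \(\{\widehat s_n\leq0\}\) has probability tending to zero, so its conditional probability tends to zero in probability, and arbitrary definitions on it do not affect the limit by \Cref{lem:conditional-kolmogorov-perturbation}.
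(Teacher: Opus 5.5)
Your proposal is correct and follows the paper's proof essentially step for step. Both arguments split \((\widehat\mu_{0,\lambda}-\mu_{0,\beta})/s_n\) into a weighted sum of the \(\xi_{n,j}\) plus an exact Gaussian noise term, with squared weights summing to one. Both then factorise the conditional characteristic function, use a Lindeberg-type expansion whose remainder vanishes under the envelope condition, and finish with \Cref{lem:conditional-characteristic-functions}, Slutsky, and \Cref{lem:plugin-studentization}. The differences are minor: you use a \(|u|^{2+\delta'}\) remainder where the paper uses a cubic one, and you spell out, via subsequences, the in-probability treatment of the random weights that the paper leaves implicit in its \(\op(1)\) notation.
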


\begin{proof}[Proof of \Cref{thm:predictive-normality-nongaussian}]
    Write \(\zeta_{n,i}=\epsilon_{0i}/\sigma_0\), and define
    \[
        A_{n,j} = \frac{rd_{n,j}}{\sqrt{p}s_n}, \quad B_{n,i} = \frac{\sigma_0\gamma_{\lambda,i}}{s_n}.
    \]
    By \eqref{eq:predictive-representation}, 
    \[
        \frac{\widehat\mu_{0,\lambda} - \mu_{0,\beta}}{s_n}
        =\sum_{j=1}^pA_{n,j}\xi_{n,j}
        +\sum_{i=1}^{n_0}B_{n,i}\zeta_{n,i}.
    \]
    Set \(\alpha_n^2=\|A_n\|_2^2\) and
    \(\beta_n^2=\|B_n\|_2^2\), so
    \(\alpha_n^2+\beta_n^2=1\). Conditional on \(\mathcal G_n\),
    \(\sum_{i=1}^{n_0}B_{n,i}\zeta_{n,i}\sim \cN(0,\beta_n^2)\).

    Since \(\EE\xi_{n,j}^2=1\), \(\EE\xi_{n,j}=0\), and the third moments
    are uniformly bounded, there exists a constant \(C<\infty\) such
    that for all \(u\),
    \[
        \left| \EE e^{iu\xi_{n,j}} - \left(1 - \frac{u^2}{2}\right) \right|
        \leq C|u|^3.
    \]
    Therefore, for each fixed \(t\),
    \[
        \EE(e^{itA_{n,j}\xi_{n,j}}|\mathcal{G}_n)
        =1-\frac{t^2A_{n,j}^2}{2}+\cO(|t|^3|A_{n,j}|^3).
    \]
    Since \(\max_j|A_{n,j}|\pto0\) and \(\alpha_n^2\leq1\),
    \[
        \sum_{j=1}^p|A_{n,j}|^3
        \leq \max_{j\leq p}|A_{n,j}|\,\alpha_n^2\pto0.
    \]
    We then obtain 
    \[
        \EE(e^{it\sum_{j=1}^pA_{n,j}\xi_{n,j}}|\mathcal{G}_n)  = e^{-t^2\alpha_n^2/2 + \op(1)}.
    \]
    Since \((\sum_{j=1}^pA_{n,j}\xi_{n,j})\) is independent of
    \(\sum_{i=1}^{n_0}B_{n,i}\zeta_{n,i}\) conditional on
    \(\mathcal G_n\), and
    \(\EE[\exp\{it\sum_iB_{n,i}\zeta_{n,i}\}\mid\mathcal G_n]
    =\exp(-t^2\beta_n^2/2)\),
    \[
        \begin{aligned}
            \EE\left[
            \exp\left( 
            it\frac{\widehat\mu_{0,\lambda} - \mu_{0,\beta}}{s_n}
            \right)\,\middle|\, \mathcal{G}_n
            \right]
            &= \EE[e^{it\sum_{j=1}^pA_{n,j}\xi_{n,j}}|\mathcal{G}_n]\EE[e^{it\sum_{i=1}^{n_0}B_{n,i}\zeta_{n,i}}|\mathcal{G}_n]\\
            &= \exp\left\{-t^2(\alpha^2_n + \beta_n^2)/2 + \op(1)\right\}\\
            &= \exp\left\{-\frac{t^2}{2} + \op(1)\right\}\\
            & \pto e^{-t^2/2}.
        \end{aligned}
    \]
    Thus the conditional characteristic function converges in probability to \(e^{-t^2/2}\) for every fixed \(t\in\RR\).
    By \Cref{lem:conditional-characteristic-functions}, the conditional laws converge weakly in probability to \(\cN(0,1)\), and continuity of \(\Phi\) gives
    \[
        \sup_{t\in\RR}
        \left|
            \PP\left(
                \frac{\widehat\mu_{0,\lambda}-\mu_{0,\beta}}{s_n}
                \leq t
                \,\middle|\,
                \mathcal G_n
            \right)
            -\Phi(t)
        \right|
        \pto0.
    \]

    If \(n_0^\eta s_n^2\pto\sR_{\eta,\star}\in(0,\infty)\), then
    \[
        n_0^{\eta/2}\{\widehat\mu_{0,\lambda}-\mu_{0,\beta}\}
        =\{n_0^{\eta/2}s_n\}
        \frac{\widehat\mu_{0,\lambda}-\mu_{0,\beta}}{s_n}
        \dto\cN(0,\sR_{\eta,\star})
    \]
    by Slutsky's theorem. If, in addition, \(\widehat s_n>0\) with
    probability tending to one and \(\widehat s_n/s_n\pto1\), then the
    conditional plug-in conclusion follows from
    \Cref{lem:plugin-studentization}.
\end{proof}

\begin{remark}[Fixed-surface versus integrated squared bias]
    \label{rem:fixed-surface-bias}
    For fixed \(\lambda\), suppose \(\beta_j=r\xi_{n,j}/\sqrt p\), where the
    \(\xi_{n,j}\) are independent and standardised. Because
    \(d_n(\lambda)\) is design-dependent, impose, for every
    \(\varepsilon>0\), the conditional Lindeberg condition
    \[
        \frac{1}{\|d_n(\lambda)\|_2^2}
        \sum_{j=1}^p d_{n,j}(\lambda)^2
        \EE\!\left[
        \xi_{n,j}^2
        \ind{\{|d_{n,j}(\lambda)\xi_{n,j}|>
        \varepsilon\|d_n(\lambda)\|_2\}}
        \,\middle|\,\mathcal G_n
        \right]\pto0
    \]
    on \(\{B_n(\lambda)>0\}\). Then, conditionally on \(\mathcal G_n\),
    \[
        \frac{\beta^\top d_n(\lambda)}
        {(r/\sqrt p)\|d_n(\lambda)\|_2}
        \Rightarrow\cN(0,1),
        \qquad
        \frac{B(\beta;\lambda)}{B_n(\lambda)}
        \Rightarrow\chi_1^2
    \]
    in probability.
    Thus the fixed-surface squared conditional bias need not concentrate around its random-effects integral.
    The random-matrix norm limits do not imply this coordinatewise condition.
\end{remark}

\clearpage
\section{Random-Matrix and Concentration Tools}
\label{app:rmt-concentration}

This appendix collects the random-matrix and concentration results used in Appendices~\ref{app:main-proofs} and~\ref{app:profile-transfer}.
The source-design arguments are organised around the technical source-resolvent condition in \Cref{asm:technical-source-resolvent}.
The sub-Gaussian source condition in \Cref{asm:global-source-resolvent} verifies this condition, while the subsequent source-design results use \Cref{asm:technical-source-resolvent} directly.

Fix a compact interval \(\Lambda=[\lambda_-,\lambda_+]\subset(0,\infty)\), choose \(0<r_\Lambda<\min\{\lambda_-/8,1/2\}\), and define
\[
    \mathcal D_\Lambda(r_\Lambda)
    :=
    \{z\in\mathbb C:\operatorname{dist}(z,\Lambda)\leq r_\Lambda\}.
\]
Our resolvents use the convention \((G+zI)^{-1}\), so the conventional random-matrix spectral parameter is \(-z\), and \(-\mathcal D_\Lambda(2r_\Lambda)\) stays a fixed positive distance from \([0,\infty)\).
The larger neighbourhood \(\mathcal D_\Lambda(2r_\Lambda)\) provides the analytic extension needed for Cauchy's formula on \(\mathcal D_\Lambda(r_\Lambda)\).

For every \(H\succeq0\) and \(z\in\mathcal D_\Lambda(2r_\Lambda)\),
\[
    \|(H+zI)^{-1}\|_{\oper}
    \leq
    (\Re z)^{-1}
    \leq
    (\lambda_--2r_\Lambda)^{-1}.
\]
For the fixed-point solution \(v_n(\lambda)\) defined in Section~\ref{sec:risk:asymptotic-risk}, write
\[
    F_n(v,z)
    :=
    v^{-1}
    -
    z
    -
    \phi_{0,n}
    \otr\{\Sigma_0(I_p+v\Sigma_0)^{-1}\}.
\]
Under \Cref{asm:rmt}, on \(\Lambda\),
\[
    (\lambda_++C_\phi C_\Sigma)^{-1}
    \leq
    v_n(\lambda)
    \leq
    \lambda_-^{-1},
\]
and its stability denominator satisfies
\[
    v_n(\lambda)^{-2}
    -
    \phi_{0,n}
    \otr\{\Sigma_0^2(I_p+v_n(\lambda)\Sigma_0)^{-2}\}
    =
    \frac{\lambda}{v_n(\lambda)}
    +
    \phi_{0,n}
    \otr\left\{
        \frac{\Sigma_0}
        {v_n(\lambda)(I_p+v_n(\lambda)\Sigma_0)^2}
    \right\}
    \geq
    \lambda_-^2.
\]
After decreasing \(r_\Lambda\) if necessary, a uniform implicit-function argument therefore gives a unique analytic extension \(v_n(z)\) to \(\mathcal D_\Lambda(2r_\Lambda)\) and a constant \(c>0\) not depending on \(n\) such that
\[
    \inf_{z\in\mathcal D_\Lambda(2r_\Lambda)}
    |\partial_vF_n\{v_n(z),z\}|
    \geq
    c.
\]
These deterministic bounds use only the bounded-spectrum and proportional-growth conditions in \Cref{asm:rmt}.

\begin{assumption}[Technical source-resolvent concentration]
    \label{asm:technical-source-resolvent}
    Let
    \[
        B_0
        =
        n_0^{-1/2}Z_0\Sigma_0^{1/2},
        \qquad
        G_0
        =
        B_0B_0^\top,
    \]
    and define
    \[
        R_0(z)
        =
        (G_0+zI_{n_0})^{-1},
        \qquad
        M_0(z)
        =
        (B_0^\top B_0+zI_p)^{-1}.
    \]
    For some \(\zeta\in(0,1/4)\),
    \begin{align*}
        \sup_{z\in\mathcal D_\Lambda(2r_\Lambda)}
        \left|
            a_n^\top\{R_0(z)-v_n(z)I_{n_0}\}b_n
        \right|
        &=
        \op(1)\|a_n\|_2\|b_n\|_2,\\
        \sup_{z\in\mathcal D_\Lambda(2r_\Lambda)}
        \left|
            n_0^{-1}\tr R_0(z)-v_n(z)
        \right|
        &=
        \op(1),\\
        \sup_{z\in\mathcal D_\Lambda(2r_\Lambda)}
        \left|
            \xi_n^\top M_0(z)B_0^\top a_n
        \right|
        &=
        \Op(n_0^{-1/2+\zeta})
        \|\xi_n\|_2\|a_n\|_2
    \end{align*}
    for deterministic real vectors \(a_n\), \(b_n\), and \(\xi_n\) of conformable dimensions.
    For every sequence of sigma-fields \(\mathcal A_n\) independent of \(Z_0\), the same bounds hold, for \(\PP\)-almost every realisation of \(\mathcal A_n\), under the conditional law of \(Z_0\) given \(\mathcal A_n\) whenever \(a_n\), \(b_n\), and \(\xi_n\) are \(\mathcal A_n\)-measurable.
\end{assumption}

Appendix~\ref{app:uncentered-resolvent} verifies \Cref{asm:technical-source-resolvent} under \Cref{asm:rmt,asm:global-source-resolvent} and records the higher-resolvent and cross-block bounds used below.
Appendix~\ref{app:centered-resolvent} develops the corresponding centred-source and source-mean results, while Appendix~\ref{app:treated-mean-concentration} establishes treated-mean concentration.

\subsection{Verification of the Source-Resolvent Condition}
\label{app:uncentered-resolvent}

Under the sub-Gaussian source condition, the anisotropic resolvent result below verifies \Cref{asm:technical-source-resolvent}; we then record the higher-resolvent and cross-block bounds used with the centred source design.

\begin{proposition}[Verification of source-resolvent concentration]
    \label{prop:global-source-resolvent-sufficient}
    Under \Cref{asm:rmt,asm:global-source-resolvent}, \Cref{asm:technical-source-resolvent} holds.
\end{proposition}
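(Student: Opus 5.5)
We verify the three displays of \Cref{asm:technical-source-resolvent} with the same exponent \(\zeta\in(0,1/4)\). We first prove each bound pointwise in \(z\), then upgrade to uniformity over the compact set \(\mathcal D_\Lambda(2r_\Lambda)\) by a net argument.

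\textbf{Pointwise bounds.} The entries of \(Z_0\) are independent, centred, unit-variance and uniformly sub-Gaussian (\Cref{asm:global-source-resolvent}), and \(\Sigma_0\) has bounded spectrum (\Cref{asm:rmt}). We therefore invoke an anisotropic local law for sample-covariance Gram matrices \(G_0=B_0B_0^\top\) at spectral parameters a fixed distance from the spectrum. After reindexing so that \(R_0\) is the \(n_0\times n_0\) companion resolvent, this gives
\[
    |a_n^\top\{R_0(z)-v_n(z)I\}b_n|\prec n_0^{-1/2}\|a_n\|_2\|b_n\|_2 ,
\]
and, on averaging, \(|n_0^{-1}\tr R_0(z)-v_n(z)|\prec n_0^{-1}\). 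Here \(v_n\) coincides with the companion Stieltjes transform defined by the fixed-point equation in Section~\ref{sec:risk:asymptotic-risk}, because that equation is the Marchenko--Pastur self-consistent equation for \((\phi_{0,n},H_{0,p})\).

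If we prefer a self-contained route, we would run a Schur-complement/leave-one-row-out argument: write \(R_0\) entries through \(1/(z+b_i^\top M_0^{(i)}b_i)\), control the quadratic forms \(b_i^\top M^{(i)}b_i\) by Hanson--Wright (sub-Gaussianity), and close using the stability bound \(|\partial_vF_n|\ge c\) recorded just before the assumption. Stochastic domination \(\prec n_0^{-1/2}\) is stronger than the required \(\op(1)\).

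For the cross-block bound, we use \(M_0B_0^\top=B_0^\top R_0\). This gives
\[
    \xi^\top M_0B_0^\top a=\xi^\top\Sigma_0^{1/2}Z_0^\top R_0a/\sqrt{n_0}.
\]
Its expectation is controlled by the anisotropic law applied to the linearised \((p+n_0)\) block resolvent, whose off-diagonal deterministic equivalent vanishes. Its fluctuations are \(\prec n_0^{-1/2}\) by the same law. Hence the quantity is \(\Op(n_0^{-1/2+\zeta})\|\xi\|_2\|a\|_2\) for any \(\zeta>0\).

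\textbf{Uniformity in \(z\) and conditional versions.} On \(\mathcal D_\Lambda(2r_\Lambda)\) every random map above is Lipschitz in \(z\) with constant \(\Op(1)\), since \(\|R_0\|,\|M_0\|\le(\lambda_--2r_\Lambda)^{-1}\) and \(\|B_0\|_{\oper}=\Op(1)\) by Bai--Yin. The deterministic \(v_n\) is also Lipschitz, by the implicit-function bound. Take a net of mesh \(n_0^{-1}\); it has \(\cO(n_0^2)\) points. The high-probability (\(\prec\)) bounds survive a union bound over this net, and the Lipschitz constants control the gaps between net points. For the conditional clause, note that \(\mathcal A_n\) is independent of \(Z_0\). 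Conditioning on \(\mathcal A_n\) therefore freezes \(a_n,b_n,\xi_n\) while leaving the law of \(Z_0\) unchanged. The bounds are uniform over deterministic vectors, so they hold for almost every realisation.

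\textbf{Main obstacle.} The main difficulty is that \(\Sigma_0\) is a general, non-diagonal covariance while \(a_n,b_n,\xi_n\) are arbitrary directions. The isotropic law must therefore be invoked in its fully anisotropic form, or the Schur-complement argument must track the \(\Sigma_0^{1/2}\)-twisted quadratic forms. We would try citing the anisotropic local law of Knowles--Yin first, and check that its hypotheses are met: bounded spectrum, spectral parameter away from the support, and proportional regime.
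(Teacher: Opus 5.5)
Your proposal is correct and follows essentially the same route as the paper. Both arguments treat an anisotropic local law as a black box: the paper uses Theorem~2.10 of Fan et al., checking its Assumption~2 from sub-Gaussianity via their Proposition~2.12, where you propose Knowles--Yin. Both then identify \(v_n(z)\) with the companion Stieltjes transform at \(-z\), read the cross-block bound off the vanishing off-diagonal block of the linearised resolvent, take the local-law exponent \(\varepsilon<\zeta\), and obtain the conditional clause from independence of \(\mathcal A_n\) and \(Z_0\).

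The differences are minor. The paper's cited theorem is already uniform over a separated domain containing \(-\mathcal D_\Lambda(2r_\Lambda)\), so your net argument is not needed there. The paper also justifies \(v_n(z)=\widetilde m(-z)\) on the complex domain by uniqueness of the stable analytic branch (agreement on \(\Lambda\) plus the identity theorem), and your one-line identification should state that step explicitly.
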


\begin{proof}[Proof of \Cref{prop:global-source-resolvent-sufficient}]
    For \(i=1,\ldots,n_0\), let \(z_{0i}\in\RR^p\) denote the transpose of the \(i\)th row of \(Z_0\), and define
    \[
        g_i
        :=
        \Sigma_0^{1/2}z_{0i},
        \qquad
        \mathsf G
        :=
        [g_1,\ldots,g_{n_0}]
        =
        \Sigma_0^{1/2}Z_0^\top.
    \]
    To match the notation of \citet{fan2026anisotropic}, set \(n_{\mathrm F}=p\), \(N_{\mathrm F}=n_0\), \(G_{\mathrm F}=\mathsf G\), and \(\Sigma_{\mathrm F}=\Sigma_0\).
    Their sample covariance and Gram matrices are then
    \[
        K_{\mathrm F}
        =
        \frac{\mathsf G\mathsf G^\top}{n_0}
        =
        B_0^\top B_0,
        \qquad
        \widetilde K_{\mathrm F}
        =
        \frac{\mathsf G^\top\mathsf G}{n_0}
        =
        B_0B_0^\top
        =
        G_0.
    \]
    We apply \citet[Theorem~2.10]{fan2026anisotropic}, which requires their Assumptions~1 and~2.
    By \Cref{asm:rmt}, \(n_{\mathrm F}/N_{\mathrm F}=p/n_0\) is bounded above and away from zero and \(\|\Sigma_{\mathrm F}\|_{\oper}\leq C_\Sigma\).
    Choose \(0<c_{\mathrm F}<\min\{c_\phi,c_\Sigma\}\); since \(\Sigma_0\succeq c_\Sigma I_p\), the spectral nondegeneracy condition in their Assumption~1 holds.
    For their separable model, take \(d_{\mathrm F}=p\), \(X_{\mathrm F}=\Sigma_0^{1/2}\), and \(w_i=z_{0i}\), so \(g_i=X_{\mathrm F}w_i\) and \(X_{\mathrm F}X_{\mathrm F}^\top=\Sigma_0\).
    The coordinates of \(w_i\) are independent, centred, and standardised by \Cref{asm:rmt}, and \Cref{asm:global-source-resolvent} gives uniform sub-Gaussianity.
    Hence \citet[Proposition~2.12]{fan2026anisotropic} verifies their Assumption~2, with its dimension-comparability condition immediate from \(d_{\mathrm F}=n_{\mathrm F}=p\).

    Let \(\widetilde m_{0,\mathrm F}(w)\) denote the deterministic companion Stieltjes transform in \citet{fan2026anisotropic}.
    Its fixed-point equation is
    \[
        w
        =
        -\frac{1}{\widetilde m_{0,\mathrm F}(w)}
        +
        \frac{1}{n_0}
        \tr\left\{
            \Sigma_0
            \left(I_p+\widetilde m_{0,\mathrm F}(w)\Sigma_0\right)^{-1}
        \right\}.
    \]
    Substituting \(w=-z\) and using \(n_0^{-1}\tr(\cdot)=\phi_{0,n}\otr(\cdot)\) gives
    \[
        \frac{1}{\widetilde m_{0,\mathrm F}(-z)}
        =
        z
        +
        \phi_{0,n}
        \otr\left\{
            \Sigma_0
            \left(I_p+\widetilde m_{0,\mathrm F}(-z)\Sigma_0\right)^{-1}
        \right\},
    \]
    which is the defining equation for \(v_n(z)\).
    By the uniqueness and uniform stability of the analytic branch established above,
    \[
        v_n(z)
        =
        \widetilde m_{0,\mathrm F}(-z),
        \qquad
        z\in\mathcal D_\Lambda(2r_\Lambda).
    \]

    Put \(d_\Lambda=\lambda_--2r_\Lambda>0\), \(M_\Lambda=\lambda_++2r_\Lambda\), and
    \[
        \delta_*
        :=
        \frac12
        \min\{d_\Lambda,M_\Lambda^{-1},1\}.
    \]
    For \(z\in\mathcal D_\Lambda(2r_\Lambda)\) and \(w=-z\), we have \(\operatorname{dist}\{w,[0,\infty)\}\geq d_\Lambda\), \(|w|\geq d_\Lambda\), \(|\Re w|\leq M_\Lambda\), and \(|\Im w|\leq2r_\Lambda<1\).
    Since the deterministic companion spectral measure is supported on \([0,\infty)\), these bounds give
    \[
        -\mathcal D_\Lambda(2r_\Lambda)
        \subset
        \overline{\mathcal D}_{\mathrm F}^{\,o}(\delta_*),
    \]
    the separated complex domain of \citet[Theorem~2.10]{fan2026anisotropic}.

    Write
    \[
        R_{\mathrm F}(w)
        =
        (K_{\mathrm F}-wI_p)^{-1},
        \qquad
        \widetilde R_{\mathrm F}(w)
        =
        (\widetilde K_{\mathrm F}-wI_{n_0})^{-1}.
    \]
    At \(w=-z\), \(R_{\mathrm F}(-z)=M_0(z)\) and \(\widetilde R_{\mathrm F}(-z)=R_0(z)\), while the corresponding linearised resolvent satisfies
    \[
        \begin{pmatrix}
            zI_p & B_0^\top\\
            B_0 & -I_{n_0}
        \end{pmatrix}^{-1}
        =
        \begin{pmatrix}
            M_0(z) & M_0(z)B_0^\top\\
            B_0M_0(z) & -zR_0(z)
        \end{pmatrix}.
    \]
    The lower-right, off-diagonal, and averaged conclusions of \citet[Theorem~2.10]{fan2026anisotropic} therefore give, for every fixed \(\varepsilon>0\), uniformly over \(z\in\mathcal D_\Lambda(2r_\Lambda)\),
    \begin{align*}
        \left|
            a_n^\top
            \{R_0(z)-v_n(z)I_{n_0}\}
            b_n
        \right|
        &=
        \Op(n_0^{-1/2+\varepsilon})
        \|a_n\|_2\|b_n\|_2,\\
        \left|
            \xi_n^\top M_0(z)B_0^\top a_n
        \right|
        &=
        \Op(n_0^{-1/2+\varepsilon})
        \|\xi_n\|_2\|a_n\|_2,\\
        \left|
            n_0^{-1}\tr R_0(z)-v_n(z)
        \right|
        &=
        \Op(n_0^{-1+\varepsilon}),
    \end{align*}
    where the first bound uses \(\inf_{z\in\mathcal D_\Lambda(2r_\Lambda)}|z|\geq d_\Lambda\), and the deterministic approximation to the off-diagonal block is zero.
    Choose \(0<\varepsilon<\zeta\).
    Then the first two rates are respectively \(\op(1)\|a_n\|_2\|b_n\|_2\) and \(\Op(n_0^{-1/2+\zeta})\|\xi_n\|_2\|a_n\|_2\), while the averaged rate is \(\op(1)\).
    This proves the three unconditional bounds in \Cref{asm:technical-source-resolvent}.

    Finally, let \(\mathcal A_n\) be independent of \(Z_0\).
    Conditional on any admissible realisation of \(\mathcal A_n\), the law of \(Z_0\) is unchanged and the \(\mathcal A_n\)-measurable directions are deterministic.
    The constants in \citet[Theorem~2.10]{fan2026anisotropic} depend on the model and spectral-domain bounds, not on the particular deterministic unit directions, so the same estimates hold under the conditional source-design law after normalising each nonzero direction; zero directions are immediate.
    Homogeneity restores the direction norms, proving the conditional clause of \Cref{asm:technical-source-resolvent}, including \(\mathcal A_n=\sigma(\mathcal F_{\mathrm{aux}},X_1)\).
\end{proof}

The next lemma records the higher-resolvent and cross-block consequences of \Cref{asm:technical-source-resolvent}.

\begin{lemma}[Higher-resolvent and cross-block bounds]
    \label{lem:global-linearized-resolvent}
    Under \Cref{asm:rmt,asm:technical-source-resolvent}, write
    \[
        X
        =
        Z_0\Sigma_0^{1/2}
        =
        \sqrt{n_0}B_0,
        \qquad
        m_{k,n}(z)
        =
        n_0^{-1}\tr\{R_0(z)^k\}.
    \]
    For \(k=1,2\) and deterministic \(a_n,b_n\in\RR^{n_0}\),
    \[
        \sup_{z\in\mathcal D_\Lambda(r_\Lambda)}
        \left|
            a_n^\top R_0(z)^k b_n
            -
            m_{k,n}(z)a_n^\top b_n
        \right|
        =
        \op(1)\|a_n\|_2\|b_n\|_2,
    \]
    and
    \[
        \sup_{z\in\mathcal D_\Lambda(r_\Lambda)}
        |m_{1,n}(z)-v_n(z)|
        =
        \op(1),
        \qquad
        \sup_{z\in\mathcal D_\Lambda(r_\Lambda)}
        |m_{2,n}(z)+v_n'(z)|
        =
        \op(1).
    \]
    For every fixed integer \(q\geq1\) and deterministic \(u_n\in\RR^p\),
    \[
        \sup_{\lambda\in\Lambda}
        \left|
            u_n^\top M_0(\lambda)^qX^\top a_n
        \right|
        =
        \Op(n_0^\zeta)\|u_n\|_2\|a_n\|_2.
    \]
    For every sequence of sigma-fields \(\mathcal A_n\) independent of \(Z_0\), all conclusions remain valid, for \(\PP\)-almost every realisation of \(\mathcal A_n\), under the conditional law of \(Z_0\) given \(\mathcal A_n\) whenever the directions are \(\mathcal A_n\)-measurable.
\end{lemma}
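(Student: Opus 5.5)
The plan is to get every claim from the three bounds in \Cref{asm:technical-source-resolvent} by one device: Cauchy's estimate for analytic functions. The assumption controls first-order resolvent quantities uniformly on the larger domain \(\mathcal D_\Lambda(2r_\Lambda)\). Higher powers of a resolvent are, up to sign and factorial constants, \(z\)-derivatives of the first power, and derivatives on \(\mathcal D_\Lambda(r_\Lambda)\) are controlled by suprema on \(\mathcal D_\Lambda(2r_\Lambda)\). Concretely, for \(z\in\mathcal D_\Lambda(r_\Lambda)\) the closed disc of radius \(r_\Lambda\) about \(z\) lies in \(\mathcal D_\Lambda(2r_\Lambda)\). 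Hence, for any function \(f\) analytic on a neighbourhood of \(\mathcal D_\Lambda(2r_\Lambda)\) and any \(k\ge1\),
\[
    \sup_{z\in\mathcal D_\Lambda(r_\Lambda)}|f^{(k)}(z)|
    \le
    k!\,r_\Lambda^{-k}\sup_{z\in\mathcal D_\Lambda(2r_\Lambda)}|f(z)|.
\]
This is a deterministic, pathwise inequality, so it converts \(\op(1)\) and \(\Op(\cdot)\) bounds on the larger domain into bounds of the same order for derivatives on the smaller one.

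First I will verify analyticity. Since \(G_0\succeq0\) and \(B_0^\top B_0\succeq0\), while \(\Re z\ge\lambda_--2r_\Lambda>0\) on \(\mathcal D_\Lambda(2r_\Lambda)\), the maps \(z\mapsto R_0(z)\) and \(z\mapsto M_0(z)\) are analytic there, with derivatives \(\partial_zR_0=-R_0^2\) and \(\partial_z^{q-1}M_0=(-1)^{q-1}(q-1)!\,M_0^q\). The map \(v_n\) is analytic on \(\mathcal D_\Lambda(2r_\Lambda)\) by the uniform implicit-function extension recorded before the assumption.

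For \(k=1\) the isotropic bound is exactly the first display of the assumption once I note
\[
    a^\top R_0^k b-m_{k,n}a^\top b
    =
    \{a^\top R_0^k b-c_k a^\top b\}
    +(c_k-m_{k,n})a^\top b,
    \qquad c_1=v_n,\ c_2=-v_n',
\]
and use the averaged bound \(\sup|m_{1,n}-v_n|=\op(1)\) for the second piece. For \(k=2\) I will apply the Cauchy estimate to \(f(z)=a^\top\{R_0(z)-v_n(z)I\}b\). Its derivative is \(-a^\top R_0^2b-v_n'\,a^\top b\), so \(\sup_{\mathcal D_\Lambda(r_\Lambda)}|a^\top R_0^2b+v_n'a^\top b|=\op(1)\|a\|_2\|b\|_2\). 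Likewise, applying the estimate to \(f=n_0^{-1}\tr R_0-v_n\), whose derivative is \(-(m_{2,n}+v_n')\), gives \(\sup|m_{2,n}+v_n'|=\op(1)\). Combining the two through the decomposition above yields the \(k=2\) isotropic claim.

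For the cross-block bound I write \(u^\top M_0^qX^\top a=\sqrt{n_0}\,u^\top M_0^qB_0^\top a\). I then apply the Cauchy estimate with \(k=q-1\) to \(f(z)=u^\top M_0(z)B_0^\top a\), whose supremum over \(\mathcal D_\Lambda(2r_\Lambda)\) is \(\Op(n_0^{-1/2+\zeta})\|u\|_2\|a\|_2\) by the third display of the assumption. This gives \(\sup_{\lambda\in\Lambda}|u^\top M_0^qB_0^\top a|\le(q-1)!\,r_\Lambda^{-(q-1)}\Op(n_0^{-1/2+\zeta})\|u\|_2\|a\|_2\). Multiplying by \(\sqrt{n_0}\) gives \(\Op(n_0^{\zeta})\), with constants depending only on the fixed \(q\) and \(\Lambda\).

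The conditional clause needs no new work. The Cauchy inequality is deterministic and pathwise, so the same argument runs under the conditional law of \(Z_0\) given \(\mathcal A_n\), invoking the conditional clause of \Cref{asm:technical-source-resolvent} for the \(\mathcal A_n\)-measurable directions.

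The main point needing care is the uniformity and measurability of suprema over the complex domain. Continuity of every quantity in \(z\) makes each supremum equal to a supremum over a countable dense set, and hence a random variable. The same continuity ensures the \(\op(1)\) and \(\Op(\cdot)\) statements on \(\mathcal D_\Lambda(2r_\Lambda)\) pass through the Cauchy bound with a nonrandom constant. I also need to check that the factorial constants stay fixed, which holds because \(k\le2\) and \(q\) is fixed.
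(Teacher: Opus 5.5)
Your proposal is correct and is essentially the paper's argument. The \(k=1\) bilinear bound and the \(m_{1,n}\) bound come directly from the first two displays of \Cref{asm:technical-source-resolvent} by subtraction. Everything else follows from Cauchy's formula on discs of radius \(r_\Lambda\) contained in \(\mathcal D_\Lambda(2r_\Lambda)\), using \(R_0'=-R_0^2\) and the derivatives of \(M_0\), with the conditional clause inherited pathwise. Your one-step \((q-1)\)th-derivative Cauchy estimate for the cross-block term is a slightly cleaner packaging of the paper's iterated differentiation \(\{M_0^q\}'=-qM_0^{q+1}\), and it avoids any question of shrinking domains at each step.
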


\begin{proof}[Proof of \Cref{lem:global-linearized-resolvent}]
    It suffices to prove the conditional statement after fixing an admissible realisation of \(\mathcal A_n\); the unconditional statement follows by taking \(\mathcal A_n\) trivial.
    The row-resolvent and normalised-trace bounds in \Cref{asm:technical-source-resolvent} give
    \[
        \sup_{z\in\mathcal D_\Lambda(2r_\Lambda)}
        |m_{1,n}(z)-v_n(z)|
        =
        \op(1),
    \]
    and, by subtraction, the asserted trace-centred bilinear bound for \(k=1\).

    Since \(X=\sqrt{n_0}B_0\), the cross-block bound in \Cref{asm:technical-source-resolvent} gives directly
    \[
        \sup_{\lambda\in\Lambda}
        \left|
            u_n^\top M_0(\lambda)X^\top a_n
        \right|
        =
        \Op(n_0^\zeta)\|u_n\|_2\|a_n\|_2.
    \]
    Cauchy's formula on contours contained in \(\mathcal D_\Lambda(2r_\Lambda)\) differentiates the preceding analytic bounds uniformly on \(\mathcal D_\Lambda(r_\Lambda)\).
    Using \(R_0'(z)=-R_0(z)^2\) and \(\{M_0(z)^q\}'=-qM_0(z)^{q+1}\) gives the \(k=2\) bilinear bound, \(m_{2,n}(z)+v_n'(z)=\op(1)\), and the stated cross-block bound for every fixed \(q\geq2\).
    Homogeneity gives the displayed direction norms and completes the conditional and unconditional conclusions.
\end{proof}

\subsection{Centred Source Resolvents and Source-Mean Identities}
\label{app:centered-resolvent}

Set \(u=n_0^{-1/2}\one_{n_0}\) and \(G_0^c=C_0G_0C_0\), with \(B_0\), \(G_0\), \(R_0\), and \(M_0\) as in \Cref{asm:technical-source-resolvent}.
Since \(C_0=I_{n_0}-uu^\top\) is a rank-one modification of the identity, the identities below reduce the centred source quantities to the uncentred resolvents controlled in Appendix~\ref{app:uncentered-resolvent}.

\begin{lemma}[Exact centred row-resolvent compression]
    \label{lem:centered-row-compression}
    Let \(Q\in\RR^{n_0\times(n_0-1)}\) satisfy \(Q^\top Q=I_{n_0-1}\), \(QQ^\top=C_0\), and \(Q^\top u=0\), and set
    \[
        R_\perp(z)
        =
        Q\{Q^\top G_0Q+zI_{n_0-1}\}^{-1}Q^\top,
        \qquad
        z\in\mathcal D_\Lambda(r_\Lambda).
    \]
    Then, exactly,
    \[
        R_\perp(z)
        =
        R_0(z)
        -
        \frac{R_0(z)uu^\top R_0(z)}
        {u^\top R_0(z)u},
        \qquad
        (G_0^c+zI_{n_0})^{-1}
        =
        z^{-1}uu^\top+R_\perp(z).
    \]
    Consequently, for \(g=C_0\gamma\),
    \[
        g^\top R_\perp(z)g
        =
        g^\top R_0(z)g
        -
        \frac{\{g^\top R_0(z)u\}^2}
        {u^\top R_0(z)u},
        \qquad
        \tr R_\perp(z)
        =
        \tr R_0(z)
        -
        \frac{u^\top R_0(z)^2u}
        {u^\top R_0(z)u}.
    \]
\end{lemma}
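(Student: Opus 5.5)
The plan is to work in the orthogonal basis \(W=[u,Q]\) of \(\RR^{n_0}\), where the two exact identities reduce to block-matrix algebra. Both identities are purely algebraic in \(G_0\succeq0\) and \(z\in\mathcal D_\Lambda(r_\Lambda)\), so no random-matrix input is needed.

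\textbf{Invertibility.} Every \(z\in\mathcal D_\Lambda(r_\Lambda)\) has \(\Re z\geq\lambda_--r_\Lambda>0\). Hence \(G_0+zI_{n_0}\), \(Q^\top G_0Q+zI_{n_0-1}\), and \(G_0^c+zI_{n_0}\) are all invertible. I also need the scalar \(p(z):=u^\top R_0(z)u\) to be nonzero. Writing \(G_0=\sum_k\ell_kw_kw_k^\top\) with real orthonormal \(w_k\) gives
\[
    p(z)=\sum_k\frac{(w_k^\top u)^2}{\ell_k+z},
\]
and each summand has positive real part. Since \(\sum_k(w_k^\top u)^2=\|u\|_2^2=1\), we get \(\Re p(z)>0\), so division by \(p(z)\) is legitimate.

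\textbf{Second identity.} Since \(C_0=QQ^\top\),
\[
    G_0^c=Q(Q^\top G_0Q)Q^\top,
    \qquad
    G_0^c+zI_{n_0}=z\,uu^\top+Q\{Q^\top G_0Q+zI_{n_0-1}\}Q^\top,
\]
using \(I_{n_0}=uu^\top+QQ^\top\). This matrix is block diagonal in the basis \([u,Q]\). Inverting each block gives \(z^{-1}uu^\top+R_\perp(z)\), which can be confirmed by direct multiplication using \(Q^\top u=0\) and \(Q^\top Q=I_{n_0-1}\).

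\textbf{First identity.} I will use the standard fact that the inverse of a principal block of an invertible matrix equals the Schur complement of the corresponding block of its inverse. Write
\[
    W^\top R_0W=\begin{pmatrix}p&q^\top\\ q&S\end{pmatrix},
    \qquad q=Q^\top R_0u,\quad S=Q^\top R_0Q.
\]
The lower-right block of \(W^\top(G_0+zI_{n_0})W\) is \(Q^\top G_0Q+zI_{n_0-1}\), so
\[
    (Q^\top G_0Q+zI_{n_0-1})^{-1}=S-\frac{qq^\top}{p}.
\]
Conjugating by \(Q\) gives
\[
    R_\perp=C_0R_0C_0-\frac{C_0R_0uu^\top R_0C_0}{p}.
\]
Now set \(T:=R_0-p^{-1}R_0uu^\top R_0\). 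Then \(Tu=R_0u-R_0u\,(u^\top R_0u)/p=0\) and, since \(R_0\) is complex symmetric (not Hermitian), \(u^\top T=0\) as well. Hence \(C_0TC_0=T\). Expanding \(C_0TC_0\) reproduces exactly the displayed expression for \(R_\perp\), so \(R_\perp=T\).

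\textbf{Consequences.} For \(g=C_0\gamma\), the bilinear form \(g^\top R_\perp g\) follows by substituting \(T\), using the symmetry \(u^\top R_0g=g^\top R_0u\). The trace formula follows from cyclicity:
\[
    \tr(R_0uu^\top R_0)=u^\top R_0^2u.
\]

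The only delicate point is keeping transposes rather than conjugate transposes throughout, since \(z\) is complex. I also need the nonvanishing of \(p(z)\), which the real-part argument above supplies. Everything else is routine.
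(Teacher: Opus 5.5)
Your proof is correct and follows essentially the paper's route. Both arguments isolate \(T=R_0-R_0uu^\top R_0/(u^\top R_0u)\), use \(Tu=0=u^\top T\) (hence \(T=C_0TC_0\)) together with the block-diagonal form of \(G_0^c+zI_{n_0}\) in the basis \([u,Q]\), and secure \(u^\top R_0u\neq0\) through positivity of its real part. The only difference is that you derive \((Q^\top G_0Q+zI_{n_0-1})^{-1}=S-qq^\top/p\) from the block-inverse (Schur complement) identity, whereas the paper verifies \(Q^\top(G_0+zI_{n_0})TQ=I_{n_0-1}\) by direct multiplication.
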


\begin{proof}[Proof of \Cref{lem:centered-row-compression}]
    For \(z\in\mathcal D_\Lambda(r_\Lambda)\), \(\Re\{u^\top R_0(z)u\}>0\), so the denominator below is nonzero.
    Let
    \[
        P
        =
        R_0
        -
        R_0u(u^\top R_0u)^{-1}u^\top R_0,
    \]
    with all resolvents evaluated at \(z\).
    Then \(Pu=0\), \(P=C_0PC_0\), and direct multiplication gives \(Q^\top(G_0+zI_{n_0})PQ=I_{n_0-1}\).
    Hence
    \[
        P
        =
        Q\{Q^\top G_0Q+zI_{n_0-1}\}^{-1}Q^\top.
    \]
    Since \(G_0^c=C_0G_0C_0\) vanishes on the span of \(u\) and has compressed matrix \(Q^\top G_0Q\) on its orthogonal complement, the full centred identity follows.
    The quadratic-form and trace identities are immediate.
\end{proof}

The next result replaces a design-independent direction \(g=C_0\gamma\) by the corresponding centred trace average.

\begin{lemma}[Design-independence trace replacement]
    \label{lem:external-trace-replacement}
    Under \Cref{asm:rmt,asm:technical-source-resolvent}, together with \Cref{asm:design-independent-regime}~\eqref{asm:design-independent-regime-i}, for \(k=1,2\), conditionally on \(\mathcal F_{\mathrm{aux}}\) and for \(\PP\)-almost every auxiliary realisation,
    \[
        \sup_{\lambda\in\Lambda}
        \left|
            g^\top(G_0^c+\lambda I_{n_0})^{-k}g
            -
            \|g\|_2^2
            \frac{1}{n_0-1}
            \tr\{C_0(G_0^c+\lambda I_{n_0})^{-k}\}
        \right|
        =
        \op(\|g\|_2^2),
    \]
    where \(G_0^c=X_0^c(X_0^c)^\top/n_0\).
\end{lemma}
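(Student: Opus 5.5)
We use the exact compression identity of \Cref{lem:centered-row-compression}, which turns both sides of the claim into resolvent functionals of the \emph{uncentred} Gram resolvent $R_0(z)$. The conditional clause of \Cref{asm:technical-source-resolvent} then controls these functionals, with $\mathcal A_n=\mathcal F_{\mathrm{aux}}$; this is legitimate because \Cref{asm:design-independent-regime}~\eqref{asm:design-independent-regime-i} makes $\mathcal F_{\mathrm{aux}}$ independent of $Z_0$ and makes $g=C_0\gamma$ an $\mathcal F_{\mathrm{aux}}$-measurable direction. If $g=0$ there is nothing to prove, so we normalise $\|g\|_2=1$ by homogeneity. We treat $k=1$ uniformly over the complex neighbourhood $\mathcal D_\Lambda(2r_\Lambda)$ and then obtain $k=2$ by differentiation.

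\textbf{Step 1 ($k=1$, quadratic form).} Since $g\perp u$, the identity $(G_0^c+zI_{n_0})^{-1}=z^{-1}uu^\top+R_\perp(z)$ gives
\[
g^\top(G_0^c+zI_{n_0})^{-1}g
=g^\top R_\perp(z)g
=g^\top R_0(z)g-\frac{\{g^\top R_0(z)u\}^2}{u^\top R_0(z)u}.
\]
The anisotropic bound in \Cref{asm:technical-source-resolvent} gives, uniformly on $\mathcal D_\Lambda(2r_\Lambda)$,
\[
g^\top R_0 g=v_n+\op(1),\qquad g^\top R_0u=v_n\,g^\top u+\op(1)=\op(1),\qquad u^\top R_0u=v_n+\op(1).
\]
The deterministic bounds stated before \Cref{asm:technical-source-resolvent} (after shrinking $r_\Lambda$ if needed) keep $|v_n(z)|$ bounded away from zero on this domain. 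Hence $|u^\top R_0u|\ge c$ with probability tending to one, and therefore $g^\top R_\perp g=v_n(z)+\op(1)$ uniformly.

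\textbf{Step 2 ($k=1$, trace).} We have $\tr\{C_0(G_0^c+zI_{n_0})^{-1}\}=\tr R_\perp(z)$, because $C_0u=0$ annihilates the $z^{-1}uu^\top$ part and $C_0R_\perp=R_\perp$. By \Cref{lem:centered-row-compression},
\[
\tr R_\perp=\tr R_0-\frac{u^\top R_0^2u}{u^\top R_0u}.
\]
The subtracted term is $\Op(1)$, since $\|R_0\|_{\oper}\le d_\Lambda^{-1}$ and the denominator is bounded below. The averaged law $n_0^{-1}\tr R_0=v_n+\op(1)$ then yields $(n_0-1)^{-1}\tr R_\perp=v_n+\op(1)$ uniformly on $\mathcal D_\Lambda(2r_\Lambda)$. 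Subtracting Steps 1 and 2 gives the $k=1$ claim in the stronger complex-uniform form.

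\textbf{Step 3 ($k=2$).} Both $F_1(z)=g^\top(G_0^c+z)^{-1}g$ and $F_2(z)=(n_0-1)^{-1}\tr\{C_0(G_0^c+z)^{-1}\}$ are analytic on $\mathcal D_\Lambda(2r_\Lambda)$. Their negative derivatives are exactly the $k=2$ quantities. The difference $F_1-F_2$ is $\op(1)$ uniformly on the larger domain, so Cauchy's integral formula on circles of radius $r_\Lambda$ centred at points of $\mathcal D_\Lambda(r_\Lambda)\supset\Lambda$ bounds $\sup|F_1'-F_2'|$ by $r_\Lambda^{-1}\sup|F_1-F_2|=\op(1)$. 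This is the same device as in \Cref{lem:global-linearized-resolvent}.

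The main obstacle is making Step 1 genuinely uniform in $z$ over a complex domain. It requires two things: the sup-over-$z$ form of the anisotropic bound together with a uniform-in-$z$ lower bound on $|u^\top R_0(z)u|$, and a check that the conditional clause holds for $\PP$-almost every auxiliary realisation. For the lower bound, I would first use $\Re\{u^\top R_0u\}\ge 0$ together with the approximation $u^\top R_0u=v_n(z)+\op(1)$, and then invoke the stability bound $|v_n(z)|\ge c$ on $\mathcal D_\Lambda(2r_\Lambda)$, which follows from continuity of $v_n$ and $v_n(\lambda)\ge(\lambda_++C_\phi C_\Sigma)^{-1}$ on $\Lambda$. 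Finally, restoring the normalisation of $g$ gives the stated $\op(\|g\|_2^2)$ error.
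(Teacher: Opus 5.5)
Your proposal is correct and follows essentially the same route as the paper. Both arguments compress with \Cref{lem:centered-row-compression}, apply the anisotropic and averaged laws to $g^\top R_0 g$, $g^\top R_0 u$, $u^\top R_0 u$ and $\tr R_0$, and obtain the $k=2$ case by Cauchy's formula.

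The one real difference is how $u^\top R_0 u$ is kept away from zero. You use the deterministic bound $|v_n(z)|\ge c$ together with the isotropic law; this needs the uniform derivative bound $|v_n'|\le 1/c$ from the stability estimate, not mere continuity. The paper instead uses $\|G_0\|_{\oper}=\Op(1)$ and the spectral formula for $\Re\{u^\top R_0(z)u\}$.
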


\begin{proof}[Proof of \Cref{lem:external-trace-replacement}]
    Fix an admissible auxiliary realisation.
    If \(g=0\), the result is immediate.
    Otherwise apply \Cref{lem:global-linearized-resolvent} to \(g/\|g\|_2\) and \(u\).
    Since \(g^\top u=0\), uniformly on \(\Lambda\),
    \[
        g^\top R_0g
        =
        m_{1,n}\|g\|_2^2+\op(\|g\|_2^2),
        \qquad
        g^\top R_0u
        =
        \op(\|g\|_2),
        \qquad
        u^\top R_0u
        =
        m_{1,n}+\op(1).
    \]
    The denominator is uniformly bounded away from zero with high probability because
    \[
        u^\top R_0u
        \geq
        \{\|G_0\|_{\oper}+\lambda_+\}^{-1},
        \qquad
        \|G_0\|_{\oper}
        =
        \Op(1).
    \]
    Substitution in \Cref{lem:centered-row-compression} proves the \(k=1\) result and its trace counterpart.

    For \(k=2\), use the same identities on \(\mathcal D_\Lambda(r_\Lambda)\).
    If \(G_0=\sum_j\kappa_jq_jq_j^\top\), then
    \[
        \Re\{u^\top R_0(z)u\}
        =
        \sum_j
        |q_j^\top u|^2
        \frac{\kappa_j+\Re z}
        {|\kappa_j+z|^2}
        >
        0.
    \]
    On \(\{\|G_0\|_{\oper}\leq C\}\), this real part is uniformly bounded below on \(\mathcal D_\Lambda(r_\Lambda)\).
    Hence the compression denominator has no zeros there, and Cauchy's formula gives the \(k=2\) statement.
    The initial conditioning gives the stated conditional formulation.
\end{proof}

\begin{remark}[Centred finite-dimensional normalisation]
    Although \(X_0^c\) has at most \(n_0-1\) nonzero row directions, the finite-dimensional formulas use \(p/n_0\) and the penalty in \(n_0^{-1}(X_0^c)^\top X_0^c\).
    Under proportional growth, \(p/(n_0-1)-p/n_0=\cO(n_0^{-1})\), so using the exact rank-\((n_0-1)\) normalisation changes these deterministic equivalents by \(o(1)\).
\end{remark}

The remaining source-side fluctuation is the sample-mean term \(e_0\), whose centred-resolvent contributions can also be reduced to the uncentred row resolvent.

\begin{lemma}[Exact source-mean resolvent reduction]
    \label{lem:source-mean-resolvent}
    Under \Cref{asm:rmt,asm:technical-source-resolvent}, write \(e_0=B_0^\top u\) and define \(a_n(\lambda)=u^\top R_0(\lambda)u\) and \(b_n(\lambda)=u^\top R_0(\lambda)^2u\).
    Uniformly over \(\Lambda\),
    \begin{align*}
        M_\lambda^c e_0
        &=
        \frac{B_0^\top R_0u}{\lambda a_n},
        &
        e_0^\top M_\lambda^c e_0
        &=
        \frac1{\lambda a_n}-1,\\
        e_0^\top(M_\lambda^c)^2e_0
        &=
        \frac1{\lambda^2a_n}
        -
        \frac{b_n}{\lambda a_n^2},
        &
        a_n-v_n
        &=
        \op(1),
        \qquad
        b_n-\widetilde v_{v,n}
        =
        \op(1).
    \end{align*}
    The two source-mean contributions satisfy
    \begin{align*}
        \sup_{\lambda\in\Lambda}
        \left|
            \frac{n_0}{p}
            \lambda^2e_0^\top(M_\lambda^c)^2e_0
            -
            K_{2,n}(\lambda;\Sigma_0)
        \right|
        &\pto
        0,\\
        \sup_{\lambda\in\Lambda}
        \left|
            \frac{b_n}{a_n^2}
            -
            1
            -
            \frac{\phi_{0,n}}{\lambda}
            \{K_{1,n}(\lambda;\Sigma_0)-K_{2,n}(\lambda;\Sigma_0)\}
        \right|
        &\pto
        0.
    \end{align*}
    Finally,
    \[
        C_0B_0M_\lambda^c e_0
        =
        -\frac{C_0R_0u}{a_n},
        \qquad
        n_0
        \left\|
            \frac{C_0R_0u}{\sqrt{n_0}a_n}
        \right\|_2^2
        =
        \frac{b_n}{a_n^2}-1.
    \]
    Moreover, for every fixed integer \(q\geq1\) and every sequence of sigma-fields \(\mathcal A_n\) independent of \(Z_0\), for \(\PP\)-almost every realisation of \(\mathcal A_n\),
    \[
        \sup_{\lambda\in\Lambda}
        \left|
            w_n^\top(M_\lambda^c)^q e_0
        \right|
        =
        \Op(n_0^{-1/2+\zeta})\|w_n\|_2
    \]
    under the conditional law of \(Z_0\) given \(\mathcal A_n\) whenever \(w_n\in\RR^p\) is \(\mathcal A_n\)-measurable.
\end{lemma}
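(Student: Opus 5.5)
The plan is to reduce every centred quantity to the uncentred resolvents $R_0$ and $M_0$ through one rank-one update, and then read off the limits from \Cref{asm:technical-source-resolvent} and \Cref{lem:global-linearized-resolvent}.

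\textbf{Step 1: the rank-one update.} Since $n_0^{-1/2}X_0^c=C_0B_0$ and $C_0=I_{n_0}-uu^\top$, we have
\[
    S_0^c=B_0^\top B_0-e_0e_0^\top .
\]
Sherman--Morrison then gives
\[
    M_\lambda^c=M_0+\frac{M_0e_0e_0^\top M_0}{1-e_0^\top M_0e_0}.
\]
The push-through identity $M_0B_0^\top=B_0^\top R_0$ yields $B_0M_0B_0^\top=I-\lambda R_0$. Hence $e_0^\top M_0e_0=1-\lambda a_n$, and the denominator equals $\lambda a_n$. The lower bound $a_n\geq(\|G_0\|_{\oper}+\lambda_+)^{-1}$, together with $\|G_0\|_{\oper}=\Op(1)$, keeps this denominator away from zero uniformly on $\Lambda$ with probability tending to one.

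\textbf{Step 2: the exact identities.} From Step 1,
\[
    M_\lambda^ce_0=\frac{M_0e_0}{\lambda a_n}=\frac{B_0^\top R_0u}{\lambda a_n},
\]
and therefore $e_0^\top M_\lambda^ce_0=(1-\lambda a_n)/(\lambda a_n)$. For the squared form, use $R_0G_0R_0=R_0-\lambda R_0^2$:
\[
    e_0^\top(M_\lambda^c)^2e_0=\frac{u^\top R_0G_0R_0u}{\lambda^2a_n^2}=\frac{a_n-\lambda b_n}{\lambda^2a_n^2}.
\]
For the centred-row identity, $C_0B_0B_0^\top R_0u=C_0(I-\lambda R_0)u=-\lambda C_0R_0u$ because $C_0u=0$; this gives $C_0B_0M_\lambda^ce_0=-C_0R_0u/a_n$. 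Finally,
\[
    \|C_0R_0u\|_2^2=b_n-a_n^2,
\]
which gives the stated norm identity.

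\textbf{Step 3: the scalar limits.} The row-resolvent bound in \Cref{asm:technical-source-resolvent}, applied with $a_n=b_n=u$, gives $a_n-v_n=\op(1)$ uniformly on $\Lambda$. \Cref{lem:global-linearized-resolvent} gives $b_n=m_{2,n}+\op(1)=-v_n'+\op(1)$. Implicit differentiation of the fixed-point equation, as in the proof of \Cref{lem:risk-lipschitz} applied to the prelimit equation, gives $v_n'=-\widetilde v_{v,n}$. Combined with Step 2, this yields
\[
    \frac{n_0}{p}\lambda^2e_0^\top(M_\lambda^c)^2e_0=\phi_{0,n}^{-1}\Bigl(\frac1{v_n}-\frac{\lambda\widetilde v_{v,n}}{v_n^2}\Bigr)+\op(1),
    \qquad
    \frac{b_n}{a_n^2}-1=\frac{\widetilde v_{v,n}}{v_n^2}-1+\op(1),
\]
uniformly on $\Lambda$, because the maps $(a,b)\mapsto\cdot$ involved are Lipschitz on the relevant compact sets.

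\textbf{Main obstacle.} The step I expect to be hardest is the purely deterministic algebra matching these expressions to $K_{2,n}(\lambda;\Sigma_0)$ and to $1+(\phi_{0,n}/\lambda)(K_{1,n}-K_{2,n})$. I would first write $\otr\{\Sigma_0A_n\}=\int s/(1+v_ns)\,\rd H_{0,p}$, which equals $(v_n^{-1}-\lambda)/\phi_{0,n}$ by the fixed-point equation. I would then express $\otr\{A_n\Sigma_0A_n\}$ and $\otr\{A_n\Sigma_0^2A_n\}$ through the identity
\[
    \widetilde v_{v,n}^{-1}=\frac{\lambda}{v_n}+\phi_{0,n}\int\frac{s}{v_n(1+v_ns)^2}\,\rd H_{0,p}(s),
\]
which appears in the proof of \Cref{lem:risk-lipschitz}. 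Substituting the definition of $\widetilde v_{b,n}(\lambda;\Sigma_0)$ and simplifying should close both identities; if this becomes messy, I would verify it first for $\Sigma_0=I_p$ and then redo it with general spectral integrals.

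\textbf{Step 4: the cross bound.} Expanding $(M_\lambda^c)^q$ by Step 1 writes $w_n^\top(M_\lambda^c)^qe_0$ as a finite sum of products. Each product consists of scalar factors $e_0^\top M_0^ke_0=\Op(1)$, powers of $(\lambda a_n)^{-1}=\Op(1)$, and exactly one factor of the form
\[
    w_n^\top M_0^ke_0=n_0^{-1/2}\,w_n^\top M_0^kX^\top u .
\]
The cross-block bound of \Cref{lem:global-linearized-resolvent}, with $\|u\|_2=1$, makes the last factor $\Op(n_0^{-1/2+\zeta})\|w_n\|_2$ uniformly over $\Lambda$. The conditional version follows from the conditional clause of that lemma whenever $w_n$ is $\mathcal A_n$-measurable.
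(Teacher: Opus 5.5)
Your proposal is correct. Steps 1--3 follow the paper's proof closely: Sherman--Morrison with denominator $\lambda a_n$, the push-through identity $M_0B_0^\top=B_0^\top R_0$, the bound $b_n=m_{2,n}+\op(1)=-v_n'+\op(1)$, and implicit differentiation giving $-v_n'=\widetilde v_{v,n}$. The only cosmetic difference there is that you get $e_0^\top(M_\lambda^c)^2e_0$ directly from $R_0G_0R_0=R_0-\lambda R_0^2$, whereas the paper differentiates the second identity in $\lambda$ using $a_n'=-b_n$.

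The deterministic algebra you flag as the main obstacle closes with exactly the tools you name. Write $I_1=\int s(1+v_ns)^{-1}\,\rd H_{0,p}$, $I_2=\int s(1+v_ns)^{-2}\,\rd H_{0,p}$ and $J_2=\int s^2(1+v_ns)^{-2}\,\rd H_{0,p}$. Then $K_{1,n}(\lambda;\Sigma_0)=I_1$ and $K_{2,n}(\lambda;\Sigma_0)=(1+\phi_{0,n}\widetilde v_{v,n}J_2)I_2=(\widetilde v_{v,n}/v_n^2)I_2$. Together with $I_1-I_2=v_nJ_2$ and $v_n/\widetilde v_{v,n}=\lambda+\phi_{0,n}I_2$, both target identities follow in a couple of lines. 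The paper simply asserts this substitution.

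Your route genuinely differs in Step 4. The paper proves only $q=1$ directly, via $w_n^\top M_\lambda^ce_0=w_n^\top M_0B_0^\top u/(\lambda a_n)$. It then extends this identity to the complex neighbourhood $\mathcal D_\Lambda(r_\Lambda)$, using the lower bound on $\Re a_n(z)$ from the proof of the trace-replacement lemma. Finally it obtains every $q\geq2$ by Cauchy's formula applied to the analytic vector $(S_0^c+zI_p)^{-1}e_0$.

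You instead expand $(M_\lambda^c)^q$ through the rank-one update into $2^q$ words. Each word contributes exactly one factor $w_n^\top M_0^ke_0$ with $k\leq q$, times scalars $e_0^\top M_0^ke_0=\Op(1)$ and powers of $(\lambda a_n)^{-1}$. You then invoke the higher-power uncentred cross-block bound of \Cref{lem:global-linearized-resolvent} for each word.

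This stays entirely on the real axis and needs no control of the Sherman--Morrison denominator off $\Lambda$; the complex analysis is confined to the uncentred lemma. The price is a combinatorial expansion whose size grows with $q$, which is harmless for fixed $q$. The paper's contour argument is shorter and treats all powers at once.
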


\begin{proof}[Proof of \Cref{lem:source-mean-resolvent}]
    Put \(S=B_0^\top B_0\) and \(M_0=(S+\lambda I_p)^{-1}\).
    Then \(S_0^c=S-e_0e_0^\top\), \(M_0e_0=B_0^\top R_0u\), and \(e_0^\top M_0e_0=1-\lambda a_n\).
    Sherman--Morrison gives the first two identities.
    Differentiating the second, using \(a_n'=-b_n\) and \(\partial_\lambda M_\lambda^c=-(M_\lambda^c)^2\), gives the third.

    Applying \Cref{lem:global-linearized-resolvent} with both directions equal to \(u\) gives \(a_n-m_{1,n}=\op(1)\) and \(b_n-m_{2,n}=\op(1)\), and the same lemma gives \(m_{1,n}-v_n=\op(1)\) and \(m_{2,n}+v_n'=\op(1)\), all uniformly on \(\Lambda\).
    Differentiating the fixed-point equation yields \(-v_n'=\widetilde v_{v,n}\), and hence \(b_n-\widetilde v_{v,n}=\op(1)\).
    Substitution in the definitions of \(K_{1,n}\) and \(K_{2,n}\) gives
    \[
        \frac1{\phi_{0,n}}
        \left\{
            \frac1{v_n}
            -
            \frac{\lambda\widetilde v_{v,n}}{v_n^2}
        \right\}
        =
        K_{2,n}(\lambda;\Sigma_0),
        \qquad
        \frac{\widetilde v_{v,n}}{v_n^2}-1
        =
        \frac{\phi_{0,n}}{\lambda}
        \{K_{1,n}-K_{2,n}\},
    \]
    which proves the two limits.
    Finally, \(B_0M_\lambda^c e_0=G_0R_0u/(\lambda a_n)\), \(G_0R_0=I_{n_0}-\lambda R_0\), and \(\|C_0R_0u\|_2^2=b_n-a_n^2\), proving the remaining identities.

    For the cross form,
    \[
        w_n^\top M_\lambda^c e_0
        =
        \frac{w_n^\top M_0B_0^\top u}
        {\lambda a_n}.
    \]
    The denominator is uniformly bounded away from zero with high probability, and the cross-block conclusion of \Cref{lem:global-linearized-resolvent} proves the claim for \(q=1\).
    The same identity holds on \(\mathcal D_\Lambda(r_\Lambda)\), where the cross-block bound in \Cref{asm:technical-source-resolvent} and the lower bound on \(\Re a_n(z)\) from the proof of \Cref{lem:external-trace-replacement} extend the \(q=1\) bound; Cauchy's formula applied to the analytic vector \((S_0^c+zI_p)^{-1}e_0\) then gives every fixed \(q\geq2\).
    Conditioning on an admissible realisation of \(\mathcal A_n\) and applying the deterministic-direction bound gives the conditional version.
    Integrating the conditional error probabilities by the tower property gives the corresponding unconditional statement.
\end{proof}

\subsection{Treated-Mean Concentration}
\label{app:treated-mean-concentration}

Under \Cref{asm:rmt}, the treated-mean fluctuation \(\epsilon_1=\bar x_1-\nu_1\) is independent of the source design.
Conditioning on source-measurable quantities therefore reduces quadratic and bilinear terms involving \(\epsilon_1\) to moment bounds, with uniform control over the ridge path.

\begin{lemma}[Treated-mean quadratic-form concentration]
    \label{lem:treated-mean-qf}
    Under \Cref{asm:rmt}, let \(\epsilon_1=\bar x_1-\nu_1\).
    Suppose \(\{A_\lambda:\lambda\in\Lambda\}\) is a symmetric, \(X_0\)-measurable matrix family with uniformly bounded operator norm and a uniform deterministic operator-norm Lipschitz constant.
    Then
    \[
        \sup_{\lambda\in\Lambda}
        \left|
            \epsilon_1^\top A_\lambda\epsilon_1
            -
            \frac1{n_1}\tr(\Sigma_1A_\lambda)
        \right|
        \pto
        0.
    \]
    This applies to \(I_p\), \(M_\lambda^c\), \((M_\lambda^c)^2\), and \(M_\lambda^c(I_p-\lambda M_\lambda^c)\).
\end{lemma}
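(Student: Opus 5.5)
The plan is to condition on \(\mathcal H_0:=\sigma(X_0,\mathcal F_{\mathrm{aux}})\). Under \Cref{asm:rmt} this leaves \(\epsilon_1=n_1^{-1}\Sigma_1^{1/2}\sum_{i\le n_1}z_{1i}\) with independent, centred, standardised coordinates and uniformly bounded \((4+\delta)\)-th moments, while \(A_\lambda\) becomes deterministic. I will first prove the pointwise statement, and then upgrade it to uniformity over \(\Lambda\) by a net argument that uses the Lipschitz hypothesis.

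\textbf{Pointwise step.} Fix \(\lambda\) and set \(\bar z=n_1^{-1}\sum_i z_{1i}\), so that \(\epsilon_1^\top A_\lambda\epsilon_1=\bar z^\top\widetilde A_\lambda\bar z\) with \(\widetilde A_\lambda:=\Sigma_1^{1/2}A_\lambda\Sigma_1^{1/2}\). Here \(\|\widetilde A_\lambda\|_{\oper}\le C_\Sigma\sup_\lambda\|A_\lambda\|_{\oper}=:K\). Since \(\EE(\bar z\bar z^\top)=n_1^{-1}I_p\), the conditional mean is \(n_1^{-1}\tr(\Sigma_1A_\lambda)\). For the variance, I will expand
\[
\bar z^\top\widetilde A\bar z=n_1^{-2}\sum_{i,j}z_{1i}^\top\widetilde A z_{1j}.
\]
The diagonal terms \(i=j\) are independent quadratic forms in vectors with independent coordinates. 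The standard bound gives \(\Var(z^\top\widetilde Az)\le C\,\tr(\widetilde A^2)\le CK^2p\) under fourth moments, so the diagonal contributes \(n_1^{-4}\cdot n_1\cdot CK^2p=\cO(p/n_1^3)\). The off-diagonal terms \(i\neq j\) are uncorrelated and centred, each with second moment \(\tr(\widetilde A^2)\le K^2p\), so they contribute \(\cO(n_1^2p/n_1^4)=\cO(p/n_1^2)\). Because \(p\asymp n_1\), the conditional variance is \(\cO(n_1^{-1})\), and conditional Chebyshev then gives pointwise convergence in probability to zero. Averaging over \(\mathcal H_0\) yields the unconditional statement.

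\textbf{Uniformity step.} Write \(D(\lambda)\) for the difference process. With \(L\) the deterministic Lipschitz constant,
\[
|D(\lambda)-D(\lambda')|\le L|\lambda-\lambda'|\bigl(\|\epsilon_1\|_2^2+n_1^{-1}\tr\Sigma_1\bigr).
\]
Since \(\EE\|\epsilon_1\|_2^2=\tr(\Sigma_1)/n_1\le C_\Sigma p/n_1=\cO(1)\), this envelope is \(\Op(1)\). I will take a deterministic \(\delta\)-net of \(\Lambda\), which is finite for fixed \(\delta\). The maximum of \(D\) over the net is \(\op(1)\) by the pointwise step and a union bound over finitely many points, and the gaps contribute \(\delta\,\Op(1)\). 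Letting \(n\to\infty\) and then \(\delta\downarrow0\) gives the supremum bound.

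\textbf{Listed families.} I will verify the hypotheses for the four families named in the statement. Each is symmetric and \(X_0\)-measurable, since \(M_\lambda^c\) depends only on \(X_0\). The resolvent bound \(\|M_\lambda^c\|_{\oper}\le\lambda_-^{-1}\) together with \(\partial_\lambda M_\lambda^c=-(M_\lambda^c)^2\) gives operator-norm bounds and derivative bounds. For example,
\[
\|\partial_\lambda\{M(I-\lambda M)\}\|_{\oper}\le \lambda_-^{-2}+\lambda_-^{-2}+2\lambda_+\lambda_-^{-3}.
\]
These bounds are deterministic and uniform on \(\Lambda\), so each family satisfies the Lipschitz requirement.

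The main obstacle is the variance computation for the quadratic form. It needs only finite fourth moments, which \Cref{asm:rmt} provides through its \((4+\delta)\) condition. The remaining work is bookkeeping: tracking the diagonal and off-diagonal terms so that the final rate comes out as \(\cO(1/n_1)\), rather than a rate that does not vanish.
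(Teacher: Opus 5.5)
Your proposal is correct and follows essentially the same route as the paper: a conditional Chebyshev bound with quadratic-form variance \(\cO(p/n_1^2)\) gives the pointwise limit, and the Lipschitz envelope \(\|\epsilon_1\|_2^2+n_1^{-1}\tr(\Sigma_1)=\Op(1)\) with a finite-net argument gives uniformity over \(\Lambda\). The differences are cosmetic. The paper conditions on \(X_0\) alone, which suffices since \(A_\lambda\) is \(X_0\)-measurable, whereas your \(\mathcal F_{\mathrm{aux}}\) would need independence from the treated sample, and that is not part of \Cref{asm:rmt}. The paper also applies the independent-coordinate variance bound directly to \(\bar z_1\), whose coordinates have variance \(n_1^{-1}\) and fourth moments \(\cO(n_1^{-2})\), rather than splitting over pairs of observations. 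In your split, the \((i,j)\) and \((j,i)\) cross terms coincide rather than being uncorrelated, which costs only a factor of two.
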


\begin{proof}[Proof of \Cref{lem:treated-mean-qf}]
    Conditional on \(X_0\), write \(\epsilon_1=\Sigma_1^{1/2}\bar z_1\).
    The coordinates of \(\bar z_1\) are independent and centred, with variance \(n_1^{-1}\) and fourth moments bounded by \(Cn_1^{-2}\).
    Hence, for fixed \(\lambda\),
    \[
        \Var(\epsilon_1^\top A_\lambda\epsilon_1\mid X_0)
        \leq
        \frac{C}{n_1^2}
        \tr\{(\Sigma_1^{1/2}A_\lambda\Sigma_1^{1/2})^2\}
        \leq
        \frac{Cp}{n_1^2}
        =
        o(1).
    \]
    The conditional mean is \(n_1^{-1}\tr(\Sigma_1A_\lambda)\), so Chebyshev's inequality gives pointwise convergence.
    The Lipschitz condition, \(\|\epsilon_1\|_2^2=\Op(1)\), \(p/n_1=\cO(1)\), and \(\|\Sigma_1\|_{\oper}=\cO(1)\) give a tight modulus of continuity for the centred quadratic form, and a finite-net argument gives uniform convergence over \(\Lambda\).
    The stated resolvent families satisfy the required bounds because \(\lambda\geq\lambda_->0\).
\end{proof}

\begin{lemma}[Treated-mean bilinear concentration]
    \label{lem:treated-mean-bilinear}
    Under \Cref{asm:rmt}, let \(\mathcal H_0\) be a sigma-field that contains \(X_0\) and is independent of the treated sample.
    If \(A_n\in\RR^{p\times p}\) and \(w_n\in\RR^p\) are \(\mathcal H_0\)-measurable, then, conditionally on \(\mathcal H_0\),
    \[
        w_n^\top A_n\epsilon_1
        =
        \Op\left(
            n_1^{-1/2}
            \|\Sigma_1^{1/2}A_n^\top w_n\|_2
        \right).
    \]
    For the uniform version, suppose \(A_n(\lambda)\) and \(w_n(\lambda)\) are continuously differentiable \(\mathcal H_0\)-measurable families and \(r_n>0\) is deterministic.
    If
    \[
        \sup_{\lambda\in\Lambda}
        \{\|w_n(\lambda)\|_2+\|w_n'(\lambda)\|_2\}
        =
        \Op(r_n),
        \qquad
        \sup_{\lambda\in\Lambda}
        \{\|A_n(\lambda)\|_{\oper}+\|A_n'(\lambda)\|_{\oper}\}
        =
        \Op(1),
    \]
    then
    \[
        \sup_{\lambda\in\Lambda}
        |w_n(\lambda)^\top A_n(\lambda)\epsilon_1|
        =
        \Op(r_n/\sqrt{n_1}).
    \]
\end{lemma}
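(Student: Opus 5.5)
The plan is to condition on \(\mathcal H_0\) throughout. By \Cref{asm:rmt}, \(\epsilon_1=\Sigma_1^{1/2}\bar z_1\) is independent of \(\mathcal H_0\). Its coordinates \(\bar z_{1,j}=n_1^{-1}\sum_i z_{1i,j}\) are independent and centred, each with variance \(n_1^{-1}\).

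\textbf{Pointwise bound.} Set \(v_n:=\Sigma_1^{1/2}A_n^\top w_n\), which is \(\mathcal H_0\)-measurable. Then
\[
w_n^\top A_n\epsilon_1=v_n^\top\bar z_1,
\]
so its conditional mean is zero and its conditional variance is \(n_1^{-1}\|v_n\|_2^2\). The conditional Chebyshev inequality gives
\[
\PP\bigl(|w_n^\top A_n\epsilon_1|>K n_1^{-1/2}\|v_n\|_2\mid\mathcal H_0\bigr)\le K^{-2}.
\]
Integrating this over \(\mathcal H_0\) yields the unconditional \(\Op\) statement; on the event \(\{v_n=0\}\) the quantity is exactly zero.

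\textbf{Uniform bound over \(\Lambda\).} Write \(f(\lambda):=w_n(\lambda)^\top A_n(\lambda)\epsilon_1=u(\lambda)^\top\epsilon_1\), where \(u(\lambda):=A_n(\lambda)^\top w_n(\lambda)\). I would avoid chaining and use the one-dimensional Sobolev inequality instead:
\[
\sup_{\lambda\in\Lambda}|f(\lambda)|\le|f(\lambda_-)|+\int_\Lambda|f'(\lambda)|\,d\lambda .
\]
The first term falls under the pointwise bound. For the second, note that
\[
f'(\lambda)=u'(\lambda)^\top\epsilon_1,\qquad u'=A_n'^\top w_n+A_n^\top w_n'.
\]
By the assumed envelopes and the bound \(\|\Sigma_1\|_{\oper}\le C_\Sigma\), both \(\|\Sigma_1^{1/2}u(\lambda)\|_2\) and \(\|\Sigma_1^{1/2}u'(\lambda)\|_2\) are \(\Op(r_n)\) uniformly in \(\lambda\).

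\textbf{Main obstacle.} The envelopes are only \(\Op\), not deterministic, so Fubini cannot be applied directly. I would truncate to handle this. Fix \(K\) and let \(\mathcal E_K\in\mathcal H_0\) be the event on which the suprema of \(\|w_n\|+\|w_n'\|\) and \(\|A_n\|_{\oper}+\|A_n'\|_{\oper}\) are at most \(Kr_n\) and \(K\), respectively. On \(\mathcal E_K\), conditional Fubini and Cauchy--Schwarz give
\[
\EE\Bigl[\int_\Lambda|f'|\,d\lambda\,\Big|\,\mathcal H_0\Bigr]\le\int_\Lambda n_1^{-1/2}\|\Sigma_1^{1/2}u'(\lambda)\|_2\,d\lambda\le C K^2|\Lambda|\,r_n/\sqrt{n_1}.
\]
Joint measurability of \(u'(\lambda)\) in \((\lambda,\omega)\) follows from continuity in \(\lambda\). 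The conditional Markov inequality then bounds the integral by \(M r_n/\sqrt{n_1}\) with probability at least \(1-CK^2/M\) on \(\mathcal E_K\). Since \(\PP(\mathcal E_K^c)\) can be made arbitrarily small uniformly in \(n\) by choosing \(K\) large, letting \(K\) and then \(M\) grow gives
\[
\sup_{\lambda\in\Lambda}|f(\lambda)|=\Op(r_n/\sqrt{n_1}).
\]
Only second moments of \(z_{1i,j}\) are used in this argument, which is well within the \((4+\delta)\)-moment condition.
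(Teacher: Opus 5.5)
Your proposal is correct and follows essentially the same route as the paper: a conditional Chebyshev bound for the pointwise statement, then truncation to the \(\mathcal H_0\)-measurable event on which the envelopes are at most \(Kr_n\) and \(K\), a one-dimensional Sobolev inequality, and a conditional Markov bound. The only cosmetic difference is that you use the first-moment form \(\sup_\Lambda|f|\le|f(\lambda_-)|+\int_\Lambda|f'|\,\mathrm{d}\lambda\) with Cauchy--Schwarz, handling the anchor point by the pointwise bound. The paper instead bounds \(\EE[\int_\Lambda(|f|^2+|f'|^2)\,\mathrm{d}\lambda\mid\mathcal H_0]\le C_Kr_n^2/n_1\) and applies the second-moment Sobolev inequality.
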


\begin{proof}[Proof of \Cref{lem:treated-mean-bilinear}]
    Conditional on \(\mathcal H_0\),
    \[
        \Var\{w_n^\top A_n\epsilon_1\mid\mathcal H_0\}
        =
        \frac1{n_1}
        w_n^\top A_n\Sigma_1A_n^\top w_n.
    \]
    Chebyshev's inequality gives the pointwise bound.
    For the uniform result, put \(f_n(\lambda)=w_n(\lambda)^\top A_n(\lambda)\epsilon_1\).
    On the event where the two displayed suprema are bounded by \(Kr_n\) and \(K\), respectively, the conditional variance identity, the product rule, and \(\|\Sigma_1\|_{\oper}\leq C_\Sigma\) give
    \[
        \EE\left[
            \int_\Lambda
            \{|f_n(\lambda)|^2+|f_n'(\lambda)|^2\}
            \,\rd\lambda
            \,\middle|\,
            \mathcal H_0
        \right]
        \leq
        \frac{C_Kr_n^2}{n_1}.
    \]
    The one-dimensional Sobolev inequality and conditional Markov's inequality therefore give the asserted supremum rate on this event.
    Its complement has probability tending to zero as \(K\to\infty\), which completes the proof.
\end{proof}

\newpage
\section{Deterministic Risk Limits for Adaptive Ridge Base Weights}
\label{app:profile-transfer}

For covariate-adaptive base weights, \Cref{thm:risk-estimation} establishes uniform risk estimation and vanishing scaled oracle excess risk under the stated base-weight rate conditions, while penalty consistency additionally requires a deterministic limit for the scaled risk path.
This appendix gives sufficient component-level conditions for this limit and verifies them for the ridge base-weight constructions considered here.
Appendix~\ref{app:general-profile-transfer} derives risk limits from component limits and develops common ridge verification tools, and Appendix~\ref{app:target-split-ridge-profile-verification} uses these tools for the target-split ridge construction used for honest target-aware tuning.
Appendix~\ref{app:ridge-profile-verification} treats same-sample ridge base weights as a separate theoretical case, while the target-aware tuning guarantee in Section~\ref{sec:risk-estimation} remains based on evaluation-sample separation.

\subsection{Risk-Path Components and Ridge Verification Tools}
\label{app:general-profile-transfer}

By \Cref{prop:conditional-risk}, the exact conditional risk identity remains valid for any normalised \(\mathcal G_n\)-measurable base-weight vector \(\gamma\), including covariate-adaptive weights.
The deterministic-equivalent analysis in \Cref{thm:risk-asym} additionally uses design independence, so here we work directly with the realised residual imbalance \(\Delta=\bar x_1-X_0^\top\gamma\) and centred base weights.
Write \(g=C_0\gamma\), \(G_0^c=X_0^c(X_0^c)^\top/n_0\), and
\[
    u_{\lambda}
    =
    \lambda(G_0^c+\lambda I_{n_0})^{-1}g,
    \qquad
    h_{\lambda}
    =
    n_0^{-1}X_0^c M_\lambda^c(\bar x_1-\bar x_0).
\]
Then \(X_0^\top\gamma_\lambda-\nu_1=\epsilon_1-\lambda M_\lambda^c\Delta\) and \(\gamma_\lambda=\one_{n_0}/n_0+u_{\lambda}+h_{\lambda}\), so
\[
    \begin{aligned}
        B_n(\lambda)
        &=
        \frac{r^2}{p}\|\epsilon_1-\lambda M_\lambda^c\Delta\|_2^2,\\
        V_n(\lambda)
        &=
        \sigma_0^2\{n_0^{-1}+\|u_{\lambda}+h_{\lambda}\|_2^2\},\\
        R_n(\lambda)
        &=
        B_n(\lambda)+V_n(\lambda).
    \end{aligned}
\]

For \(\eta\in[0,1]\), define the six scaled risk-path components
\begin{equation}\label{eq:profiles}
    \begin{gathered}
        Q_{\epsilon,n}
        =
        \frac{n_0^\eta}{p}\epsilon_1^\top\epsilon_1,
        \qquad
        Q_{\epsilon\Delta,n}(\lambda)
        =
        \frac{n_0^\eta}{p}\lambda\,\epsilon_1^\top M_\lambda^c\Delta,
        \qquad
        Q_{\Delta,n}(\lambda)
        =
        \frac{n_0^\eta}{p}\lambda^2\,\Delta^\top(M_\lambda^c)^2\Delta,\\
        W_{\gamma,n}(\lambda)
        =
        n_0^\eta\|u_{\lambda}\|_2^2,
        \qquad
        W_{x,n}(\lambda)
        =
        n_0^\eta\|h_{\lambda}\|_2^2,
        \qquad
        W_{\gamma x,n}(\lambda)
        =
        n_0^\eta u_{\lambda}^\top h_{\lambda}.
    \end{gathered}
\end{equation}
The three \(Q\)-components determine \(n_0^\eta B_n\), while the three \(W\)-components, together with the affine-normalisation term, determine \(n_0^\eta V_n\).
Thus uniform convergence of these six components is sufficient for a deterministic scaled risk path; no closed-form limit is imposed at this stage.

The following stochastic Lipschitz bound promotes pointwise component limits on a fixed dense subset to uniform limits over the tuning interval.

\begin{lemma}[Stochastic Lipschitz bounds for the risk-path components]
    \label{lem:profile-lipschitz}
    Let \(\Lambda=[\lambda_-,\lambda_+]\subset(0,\infty)\), suppose \(p/n_0\) is bounded, and assume
    \[
        \frac{\|X_0^c\|_{\oper}^2}{n_0}
        =
        \Op(1),\qquad
        n_0^\eta\|C_0\gamma\|_2^2
        =
        \Op(1),\qquad
        \frac{n_0^\eta}{p}
        \left(
            \|\epsilon_1\|_2^2
            +
            \|\Delta\|_2^2
            +
            \|\bar x_1-\bar x_0\|_2^2
        \right)
        =
        \Op(1).
    \]
    Then \(\partial_{\lambda}Q_{\epsilon,n}=0\) and
    \[
        L_n
        :=
        \sup_{\lambda\in\Lambda}
        \left\{
            |\partial_{\lambda}Q_{\epsilon\Delta,n}(\lambda)|
                +
                |\partial_{\lambda}Q_{\Delta,n}(\lambda)|
                +
                |\partial_{\lambda}W_{\gamma,n}(\lambda)|
                +
                |\partial_{\lambda}W_{x,n}(\lambda)|
                +
                |\partial_{\lambda}W_{\gamma x,n}(\lambda)|
        \right\}
        =
        \Op(1).
    \]
    Consequently, if the six components converge pointwise in probability on a fixed dense subset \(\Lambda_0\subset\Lambda\) to finite deterministic limits, then those limits admit unique Lipschitz extensions to \(\Lambda\), and the component convergence to those extensions is uniform over \(\Lambda\).
\end{lemma}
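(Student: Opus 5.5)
The plan is to differentiate each component explicitly in \(\lambda\), using only the resolvent calculus \(\partial_\lambda M_\lambda^c=-(M_\lambda^c)^2\) and \(\partial_\lambda(G_0^c+\lambda I_{n_0})^{-1}=-(G_0^c+\lambda I_{n_0})^{-2}\). I will combine this with the uniform operator-norm bounds \(\|M_\lambda^c\|_{\oper}\leq\lambda_-^{-1}\) and \(\|(G_0^c+\lambda I)^{-1}\|_{\oper}\leq\lambda_-^{-1}\) on \(\Lambda\). Nothing probabilistic beyond the three assumed \(\Op(1)\) inputs is needed, so the bound on \(L_n\) is deterministic on the event where those inputs are bounded by a constant \(K\).

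The component \(Q_{\epsilon,n}\) does not depend on \(\lambda\). For \(Q_{\epsilon\Delta,n}\) and \(Q_{\Delta,n}\), differentiation gives
\[
\partial_\lambda\{\lambda M_\lambda^c\}=M_\lambda^c-\lambda(M_\lambda^c)^2,\qquad \partial_\lambda\{\lambda^2(M_\lambda^c)^2\}=2\lambda(M_\lambda^c)^2-2\lambda^2(M_\lambda^c)^3.
\]
Both matrices have operator norm bounded by a constant depending only on \(\Lambda\). By Cauchy--Schwarz,
\[
|\partial_\lambda Q_{\epsilon\Delta,n}|\leq C\,\frac{n_0^\eta}{p}\|\epsilon_1\|_2\|\Delta\|_2 \quad\text{and}\quad |\partial_\lambda Q_{\Delta,n}|\leq C\,\frac{n_0^\eta}{p}\|\Delta\|_2^2,
\]
and both right-hand sides are \(\Op(1)\) by the third hypothesis.

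The weight components are handled the same way. Since \(\partial_\lambda u_\lambda=\{(G_0^c+\lambda I)^{-1}-\lambda(G_0^c+\lambda I)^{-2}\}g\), we get \(\|u_\lambda\|_2+\|\partial_\lambda u_\lambda\|_2\leq C\|g\|_2\). Also \(\partial_\lambda h_\lambda=-n_0^{-1}X_0^c(M_\lambda^c)^2(\bar x_1-\bar x_0)\), so
\[
\|h_\lambda\|_2+\|\partial_\lambda h_\lambda\|_2\leq C\,\frac{\|X_0^c\|_{\oper}}{n_0}\|\bar x_1-\bar x_0\|_2 .
\]
Using \(\|X_0^c\|_{\oper}^2/n_0=\Op(1)\) and \(p/n_0=\cO(1)\), the squared right-hand side times \(n_0^\eta\) is bounded by
\[
C\,\frac{\|X_0^c\|^2_{\oper}}{n_0}\cdot\frac{p}{n_0}\cdot\frac{n_0^\eta}{p}\|\bar x_1-\bar x_0\|_2^2=\Op(1).
\]
Hence \(n_0^{\eta/2}\sup_\lambda(\|h_\lambda\|+\|\partial h_\lambda\|)=\Op(1)\), and likewise \(n_0^{\eta/2}\sup_\lambda(\|u_\lambda\|+\|\partial u_\lambda\|)=\Op(1)\) because \(g=C_0\gamma\). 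The product rule applied to \(W_{\gamma,n}=n_0^\eta\|u_\lambda\|^2\), \(W_{x,n}\) and \(W_{\gamma x,n}\) then bounds each derivative by sums of products of two such \(n_0^{\eta/2}\)-scaled norms, which are \(\Op(1)\). This proves \(L_n=\Op(1)\).

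For the consequence, let \(f_n\) be any one of the components and \(f\) its pointwise limit on \(\Lambda_0\).
\begin{enumerate}
\item \emph{Limit is Lipschitz on \(\Lambda_0\).} For \(\lambda,\lambda'\in\Lambda_0\), \(|f_n(\lambda)-f_n(\lambda')|\leq L_n|\lambda-\lambda'|\). Passing to the limit in probability gives \(|f(\lambda)-f(\lambda')|\leq K|\lambda-\lambda'|\) for any \(K\) with \(\limsup_n\PP(L_n>K)\) small. Since \(f\) is deterministic, a single finite \(K\) works.
\item \(f\) therefore extends uniquely to a Lipschitz function on \(\overline{\Lambda_0}=\Lambda\).
\item \emph{Uniform convergence.} For \(\delta>0\), choose a finite \(\delta\)-net \(N_\delta\subset\Lambda_0\), possible by density. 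Then
\[
\sup_{\Lambda}|f_n-f|\leq\max_{N_\delta}|f_n-f|+(L_n+K)\delta .
\]
The maximum is \(\op(1)\) because \(N_\delta\) is finite. Sending \(n\to\infty\) and then \(\delta\downarrow0\) gives uniform convergence.
\end{enumerate}

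The only delicate point is the first step of the consequence: the deterministic Lipschitz constant for \(f\) has to be extracted from a merely stochastically bounded \(L_n\), which requires the tightness argument above. The derivative computations themselves are routine.
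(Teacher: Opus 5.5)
Your proposal is correct and follows essentially the same route as the paper: you differentiate each component with $\partial_\lambda M_\lambda^c=-(M_\lambda^c)^2$, bound the results using the uniform resolvent norms and Cauchy--Schwarz to show $L_n$ is stochastically bounded, and then use tightness to obtain a deterministic Lipschitz constant for the limit before closing with a finite-net argument for uniform convergence. This is also how the paper organises the proof, through its $E_n,D_n,G_n,H_n$ bounds.
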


\begin{proof}[Proof of \Cref{lem:profile-lipschitz}]
    Let \(a_n=n_0^\eta/p\), \(M_{\lambda}=M_\lambda^c\), and \(R_{\lambda}=(G_0^c+\lambda I_{n_0})^{-1}\).
    Uniformly over \(\lambda\in\Lambda\),
    \[
        \|M_{\lambda}\|_{\oper}
        \le
        \lambda_-^{-1},
        \qquad
        \|M_{\lambda}'\|_{\oper}
        =
        \|M_{\lambda}^2\|_{\oper}
        \le
        \lambda_-^{-2},
    \]
    and the same bounds hold for \(R_{\lambda}\).
    Define
    \[
        \begin{gathered}
            E_n
            =
            a_n\|\epsilon_1\|_2^2,
            \qquad
            D_n
            =
            a_n\|\Delta\|_2^2,
            \qquad
            G_n
            =
            n_0^\eta\|g\|_2^2,\\
            H_n
            =
            n_0^\eta
            \frac{\|X_0^c\|_{\oper}^2}{n_0^2}
            \|\bar x_1-\bar x_0\|_2^2
            =
            \frac{\|X_0^c\|_{\oper}^2}{n_0}
            \frac{p}{n_0}
            \frac{n_0^\eta}{p}
            \|\bar x_1-\bar x_0\|_2^2.
        \end{gathered}
    \]
    The assumptions give \(E_n,D_n,G_n,H_n=\Op(1)\).

    Since \(M_{\lambda}'=-M_{\lambda}^2\), differentiation and Cauchy--Schwarz give
    \begin{align*}
        \sup_{\lambda\in\Lambda}
        |\partial_{\lambda}Q_{\epsilon\Delta,n}(\lambda)|
        &=
        \sup_{\lambda\in\Lambda}
        a_n
        \left|
            \epsilon_1^\top
            (M_{\lambda}-\lambda M_{\lambda}^2)
            \Delta
        \right|
        \le
        C_\Lambda(E_nD_n)^{1/2},\\
        \sup_{\lambda\in\Lambda}
        |\partial_{\lambda}Q_{\Delta,n}(\lambda)|
        &=
        \sup_{\lambda\in\Lambda}
        a_n
        \left|
            2\lambda\Delta^\top M_{\lambda}^2\Delta
            -
            2\lambda^2\Delta^\top M_{\lambda}^3\Delta
        \right|
        \le
        C_\Lambda D_n.
    \end{align*}

    For the \(W\)-components, recall \(u_{\lambda}=\lambda R_{\lambda}g\) and \(h_{\lambda}=n_0^{-1}X_0^cM_{\lambda}(\bar x_1-\bar x_0)\).
    The resolvent bounds imply
    \[
        \begin{aligned}
            \sup_{\lambda\in\Lambda}
            \{\|u_{\lambda}\|_2+\|u_{\lambda}'\|_2\}
            &\le
            C_\Lambda\|g\|_2,\\
            \sup_{\lambda\in\Lambda}
            \{\|h_{\lambda}\|_2+\|h_{\lambda}'\|_2\}
            &\le
            C_\Lambda
            \frac{\|X_0^c\|_{\oper}}{n_0}
            \|\bar x_1-\bar x_0\|_2.
        \end{aligned}
    \]
    Therefore,
    \begin{align*}
        \sup_{\lambda\in\Lambda}
        |\partial_{\lambda}W_{\gamma,n}(\lambda)|
        &\le
        C_\Lambda G_n,\\
        \sup_{\lambda\in\Lambda}
        |\partial_{\lambda}W_{x,n}(\lambda)|
        &\le
        C_\Lambda H_n,\\
        \sup_{\lambda\in\Lambda}
        |\partial_{\lambda}W_{\gamma x,n}(\lambda)|
        &\le
        C_\Lambda(G_nH_n)^{1/2}.
    \end{align*}
    Together with \(\partial_{\lambda}Q_{\epsilon,n}=0\), these inequalities prove \(L_n=\Op(1)\).

    Let \(P_n\) denote any one of the six components and suppose \(P_n(\lambda)\pto p(\lambda)\) for each \(\lambda\) in a fixed dense subset \(\Lambda_0\subset\Lambda\), where \(p\) is deterministic on \(\Lambda_0\).
    Since \(L_n=\Op(1)\), choose \(K<\infty\) such that
    \[
        \limsup_{n\to\infty}\PP(L_n>K)<\frac12.
    \]
    On \(\{L_n\le K\}\),
    \[
        |P_n(\lambda)-P_n(\lambda')|
        \le
        K|\lambda-\lambda'|.
    \]
    If \(p\) were not \(K\)-Lipschitz on \(\Lambda_0\), there would exist \(\lambda,\lambda'\in\Lambda_0\) such that
    \[
        |p(\lambda)-p(\lambda')|
        >
        K|\lambda-\lambda'|.
    \]
    Pointwise convergence at these two points would then imply \(\PP(L_n>K)\to1\), a contradiction.
    Hence \(p\) is \(K\)-Lipschitz on \(\Lambda_0\) and has a unique Lipschitz extension \(\bar p\) to \(\Lambda\).

    For uniform convergence, fix \(\varepsilon,\xi>0\).
    Tightness provides \(K_\varepsilon\ge K\) such that \(\limsup_n\PP(L_n>K_\varepsilon)<\varepsilon\).
    Choose a finite \(\delta\)-net from \(\Lambda_0\) with \((K_\varepsilon+K)\delta<\xi/2\).
    Pointwise convergence on this net and the two Lipschitz bounds give
    \[
        \limsup_{n\to\infty}
        \PP\left\{
            \sup_{\lambda\in\Lambda}
            |P_n(\lambda)-\bar p(\lambda)|
            >
            \xi
        \right\}
        \le
        \varepsilon.
    \]
    Since \(\varepsilon\) is arbitrary,
    \[
        \sup_{\lambda\in\Lambda}
        |P_n(\lambda)-\bar p(\lambda)|
        \pto
        0.
    \]
\end{proof}

We collect the required uniform component convergence in the following assumption.

\begin{assumption}[Base-weight component limits]
    \label{asm:admissible-profiles}
    Let \(\Lambda=[\lambda_-,\lambda_+]\subset(0,\infty)\) be compact.
    For some \(\eta\in[0,1]\), suppose the normalised \(\mathcal G_n\)-measurable base weights \(\gamma\) satisfy
    \[
        \begin{aligned}
            |Q_{\epsilon,n}-q_\epsilon|
            &\pto
            0,\\
            \sup_{\lambda\in\Lambda}
            |Q_{\epsilon\Delta,n}(\lambda)-q_{\epsilon\Delta}(\lambda)|
            &\pto
            0,\\
            \sup_{\lambda\in\Lambda}
            |Q_{\Delta,n}(\lambda)-q_{\Delta}(\lambda)|
            &\pto
            0,\\
            \sup_{\lambda\in\Lambda}
            |W_{\gamma,n}(\lambda)-w_\gamma(\lambda)|
            &\pto
            0,\\
            \sup_{\lambda\in\Lambda}
            |W_{x,n}(\lambda)-w_x(\lambda)|
            &\pto
            0,\\
            \sup_{\lambda\in\Lambda}
            |W_{\gamma x,n}(\lambda)-w_{\gamma x}(\lambda)|
            &\pto
            0,
        \end{aligned}
    \]
    where \(q_\epsilon\) is finite and \(q_{\epsilon\Delta}\), \(q_\Delta\), \(w_\gamma\), \(w_x\), and \(w_{\gamma x}\) are finite deterministic functions on \(\Lambda\).
\end{assumption}

\begin{proposition}[Conditional risk limits from component limits]
    \label{prop:risk-profile-transfer}
    Under \Cref{asm:predictive-model,asm:admissible-profiles}, let \(\pi=(q_\epsilon,q_{\epsilon\Delta},q_\Delta,w_\gamma,w_x,w_{\gamma x})\) denote the limiting component collection and define
    \[
        \begin{aligned}
            \sB_{\eta,\pi}(\lambda)
            &:=
            r^2
            \{q_\epsilon-2q_{\epsilon\Delta}(\lambda)+q_\Delta(\lambda)\},\\
            \sV_{\eta,\pi}(\lambda)
            &:=
            \sigma_0^2
            \{\ind{(\eta=1)}+w_\gamma(\lambda)
            +2w_{\gamma x}(\lambda)+w_x(\lambda)\},
        \end{aligned}
    \]
    and
    \[
        \sR_{\eta,\pi}(\lambda)
        :=
        \sB_{\eta,\pi}(\lambda)+\sV_{\eta,\pi}(\lambda).
    \]
    Then
    \begin{align*}
        \sup_{\lambda\in\Lambda}
        |n_0^\eta B_n(\lambda)-\sB_{\eta,\pi}(\lambda)|
        &\pto
        0,\\
        \sup_{\lambda\in\Lambda}
        |n_0^\eta V_n(\lambda)-\sV_{\eta,\pi}(\lambda)|
        &\pto
        0,\\
        \sup_{\lambda\in\Lambda}
        |n_0^\eta R_n(\lambda)-\sR_{\eta,\pi}(\lambda)|
        &\pto
        0.
    \end{align*}
\end{proposition}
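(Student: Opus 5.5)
The plan is to rewrite \(n_0^\eta B_n\) and \(n_0^\eta V_n\) exactly as finite linear combinations of the six components in \eqref{eq:profiles}. Uniform convergence then passes to these combinations by the triangle inequality. No random-matrix input is needed beyond \Cref{asm:admissible-profiles}.

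\textbf{Step 1: the bias term.} Expanding the squared norm in the displayed expression for \(B_n\) gives
\[
    n_0^\eta B_n(\lambda)=\frac{r^2n_0^\eta}{p}\left\{\|\epsilon_1\|_2^2-2\lambda\epsilon_1^\top M_\lambda^c\Delta+\lambda^2\Delta^\top(M_\lambda^c)^2\Delta\right\}.
\]
This equals \(r^2\{Q_{\epsilon,n}-2Q_{\epsilon\Delta,n}(\lambda)+Q_{\Delta,n}(\lambda)\}\), which is also the bias part of \Cref{prop:conditional-risk}. Hence
\[
    \sup_{\lambda\in\Lambda}|n_0^\eta B_n-\sB_{\eta,\pi}|\le r^2\left(|Q_{\epsilon,n}-q_\epsilon|+2\sup|Q_{\epsilon\Delta,n}-q_{\epsilon\Delta}|+\sup|Q_{\Delta,n}-q_\Delta|\right),
\]
and each term on the right is \(\op(1)\).

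\textbf{Step 2: the variance term.} Before using the decomposition \(\gamma_\lambda=\one_{n_0}/n_0+u_\lambda+h_\lambda\), I will verify it, since it carries the argument. Write \(\Delta=(\bar x_1-\bar x_0)-(X_0^c)^\top g\) using \eqref{eq:Delta} and \(\one^\top\gamma=1\). Then \(n_0^{-1}X_0^cM_\lambda^c\Delta=h_\lambda-n_0^{-1}X_0^cM_\lambda^c(X_0^c)^\top g\). The push-through identity gives \(n_0^{-1}X_0^cM_\lambda^c(X_0^c)^\top=G_0^c(G_0^c+\lambda I)^{-1}=I-\lambda(G_0^c+\lambda I)^{-1}\). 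Combined with \(\gamma=\one/n_0+g\), this yields \(\gamma_\lambda=\one/n_0+u_\lambda+h_\lambda\).

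Next, \(u_\lambda\) and \(h_\lambda\) both lie in \(\one^\perp\). For \(h_\lambda\) this holds because \(\one^\top X_0^c=0\). For \(u_\lambda\), note that \(G_0^c\one=0\), so the resolvent maps \(\one^\perp\) into itself, and \(g\in\one^\perp\). Therefore
\[
    \|\gamma_\lambda\|_2^2=n_0^{-1}+\|u_\lambda\|_2^2+2u_\lambda^\top h_\lambda+\|h_\lambda\|_2^2,
\]
and
\[
    n_0^\eta V_n(\lambda)=\sigma_0^2\{n_0^{\eta-1}+W_{\gamma,n}(\lambda)+2W_{\gamma x,n}(\lambda)+W_{x,n}(\lambda)\}.
\]
The affine term \(n_0^{\eta-1}\) equals \(\ind{(\eta=1)}\) when \(\eta=1\) and tends to zero when \(\eta<1\). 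This difference is deterministic and does not depend on \(\lambda\), so it contributes \(o(1)\) uniformly. The remaining three terms are controlled by \Cref{asm:admissible-profiles}, exactly as in Step 1.

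\textbf{Step 3: total risk.} Since \(R_n=B_n+V_n\) and \(\sR_{\eta,\pi}=\sB_{\eta,\pi}+\sV_{\eta,\pi}\), the risk statement follows from Steps 1 and 2 by the triangle inequality.

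The only step needing care is the algebraic verification of the weight decomposition and the orthogonality to \(\one\) in Step 2. Everything else is bookkeeping.
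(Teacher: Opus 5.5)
Your proof is correct and matches the paper's argument. Both write \(n_0^\eta B_n\) and \(n_0^\eta V_n\) as exact linear combinations of the six components: the bias via \(X_0^\top\gamma_\lambda-\nu_1=\epsilon_1-\lambda M_\lambda^c\Delta\), and the variance via the push-through identity giving \(\gamma_\lambda=n_0^{-1}\one_{n_0}+u_\lambda+h_\lambda\); both then conclude from \Cref{asm:admissible-profiles} and the triangle inequality. Your explicit checks that \(u_\lambda,h_\lambda\in\one_{n_0}^\perp\) and that \(n_0^{\eta-1}-\ind{(\eta=1)}\to0\) fill in two steps the paper leaves implicit.
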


\begin{proof}[Proof of \Cref{prop:risk-profile-transfer}]
    The proof of \Cref{prop:conditional-risk} gives
    \[
        X_0^\top\gamma_\lambda-\bar x_1
        =
        -\lambda M_\lambda^c\Delta,
    \]
    and hence
    \[
        X_0^\top\gamma_\lambda-\nu_1
        =
        \epsilon_1-\lambda M_\lambda^c\Delta.
    \]
    Expanding the square in \eqref{eq:conditional-risk} gives
    \[
        n_0^\eta B_n(\lambda)
        =
        r^2
        \{Q_{\epsilon,n}
        -2Q_{\epsilon\Delta,n}(\lambda)
        +Q_{\Delta,n}(\lambda)\}.
    \]
    For the variance, substituting \(\Delta=\bar x_1-X_0^\top\gamma\) into \eqref{eq:gamma-aug} and using
    \[
        n_0^{-1}X_0^cM_\lambda^c(X_0^c)^\top
        =
        G_0^c(G_0^c+\lambda I_{n_0})^{-1}
    \]
    gives
    \[
        \gamma_\lambda
        =
        \frac1{n_0}\one_{n_0}
        +
        \lambda(G_0^c+\lambda I_{n_0})^{-1}g
        +
        \frac1{n_0}X_0^cM_\lambda^c(\bar x_1-\bar x_0)
        =
        \frac1{n_0}\one_{n_0}
        +
        u_{\lambda}
        +
        h_{\lambda}.
    \]
    Therefore,
    \[
        n_0^\eta V_n(\lambda)
        =
        \sigma_0^2
        \{n_0^{\eta-1}
        +W_{\gamma,n}(\lambda)
        +2W_{\gamma x,n}(\lambda)
        +W_{x,n}(\lambda)\}.
    \]
    Both identities are exact.
    Taking suprema over \(\Lambda\) and applying \Cref{asm:admissible-profiles} gives the asserted uniform conclusions for \(\sB_{\eta,\pi}\) and \(\sV_{\eta,\pi}\).
    The conclusion for \(\sR_{\eta,\pi}\) follows from \(R_n(\lambda)=B_n(\lambda)+V_n(\lambda)\) and the triangle inequality.
\end{proof}

Thus deterministic risk-path limits follow from the six component limits.
For ridge constructions, these components can be reduced to a fixed collection of basic quadratic forms.

\begin{lemma}[Basic quadratic forms for ridge verification]
    \label{lem:ridge-generator-forms}
    Fix \(\eta\in[0,1]\), \(\alpha>0\), \(\lambda>0\), and a nonempty finite index set \(\mathcal K\).
    For each \(k\in\mathcal K\), let \(\epsilon_{1,k}\) be an empirical target-mean fluctuation based on a sample of size \(n_{1,k}\).
    Suppose \Cref{asm:rmt} holds, the source and target standardised coordinates are uniformly sub-Gaussian, and the target samples indexed by \(\mathcal K\) are mutually independent and independent of the source design.
    Suppose also that
    \[
        \phi_{0,n}
        \to
        \phi_0
        \in
        (0,\infty),
        \qquad
        \frac{p}{n_{1,k}}
        \to
        \phi_{1,k}
        \in
        (0,\infty),
        \quad
        k\in\mathcal K,
        \qquad
        \rho_{\eta,n}^2
        \to
        \rho_\eta^2
        \in
        [0,\infty),
    \]
    and that \(H_{0,p}\), \(G_{\nu,p}\), and \(H_{1\mid0,p}\) converge weakly as in \Cref{asm:spectral-limits}.
    Define
    \[
        \mathcal T_{\alpha,\lambda}
        =
        \left\{
            M_\alpha^c,\,
            (M_\alpha^c)^2,\,
            M_\lambda^c,\,
            (M_\lambda^c)^2
        \right\},
        \qquad
        \mathcal V_n
        =
        \{\nu_\Delta,e_0\}
        \cup
        \{\epsilon_{1,k}:k\in\mathcal K\}.
    \]
    For every \(T\in\mathcal T_{\alpha,\lambda}\), each scaled pure form
    \[
        \frac{n_0^\eta}{p}v^\top Tv,
        \qquad
        v\in\mathcal V_n,
    \]
    admits a finite deterministic limit, and each scaled mixed form
    \[
        \frac{n_0^\eta}{p}v^\top Tw,
        \qquad
        v,w\in\mathcal V_n,
        \quad
        v\ne w,
    \]
    is \(\op(1)\).
    All of these conclusions hold jointly.
\end{lemma}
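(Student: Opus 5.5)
The plan is to treat the pure and mixed forms separately, organised by which vectors in \(\mathcal V_n\) are involved, and to reuse the machinery of \Cref{thm:risk-asym} with the design-dependent weight direction \(g\) removed. The sub-Gaussian source coordinates give \Cref{asm:global-source-resolvent}, so \Cref{prop:global-source-resolvent-sufficient} supplies \Cref{asm:technical-source-resolvent}. Hence \Cref{lem:uniform-ridge-de,lem:global-linearized-resolvent,lem:source-mean-resolvent} are available at both penalties \(\alpha\) and \(\lambda\). Each \(T\in\mathcal T_{\alpha,\lambda}\) is a polynomial in a single resolvent. Therefore \(T\asympequi\) a deterministic matrix built from \(A_n(\cdot)\) and \(\widetilde v_{b,n}(\cdot;I_p)\), using \(\lambda M\asympequi A_n\) and \(\lambda^2M^2\asympequi A_n(\widetilde v_{b,n}\Sigma_0+I)A_n\).

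\textbf{Pure forms.} I would handle the three kinds of vector in turn.

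For \(v=\nu_\Delta\), which is deterministic with \((n_0^\eta/p)\|\nu_\Delta\|_2^2=\rho_{\eta,n}^2=\cO(1)\), \Cref{lem:quadratic-form-extraction} reduces the form to \(\rho_{\eta,n}^2\int f_{T,n}(s)\,\rd G_{\nu,p}(s)\). The limits \(\rho^2_{\eta,n}\to\rho^2_\eta\) and \(G_{\nu,p}\dto G_\nu\), together with uniform convergence of \(v_n,\widetilde v_{b,n}\) on compact spectral support (as in the proof of \Cref{cor:fixed-risk-limit}), then give the deterministic limit.

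For \(v=\epsilon_{1,k}\), I apply \Cref{lem:treated-mean-qf} conditionally on \(X_0\). This replaces the form by \(n_0^{\eta}(pn_{1,k})^{-1}\tr(\Sigma_1T)\). That quantity equals \(n_0^{\eta-1}(\phi_{1,k}/\phi_0)\otr(\Sigma_1T)\) up to \(o(1)\), and \Cref{lem:uniform-ridge-de} with \(A=\Sigma_1\), combined with \eqref{eq:weighted-spectral-trace} and \(H_{1\mid0,p}\dto H_{1\mid0}\), gives the limit. The limit is zero when \(\eta<1\).

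For \(v=e_0\), \Cref{lem:source-mean-resolvent} gives exact formulas for \(e_0^\top Me_0\) and \(e_0^\top M^2e_0\) in terms of \(a_n,b_n\to v_n,\widetilde v_{v,n}\). Multiplied by \(n_0^\eta/p=n_0^{\eta-1}/\phi_{0,n}\), these converge; again the limit is zero for \(\eta<1\).

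\textbf{Mixed forms.} I would bound each pair using the tools already used in the table of \Cref{lem:mixed-component-negligibility}:
\begin{itemize}
\item \(\nu_\Delta^\top T\epsilon_{1,k}\) and \(e_0^\top T\epsilon_{1,k}\): use \Cref{lem:treated-mean-bilinear}, conditioning on \(\sigma(X_0)\). The first gives \(\Op(\sqrt p\,n_0^{-\eta/2}n_1^{-1/2})\), so the scaled form is \(\Op(n_0^{\eta/2-1/2}\cdot n_0^{-1/2})\to0\).
\item \(\epsilon_{1,k}^\top T\epsilon_{1,l}\) with \(k\neq l\): independence makes this centred, with conditional variance \(\cO(p/(n_{1,k}n_{1,l}))\), so the scaled form is \(\Op(n_0^{\eta-1}p^{-1/2})\).
\item \(\nu_\Delta^\top Te_0\): use the cross-form bound \(w^\top(M^c)^qe_0=\Op(n_0^{-1/2+\zeta}\|w\|_2)\) in \Cref{lem:source-mean-resolvent}. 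This gives a scaled order \(n_0^{\eta}p^{-1}\sqrt p\,n_0^{-\eta/2}n_0^{-1/2+\zeta}=\Op(n_0^{\eta/2-1+\zeta})\to0\), since \(\zeta<1/4\).
\end{itemize}

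\textbf{Joint convergence and main obstacle.} Joint convergence is immediate, since there are finitely many forms and each limit is deterministic. The main obstacle I expect is bookkeeping at \(\eta=1\) for the \(\epsilon_{1,k}\) and \(e_0\) pure forms. There the \(\Sigma_1\)-weighted deterministic equivalents must be matched to the \(H_{1\mid0}\) integrals without commutativity of \(\Sigma_0,\Sigma_1\); I would resolve this exactly via \eqref{eq:weighted-spectral-trace}.
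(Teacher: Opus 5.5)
Your proposal is correct and follows essentially the same route as the paper's proof. The pure forms are handled via \Cref{lem:uniform-ridge-de,lem:quadratic-form-extraction} for \(\nu_\Delta\), \Cref{lem:treated-mean-qf} for \(\epsilon_{1,k}\), and the exact identities of \Cref{lem:source-mean-resolvent} for \(e_0\). The mixed forms are controlled by \Cref{lem:treated-mean-bilinear} (conditioning on the source design, and additionally on one target sample when \(k\neq\ell\)) together with the source-mean cross bound, giving the same scaled orders \(n_0^{-1+\eta/2}\), \(n_0^{-3/2+\eta}\), and \(n_0^{-1+\eta/2+\zeta}\). Your explicit appeal to \eqref{eq:weighted-spectral-trace} for the \(\Sigma_1\)-weighted limits at \(\eta=1\) just spells out a step the paper states tersely.
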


\begin{proof}[Proof of \Cref{lem:ridge-generator-forms}]
    Write \(n=n_0\).
    Under proportional growth, \(p\asymp n\) and \(n_{1,k}\asymp n\) for every \(k\in\mathcal K\).
    Put
    \[
        c_n
        =
        \frac{n^\eta}{p}
        \asymp
        n^{\eta-1}.
    \]
    The scaling and moment conditions give
    \[
        \|\nu_\Delta\|_2
        =
        O\left(\sqrt p\,n^{-\eta/2}\right),
        \qquad
        \|e_0\|_2
        =
        \Op(1),
        \qquad
        \|\epsilon_{1,k}\|_2
        =
        \Op(1),
        \quad
        k\in\mathcal K.
    \]
    For every \(T\in\mathcal T_{\alpha,\lambda}\), \(\|T\|_{\oper}\le C_{\alpha,\lambda}<\infty\).

    \Cref{prop:global-source-resolvent-sufficient} verifies the source-resolvent condition under the stated source assumptions.
    Let \(\zeta\in(0,1/4)\) denote the exponent in \Cref{asm:technical-source-resolvent}.
    The pure-form orders are
    \begin{center}
        \small
        \setlength{\tabcolsep}{4pt}
        \begin{tabular}{@{}llll@{}}
            \toprule
            Pure-form term & Unscaled order & \(c_n\)-scaled order & Controlling result \\
            \midrule
            \(\nu_\Delta,\nu_\Delta\)
            & \(\cO(pn^{-\eta})\) & \(\cO(1)\)
            & \Cref{lem:uniform-ridge-de,lem:quadratic-form-extraction} \\
            \(e_0,e_0\)
            & \(\Op(1)\) & \(\Op(n^{\eta-1})\)
            & \Cref{lem:source-mean-resolvent} \\
            \(\epsilon_{1,k},\epsilon_{1,k}\)
            & \(\Op(1)\) & \(\Op(n^{\eta-1})\)
            & \Cref{lem:treated-mean-qf,lem:uniform-ridge-de} \\
            \bottomrule
        \end{tabular}
    \end{center}
    The cited deterministic-equivalent and quadratic-form results give finite deterministic limits for all pure forms.
    In particular, the source- and target-mean terms vanish for \(\eta<1\) and may contribute at \(\eta=1\).

    For mixed forms involving one target empirical mean and either \(\nu_\Delta\) or \(e_0\), condition on the source design and apply \Cref{lem:treated-mean-bilinear}.
    For distinct \(k,\ell\in\mathcal K\), condition additionally on the \(k\)th target sample.
    Then \(T\epsilon_{1,k}\) is fixed under the conditional law of the \(\ell\)th sample, while \(\epsilon_{1,\ell}\) is independent and centred, so the same lemma applies.
    Together with the source-mean cross bound in \Cref{lem:source-mean-resolvent}, the unscaled orders follow from
    \begin{align*}
        \nu_\Delta^\top T\epsilon_{1,k}
        &=\Op\left(n_{1,k}^{-1/2}\|T\nu_\Delta\|_2\right)
        =\Op(n^{-\eta/2}),\\
        \nu_\Delta^\top Te_0
        &=\Op(n^{-1/2+\zeta})\|\nu_\Delta\|_2
        =\Op(n^{-\eta/2+\zeta}),\\
        e_0^\top T\epsilon_{1,k}
        &=\Op\left(n_{1,k}^{-1/2}\|Te_0\|_2\right)
        =\Op(n^{-1/2}),\\
        \epsilon_{1,k}^\top T\epsilon_{1,\ell}
        &=\Op\left(n_{1,\ell}^{-1/2}
        \|T\epsilon_{1,k}\|_2\right)
        =\Op(n^{-1/2}).
    \end{align*}
    The resulting mixed-form orders are summarised below.
    \begin{center}
        \small
        \setlength{\tabcolsep}{4pt}
        \begin{tabular}{@{}llll@{}}
            \toprule
            Mixed pair & Unscaled order & \(c_n\)-scaled order & Controlling input \\
            \midrule
            \(\nu_\Delta,\epsilon_{1,k}\)
            & \(\Op(n^{-\eta/2})\)
            & \(\Op(n^{-1+\eta/2})\)
            & \Cref{lem:treated-mean-bilinear} \\
            \(\nu_\Delta,e_0\)
            & \(\Op(n^{-\eta/2+\zeta})\)
            & \(\Op(n^{-1+\eta/2+\zeta})\)
            & \Cref{lem:source-mean-resolvent} \\
            \(e_0,\epsilon_{1,k}\)
            & \(\Op(n^{-1/2})\)
            & \(\Op(n^{-3/2+\eta})\)
            & \Cref{lem:treated-mean-bilinear} \\
            \(\epsilon_{1,k},\epsilon_{1,\ell}\), \(k\ne\ell\)
            & \(\Op(n^{-1/2})\)
            & \(\Op(n^{-3/2+\eta})\)
            & \Cref{lem:treated-mean-bilinear} \\
            \bottomrule
        \end{tabular}
    \end{center}
    Each scaled order vanishes for \(\eta\in[0,1]\); in the second row, \(-1+\eta/2+\zeta\le-1/2+\zeta<0\) because \(\zeta<1/4\).
    Thus every distinct mixed pair in \(\mathcal V_n\) is negligible after scaling.
    Because \(\mathcal K\), \(\mathcal T_{\alpha,\lambda}\), and \(\mathcal V_n\) have fixed cardinality, all pure- and mixed-form conclusions hold jointly.
\end{proof}

The following lemma extends these basic quadratic-form limits to the resolvent combinations appearing in the risk-path components.

\begin{lemma}[Ridge-resolvent reduction for quadratic forms]
    \label{lem:ridge-rational-filter-closure}
    Fix \(\alpha>0\) and \(\lambda>0\) with \(\lambda\ne\alpha\), and let
    \[
        \mathcal R_{\alpha,\lambda}
        =
        \operatorname{span}
        \{
            M_\alpha^c,
            (M_\alpha^c)^2,
            M_\lambda^c,
            (M_\lambda^c)^2
        \}.
    \]
    Let \(c_n>0\) be a deterministic scaling and let
    \[
        \mathcal V_n
        =
        \{v_{1,n},\ldots,v_{J,n}\}
    \]
    be a collection of vector sequences of fixed cardinality \(J\).
    Suppose that, for each of the four resolvent terms \(T_n\) spanning \(\mathcal R_{\alpha,\lambda}\), every pure form \(c_nv^\top T_nv\), with \(v\in\mathcal V_n\), admits a deterministic equivalent, and every mixed form \(c_nv^\top T_nw\), with distinct \(v,w\in\mathcal V_n\), is \(\op(1)\).
    Then the same conclusions hold with \(T_n\) replaced by any finite linear combination of these terms.
    The coefficients may depend on the fixed pair \((\alpha,\lambda)\), but they are deterministic, independent of \(n\), and finite for \(\lambda\ne\alpha\).

    In particular, the conclusion applies to any proper rational function of \(S_0^c\) whose only poles are \(-\alpha\) and \(-\lambda\), each of order at most two.
    If the conclusions for the four terms are uniform and the partial-fraction coefficients are uniformly bounded on a parameter set, the conclusion is uniform there as well.
\end{lemma}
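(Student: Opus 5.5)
The argument is linearity of bilinear forms, with a partial-fraction decomposition for the rational-function clause. Let \(T_n=\sum_{m=1}^4 c_m T_{m,n}\), where \(T_{1,n},\ldots,T_{4,n}\) are \(M_\alpha^c\), \((M_\alpha^c)^2\), \(M_\lambda^c\), \((M_\lambda^c)^2\), and the \(c_m\) are deterministic, finite and independent of \(n\). For each pure form,
\[
    c_n v^\top T_n v=\sum_{m=1}^4 c_m\, c_n v^\top T_{m,n}v .
\]
If \(\mathcal D_{m,n}(v)\) denotes the deterministic equivalent of \(c_n v^\top T_{m,n}v\), then \(\sum_m c_m\mathcal D_{m,n}(v)\) is a deterministic equivalent for the combination. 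The error is bounded by \(\sum_m |c_m|\,\op(1)=\op(1)\), because the sum has four terms and fixed coefficients. Mixed forms are handled the same way: \(c_n v^\top T_n w=\sum_m c_m\,\op(1)=\op(1)\). Since \(\mathcal V_n\) has fixed cardinality \(J\), only \(J+J(J-1)\) forms are involved, so the conclusions hold jointly.

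For the rational-function clause, take a proper rational function \(f(s)=P(s)/\{(s+\alpha)^{a}(s+\lambda)^{b}\}\) with \(a,b\le2\) and \(\deg P<a+b\). Because \(\alpha\ne\lambda\), partial fractions give
\[
    f(s)=\frac{c_1}{s+\alpha}+\frac{c_2}{(s+\alpha)^2}+\frac{c_3}{s+\lambda}+\frac{c_4}{(s+\lambda)^2},
\]
with no polynomial part since \(f\) is proper. The coefficients are obtained by residue formulas, for example \(c_2=\lim_{s\to-\alpha}(s+\alpha)^2f(s)\) and \(c_1=\frac{d}{ds}\{(s+\alpha)^2f(s)\}|_{s=-\alpha}\). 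They are polynomial in \(\alpha\), \(\lambda\) and the coefficients of \(P\), divided by powers of \(\lambda-\alpha\), so they are finite whenever \(\lambda\ne\alpha\).

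Next, \(S_0^c\) is symmetric and \(-\alpha,-\lambda<0\) lie outside its spectrum. The spectral (functional) calculus therefore applies the scalar identity eigenvalue by eigenvalue, giving
\[
    f(S_0^c)=c_1M_\alpha^c+c_2(M_\alpha^c)^2+c_3M_\lambda^c+c_4(M_\lambda^c)^2\in\mathcal R_{\alpha,\lambda}.
\]
This reduces the rational-function clause to the first part.

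For uniformity, suppose the four basic conclusions hold uniformly over a parameter set and \(\sup|c_m|<\infty\) there. Then the error bound \(\sup\sum_m|c_m|\,|c_n v^\top T_{m,n}v-\mathcal D_{m,n}(v)|\le(\max_m\sup|c_m|)\sum_m\sup|\cdots|=\op(1)\) holds uniformly, and the mixed forms are bounded the same way.

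The only point needing care is this uniformity step. The coefficients blow up like powers of \(|\lambda-\alpha|^{-1}\) as \(\lambda\to\alpha\), so the uniform claim requires the parameter set to stay a positive distance from the diagonal \(\lambda=\alpha\). The hypothesis that the coefficients are uniformly bounded expresses exactly this requirement. Everything else is bookkeeping.
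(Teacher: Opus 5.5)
Your proposal is correct and follows the paper's proof. Partial fractions, valid because \(\lambda\ne\alpha\) and \(f\) is proper, place \(f(S_0^c)\) in \(\mathcal R_{\alpha,\lambda}\), and linearity with finitely many fixed coefficients then transfers both the deterministic equivalents and the \(\op(1)\) mixed bounds, with uniformity following from uniformly bounded coefficients. Your note that the coefficients blow up as \(\lambda\to\alpha\) matches the paper's separate treatment of coincident penalties via the Lipschitz extension.
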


\begin{proof}[Proof of \Cref{lem:ridge-rational-filter-closure}]
    Partial fractions express such a rational function as a finite linear combination of the four resolvent terms.
    The corresponding pure deterministic equivalent is the same linear combination of the equivalents for the four terms, and each mixed form is a finite linear combination of \(\op(1)\) terms.
    Uniformity follows after taking the supremum when the coefficients are uniformly bounded.
\end{proof}

\begin{remark}[Coincident penalties]
    \label{rem:ridge-filter-coincident-penalties}
    The partial-fraction representation applies for fixed \(\lambda\ne\alpha\), and its coefficients need not remain bounded as \(\lambda\to\alpha\).
    When the conditions of \Cref{lem:profile-lipschitz} hold, pointwise convergence on a dense subset of \(\Lambda\setminus\{\alpha\}\) extends uniformly over \(\Lambda\).
    The limit at \(\lambda=\alpha\) is then the unique continuous extension of the off-diagonal limit, so no additional higher-order resolvent equivalents are required.
\end{remark}

The remaining subsections use these componentwise limits and ridge verification tools for the ridge base-weight constructions considered in this appendix.

\subsection{Target-Split Ridge Base Weights}
\label{app:target-split-ridge-profile-verification}

We now study ridge base weights constructed from the pilot target sample in the honest target-splitting procedure of Appendix~\ref{app:honest-target-split}.
The fixed penalty \(\alpha>0\) controls the base ridge weights, while \(\lambda\in\Lambda\) indexes the subsequent augmentation path evaluated using the independent evaluation sample.
We verify the component limits needed for a deterministic scaled risk path and the resulting tuning conclusions.

Put
\[
    d_{\mathrm P}
    =
    \bar x_{1,\mathrm P}-\bar x_0
    =
    \nu_\Delta+\epsilon_{1,\mathrm P}-e_0,
\]
and define the pilot ridge base weights
\begin{equation}
    \label{eq:target-split-ridge-base-weight}
    \gamma_{\alpha,\mathrm P}
    =
    \frac1{n_0}\one_{n_0}
    +
    \frac1{n_0}X_0^c M_\alpha^cd_{\mathrm P}.
\end{equation}
Since \(S_0^c M_\alpha^c=I_p-\alpha M_\alpha^c\), the evaluation imbalance has the exact decomposition
\begin{equation}
    \label{eq:honest-evaluation-imbalance}
    \begin{split}
        \Delta_{\mathrm E}
        &=
        \bar x_{1,\mathrm E}
        -
        X_0^\top\gamma_{\alpha,\mathrm P}\\
        &=
        \epsilon_{1,\mathrm E}
        -
        \epsilon_{1,\mathrm P}
        +
        \alpha M_\alpha^c
        (\nu_\Delta+\epsilon_{1,\mathrm P}-e_0).
    \end{split}
\end{equation}
Thus the pilot and evaluation fluctuations enter separately, with \(\epsilon_{1,\mathrm P}\) and \(\epsilon_{1,\mathrm E}\) independent of one another and of the source design.
Every component expansion is therefore generated by
\[
    \nu_\Delta,\qquad
    e_0,\qquad
    \epsilon_{1,\mathrm P},\qquad
    \epsilon_{1,\mathrm E}.
\]

For this construction, write
\[
    b_{\mathrm P}
    =
    C_0\gamma_{\alpha,\mathrm P}
    =
    \frac1{n_0}X_0^c M_\alpha^cd_{\mathrm P},
    \qquad
    a_{\alpha,\lambda}
    =
    \frac1{n_0}X_0^c M_\lambda^c\Delta_{\mathrm E}.
\]
Then
\[
    \gamma_{\mathrm E,\lambda}
    =
    \frac1{n_0}\one_{n_0}
    +
    b_{\mathrm P}
    +
    a_{\alpha,\lambda}.
\]
Define the six target-split components by
\begin{equation}
    \label{eq:target-split-profiles}
    \begin{gathered}
        Q_{\mathrm E,n}
        =
        \frac{n_0^\eta}{p}
        \|\epsilon_{1,\mathrm E}\|_2^2,
        \qquad
        Q_{\mathrm E\Delta,n}(\lambda)
        =
        \frac{n_0^\eta}{p}\lambda
        \epsilon_{1,\mathrm E}^\top M_\lambda^c\Delta_{\mathrm E},\\
        Q_{\Delta,\mathrm E,n}(\lambda)
        =
        \frac{n_0^\eta}{p}\lambda^2
        \Delta_{\mathrm E}^\top(M_\lambda^c)^2\Delta_{\mathrm E},
        \qquad
        W_{\mathrm P,n}
        =
        n_0^\eta\|b_{\mathrm P}\|_2^2,\\
        W_{\mathrm A,n}(\lambda)
        =
        n_0^\eta\|a_{\alpha,\lambda}\|_2^2,
        \qquad
        W_{\mathrm{PA},n}(\lambda)
        =
        n_0^\eta b_{\mathrm P}^\top a_{\alpha,\lambda}.
    \end{gathered}
\end{equation}
Under \Cref{asm:predictive-model}, these components give the exact target-split identities
\begin{align}
    n_0^\eta B_n(\lambda)
    &=
    r^2
    \{
        Q_{\mathrm E,n}
        -
        2Q_{\mathrm E\Delta,n}(\lambda)
        +
        Q_{\Delta,\mathrm E,n}(\lambda)
    \},
    \label{eq:target-split-bias-transfer}\\
    n_0^\eta V_n(\lambda)
    &=
    \sigma_0^2
    \{
        n_0^{\eta-1}
        +
        W_{\mathrm P,n}
        +
        2W_{\mathrm{PA},n}(\lambda)
        +
        W_{\mathrm A,n}(\lambda)
    \}.
    \label{eq:target-split-variance-transfer}
\end{align}
By \eqref{eq:target-split-bias-transfer}--\eqref{eq:target-split-variance-transfer}, uniform deterministic limits for these six components are sufficient to obtain a deterministic scaled risk path.

\begin{proposition}[Sub-Gaussian verification for target-split ridge weights]
    \label{prop:target-split-ridge-profile-verification}
    Fix \(\eta\in[0,1]\), \(\alpha>0\), and a compact interval \(\Lambda\subset(0,\infty)\).
    Suppose the pilot/evaluation split is constructed as in Appendix~\ref{app:honest-target-split}.
    Suppose \Cref{asm:rmt} holds and the source and treated standardised coordinates are uniformly sub-Gaussian.
    Suppose also that
    \[
        \phi_{0,n}
        \to
        \phi_0
        \in
        (0,\infty),
        \qquad
        \frac{p}{n_{1,\mathrm P}}
        \to
        \phi_{1,\mathrm P}
        \in
        (0,\infty),
        \qquad
        \frac{p}{n_{1,\mathrm E}}
        \to
        \phi_{1,\mathrm E}
        \in
        (0,\infty),
    \]
    that \(\rho_{\eta,n}^2\to\rho_\eta^2\in[0,\infty)\), and that \(H_{0,p}\), \(G_{\nu,p}\), and \(H_{1\mid0,p}\) converge weakly as in \Cref{asm:spectral-limits}.
    Then all six components in \eqref{eq:target-split-profiles} converge uniformly on \(\Lambda\) to finite deterministic limits, with the constant components interpreted uniformly.
    Denote these limits by
    \[
        q_{\mathrm E},\quad
        q_{\mathrm E\Delta,\alpha}(\lambda),\quad
        q_{\Delta,\mathrm E,\alpha}(\lambda),\quad
        w_{\mathrm P,\alpha},\quad
        w_{\mathrm A,\alpha}(\lambda),\quad
        w_{\mathrm{PA},\alpha}(\lambda),
    \]
    respectively.
\end{proposition}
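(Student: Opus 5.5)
The plan is to expand each of the six components in \eqref{eq:target-split-profiles} into finite linear combinations of scaled quadratic forms $c_nv^\top T w$ with $c_n=n_0^\eta/p$, generators $v,w\in\mathcal V_n=\{\nu_\Delta,e_0,\epsilon_{1,\mathrm P},\epsilon_{1,\mathrm E}\}$, and matrices $T$ that are proper rational functions of $S_0^c$ with poles only at $-\alpha$ and $-\lambda$. Then \Cref{lem:ridge-generator-forms} applies with $\mathcal K=\{\mathrm P,\mathrm E\}$, since pilot and evaluation samples are independent of each other and of $X_0$ by the split in \Cref{lem:target-split-separation}. Together with \Cref{lem:ridge-rational-filter-closure}, this gives pointwise deterministic limits. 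Uniformity over $\Lambda$ will then follow from \Cref{lem:profile-lipschitz}.

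\textbf{Step 1: expansion of the bias components.} For the bias components I substitute \eqref{eq:honest-evaluation-imbalance}, so that $\Delta_{\mathrm E}=\epsilon_{1,\mathrm E}-\epsilon_{1,\mathrm P}+\alpha M_\alpha^c(\nu_\Delta+\epsilon_{1,\mathrm P}-e_0)$.
\begin{itemize}
\item $Q_{\mathrm E,n}$ is a pure form with $T=I_p$. \Cref{lem:treated-mean-qf} gives its limit $\ind{(\eta=1)}\phi_{1,\mathrm E}\phi_0^{-1}\int 1\,\rd H_{1\mid0}$.
\item $Q_{\mathrm E\Delta,n}$ involves the matrices $\lambda M_\lambda^c$ and $\lambda\alpha M_\lambda^cM_\alpha^c$.
\item $Q_{\Delta,\mathrm E,n}$ involves $\lambda^2(M_\lambda^c)^2$, $\lambda^2\alpha M_\alpha^c(M_\lambda^c)^2$ and $\lambda^2\alpha^2M_\alpha^c(M_\lambda^c)^2M_\alpha^c$.
\end{itemize}
Since all these matrices commute, each is a rational function of $S_0^c$ with poles of order at most two at $-\alpha$ and $-\lambda$. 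For $\lambda\ne\alpha$, partial fractions write each as a combination of $M_\alpha^c,(M_\alpha^c)^2,M_\lambda^c,(M_\lambda^c)^2$, plus possibly $I_p$, which is handled directly by the same pure and mixed bounds.

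\textbf{Step 2: expansion of the weight components.} I use the ridge identities from the proof of \Cref{lem:pure-component-de}, in particular $n_0^{-1}M_s^c(X_0^c)^\top X_0^cM_t^c=M_s^cS_0^cM_t^c$. This gives $n_0^\eta\|b_{\mathrm P}\|_2^2=(n_0^\eta/n_0)\,d_{\mathrm P}^\top M_\alpha^cS_0^cM_\alpha^cd_{\mathrm P}$, and similarly
\[
W_{\mathrm A,n}=\frac{n_0^\eta}{n_0}\Delta_{\mathrm E}^\top M_\lambda^cS_0^cM_\lambda^c\Delta_{\mathrm E},\qquad W_{\mathrm{PA},n}=\frac{n_0^\eta}{n_0}d_{\mathrm P}^\top M_\alpha^cS_0^cM_\lambda^c\Delta_{\mathrm E}.
\]
I rewrite $S_0^cM_t^c=I_p-tM_t^c$, and $n_0^\eta/n_0=\phi_{0,n}c_n$. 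After inserting the generator expansion of $d_{\mathrm P}$ and $\Delta_{\mathrm E}$, every term is again $c_n v^\top Tw$ with $T$ in the same rational class with bounded coefficients. Each pure form then has a deterministic limit by \Cref{lem:ridge-generator-forms,lem:ridge-rational-filter-closure}, and each mixed form is $\op(1)$. Summing finitely many terms gives pointwise limits at every $\lambda\in\Lambda\setminus\{\alpha\}$.

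\textbf{Step 3: uniformity.} I verify the hypotheses of \Cref{lem:profile-lipschitz} for the target-split quantities, where $u_\lambda,h_\lambda$ are replaced by $b_{\mathrm P}$ and $a_{\alpha,\lambda}$ with the same resolvent derivative bounds.
\begin{itemize}
\item $\|X_0^c\|_{\oper}^2/n_0=\Op(1)$ by Bai--Yin.
\item $n_0^\eta\|C_0\gamma_{\alpha,\mathrm P}\|_2^2=\Op(1)$ from the bound on $W_{\mathrm P,n}$.
\item $(n_0^\eta/p)(\|\epsilon_{1,\mathrm E}\|_2^2+\|\Delta_{\mathrm E}\|_2^2+\|d_{\mathrm P}\|_2^2)=\Op(1)$ from $\rho_{\eta,n}^2=\cO(1)$ and $\|e_0\|_2,\|\epsilon_{1,k}\|_2=\Op(1)$.
\end{itemize}
The dense set $\Lambda_0=\Lambda\setminus\{\alpha\}$ then yields uniform convergence to the Lipschitz extensions, as in \Cref{rem:ridge-filter-coincident-penalties}.

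\textbf{Expected main obstacle.} The hardest part is the bookkeeping that each matrix appearing after substitution is genuinely a rational function of $S_0^c$ with poles of order at most two. Products like $M_\alpha^c(M_\lambda^c)^2M_\alpha^c$ have order-two poles at both points, so they are fine. However, $M_\alpha^cS_0^cM_\lambda^c$ multiplied by extra $\alpha M_\alpha^c$ factors from both $d_{\mathrm P}$ and $\Delta_{\mathrm E}$ could create a third-order pole at $-\alpha$. For example, $W_{\mathrm{PA}}$ contains $M_\alpha^c(I-\lambda M_\lambda^c)\alpha M_\alpha^c$-type factors, whose pole order at $-\alpha$ must be checked. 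If an order-three pole appears, I would extend \Cref{lem:ridge-generator-forms} to $(M_\alpha^c)^3$ via Cauchy's formula exactly as in \Cref{lem:global-linearized-resolvent} and \Cref{lem:uniform-ridge-de}. Since $\alpha$ is fixed, the third derivative of the deterministic equivalents is available by analytic differentiation.
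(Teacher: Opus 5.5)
Your proposal is correct and follows the paper's proof essentially step for step. You substitute \eqref{eq:honest-evaluation-imbalance}, reduce each component by partial fractions (\Cref{lem:ridge-rational-filter-closure}) to the generator forms of \Cref{lem:ridge-generator-forms} with $\mathcal K=\{\mathrm P,\mathrm E\}$, and treat $Q_{\mathrm E,n}$ separately by \Cref{lem:treated-mean-qf}. You then pass from pointwise limits on a dense set avoiding $\lambda=\alpha$ to uniform limits via an $\Op(1)$ Lipschitz envelope for $b_{\mathrm P}$ and $a_{\alpha,\lambda}$, as in \Cref{lem:profile-lipschitz}.

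The third-order pole you anticipate never occurs, because $d_{\mathrm P}=\nu_\Delta+\epsilon_{1,\mathrm P}-e_0$ carries no $M_\alpha^c$ factor (your own example $M_\alpha^c(I_p-\lambda M_\lambda^c)\alpha M_\alpha^c$ has only two). The worst filters are therefore $\alpha s/\{(s+\alpha)^2(s+\lambda)\}$ and $\alpha^2 s/\{(s+\alpha)^2(s+\lambda)^2\}$, and the Cauchy-formula fallback is unnecessary.
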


The proposition is design-side: \(\gamma_{\alpha,\mathrm P}\) may depend on the source design and pilot target sample, while the response model is used only for the risk conclusions below.

\begin{proof}[Proof of \Cref{prop:target-split-ridge-profile-verification}]
    Substitute \eqref{eq:honest-evaluation-imbalance} into \eqref{eq:target-split-profiles}.
    The corresponding quadratic forms are summarised by
    \begin{center}
        \small
        \setlength{\tabcolsep}{4pt}
        \begin{tabular}{@{}lllll@{}}
            \toprule
            Component & Left vector & Right vector & Function \(f_{\alpha,\lambda}(s)\) & Prefactor \\
            \midrule
            \(Q_{\mathrm E,n}\)
            & \(\epsilon_{1,\mathrm E}\) & \(\epsilon_{1,\mathrm E}\)
            & \(1\) & \(n_0^\eta/p\) \\
            \(Q_{\mathrm E\Delta,n}\)
            & \(\epsilon_{1,\mathrm E}\) & \(\Delta_{\mathrm E}\)
            & \(\lambda/(s+\lambda)\) & \(n_0^\eta/p\) \\
            \(Q_{\Delta,\mathrm E,n}\)
            & \(\Delta_{\mathrm E}\) & \(\Delta_{\mathrm E}\)
            & \(\lambda^2/(s+\lambda)^2\) & \(n_0^\eta/p\) \\
            \(W_{\mathrm P,n}\)
            & \(d_{\mathrm P}\) & \(d_{\mathrm P}\)
            & \(s/(s+\alpha)^2\) & \(n_0^\eta/n_0\) \\
            \(W_{\mathrm A,n}\)
            & \(\Delta_{\mathrm E}\) & \(\Delta_{\mathrm E}\)
            & \(s/(s+\lambda)^2\) & \(n_0^\eta/n_0\) \\
            \(W_{\mathrm{PA},n}\)
            & \(d_{\mathrm P}\) & \(\Delta_{\mathrm E}\)
            & \(s/\{(s+\alpha)(s+\lambda)\}\) & \(n_0^\eta/n_0\) \\
            \bottomrule
        \end{tabular}
    \end{center}
    Thus, after substituting \eqref{eq:honest-evaluation-imbalance}, each component is a finite linear combination of quadratic and bilinear forms
    \[
        \frac{n_0^\eta}{p}
        v^\top f_{\alpha,\lambda}(S_0^c)w,
        \qquad
        v,w
        \in
        \{
            \nu_\Delta,
            e_0,
            \epsilon_{1,\mathrm P},
            \epsilon_{1,\mathrm E}
        \},
    \]
    possibly multiplied by the bounded factor \(p/n_0\), where \(f_{\alpha,\lambda}\) is a proper rational function with poles only at \(-\alpha\) and \(-\lambda\), each of order at most two.
    For fixed \(\lambda\ne\alpha\), \Cref{lem:ridge-rational-filter-closure} reduces each such form to the basic quadratic forms in \Cref{lem:ridge-generator-forms}.
    The case \(\lambda=\alpha\) is handled below by the Lipschitz extension.

    The constant evaluation component is handled separately.
    Since \(n_0^\eta/p=\cO(1)\), applying \Cref{lem:treated-mean-qf} to the evaluation sample with \(A_\lambda=I_p\) gives
    \[
        Q_{\mathrm E,n}
        -
        \frac{n_0^\eta}{p\,n_{1,\mathrm E}}
        \tr(\Sigma_1)
        \pto
        0.
    \]
    The aspect-ratio and spectral assumptions therefore imply
    \[
        Q_{\mathrm E,n}
        \pto
        q_{\mathrm E},
        \qquad
        q_{\mathrm E}
        :=
        \ind{(\eta=1)}
        \frac{\phi_{1,\mathrm E}}{\phi_0}
        \int 1\,\rd H_{1\mid0}.
    \]

    For the remaining five components, take \(\mathcal K=\{\mathrm P,\mathrm E\}\).
    Then \Cref{lem:ridge-generator-forms}, together with \Cref{lem:ridge-rational-filter-closure}, gives deterministic limits for all pure terms and makes every mixed term negligible.
    The scaling for the \(W\)-components satisfies
    \[
        \frac{n_0^\eta}{n_0}
        =
        \phi_{0,n}
        \frac{n_0^\eta}{p},
    \]
    so the same pure- and mixed-form conclusions apply.
    The pilot and evaluation aspect ratios determine their trace coefficients separately.
    This establishes pointwise convergence for each fixed \(\lambda\ne\alpha\).
    The rate table in \Cref{lem:ridge-generator-forms} is applied pointwise, so no uniform bound on the partial-fraction coefficients near \(\lambda=\alpha\) is needed.

    The primitive vector rates and \eqref{eq:honest-evaluation-imbalance} give
    \[
        \frac{n_0^\eta}{p}
        \|d_{\mathrm P}\|_2^2
        =
        \Op(1),
        \qquad
        \frac{n_0^\eta}{p}
        \|\Delta_{\mathrm E}\|_2^2
        =
        \Op(1).
    \]
    Moreover, the source operator-norm bound and the resolvent bounds imply
    \begin{align*}
        n_0^\eta\|b_{\mathrm P}\|_2^2
        &\leq
        \frac{\|X_0^c\|_{\oper}^2}{n_0}
        \alpha^{-2}
        \frac{n_0^\eta}{n_0}
        \|d_{\mathrm P}\|_2^2
        =
        \Op(1),\\
        n_0^\eta
        \sup_{\lambda\in\Lambda}
        \|a_{\alpha,\lambda}\|_2^2
        &\leq
        \frac{\|X_0^c\|_{\oper}^2}{n_0}
        \lambda_-^{-2}
        \frac{n_0^\eta}{n_0}
        \|\Delta_{\mathrm E}\|_2^2
        =
        \Op(1).
    \end{align*}
    Using the original matrix products,
    \[
        \partial_\lambda M_\lambda^c
        =
        -(M_\lambda^c)^2,
        \qquad
        \sup_{\lambda\in\Lambda}
        \|M_\lambda^c\|_{\oper}
        \leq
        \lambda_-^{-1},
        \qquad
        \|M_\alpha^c\|_{\oper}
        \leq
        \alpha^{-1}.
    \]
    Consequently, each matrix function and its first derivative are uniformly bounded by a constant depending only on \(\lambda_-\), \(\lambda_+\), and \(\alpha\).
    Cauchy--Schwarz and the preceding norm bounds therefore give a common \(\Op(1)\) Lipschitz envelope for the four nonconstant components, as in \Cref{lem:profile-lipschitz}.
    Pointwise convergence on \((\Lambda\cap\mathbb Q)\setminus\{\alpha\}\) and the finite-net argument in \Cref{lem:profile-lipschitz} therefore give uniform convergence over \(\Lambda\).
    The component limits at \(\lambda=\alpha\) are their unique Lipschitz extensions.
\end{proof}

The verification of the component limits first yields the base-weight rate condition, then the deterministic scaled risk path and its consequences for penalty selection.

\begin{corollary}[Penalty selection with target-split ridge base weights]
    \label{cor:target-split-ridge-post-tuning}
    Under the conditions of \Cref{prop:target-split-ridge-profile-verification}:
    \begin{enumerate}[(a)]
        \item\label{cor:target-split-ridge-post-tuning-i}
        \Cref{asm:risk-rate-admissibility} holds for \(\gamma=\gamma_{\alpha,\mathrm P}\);

        \item\label{cor:target-split-ridge-post-tuning-ii}
        if, in addition, \Cref{asm:predictive-model} holds, \(\widehat r^2\pto r^2\), and \(\widehat\sigma_0^2\pto\sigma_0^2\), define
        \begin{align*}
            \sB_{\eta,\alpha}(\lambda)
            &=
            r^2
            \{
                q_{\mathrm E}
                -
                2q_{\mathrm E\Delta,\alpha}(\lambda)
                +
                q_{\Delta,\mathrm E,\alpha}(\lambda)
            \},\\
            \sV_{\eta,\alpha}(\lambda)
            &=
            \sigma_0^2
            \{
                \ind{(\eta=1)}
                +
                w_{\mathrm P,\alpha}
                +
                2w_{\mathrm{PA},\alpha}(\lambda)
                +
                w_{\mathrm A,\alpha}(\lambda)
            \},\\
            \sR_{\eta,\alpha}(\lambda)
            &=
            \sB_{\eta,\alpha}(\lambda)
            +
            \sV_{\eta,\alpha}(\lambda).
        \end{align*}
        Then
        \begin{align*}
            \sup_{\lambda\in\Lambda}
            \left|
                n_0^\eta B_n(\lambda)
                -
                \sB_{\eta,\alpha}(\lambda)
            \right|
            &\pto
            0,\\
            \sup_{\lambda\in\Lambda}
            \left|
                n_0^\eta V_n(\lambda)
                -
                \sV_{\eta,\alpha}(\lambda)
            \right|
            &\pto
            0,\\
            \sup_{\lambda\in\Lambda}
            \left|
                n_0^\eta R_n(\lambda)
                -
                \sR_{\eta,\alpha}(\lambda)
            \right|
            &\pto
            0.
        \end{align*}
        Moreover, \Cref{thm:risk-estimation} applies to the honest construction in \Cref{alg:target-split-ridge-augmentation} and its target-aware tuning specialisation in Appendix~\ref{app:honest-target-split} when the balancing map is instantiated by
        \[
            \mathsf{Balance}(X_0,\bar x_{1,\mathrm P})
            =
            \gamma_{\alpha,\mathrm P}
        \]
        as defined in \eqref{eq:target-split-ridge-base-weight};

        \item\label{cor:target-split-ridge-post-tuning-iii}
        if, further, \(\sR_{\eta,\alpha}\) has a unique minimiser \(\lambda_{\eta,\alpha}^*\) with \(\sR_{\eta,\alpha}(\lambda_{\eta,\alpha}^*)>0\), then \Cref{lem:ood-post-tuning-stability} applies with \(\sR_{\eta,\star}=\sR_{\eta,\alpha}\).
        Under \Cref{asm:gaussian-predictive}, \Cref{thm:predictive-calibration} gives the selected conditional predictive limit with the selected plug-in scale.
        Under the non-Gaussian conditions of \Cref{thm:predictive-normality-nongaussian} at \(\lambda_{\eta,\alpha}^*\), that theorem gives the normal approximation at the reference penalty, and \Cref{prop:post-tuning-transfer} together with \Cref{lem:plugin-studentization} gives the same selected-penalty conclusion.
    \end{enumerate}
\end{corollary}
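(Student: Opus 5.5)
For part~\eqref{cor:target-split-ridge-post-tuning-i}, I will work directly with the pilot imbalance and invoke \Cref{prop:pilot-to-population-rate}. The pilot-size condition \(n_0^\eta/n_{1,\mathrm P}=\cO(1)\) is immediate because \(p/n_{1,\mathrm P}\to\phi_{1,\mathrm P}\) and \(p\asymp n_0\). By \(S_0^cM_\alpha^c=I_p-\alpha M_\alpha^c\), the pilot imbalance is exactly \(\Delta_{\mathrm P}(\gamma_{\alpha,\mathrm P})=\bar x_{1,\mathrm P}-X_0^\top\gamma_{\alpha,\mathrm P}=\alpha M_\alpha^c d_{\mathrm P}\). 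Since \(\|M_\alpha^c\|_{\oper}\le\alpha^{-1}\), this gives \(\|\Delta_{\mathrm P}\|_2\le\|d_{\mathrm P}\|_2\). The decomposition \(d_{\mathrm P}=\nu_\Delta+\epsilon_{1,\mathrm P}-e_0\) then yields \(n_0^\eta\|d_{\mathrm P}\|_2^2/p=\Op(1)\): the \(\nu_\Delta\) term is bounded by \(\rho_{\eta,n}^2=\cO(1)\), and \(\|\epsilon_{1,\mathrm P}\|_2^2,\|e_0\|_2^2=\Op(1)\) together with \(\eta\le1\) handle the rest. For the concentration part, \(C_0\gamma_{\alpha,\mathrm P}=b_{\mathrm P}\), and the bound \(n_0^\eta\|b_{\mathrm P}\|_2^2=\Op(1)\) has already been displayed in the proof of \Cref{prop:target-split-ridge-profile-verification}. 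Together these establish \Cref{asm:risk-rate-admissibility}.

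For part~\eqref{cor:target-split-ridge-post-tuning-ii}, the identities \eqref{eq:target-split-bias-transfer}--\eqref{eq:target-split-variance-transfer} are exact. Taking suprema over \(\Lambda\) and inserting the uniform component limits from \Cref{prop:target-split-ridge-profile-verification} gives the three displayed convergences. The only extra term is \(n_0^{\eta-1}\), which converges to \(\ind{(\eta=1)}\). The argument is the same as in \Cref{prop:risk-profile-transfer}, and the triangle inequality handles \(R_n\). To apply \Cref{thm:risk-estimation} I need three inputs. First, \Cref{lem:target-split-separation} gives \Cref{asm:evaluation-sample}, since \(\gamma_{\alpha,\mathrm P}\) is a normalised, outcome-free function of \((X_0,\bar x_{1,\mathrm P})\). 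Second, part~\eqref{cor:target-split-ridge-post-tuning-i} gives the rate condition. Third, \(p/n_{1,\mathrm E}\to\phi_{1,\mathrm E}\) is assumed. The displayed uniform limit is exactly the premise of \Cref{thm:risk-estimation}~\eqref{thm:risk-estimation-iii} with \(R_{\eta,\star}=\sR_{\eta,\alpha}\). This limit is continuous, in fact Lipschitz, because of the Lipschitz extension in the proposition.

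For part~\eqref{cor:target-split-ridge-post-tuning-iii}, the hypotheses of \Cref{lem:ood-post-tuning-stability} are now in place: the conditions of \Cref{thm:risk-estimation} hold, there is a continuous uniform limit, and it has a unique minimiser with positive value. The lemma therefore applies, and \Cref{thm:predictive-calibration}~\eqref{thm:predictive-calibration-ii} gives the Gaussian conclusion. In the non-Gaussian case I will argue as follows. \Cref{thm:predictive-normality-nongaussian} at the deterministic penalty \(\lambda_{\eta,\alpha}^*\) gives the conditional normal pivot with scale \(s_n=R_n(\lambda^*)^{1/2}\). \Cref{prop:post-tuning-transfer} then transfers this to \(\widehat\lambda\). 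Its two inputs are \(\widehat\lambda\pto\lambda^*\), from \Cref{thm:risk-estimation}~\eqref{thm:risk-estimation-iii}, and the derivative envelope, which follows from \Cref{lem:ood-post-tuning-stability} together with \(n_0^{\eta/2}s_n\to\sR_{\eta,\alpha}(\lambda^*)^{1/2}>0\). Finally, \Cref{lem:plugin-studentization} together with \(\widehat R_n(\widehat\lambda)/R_n(\lambda^*)\pto1\) handles the replacement by the plug-in scale.

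The main obstacle I expect is the derivative bound inside \Cref{lem:ood-post-tuning-stability} when \(\beta\) is non-Gaussian and \(\widehat\lambda\) depends on \(y_0\). That lemma uses only second moments under \Cref{asm:predictive-model}, so it should carry over, but I will check this carefully. Everything else is bookkeeping on top of the proposition.
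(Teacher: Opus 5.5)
Your proposal is correct and follows the paper's proof essentially step by step. Parts (b) and (c) match exactly: the exact target-split identities combined with the uniform component limits, then \Cref{thm:risk-estimation}, \Cref{lem:ood-post-tuning-stability}, and \Cref{thm:predictive-calibration}, with \Cref{thm:predictive-normality-nongaussian}, \Cref{prop:post-tuning-transfer} and \Cref{lem:plugin-studentization} in the non-Gaussian case. The only difference is cosmetic and lies in (a): the paper bounds \(\nu_1-X_0^\top\gamma_{\alpha,\mathrm P}\) directly through the identity \(\nu_1-X_0^\top\gamma_{\alpha,\mathrm P}=\Delta_{\mathrm E}-\epsilon_{1,\mathrm E}\) and the bounds from the proof of \Cref{prop:target-split-ridge-profile-verification}, whereas you route the same decomposition \(-\epsilon_{1,\mathrm P}+\alpha M_\alpha^c d_{\mathrm P}\) through the pilot imbalance and \Cref{prop:pilot-to-population-rate}, and both routes are valid.
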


\begin{proof}[Proof of \Cref{cor:target-split-ridge-post-tuning}]
    The bounds in the proof of \Cref{prop:target-split-ridge-profile-verification} give
    \[
        n_0^\eta
        \|C_0\gamma_{\alpha,\mathrm P}\|_2^2
        =
        n_0^\eta
        \|b_{\mathrm P}\|_2^2
        =
        \Op(1).
    \]
    Moreover,
    \[
        \nu_1-X_0^\top\gamma_{\alpha,\mathrm P}
        =
        -\epsilon_{1,\mathrm P}
        +
        \alpha M_\alpha^cd_{\mathrm P}
        =
        \Delta_{\mathrm E}
        -
        \epsilon_{1,\mathrm E}.
    \]
    The bounds for \(\Delta_{\mathrm E}\) and the constant evaluation component therefore imply
    \[
        n_0^\eta
        \frac{
            \|\nu_1-X_0^\top\gamma_{\alpha,\mathrm P}\|_2^2
        }{p}
        =
        \Op(1).
    \]
    Thus \Cref{asm:risk-rate-admissibility} holds.
    Together with \Cref{lem:target-split-separation}, \Cref{lem:base-to-augmented-rate} supplies the corresponding uniform evaluation-augmented path rate.
    This proves \Cref{cor:target-split-ridge-post-tuning}~\eqref{cor:target-split-ridge-post-tuning-i}.

    Under the response-model assumptions in \Cref{cor:target-split-ridge-post-tuning}~\eqref{cor:target-split-ridge-post-tuning-ii}, the exact identities \eqref{eq:target-split-bias-transfer}--\eqref{eq:target-split-variance-transfer} and the component convergence in \Cref{prop:target-split-ridge-profile-verification} give the three uniform risk limits.
    Together with \Cref{cor:target-split-ridge-post-tuning}~\eqref{cor:target-split-ridge-post-tuning-i}, variance-component consistency verifies every condition of \Cref{thm:risk-estimation}; the uniform deterministic risk limit supplies the additional condition in \Cref{thm:risk-estimation}~\eqref{thm:risk-estimation-iii}.
    The explicit instantiation of \(\mathsf{Balance}\) records the ridge construction covered by the verification of the component limits.

    Finally, identify \(\lambda_{\eta,\star}^*=\lambda_{\eta,\alpha}^*\).
    Under \Cref{asm:gaussian-predictive}, \Cref{thm:predictive-calibration} gives the stated conclusion.
    In the non-Gaussian case, \Cref{thm:predictive-normality-nongaussian} gives the normal approximation at the reference penalty, \Cref{lem:ood-post-tuning-stability} and \Cref{prop:post-tuning-transfer} transfer it to the selected penalty, and \Cref{lem:plugin-studentization} gives the selected plug-in scale.
\end{proof}

The fixed-\(\alpha\) verification extends to an \(\mathcal H_n\)-measurable selection over a fixed finite grid because the fixed-\(\alpha\) conclusions hold jointly over that grid.

\begin{corollary}[Pilot-selected ridge base weights on a finite grid]
    \label{cor:target-split-ridge-selected-alpha}
    Let \(\mathcal A\subset(0,\infty)\) be a fixed finite set, and suppose the conditions of \Cref{prop:target-split-ridge-profile-verification} hold for every \(\alpha\in\mathcal A\).
    Let \(\widehat\alpha\) be \(\mathcal H_n\)-measurable with \(\widehat\alpha\in\mathcal A\) almost surely, and set \(\gamma=\gamma_{\widehat\alpha,\mathrm P}\).
    Then \Cref{asm:risk-rate-admissibility} holds.
    Under the additional conditions of \Cref{thm:risk-estimation}, conclusions \eqref{thm:risk-estimation-i}--\eqref{thm:risk-estimation-ii} of that theorem hold for the resulting target-aware tuning procedure.

    If, in addition, \(\widehat\alpha\pto\alpha_\star\in\mathcal A\), then the deterministic risk-limit conclusion of \Cref{cor:target-split-ridge-post-tuning} holds with \(\alpha=\alpha_\star\), and \Cref{thm:risk-estimation}~\eqref{thm:risk-estimation-iii} applies whenever its limiting-risk conditions hold.
\end{corollary}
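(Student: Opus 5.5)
The plan is to reduce the selected-$\alpha$ statement to the fixed-$\alpha$ conclusions of \Cref{prop:target-split-ridge-profile-verification} and \Cref{cor:target-split-ridge-post-tuning}, which hold jointly over the finite grid $\mathcal A$. The reduction rests on the pointwise identity
\[
    \gamma_{\widehat\alpha,\mathrm P}=\sum_{\alpha\in\mathcal A}\ind{(\widehat\alpha=\alpha)}\,\gamma_{\alpha,\mathrm P}.
\]

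\textbf{Step 1: the rate condition.} For \Cref{asm:risk-rate-admissibility}, \Cref{cor:target-split-ridge-post-tuning}~\eqref{cor:target-split-ridge-post-tuning-i} gives, for each fixed $\alpha\in\mathcal A$,
\[
    n_0^\eta\|C_0\gamma_{\alpha,\mathrm P}\|_2^2=\Op(1),
    \qquad
    (n_0^\eta/p)\|\nu_1-X_0^\top\gamma_{\alpha,\mathrm P}\|_2^2=\Op(1).
\]
Since $\widehat\alpha\in\mathcal A$ almost surely, each selected quantity is bounded by the maximum over $\mathcal A$ of the fixed-$\alpha$ quantities. A finite maximum of $\Op(1)$ sequences is $\Op(1)$ by a union bound, because $|\mathcal A|$ does not grow with $n$.

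\textbf{Step 2: Theorem~\ref{thm:risk-estimation}(i)--(ii).} The selected weights must also satisfy \Cref{asm:evaluation-sample}. Each $\gamma_{\alpha,\mathrm P}$ is $\mathcal H_n$-measurable, built from $X_0$ and $\bar x_{1,\mathrm P}$, and $\widehat\alpha$ is $\mathcal H_n$-measurable, so $\gamma_{\widehat\alpha,\mathrm P}$ is $\mathcal H_n$-measurable and normalised. \Cref{lem:target-split-separation} then gives the required conditional product law for the evaluation sample. With the rate condition from Step 1, \Cref{thm:risk-estimation} applies directly with $\gamma=\gamma_{\widehat\alpha,\mathrm P}$, yielding (i)--(ii). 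No deterministic limit is needed for these two conclusions.

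\textbf{Step 3: the deterministic limit under $\widehat\alpha\pto\alpha_\star$.} Because $\mathcal A$ is finite and discrete, $\widehat\alpha\pto\alpha_\star$ means $\PP(\widehat\alpha=\alpha_\star)\to1$. On that event $\gamma_{\widehat\alpha,\mathrm P}=\gamma_{\alpha_\star,\mathrm P}$, so $\gamma_{\widehat\alpha,\mathrm P}$, $R_n$, $B_n$, $V_n$ and $\widehat R_n$ coincide with their $\alpha_\star$ versions. The uniform risk limits $\sR_{\eta,\alpha_\star}$ of \Cref{cor:target-split-ridge-post-tuning}~\eqref{cor:target-split-ridge-post-tuning-ii} therefore transfer, since convergence in probability is unaffected by modification on events of vanishing probability. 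This supplies the premise of \Cref{thm:risk-estimation}~\eqref{thm:risk-estimation-iii} with $R_{\eta,\star}=\sR_{\eta,\alpha_\star}$, and penalty consistency follows whenever that limit has the required minimiser structure.

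\textbf{Main obstacle.} The only delicate point is the jointness invoked in Steps 1 and 3. The joint statement holds on a single probability space for all $\alpha\in\mathcal A$ simultaneously, with no dependence of the deterministic limits on $\widehat\alpha$. This follows because \Cref{lem:ridge-generator-forms} and \Cref{lem:ridge-rational-filter-closure} are applied to a fixed finite collection of resolvent terms $\{M_\alpha^c,(M_\alpha^c)^2:\alpha\in\mathcal A\}\cup\{M_\lambda^c,(M_\lambda^c)^2\}$, and that collection is fixed in size because $\mathcal A$ is finite.
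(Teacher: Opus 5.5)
Your proposal is correct and follows essentially the same argument as the paper. The finite grid lets you transfer the fixed-$\alpha$ conclusions of \Cref{prop:target-split-ridge-profile-verification} and \Cref{cor:target-split-ridge-post-tuning} in two ways: a finite maximum gives \Cref{asm:risk-rate-admissibility}, and $\mathcal H_n$-measurability of $\widehat\alpha$ preserves \Cref{asm:evaluation-sample}; then $\PP(\widehat\alpha=\alpha_\star)\to1$ carries over the fixed-$\alpha_\star$ risk limit. The jointness you flag as the main obstacle is in fact automatic and needs no appeal to \Cref{lem:ridge-generator-forms}: a finite maximum of $\Op(1)$ sequences on a common probability space is $\Op(1)$, and Step~3 uses only the single deterministic limit at $\alpha_\star$.
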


\begin{proof}[Proof of \Cref{cor:target-split-ridge-selected-alpha}]
    Because \(\mathcal A\) is finite, the fixed-\(\alpha\) conclusions of \Cref{prop:target-split-ridge-profile-verification} and \Cref{cor:target-split-ridge-post-tuning} hold jointly over \(\alpha\in\mathcal A\).
    Substitution of the \(\mathcal H_n\)-measurable selector \(\widehat\alpha\) therefore preserves \Cref{asm:evaluation-sample} and \Cref{asm:risk-rate-admissibility}, which gives the first conclusion.
    If \(\widehat\alpha\pto\alpha_\star\), finiteness of \(\mathcal A\) implies \(\PP(\widehat\alpha=\alpha_\star)\to1\), so the fixed-\(\alpha\) deterministic risk limit carries over directly.
\end{proof}

\subsection{Same-Sample Ridge Base Weights}
\label{app:ridge-profile-verification}

We finally consider normalised centred ridge base weights that use the same treated mean in both base-weight construction and augmentation.
Because the same target sample enters both stages, this construction lies outside the evaluation-sample separation setting of \Cref{asm:evaluation-sample}.
Here we use the tools of Appendix~\ref{app:general-profile-transfer} to establish its deterministic scaled conditional-risk path.

Fix \(\alpha>0\) and define the ridge base weights
\begin{equation}
    \label{eq:ridge-base-weight}
    \gamma_\alpha
    =
    \frac1{n_0}\one_{n_0}
    +
    \frac1{n_0}X_0^c M_\alpha^c(\bar x_1-\bar x_0),
    \qquad
    M_\alpha^c
    =
    \left(S_0^c+\alpha I_p\right)^{-1},
    \qquad
    S_0^c
    =
    \frac1{n_0}(X_0^c)^\top X_0^c.
\end{equation}
Their imbalance is
\begin{equation}
    \label{eq:ridge-base-imbalance}
    \Delta_\alpha
    =
    \bar x_1-X_0^\top\gamma_\alpha
    =
    \alpha M_\alpha^c(\bar x_1-\bar x_0).
\end{equation}
The two centred components in the decomposition of \(\gamma_{\alpha,\lambda}\) are
\begin{align}
    u_{\alpha,\lambda}
    &=
    \lambda(G_0^c+\lambda I_{n_0})^{-1}C_0\gamma_\alpha
    =
    \frac{\lambda}{n_0}X_0^c M_\lambda^cM_\alpha^c(\bar x_1-\bar x_0),
    \label{eq:ridge-u-profile}\\
    h_\lambda
    &=
    \frac1{n_0}X_0^c M_\lambda^c(\bar x_1-\bar x_0).
    \label{eq:ridge-h-profile}
\end{align}
Consequently,
\begin{equation}
    \label{eq:ridge-aug-filter}
    \gamma_{\alpha,\lambda}
    =
    \frac1{n_0}\one_{n_0}
    +
    \frac1{n_0}X_0^c
    \{M_\alpha^c+\alpha M_\lambda^cM_\alpha^c\}
    (\bar x_1-\bar x_0).
\end{equation}
Writing \(d=\bar x_1-\bar x_0=\nu_\Delta+\epsilon_1-e_0\), the nonconstant components therefore reduce to pure and mixed resolvent forms generated by \(\{\nu_\Delta,e_0,\epsilon_1\}\).

\begin{proposition}[Sub-Gaussian verification for same-sample ridge weights]
    \label{prop:ridge-profile-verification}
    Fix \(\eta\in[0,1]\), \(\alpha>0\), and a compact interval \(\Lambda\subset(0,\infty)\).
    Suppose \Cref{asm:rmt} holds and the source and treated standardised coordinates are uniformly sub-Gaussian.
    Suppose also that
    \[
        \phi_{t,n}
        \to
        \phi_t
        \in
        (0,\infty),
        \qquad
        t=0,1,
        \qquad
        \rho_{\eta,n}^2
        \to
        \rho_\eta^2
        \in
        [0,\infty),
    \]
    and that \(H_{0,p}\), \(G_{\nu,p}\), and \(H_{1\mid0,p}\) converge weakly as in \Cref{asm:spectral-limits}.
    Then the ridge base weights in \eqref{eq:ridge-base-weight} satisfy \Cref{asm:admissible-profiles} at scaling \(\eta\).
\end{proposition}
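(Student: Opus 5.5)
We follow the route already used for the target-split construction in \Cref{prop:target-split-ridge-profile-verification}, with one fewer target generator. The argument has three steps: express each of the six components in \eqref{eq:profiles} as a quadratic or bilinear form in rational functions of \(S_0^c\); reduce these to the basic generator forms of \Cref{lem:ridge-generator-forms}; and pass from pointwise to uniform limits with \Cref{lem:profile-lipschitz}.

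\textbf{Step 1 (algebraic reduction).} Here \(\Delta=\Delta_\alpha=\alpha M_\alpha^c d\) by \eqref{eq:ridge-base-imbalance}, with \(d=\nu_\Delta+\epsilon_1-e_0\), and \(u_\lambda,h_\lambda\) are given by \eqref{eq:ridge-u-profile}--\eqref{eq:ridge-h-profile}. The component \(Q_{\epsilon,n}\) does not depend on \(\lambda\). By \Cref{lem:treated-mean-qf} with \(A=I_p\), it converges to \(\ind{(\eta=1)}(\phi_1/\phi_0)\int1\,\rd H_{1\mid0}\).

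The remaining components have the following forms, with prefactors \(n_0^\eta/p\) for the \(Q\)'s and \(n_0^{\eta-1}=\phi_{0,n}n_0^\eta/p\) for the \(W\)'s:
\begin{itemize}
\item \(Q_{\epsilon\Delta,n}=(n_0^\eta/p)\,\epsilon_1^\top f_1(S_0^c)d\), with \(f_1(s)=\lambda\alpha/\{(s+\lambda)(s+\alpha)\}\);
\item \(Q_{\Delta,n}=(n_0^\eta/p)\,d^\top f_2(S_0^c)d\), with \(f_2(s)=\lambda^2\alpha^2/\{(s+\lambda)^2(s+\alpha)^2\}\);
\item \(W_{\gamma,n}\), \(W_{x,n}\) and \(W_{\gamma x,n}\), which use \(n_0^{-1}X_0^{c\top}X_0^c=S_0^c\) and lead to filters \(s\lambda^2/\{(s+\lambda)^2(s+\alpha)^2\}\), \(s/(s+\lambda)^2\) and \(s\lambda/\{(s+\lambda)^2(s+\alpha)\}\), respectively.
\end{itemize}
Each filter is a proper rational function of \(s\) whose only poles are \(-\lambda\) and \(-\alpha\), each of order at most two. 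Expanding \(d\) writes every component as a finite sum of forms \(c_n v^\top f(S_0^c) w\) with \(v,w\in\{\nu_\Delta,e_0,\epsilon_1\}\).

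\textbf{Step 2 (pointwise limits).} Fix \(\lambda\neq\alpha\). \Cref{lem:ridge-rational-filter-closure} expresses each filter as a deterministic linear combination of \(M_\alpha^c,(M_\alpha^c)^2,M_\lambda^c,(M_\lambda^c)^2\). We then apply \Cref{lem:ridge-generator-forms} with \(\mathcal K\) a singleton, so the target generator is \(\epsilon_1\) with aspect ratio \(\phi_1\). This gives a finite deterministic limit for every pure form and makes every mixed form \(\op(1)\), with all conclusions holding jointly. Summing shows that each component converges in probability to a finite deterministic limit at every \(\lambda\in(\Lambda\cap\mathbb Q)\setminus\{\alpha\}\).

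The only difference from the target-split case is that \(\epsilon_1\) now appears on both sides of a form, inside \(\Delta\) and inside \(\epsilon_1\) itself. The diagonal pairs \(\epsilon_1,\epsilon_1\) are therefore pure forms. For these we need deterministic equivalents of \(\otr\{\Sigma_1 T\}\) for every basis resolvent \(T\), including the \(\alpha\)-resolvents. These follow from \Cref{lem:uniform-ridge-de} with \(A=\Sigma_1\) and from \eqref{eq:weighted-spectral-trace}, since the base-weight penalty \(\alpha\) is a fixed positive constant, and that lemma holds on any compact interval containing \(\alpha\).

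\textbf{Step 3 (uniformity).} We verify the hypotheses of \Cref{lem:profile-lipschitz}:
\begin{itemize}
\item \(\|X_0^c\|_{\oper}^2/n_0=\Op(1)\), by the Bai--Yin bound as in \Cref{lem:source-trace-moments};
\item \((n_0^\eta/p)\|d\|_2^2=\Op(1)\), because \(\rho_{\eta,n}^2=\cO(1)\) and \(\|\epsilon_1\|_2,\|e_0\|_2=\Op(1)\) with \(\eta\le1\);
\item \((n_0^\eta/p)\|\Delta\|_2^2\le (n_0^\eta/p)\|d\|_2^2\), because \(\|\alpha M_\alpha^c\|_{\oper}\le1\);
\item \(n_0^\eta\|C_0\gamma_\alpha\|_2^2\le\alpha^{-2}(\|X_0^c\|_{\oper}^2/n_0)\,n_0^{\eta-1}\|d\|_2^2=\Op(1)\).
\end{itemize}
\Cref{lem:profile-lipschitz} then upgrades the pointwise convergence on the dense set \((\Lambda\cap\mathbb Q)\setminus\{\alpha\}\) to uniform convergence on \(\Lambda\). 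The limit at \(\lambda=\alpha\) is the Lipschitz extension, as in \Cref{rem:ridge-filter-coincident-penalties}. This establishes \Cref{asm:admissible-profiles}.

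I expect the main obstacle to be the bookkeeping in Step 2 at \(\eta=1\). There the pure \(\epsilon_1\) and \(e_0\) terms contribute at leading order. We must confirm that the \(e_0\) pure forms under the \(\alpha\)-resolvents and the product resolvents \(M_\lambda^cM_\alpha^c\) still have limits. \Cref{lem:source-mean-resolvent} treats only a single penalty, so partial fractions alone may not be enough here. If they are not, I would extend that lemma to a second penalty \(\alpha\) by the same Sherman--Morrison identity \(M_\alpha^c e_0=B_0^\top R_0(\alpha)u/\{\alpha a_n(\alpha)\}\), which reduces \(e_0^\top M_\alpha^c e_0\) and \(e_0^\top (M_\alpha^c)^2e_0\) to \(a_n(\alpha)\) and \(b_n(\alpha)\). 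These converge by \Cref{lem:global-linearized-resolvent}.
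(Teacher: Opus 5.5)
Your proposal is correct and follows the paper's proof essentially step for step: the same six rational filters, the separate treatment of \(Q_{\epsilon,n}\) via \Cref{lem:treated-mean-qf}, pointwise limits at \(\lambda\neq\alpha\) from \Cref{lem:ridge-rational-filter-closure} and \Cref{lem:ridge-generator-forms} with \(\mathcal K=\{1\}\), and the uniform upgrade through \Cref{lem:profile-lipschitz}. The obstacle you anticipate does not arise. Partial fractions remove the product \(M_\lambda^cM_\alpha^c\) before any source-mean input is needed, and \(\mathcal T_{\alpha,\lambda}\) in \Cref{lem:ridge-generator-forms} already contains the \(\alpha\)-resolvents, with \Cref{lem:source-mean-resolvent} applicable at the fixed penalty \(\alpha\) (on any compact interval containing it); that is exactly the Sherman--Morrison reduction you propose as a fallback.
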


Thus the same-sample ridge construction admits a deterministic scaled conditional-risk path through \Cref{prop:risk-profile-transfer}.
The target-aware risk estimator and tuning guarantee in Section~\ref{sec:risk-estimation} use the evaluation-split construction of Appendix~\ref{app:honest-target-split}.

\begin{proof}[Proof of \Cref{prop:ridge-profile-verification}]
    Put
    \[
        d
        =
        \bar x_1-\bar x_0
        =
        \nu_\Delta+\epsilon_1-e_0.
    \]
    By \eqref{eq:ridge-base-imbalance}--\eqref{eq:ridge-h-profile}, the left/right pairs are \((\epsilon_1,\epsilon_1)\) for \(Q_{\epsilon,n}\), \((\epsilon_1,d)\) for \(Q_{\epsilon\Delta,n}\), and \((d,d)\) for the remaining four components.
    In the order
    \[
        Q_{\epsilon,n},\quad
        Q_{\epsilon\Delta,n},\quad
        Q_{\Delta,n},\quad
        W_{\gamma,n},\quad
        W_{x,n},\quad
        W_{\gamma x,n},
    \]
    their exact functions are
    \[
        1,\quad
        \frac{\alpha\lambda}{(s+\alpha)(s+\lambda)},\quad
        \frac{\alpha^2\lambda^2}{(s+\alpha)^2(s+\lambda)^2},\quad
        \frac{\lambda^2s}{(s+\alpha)^2(s+\lambda)^2},\quad
        \frac{s}{(s+\lambda)^2},\quad
        \frac{\lambda s}{(s+\alpha)(s+\lambda)^2},
    \]
    with prefactor \(n_0^\eta/p\) for the three \(Q\)-components and \(n_0^\eta/n_0\) for the three \(W\)-components.
    Expanding \(d\) therefore expresses each nonconstant component as a finite linear combination of pure and mixed forms generated by \(\{\nu_\Delta,e_0,\epsilon_1\}\).
    The constant component \(Q_{\epsilon,n}\) is handled separately.

    Since \(n_0^\eta/p=\cO(1)\), \Cref{lem:treated-mean-qf} with \(A_\lambda=I_p\) gives
    \[
        Q_{\epsilon,n}
        -
        \frac{n_0^\eta}{p\,n_1}\tr(\Sigma_1)
        \pto
        0.
    \]
    Consequently,
    \[
        Q_{\epsilon,n}
        \pto
        q_\epsilon,
        \qquad
        q_\epsilon
        :=
        \ind{(\eta=1)}
        \frac{\phi_1}{\phi_0}
        \int 1\,\rd H_{1\mid0}.
    \]

    For each of the five nonconstant components and fixed \(\lambda\ne\alpha\), each function is a proper rational function with poles only at \(-\alpha\) and \(-\lambda\), each of order at most two.
    Hence \Cref{lem:ridge-rational-filter-closure} applies, and no deterministic equivalent beyond second resolvent powers is needed.
    The closure is applied at each fixed \(\lambda\ne\alpha\), so no uniform bound on the partial-fraction coefficients near \(\lambda=\alpha\) is required.

    Apply \Cref{lem:ridge-generator-forms} with \(\mathcal K=\{1\}\) and \(\epsilon_{1,1}=\epsilon_1\).
    Together with \Cref{lem:ridge-rational-filter-closure}, this gives finite deterministic limits for all pure terms in the five nonconstant components and makes every corresponding mixed term negligible.
    Since
    \[
        \frac{n_0^\eta}{n_0}
        =
        \phi_{0,n}\frac{n_0^\eta}{p},
    \]
    the conclusions of \Cref{lem:ridge-generator-forms} apply unchanged to the three \(W\)-components.
    Together with the preceding limit for \(Q_{\epsilon,n}\), the spectral convergences in the proposition therefore give finite deterministic limits for all six components at every point of
    \[
        \Lambda_0
        =
        (\Lambda\cap\mathbb Q)\setminus\{\alpha\}.
    \]

    It remains to establish uniform convergence.
    Uniform sub-Gaussianity and proportional growth give
    \[
        \frac{\|X_0^c\|_{\oper}^2}{n_0}
        =
        \Op(1),
        \qquad
        \frac{n_0^\eta}{p}\|\epsilon_1\|_2^2
        =
        \Op(1),
        \qquad
        \frac{n_0^\eta}{p}
        \|\bar x_1-\bar x_0\|_2^2
        =
        \Op(1).
    \]
    The resolvent bound \(\|M_\alpha^c\|_{\oper}\leq\alpha^{-1}\) then gives
    \[
        \frac{n_0^\eta}{p}\|\Delta_\alpha\|_2^2
        \leq
        \frac{n_0^\eta}{p}\|\bar x_1-\bar x_0\|_2^2
        =
        \Op(1).
    \]
    Also,
    \[
        n_0^\eta\|C_0\gamma_\alpha\|_2^2
        =
        n_0^{\eta-1}
        (\bar x_1-\bar x_0)^\top
        M_\alpha^cS_0^cM_\alpha^c
        (\bar x_1-\bar x_0)
        =
        \Op(1),
    \]
    because \(\sup_{s\geq0}s/(s+\alpha)^2<\infty\), \(n_0^\eta\|\nu_\Delta\|_2^2/p=\cO(1)\), and \(\|\epsilon_1\|_2^2+\|e_0\|_2^2=\Op(1)\).
    These bounds verify the conditions of \Cref{lem:profile-lipschitz}.
    The six pointwise limits on \(\Lambda_0\) therefore admit unique Lipschitz extensions through \(\lambda=\alpha\), and the component convergence is uniform over \(\Lambda\).
    This gives \Cref{asm:admissible-profiles}.
\end{proof}

\newpage
\section{Supplementary Numerical Results}\label{app:sup-exp}
\setcounter{table}{0}
\setcounter{figure}{0}

\subsection{Simulation implementation and Monte Carlo conventions}\label{subsec:mc-conventions}

For a Monte Carlo statistic \(Z_m\), \(m=1,\ldots,M\), we report the mean
\[
    \bar Z=M^{-1}\sum_{m=1}^M Z_m
\]
and Monte Carlo standard error \(\operatorname{MCSE}(\bar Z)=s_Z/\sqrt{M}\), where \(s_Z\) is the replication standard deviation.
Unless stated otherwise, displayed Monte Carlo error bars are \(\pm2\operatorname{MCSE}\).

In the risk-path experiments of Section~\ref{subsec:num-benchmark}, \(B_n(\lambda)\), \(V_n(\lambda)\), and \(R_n(\lambda)\) are deterministic conditional on the realised design and design-independent base weights.
The shaded bands show the analytic \(\pm1\) standard error \(R_n(\lambda)\sqrt{2/200}\) of an empirical mean of 200 squared prediction errors under the Gaussian predictive law, whereas error bars for \(\widehat R_n(\lambda)\) reflect its outcome-dependent sampling variation conditional on the realised design, rather than variation across independently generated designs.

For Section~\ref{subsec:num-adaptive}, let \((s_j,w_j)_{j=1}^p\) be the eigensystem of \(\Sigma_0\) and set
\[
    u_{\mathrm{diffuse}}=p^{-1/2}\sum_{j=1}^p w_j,
    \qquad
    \nu_\delta=
    \frac{\delta\,u_{\mathrm{diffuse}}}
    {\{p^{-1}\sum_{j=1}^p s_j^{-1}\}^{1/2}},
\]
so that \(\nu_\delta^\top\Sigma_0^{-1}\nu_\delta=\delta^2\), and the population effective-sample-size fraction \(\exp(-\delta^2)\) is unchanged from the isotropic design.
The adaptive experiments use 120 replications, 41 logarithmically spaced augmentation penalties over \(\Lambda=[0.01,100]\), and common random seeds across the AR(1) and isotropic sensitivity experiments.

The \(\ell_2\) base penalty is selected by ten-fold Riesz-loss cross-validation over
\[
    \mathcal A=\{j/999:j=1,\ldots,999\}\subset(0,1],
\]
using only the source design and pilot-target information, so the finite-grid selection is covered by \Cref{cor:target-split-ridge-selected-alpha}.
For the source-only comparator, define
\[
    H_\lambda
    =
    X_0^c\{(X_0^c)^\top X_0^c+n_0\lambda I_p\}^{-1}(X_0^c)^\top
\]
and
\[
    \operatorname{GCV}_0(\lambda)
    =
    \frac{n_0^{-1}\|(I_{n_0}-H_\lambda)y_0^c\|_2^2}
    {[1-\{1+\operatorname{tr}(H_\lambda)\}/n_0]^2},
    \qquad
    \widehat\lambda_{\mathrm{GCV}}
    \in
    \argmin_{\lambda\in\Lambda}\operatorname{GCV}_0(\lambda),
\]
where the additional degree of freedom accounts for the unpenalised intercept.

\subsection{Isotropic risk-path and aspect-ratio controls}\label{subsec:isotropic-controls}

The two experiments in Section~\ref{subsec:num-benchmark} are repeated under the isotropic design \(\Sigma_0=\Sigma_1=I_p\).
For \Cref{fig:risk-isotropic}, both the covariance geometry and mean-shift orientation change relative to \Cref{fig:risk}, whereas \Cref{fig:aspect-ratio-isotropic} uses a diffuse shift in both designs and therefore isolates covariance sensitivity.

\begin{figure}[!t]
    \centering
    \includegraphics[width=0.9\linewidth]{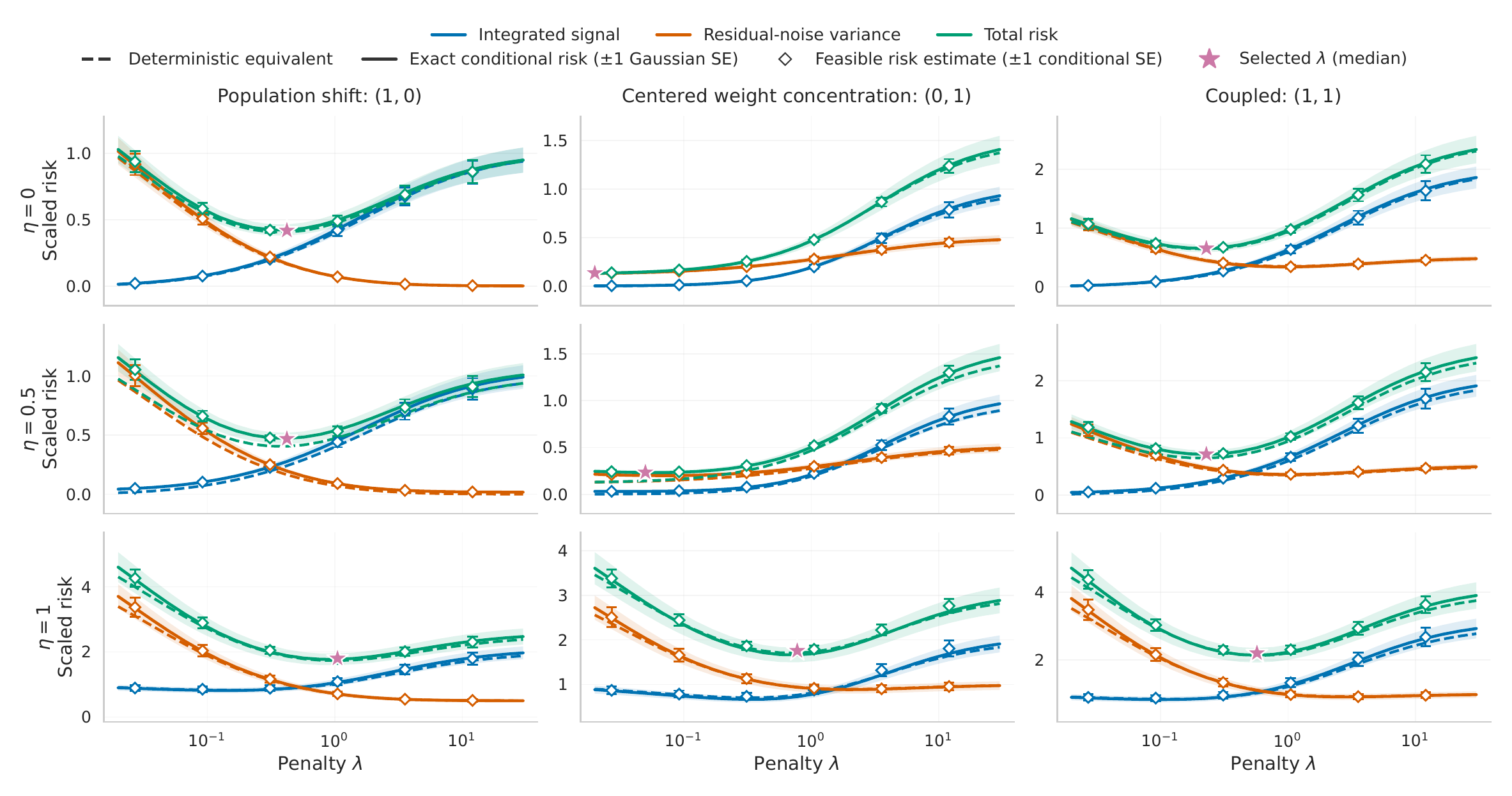}
    \caption{Isotropic counterpart of \Cref{fig:risk}.
    We set \(p=640\), \(n_0=n_1=853\), and \(\phi_0=\phi_1\approx0.75\).
    Rows index \(\eta\in\{0,0.5,1\}\), and columns set \((\rho_\eta^2,\varrho_\eta^2)\) to \((1,0)\), \((0,1)\), and \((1,1)\).
    All risks are scaled by \(n_0^\eta\); curve, marker, and uncertainty conventions match \Cref{fig:risk}.}
    \label{fig:risk-isotropic}
\end{figure}

In \Cref{fig:risk-isotropic}, the deterministic equivalents remain close to the exact conditional paths and the feasible criterion identifies the same low-risk regions across the three extrapolation regimes.
Thus isotropy changes risk levels but preserves the qualitative risk tradeoff in \Cref{fig:risk}.

For \Cref{fig:aspect-ratio-isotropic}, we fix \(p=640\), \(\lambda=1\), and \(\varrho_\eta^2=1\), vary either \(\phi_0\) or \(\phi_1\) over \(\{0.5,1,2,5,10\}\), and hold the other at \(0.75\).
At \(\eta=0\), total risk is nearly invariant to \(\phi_1\), whereas at \(\eta=1\) target empirical-mean variation becomes visible; \(\phi_0\) affects risk in both regimes.
The source-target aspect-ratio asymmetry in \Cref{fig:aspect ratio risk} therefore persists under isotropy.

\begin{figure}[!t]
    \centering
    \includegraphics[width=0.9\linewidth]{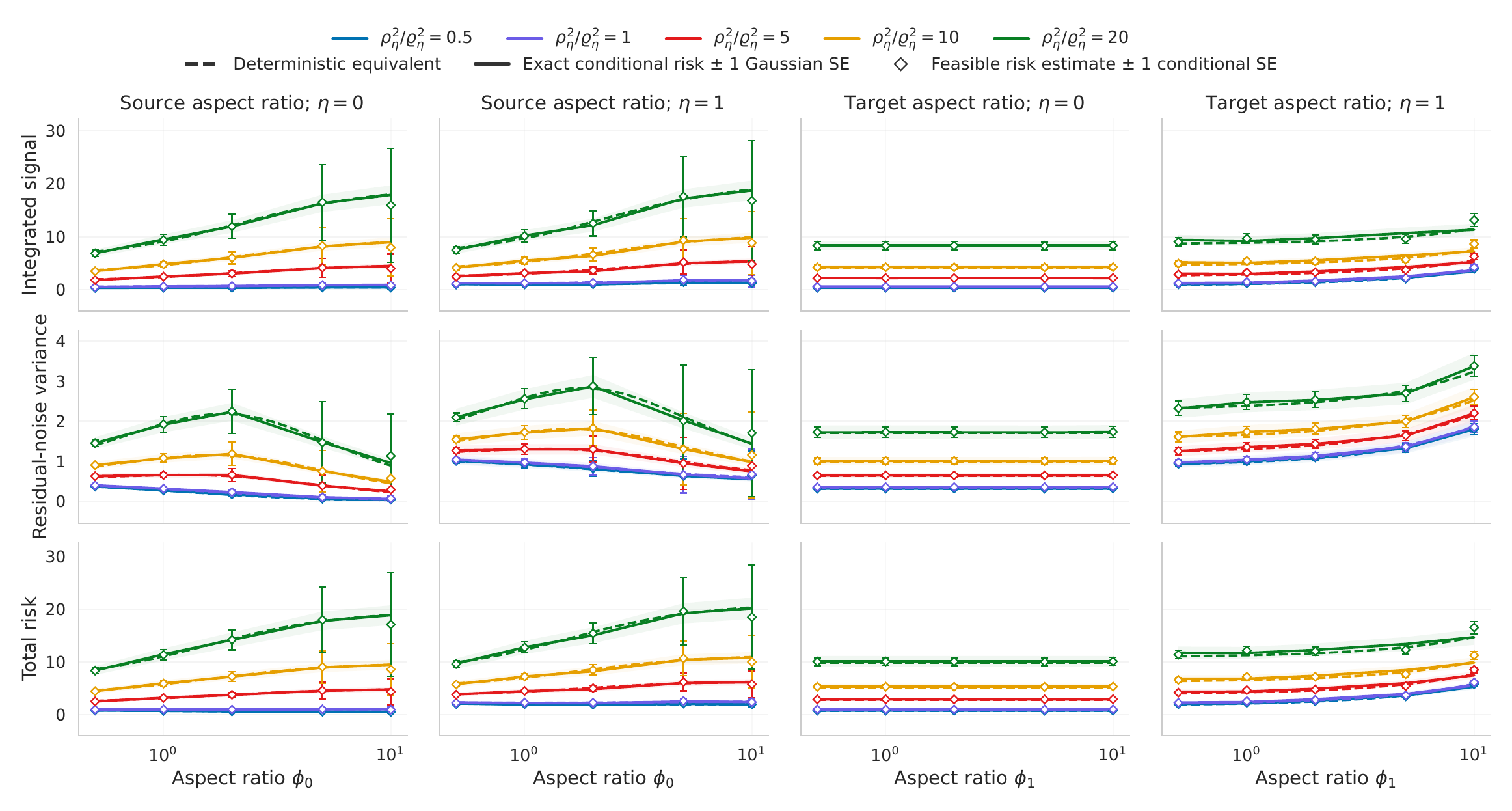}
    \caption{Isotropic counterpart of \Cref{fig:aspect ratio risk}.
    We fix \(p=640\), \(\lambda=1\), and \(\varrho_\eta^2=1\), vary either \(\phi_0\) or \(\phi_1\) while holding the other at \(0.75\), and index curves by \(\rho_\eta^2/\varrho_\eta^2\in\{0.5,1,5,10,20\}\).
    Curve and uncertainty conventions match \Cref{fig:aspect ratio risk}.}
    \label{fig:aspect-ratio-isotropic}
\end{figure}

\clearpage
\subsection{Prediction-interval coverage}\label{subsec:exp-predictive}

We examine the coverage of the selected-penalty prediction intervals in \Cref{thm:predictive-calibration} under the Gaussian outcome model of Section~\ref{subsec:num-benchmark}, with isotropic covariates, \(\Sigma_0=\Sigma_1=I_p\), in place of the AR(1) covariance.
Set \(p=320\), \(\eta\in\{0,0.5,0.75,1\}\), and
\[
    n_0=n_{1,\mathrm{total}}=\operatorname{round}(p/0.75).
\]
Let \(k=\operatorname{round}(0.1p)\) and use the sparse target shift
\[
    \nu_{1,n,j}
    =
    \left(\frac{p}{2kn_0^\eta}\right)^{1/2}
    \mathbf 1\{j\leq k\},
    \qquad
    \frac{n_0^\eta}{p}\|\nu_{1,n}\|_2^2=\frac12.
\]
The target sample is split approximately equally into pilot and evaluation subsets as in \Cref{app:honest-target-split}, giving \(p/n_{1,\mathrm E}\approx1.5\).
We construct the base weights from the pilot subset using the fixed ridge penalty \(1\),
\[
    \widehat\gamma
    =
    \frac{1}{n_0}\one_{n_0}
    +
    \frac{1}{n_0}X_0^cM_1^c
    (\bar x_{1,\mathrm P}-\bar x_0),
\]
and use the evaluation subset for estimating and tuning the conditional prediction risk as in \Cref{alg:ood-tuning}, writing \(\widehat\gamma_{\mathrm E,\lambda}\) for the resulting ridge-augmented weights.

For each \(\eta\), we generate 36 independent covariate designs and, conditional on each design, 100 independent draws of \((\beta,\varepsilon_0)\), re-estimating the variance components and selecting \(\widehat\lambda\) for every outcome draw.
We report the selected studentised prediction error
\[
    T_n
    =
    \frac{
    \widehat\gamma_{\mathrm E,\widehat\lambda}^{\top}y_0-\mu_{0,\beta}
    }{
    \widehat R_n(\widehat\lambda;\widehat\gamma)^{1/2}
    }.
\]
As shown in \Cref{fig:hist}, the distributions are broadly close to the standard normal reference across the four extrapolation regimes.

\begin{figure}[!ht]
    \centering
    \includegraphics[width=0.6\linewidth]{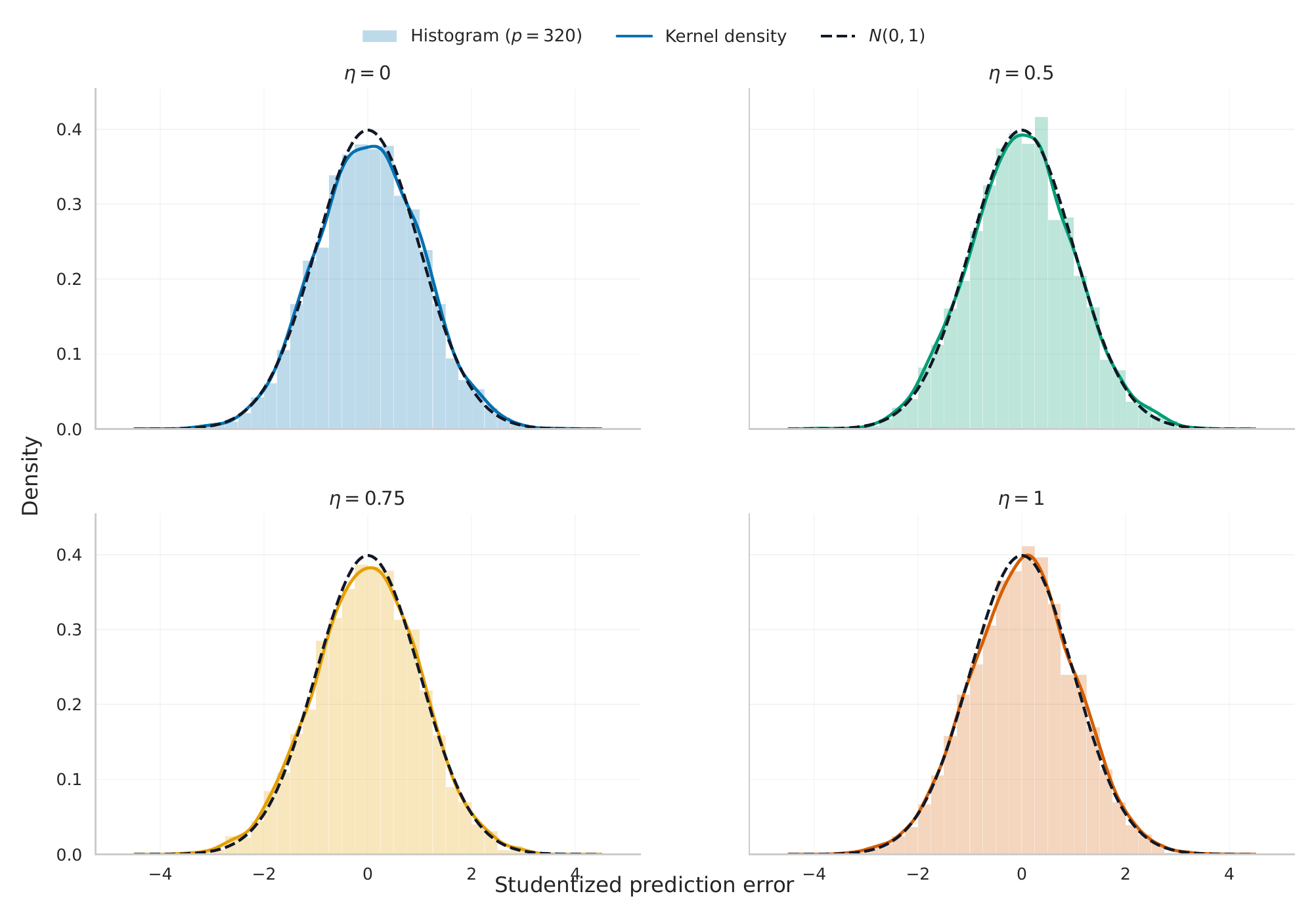}
    \caption{Coverage of selected-penalty prediction intervals at \(p=320\).
    Each panel shows the normalised histogram and kernel-density estimate of \(T_n\) for the indicated extrapolation exponent \(\eta\); the dashed curve is the standard normal density.
    The target sample is split approximately equally into pilot and evaluation subsets, giving \(p/n_{1,\mathrm E}\approx1.5\).
    Each display pools 100 outcome draws within each of 36 independently generated covariate designs; the pooled distributions are descriptive and do not treat the 3600 outcome draws as independent Monte Carlo units.}
    \label{fig:hist}
\end{figure}

\clearpage
\subsection{Adaptive-balancing diagnostics and sensitivity}\label{subsec:sup-num}

\Cref{fig:comparison-phi-1-25} complements \Cref{fig:complete-estimator-rmse} by showing the selected-risk decomposition in the overparameterised source design.
Across weighting families, augmentation lowers the integrated squared bias and generally lowers total risk, while weaker overlap is accompanied by greater weight concentration and evaluation-sample imbalance.

\begin{figure}[!ht]
    \centering
    \includegraphics[width=\linewidth]{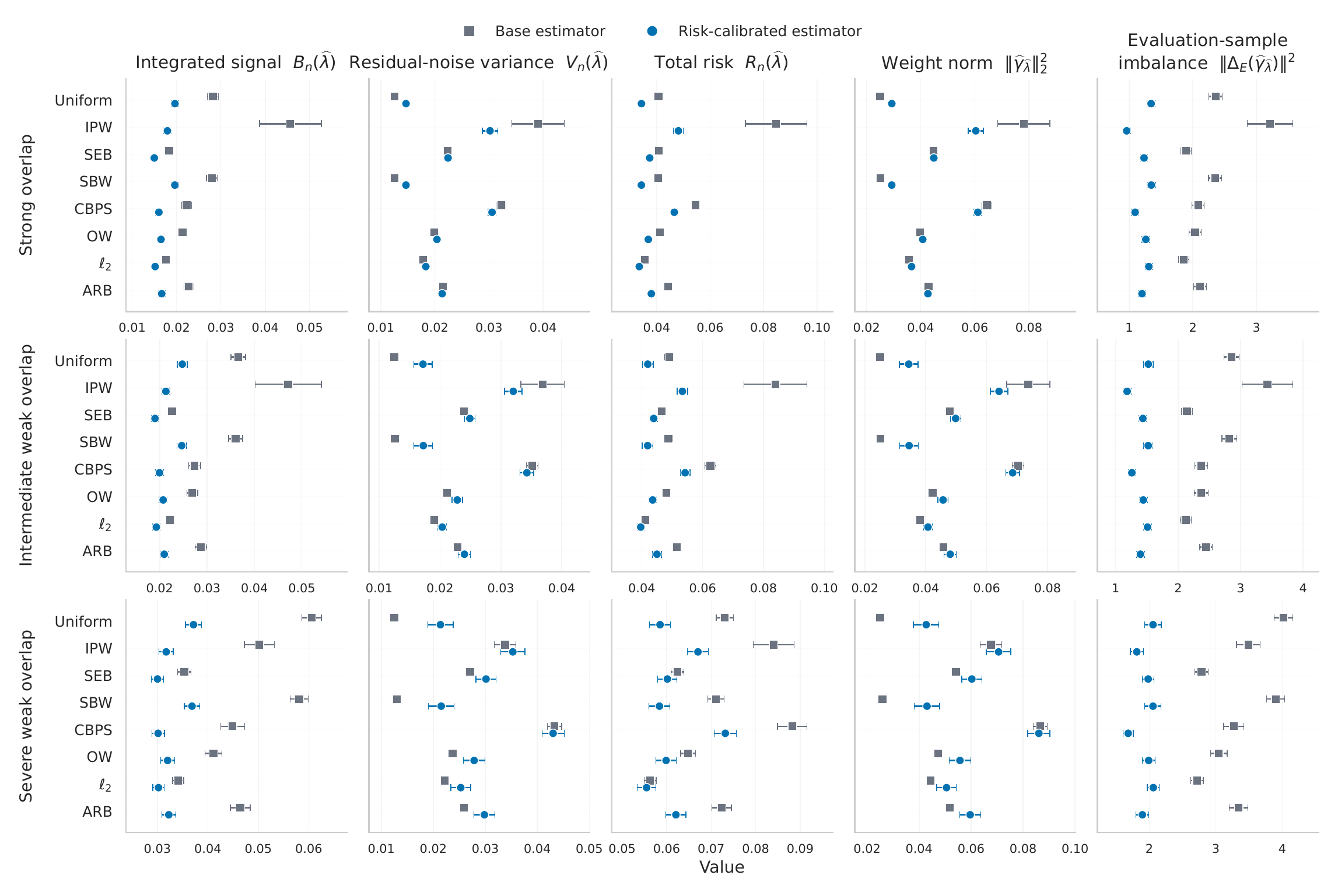}
    \caption{Selected risk and weight diagnostics for \(p=50\), \(n_0=40\), and \(n_1=100\).
    Rows correspond to \(\delta\in\{0.5,1,\sqrt{3}\}\), and columns report \(B_n(\widehat\lambda)\), \(V_n(\widehat\lambda)\), \(R_n(\widehat\lambda)\), \(\|\widehat\gamma_{\widehat\lambda}\|_2^2\), and \(\|\Delta_E(\widehat\gamma_{\widehat\lambda})\|_2^2\).
    Gray squares denote base estimators and blue circles their target-aware ridge augmentations.
    Points are means over 120 replications; bars are \(\pm2\) Monte Carlo standard errors.}
    \label{fig:comparison-phi-1-25}
\end{figure}

\clearpage
\Cref{fig:iso-exp-tuning} repeats the tuning comparison in \Cref{fig:exp_tuning} under \(\Sigma_0=\Sigma_1=I_p\) and \(\nu_\delta=\delta p^{-1/2}\one_p\).
The spectral target-aware rule has the smallest mean excess exact risk in 43 of 48 configurations.
In the same isotropic sweep, augmentation lowers the integrated squared bias in all 48 configurations and total risk in 46 of 48, so isotropy changes risk levels but not the principal conclusions about augmentation or target-aware tuning.

\begin{figure}[htbp]
    \centering
    \begin{minipage}[t]{\linewidth}
        \centering
        \includegraphics[width=\linewidth]{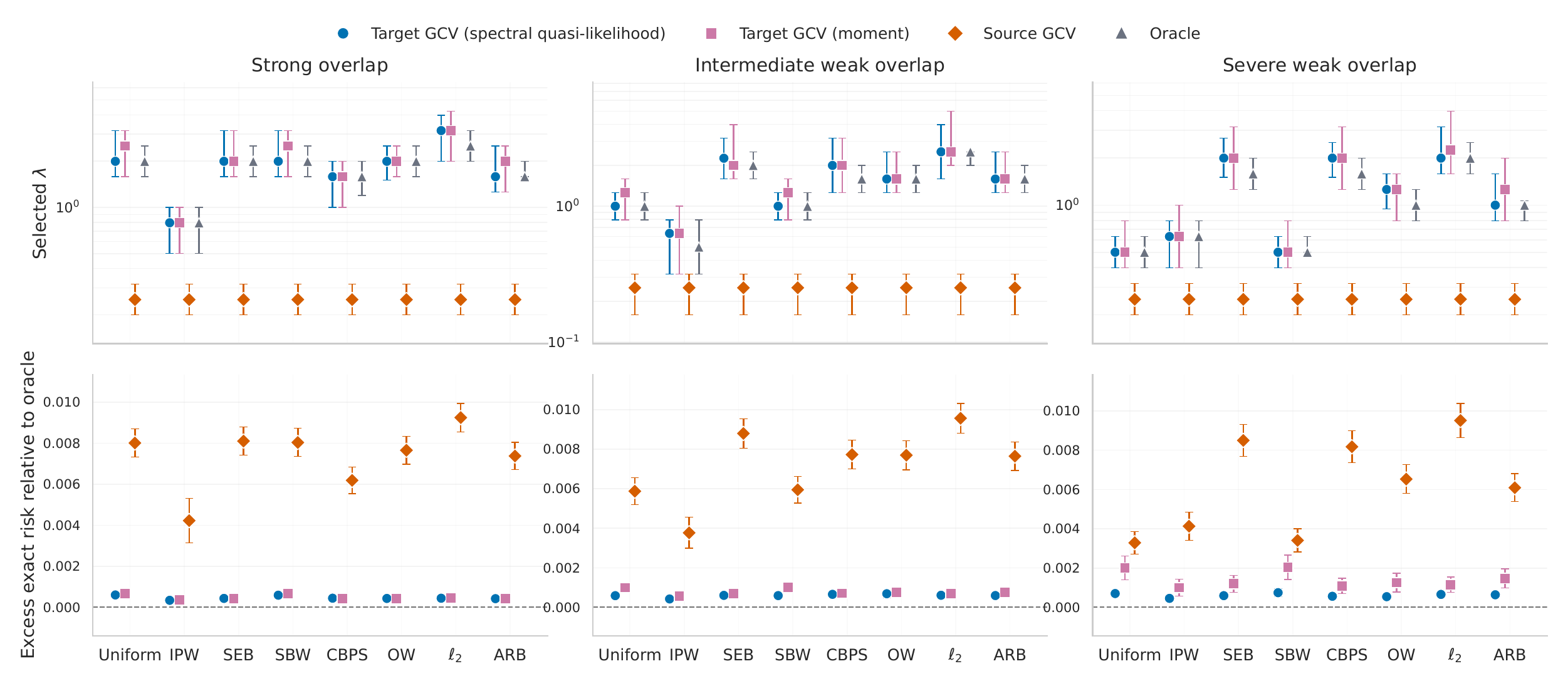}
        \par\smallskip\textbf{(a)} Underparameterised source design: \(p/n_0=0.5\).
    \end{minipage}
    \par\medskip
    \begin{minipage}[t]{\linewidth}
        \centering
        \includegraphics[width=\linewidth]{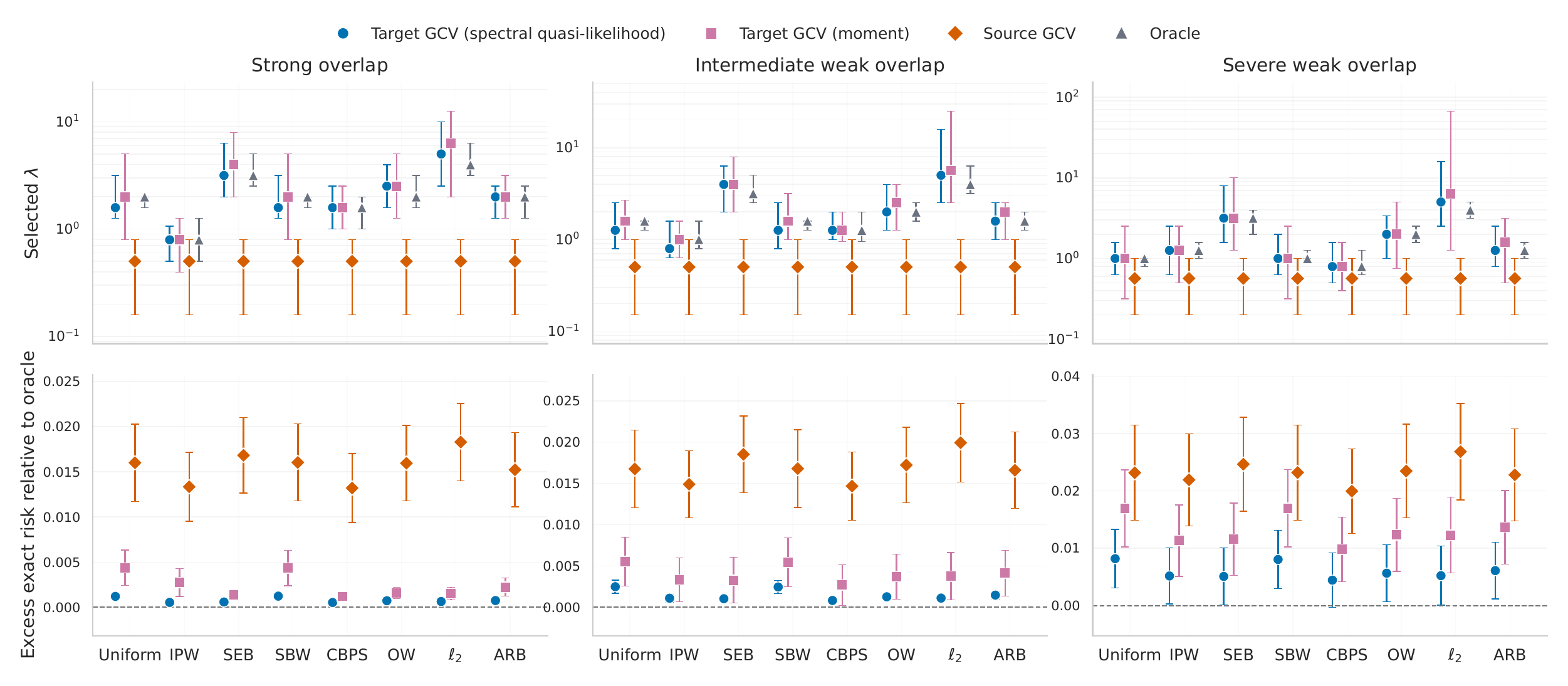}
        \par\smallskip\textbf{(b)} Overparameterised source design: \(p/n_0=1.25\).
    \end{minipage}
    \caption{Isotropic sensitivity of the tuning-rule comparison in \Cref{fig:exp_tuning}, with \(\Sigma_0=\Sigma_1=I_p\).}
    \label{fig:iso-exp-tuning}
\end{figure}

\clearpage
\subsection{Response-model robustness}\label{subsec:response-robustness}

The adaptive experiments above draw the outcome from the affine random-effects law of \Cref{asm:predictive-model}, so the tuning rule is evaluated under its own working model.
We therefore repeat the overparameterised experiment with \(p=50\), \(n_0=40\), \(n_1=100\), the same target split, overlap grid, \(\ell_2\) base weights, penalty grid, seeds, and variance-component estimator, replacing the random-effects outcome model by fixed response surfaces.
Let \(s_1\geq\cdots\geq s_p\) and \(w_1,\ldots,w_p\) denote the eigensystem of \(\Sigma_0\), let \(u_S=5^{-1/2}\sum_{j=1}^5e_j\), and define
\[
    \beta_H=\frac{w_1}{\sqrt{s_1}},
    \qquad
    \beta_L=\frac{w_p}{\sqrt{s_p}},
    \qquad
    \beta_S=\frac{u_S}{(u_S^\top\Sigma_0u_S)^{1/2}},
\]
\[
    m_H(x)=x^\top\beta_H,
    \quad
    m_L(x)=x^\top\beta_L,
    \quad
    m_S(x)=x^\top\beta_S,
    \quad
    m_N(x)=\frac{1}{\sqrt2}
    \left\{
    x^\top\beta_S+
    \frac{(w_1^\top x)^2-s_1}{\sqrt2\,s_1}
    \right\},
\]
each normalised so that \(\operatorname{var}_{\PP_0}\{m(x)\}=r^2=1\), and generate \(y_{0i}=m(x_{0i})+\varepsilon_{0i}\) with \(\sigma_0^2=0.5\).
For each response surface \(m\), let \(\mu_m=\EE_{\PP_1}\{m(x)\}\), which is available in closed form because \(\Sigma_1=\Sigma_0\), and define the fixed-surface conditional risk
\[
    R_n^{\mathrm{fix}}(\lambda;m)
    =
    \{\gamma_\lambda^\top m(X_0)-\mu_m\}^2
    +
    \sigma_0^2\|\gamma_\lambda\|_2^2,
    \qquad
    \lambda_{\mathrm{fix}}^\star
    \in
    \argmin_{\lambda\in\Lambda}R_n^{\mathrm{fix}}(\lambda;m).
\]
\Cref{fig:response-robustness} reports actual counterfactual-mean prediction error \((\gamma_{\widehat\lambda}^\top y_0-\mu_m)^2\) for the feasible spectral rule and source GCV relative to this infeasible fixed-surface oracle, rather than the criterion \(\widehat R_n\), which no longer measures the estimand's error once the response surface is fixed.

\begin{figure}[!ht]
    \centering
    \includegraphics[width=0.75\linewidth]{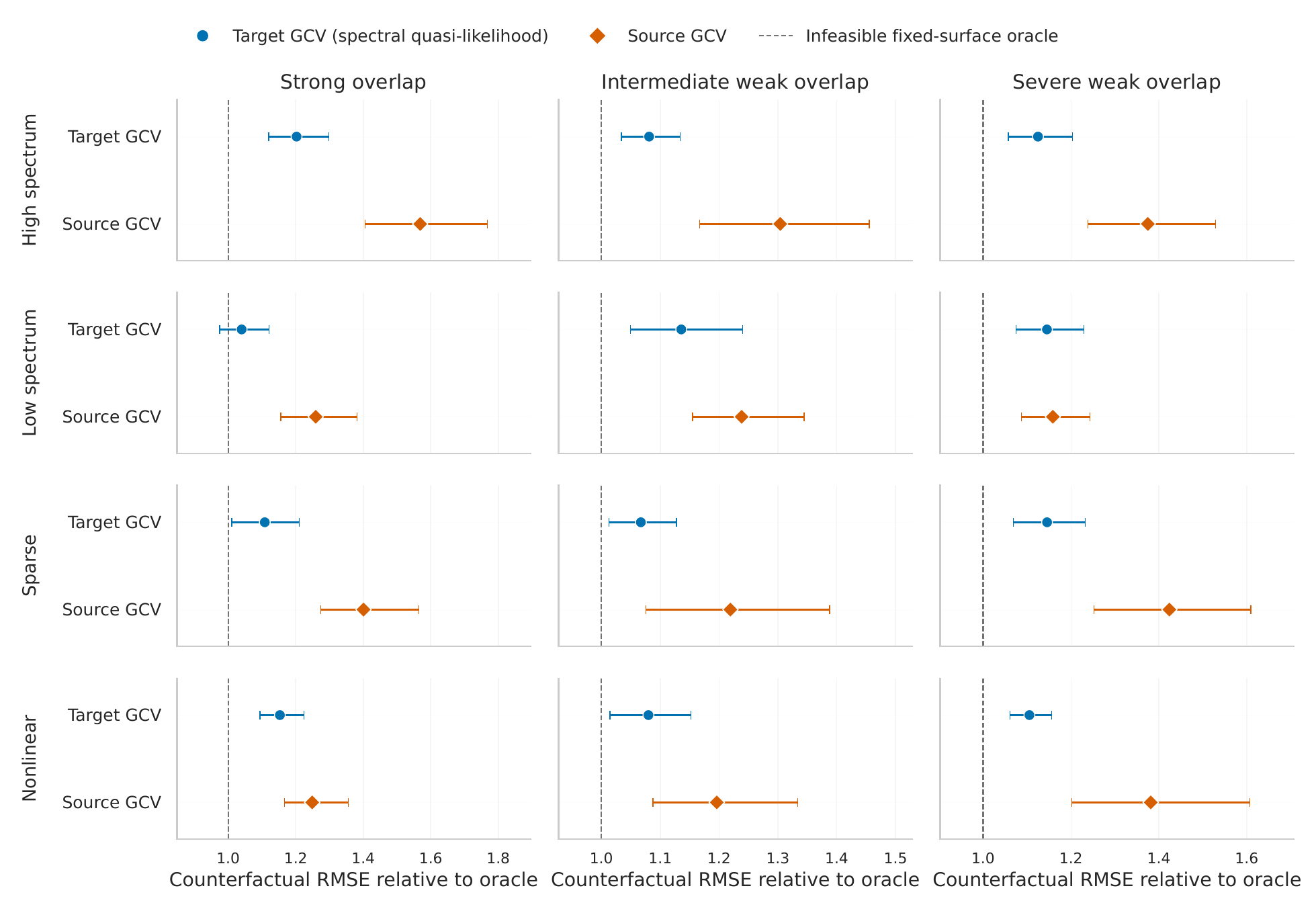}
    \caption{Response-model robustness in the overparameterised design \(p=50\), \(n_0=40\), and \(n_1=100\).
    Rows correspond to high-spectrum, low-spectrum, sparse, and nonlinear fixed response surfaces, and columns correspond to strong, intermediate, and severe weak overlap.
    Points report counterfactual-mean RMSE relative to the infeasible fixed-surface oracle over 120 common Monte Carlo replications; intervals are paired Monte Carlo bootstrap intervals.}
    \label{fig:response-robustness}
\end{figure}

Across all twelve configurations the spectral rule attains a counterfactual-mean RMSE between \(1.04\) and \(1.20\) times the infeasible fixed-surface oracle, compared with \(1.16\) to \(1.57\) for source GCV, and remains closer to the fixed-surface oracle throughout.

\clearpage
\section{Supplementary Real-Data Analyses}
\label{app:real-data}

\subsection{LaLonde analyses and diagnostics}
\label{app:lalonde-observational}

\noindent\textbf{Observational ATT.}
\Cref{tab:lalonde171-full} reports the observational estimates and weight diagnostics for the three base-weight constructions and their corresponding RCB estimators.

\begin{table}[!ht]
    \centering
    \caption{Observational estimates and diagnostics in the LaLonde-171 analysis.
    All monetary quantities are in 1978 US dollars.
    Intervals are percentile intervals from the paired full-procedure bootstrap.
    Uniform- and \(\ell_2\)-based procedures use 500 bootstrap replications, and entropy-based procedures use 200.
    For RCB, the uniform base uses the full target sample for tuning, while the \(\ell_2\) and regularised-entropy bases use the pilot/evaluation split.
    For possibly negative weights, ESS is interpreted as a weight-concentration stability index.
    The randomised NSW estimate \(1{,}794\) (SE \(=632\)) is an external reference only.
    }
    \label{tab:lalonde171-full}
    \scriptsize
    \setlength{\tabcolsep}{2.4pt}
    \resizebox{\linewidth}{!}{%
    \begin{tabular}{@{}l l c c c c c c@{}}
    \toprule
    Category
    & Estimator
    & \(\widehat{\mu}_0\)
    & \(\widehat{\tau}_{\mathrm{ATT}}\)
    & Bootstrap 95\% interval
    & \(\widehat{\lambda}\)
    & ESS
    & Max SMD \\
    \midrule
    \multirow{3}{*}{Base weights}
    & Uniform
    & 14,121 & \(-7{,}772\) & [\(-9{,}228\), \(-6{,}273\)]
    & -- & 727.0 & 1.47 \\
    & \(\ell_2\)
    & 5,437 & 912 & [\(-916\), \(2{,}574\)]
    & -- & 79.1 & 0.57 \\
    & Regularised entropy
    & 4,574 & 1,775 & [79, 4,076]
    & -- & 14.9 & 0.30 \\ \cmidrule(lr){2-8}
    \multirow{3}{*}{RCB}
    & Uniform base
    & 4,269 & 2,080 & [\(-1{,}422\), \(3{,}966\)]
    & 0.0631 & 31.3 & 0.14 \\
    & \(\ell_2\) base
    & 3,902 & 2,447 & [\(-999\), \(4{,}017\)]
    & 0.1995 & 35.8 & 0.29 \\
    & Entropy base
    & 4,539 & 1,810 & [\(-448\), \(4{,}164\)]
    & 17.7828 & 16.7 & 0.29 \\
    \bottomrule
    \end{tabular}%
    }
\end{table}

\noindent\textbf{Held-out outcome validation.}
Because untreated outcomes for the NSW treated participants are unavailable, we use the 260 randomised NSW controls as a pseudo-target with observed untreated earnings.
This provides an observed-outcome evaluation under the empirical NSW--PSID shift.
Each replication resamples 727 PSID source observations and 185 pseudo-target observations, with target outcomes withheld from fitting and tuning and used only for evaluation.
We use 500 paired replications for the uniform- and \(\ell_2\)-based RCB procedures and 200 for the entropy-based procedure.
The weighting and RCB RMSEs are \(9{,}604\) and \(1{,}570\) for the uniform base, \(1{,}116\) and \(1{,}524\) for the \(\ell_2\) base, and \(1{,}604\) and \(1{,}608\) for the regularised-entropy base, respectively.

\begin{table}[!t]
\centering
\caption{Performance under classifier-induced empirical discrepancy scaling over 120 paired replications.
RMSE ratios are relative to imbalance-CV double ridge using common Monte Carlo draws within each regime.
Intervals are paired Monte Carlo bootstrap intervals and do not quantify sampling uncertainty in the original LaLonde application.
The reported discrepancy levels are finite-sample classifier-based quantities, not population \(\chi^2(\PP_1\Vert\PP_0)\) divergences.}\label{tab:exp-chi}
\small
\setlength{\tabcolsep}{7pt}
\resizebox{\linewidth}{!}{%
\begin{tabular}{llrrrr}
\toprule
Setting &
Estimator
& RMSE
& MCSE
& RMSE ratio & (95\% interval) \\
\midrule
\multirow{9}{*}{Moderate shift} &
Double ridge (Outcome CV)
& 2,009 & 128 & 1.131 & [1.085, 1.176] \\
& Double ridge (Imbalance CV)
& 1,777 & 105 & 1.000 & [1.000, 1.000] \\
& Double ridge (Riesz CV)
& 2,164 & 157 & 1.218 & [1.125, 1.326] \\
& RCB (uniform base, full target)
& 1,772 & 106 & 0.997 & [0.950, 1.044] \\
& RCB (\(\ell_2\) base, pilot/evaluation)
& 1,805 & 102 & 1.016 & [0.956, 1.080] \\
& RCB (entropy base, pilot/evaluation)
& 2,935 & 168 & 1.652 & [1.459, 1.875] \\
& Uniform weighting
& 10,942 & 41 & 6.159 & [5.532, 6.971] \\
& \(\ell_2\) weighting
& 2,495 & 79 & 1.405 & [1.261, 1.579] \\
& Regularised entropy weighting
& 2,778 & 151 & 1.564 & [1.383, 1.776] \\
\cmidrule(lr){1-6}
\multirow{9}{*}{Strong shift} &
Double ridge (Outcome CV)
& 2,765 & 198 & 1.076 & [1.018, 1.131] \\
& Double ridge (Imbalance CV)
& 2,569 & 159 & 1.000 & [1.000, 1.000] \\
& Double ridge (Riesz CV)
& 2,857 & 202 & 1.112 & [1.037, 1.188] \\
& RCB (uniform base, full target)
& 2,423 & 161 & 0.943 & [0.875, 0.997] \\
& RCB (\(\ell_2\) base, pilot/evaluation)
& 2,350 & 154 & 0.915 & [0.853, 0.967] \\
& RCB (entropy base, pilot/evaluation)
& 3,643 & 218 & 1.418 & [1.266, 1.601] \\
& Uniform weighting
& 11,781 & 42 & 4.586 & [4.089, 5.216] \\
& \(\ell_2\) weighting
& 3,752 & 97 & 1.461 & [1.330, 1.612] \\
& Regularised entropy weighting
& 3,478 & 221 & 1.354 & [1.183, 1.558] \\
\bottomrule
\end{tabular}%
}
\end{table}

\par\smallskip
\noindent\textbf{Controlled overlap-sensitivity analysis.}
We progressively tilt the empirical LaLonde source design away from the observed target covariates using a five-fold cross-fitted ridge-logistic classifier fitted only to covariates and source-target labels.
After prior-odds correction and clipping, the source-sample odds define a normalised density-ratio proxy \(\widehat r_i\).
Let
\[
    p_i=\frac{\widehat r_i}{\sum_j\widehat r_j},
    \qquad
    q_i(\alpha)=\frac{\widehat r_i^\alpha}{\sum_j\widehat r_j^\alpha},
    \qquad
    \widehat\chi^2(\alpha)=\sum_i\frac{p_i^2}{q_i(\alpha)}-1.
\]
The proxy \(\widehat r_i\), and hence \(p_i\) and \(q_i(\alpha)\), are computed once from the observed covariates and held fixed throughout.
The value \(\alpha=0\) gives the observed empirical source distribution \(q_i(0)=1/n_0\), with baseline discrepancy \(\widehat\chi^2(0)\approx44.8\), and negative \(\alpha\) moves sampling mass away from target-like source units.
We choose \(\alpha_4\) and \(\alpha_{16}\) by a grid search over \(\alpha\in[-3,0]\) in steps of \(0.001\), taking the value whose \(\widehat\chi^2(\alpha)\) is closest to \(4\widehat\chi^2(0)\) or \(16\widehat\chi^2(0)\) subject to a sampling effective sample size \(1/\sum_iq_i(\alpha)^2\ge0.2n_0\).
This gives \(\alpha_4=-0.169\) and \(\alpha_{16}=-0.328\), with achieved multiples \(4.00\) and \(15.95\) and sampling effective sample sizes \(663\) and \(568\).
We refer to the resulting \(4\times\) and \(16\times\) regimes as the moderate and strong controlled shifts, respectively.
These are finite-sample classifier-based quantities, not estimates of the population divergence \(\chi^2(\PP_1\Vert\PP_0)\).

For each regime, the nine procedures reported in \Cref{tab:exp-chi} are refitted over 120 paired replications.
In each replication, \(n_0=727\) source observations are drawn with replacement from the PSID covariates with probabilities \(q_i(\alpha)\), and their outcomes are the fitted values of a LaLonde-calibrated ridge outcome model plus independent Gaussian noise with the residual standard deviation of that fit; the estimand is the mean of the same fitted surface over the 185 NSW treated covariates.
The target covariates, prediction target, outcome model, and resampling and target-split seeds are held fixed across regimes, so replications are paired both across procedures and across regimes.
\Cref{tab:exp-chi} reports the complete results.

\subsection{K562 Perturb-seq implementation}
\label{app:k562-implementation}

The raw single-cell screen is preprocessed with the \texttt{crispyx} Python package \citep{du2026crispyx}: expression is aggregated within \((\text{perturbation},\text{batch})\) groups to obtain pseudobulk observations, and a batch-corrected differential-expression analysis against non-targeting controls gives, for each perturbation and each of the 8{,}563 measured genes, a marginal \(\log_2\) fold change, a Wilcoxon \(z\)-score, and a Benjamini--Hochberg adjusted Wilcoxon \(p\)-value computed within that perturbation across all 8{,}563 genes.
The five perturbations are among the eleven with the maximal 48 pseudobulk observations, and the non-targeting controls also have 48.
All outcome and feature selection uses only these marginal differential-expression summaries and is completed once, before any resampling, base-weight construction, or RCB fit; it uses no RCB estimates, risk estimates, or selected penalties.
The 25-gene display panel is selected from the genes other than the five perturbation genes by ranking each gene by its largest absolute \(z\)-score across the five perturbations, breaking ties by the smallest adjusted \(p\)-value, and retaining the top 25; the same panel is used for all five perturbations.
The overlap-sensitivity analysis uses perturbation-specific outcomes instead: for each perturbation, the 25 genes other than the perturbation gene ranked first by adjusted \(p<0.05\) and then by absolute \(z\)-score, giving 125 perturbation-outcome pairs on 111 distinct genes that include all 25 display-panel genes.
Both outcome sets are high-signal by construction, and the results in both analyses are conditional on them.
The \(p=500\) features are chosen from the remaining genes after excluding the five perturbation genes, every outcome gene used in either analysis, and, for each perturbation, the 20 genes with the smallest adjusted \(p\)-values, with ties broken by absolute \(z\)-score and then absolute \(\log_2\) fold change.
The remaining genes are ranked by decreasing average adjusted \(p\)-value across the five perturbations, with ties broken by increasing average absolute \(z\)-score and then increasing average absolute \(\log_2\) fold change, and the first 500 are retained.
Each feature is standardised by its control-sample mean and standard deviation.

\begin{table}[!t]
    \centering
    \caption{Balance diagnostics in the K562 overlap-sensitivity analysis.
    Entries are medians over the 12{,}500 fits in each regime of the effective sample size \(\operatorname{ESS}(\gamma)=\|\gamma\|_2^{-2}\) and of the imbalance \(\|\bar x-X_0^\top\gamma\|_2\) of the 500 standardised covariates, where \(\bar x\) is the pilot-fold mean (Pilot) or the mean of all 48 perturbed profiles (Population).
    For the three double-ridge ABW rules, the base rows are the complete double-ridge weights and the RCB rows augment their balancing components.
    The population reference includes the evaluation fold used for RCB tuning.}
    \label{tab:k562-balance}
    \begin{tabular}{@{}llrrrrrr@{}}
        \toprule
        & & \multicolumn{3}{c}{Good overlap} & \multicolumn{3}{c}{Weak overlap} \\
        \cmidrule(lr){3-5}\cmidrule(l){6-8}
        Base & Weights & ESS & Pilot & Population & ESS & Pilot & Population \\
        \midrule
        Uniform & Base & 48.0 & 28.00 & 19.44 & 48.0 & 29.07 & 20.34 \\
        & RCB & 48.0 & 26.77 & 18.72 & 28.0 & 27.45 & 19.28 \\
        \(\ell_2\) & Base & 5.1 & 25.62 & 19.28 & 9.7 & 26.90 & 19.65 \\
        & RCB & 6.2 & 26.44 & 18.72 & 11.6 & 27.30 & 19.44 \\
        Entropy & Base & 13.1 & 26.69 & 19.02 & 12.2 & 27.36 & 19.51 \\
        & RCB & 13.8 & 26.91 & 18.82 & 13.1 & 27.57 & 19.42 \\
        ABW, outcome & Base & 4.4 & 25.61 & 19.44 & 8.2 & 26.89 & 19.74 \\
        & RCB & 6.7 & 26.53 & 18.76 & 10.6 & 27.33 & 19.46 \\
        ABW, imbalance & Base & 4.5 & 25.61 & 19.37 & 8.3 & 26.89 & 19.74 \\
        & RCB & 21.8 & 26.70 & 18.77 & 18.3 & 27.44 & 19.41 \\
        ABW, Riesz & Base & 4.1 & 25.59 & 19.52 & 8.2 & 26.89 & 19.74 \\
        & RCB & 6.0 & 26.44 & 18.72 & 10.9 & 27.32 & 19.46 \\
        \bottomrule
    \end{tabular}
\end{table}

The overlap-sensitivity analysis uses 100 Monte Carlo replications, with 24 target observations independently resampled and split equally into pilot and evaluation folds in each replication, so the Monte Carlo variation includes sensitivity to the realised target sample and partition.
Each pair-level RMSE in panels (a)--(b) is computed over the 100 replications against the all-data reference contrast, the difference between the perturbed and control means of the outcome over all 48 observations of each.
Because feature screening uses the observed Perturb-seq dataset, this experiment is an empirical finite-sample evaluation rather than a direct application of \Cref{thm:risk-estimation}.
RCB has lower pair-level RMSE than its base in \(89\%\) and \(72\%\) of the 125 pairs for \(\ell_2\) weighting under good and weak overlap, in \(95\%\)--\(96\%\) and \(82\%\)--\(93\%\) for the three double-ridge rules, in \(56\%\) and \(57\%\) for regularised-entropy weighting, and in \(48\%\) and \(83\%\) for uniform weighting.
For uniform-base RCB, the selected penalty is noticeably sensitive to the realised target sample and split: among finite selections, the pooled median \(\log_{10}\widehat\lambda\) is \(0.90\) (interquartile range \(0.75\)--\(1.20\)) under good overlap and \(1.05\) (\(0.75\)--\(1.50\)) under weak overlap, while the median within-pair interquartile width among finite selections across replications increases from \(0.25\) to \(0.45\).
The no-augmentation endpoint is selected in \(55.8\%\) and \(18.7\%\) of all selections, respectively.

\Cref{tab:k562-balance} reports the effective sample size and covariate imbalance of each weight vector.
Relative to uniform weighting, the base procedures reduce the imbalance against their own pilot fold by only \(5\%\)--\(9\%\), and this reduction carries over little to the perturbed population: at most \(2.2\%\) under good overlap and \(4.1\%\) under weak overlap, at the cost of reducing the effective sample size from 48 to between 4 and 13.
RCB reduces the population imbalance for every base, and with uniform base weights it does so while retaining all 48 effective controls under good overlap and 28 under weak overlap.

\begin{figure}[!t]
    \centering
    \includegraphics[width=\textwidth]{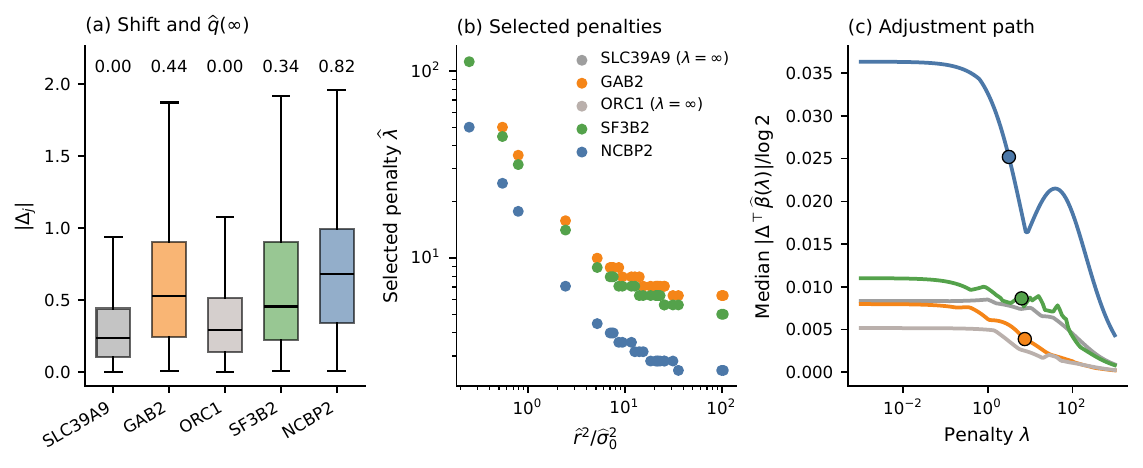}
    \caption{Full-data K562 adjustment diagnostics with uniform base weights.
    (a) Absolute components \(|\Delta_j|\) of the uniform-weight imbalance \(\Delta=\bar x_1-\bar x_0\) in \eqref{eq:Delta} over the 500 covariates, each standardised by its control mean and standard deviation, with \(\widehat q(\infty)\) above each box.
    (b) Selected penalty against the estimated outcome signal-to-noise ratio \(\widehat r^2/\widehat\sigma_0^2\) for each outcome gene; \textit{SLC39A9} and \textit{ORC1} select \(\lambda=\infty\) for every gene.
    (c) Median absolute adjustment \(|\Delta^\top\widehat\beta(\lambda)|/\log 2\) across the 25 outcome genes along the penalty path, in mean \(\log_2(1+\mathrm{count})\) units, with the median selected penalty marked.}
    \label{fig:k562-adjustment-diagnostics}
\end{figure}

The full-data profiles use uniform base weights, so the full perturbation sample evaluates the estimated conditional prediction risk, and the candidate penalties are the logarithmic grid \(\lambda\in[10^{-3},10^{3}]\) together with the exact no-augmentation endpoint \(\lambda=\infty\).
For display, panel (c) reports \(\widehat\tau_{PG}/\log 2\), and panel (d) reports its difference from the unadjusted difference between the perturbed and control means, both in mean \(\log_2(1+\mathrm{count})\) units.
On the display panel, the RCB contrasts are close to the unadjusted differences in mean \(\log_2(1+\mathrm{count})\): they have the same sign for all 125 perturbation-outcome pairs, coincide with them for \textit{SLC39A9} and \textit{ORC1}, for which the no-augmentation endpoint is selected for every outcome gene, and otherwise differ from them by less than \(0.2\).
Across the 25 outcome genes, the median effective sample size \(\operatorname{ESS}(\gamma_{\widehat\lambda})=\|\gamma_{\widehat\lambda}\|_2^{-2}\) of the selected weights is \(48\) for \textit{SLC39A9} and \textit{ORC1}, and \(22.2\), \(16.0\), and \(8.1\) for \textit{GAB2}, \textit{SF3B2}, and \textit{NCBP2}, respectively.
\Cref{fig:k562-adjustment-diagnostics} shows why the amount of adjustment differs across perturbations.
For uniform base weights, the endpoint risk estimate in \Cref{rem:no-augmentation-endpoint} contains the bias-corrected shift term \(\widehat q(\infty):=\left[\|\Delta\|_2^2/p-\otr(\widehat\Sigma_1)/n_1\right]_+\), which is zero for \textit{SLC39A9} and \textit{ORC1}, whose observed covariate shift is no larger than target sampling variability, and equals \(0.44\), \(0.34\), and \(0.82\) for \textit{GAB2}, \textit{SF3B2}, and \textit{NCBP2}, respectively.
A larger estimated shift increases the estimated bias of the unadjusted estimator relative to the variance cost of concentrating the weights, and \textit{NCBP2} selects the smallest penalty for every outcome gene.
At a common penalty, its adjustment is also three to six times larger than for the other perturbations, indicating that its covariate shift lies in directions that predict the outcome genes, most of which encode ribosomal proteins.

\end{document}